\newtheorem{theorem}{Theorem}[section]
\newtheorem{lemma}[theorem]{Lemma}
\newtheorem{corollary}[theorem]{Corollary}
\newtheorem{proposition}[theorem]{Proposition}
\newtheorem{conjecture}[theorem]{Conjecture}
\theoremstyle{definition}
\newtheorem{definition}[theorem]{Definition}
\newtheorem{example}[theorem]{Example}
\theoremstyle{remark}
\newtheorem{remark}[theorem]{Remark}
\newcommand{\I}{{\mathds {1}}}
\newcommand{\cA}{{\mathcal A}}
\newcommand{\cH}{{\mathcal H}}
\newcommand{\cK}{{\mathcal K}}
\newcommand{\cM}{{\mathcal M}}
\newcommand{\qM}{{\mathfrak M}}
\newcommand{\cN}{{\mathcal N}}
\newcommand{\cP}{{\mathcal P}}
\newcommand{\cO}{{\mathcal O}}
\newcommand{\SOL}{{\mathrm{SOL}}}
\newcommand{\Rn}{{\rm I\!R}} 
\newcommand{\Cn}{{\setbox0=\hbox{
$\displaystyle\rm C$}\hbox{\hbox
to0pt{\kern0.6\wd0\vrule height0.9\ht0\hss}\box0}}} 
\numberwithin{equation}{section}
\begin{document}

\title[Von Neumann algebraic Quantum Theory on curved spacetime]{A von Neumann algebraic approach to Quantum Theory on curved spacetime}

\author{L. E. Labuschagne}
\address{DSI-NRF CoE in Math. and Stat. Sci,\\ Focus Area for Pure and Applied Analytics,\\ Internal Box 209, School of Math \& Stat. Sci.\\
NWU, PVT. BAG X6001, 2520 Potchefstroom\\ South Africa}
\email{Louis.Labuschagne@nwu.ac.za}

\author{W. A. Majewski}
\address{DSI-NRF CoE in Math. and Stat. Sci,\\ Focus Area for Pure and Applied Analytics,\\ Internal Box 209, School of Math. \& Stat. Sci.\\
NWU, PVT. BAG X6001, 2520 Potchefstroom\\ South Africa} 
\email{fizwam@gmail.com}

\date{\today}
\subjclass[2010]{81T05, 46L51, 47L90 (primary); 46E30, 58A05,46L52, (secondary)}

\keywords{local quantum field theory, non-commutative measure and integration, derivations, non-commutative geometry}

\begin{abstract} By extending the method developed in our recent paper \cite{LM} we present the AQFT framework in terms of von Neumann algebras. 
In particular, this approach allows for a locally covariant categorical description of AQFT which moreover satisfies the additivity property and provides a 
natural and intrinsic framework for a description of entanglement. Turning to dynamical aspects of QFT we show that Killing local flows may be lifted to the 
algebraic setting in curved spacetime. Furthermore, conditions under which quantum Lie derivatives of such local flows exist are provided. The central question 
that then emerges is how such quantum local flows might be described in interesting representations. We show that quasi-free representations of Weyl algebras fit 
the presented framework perfectly. Finally, the problem of enlarging the set of observables is discussed. We point out the usefulness of Orlicz space 
techniques to encompass unbounded field operators. In particular, a well-defined framework within which one can manipulate such operators is necessary 
for the correct presentation of (semiclassical) Einstein's equation.

\end{abstract}

\maketitle

\noindent

\section{Introduction}

It is generally accepted that observables and states are basic concepts in the description of a physical system. The description of a quantum system which 
emphasises operators representing observables leads to the algebraic approach to Quantum Theory. For a motivation of such approach we refer the reader to the book 
of Emch \cite{Emch}. Then let us note that in Quantum Field Theory (QFT for short) there are two axiomatic programmes for QFT. The one which focuses on local 
observables and puts emphasis on operators is called Algebraic Quantum Field Theory (AQFT for short). The second one, more traditional, focuses on fields and 
Wightman functions and leads to the so called G{\aa}rding-Wightman programme, for details see the Haag book  \cite{Haag}.

The basic feature of AQFT is its emphasis on the description of algebraic relations between the primary objects of the theory. In other words, fields and 
observables are treated as purely algebraic (and abstract) objects. On the other hand, when examining a specific system in a specific situation, in addition to 
algebraic relations between objects, we must use the available knowledge about the system. This knowledge is described by the state of the system. Therefore, in 
the analysis of field systems, one should move to a representation that takes into account algebraic relations between the examined objects as well as the current 
state of the object (field). The implementation of such a program involves the use of the GNS (Gelfand-Naimark-Siegel) structure. As a result, we obtain a representation of the initial 
algebraic structure on a specific Hilbert space, with the considered representation reflecting the specific state of the described system.

Recalling that the result of a physical measurement of a specific observable is described by the value of the state of the system on that observable, 
reasonable topologies from the point of view of a physicist are weak topologies. Therefore, it is natural for an algebraic structure to be additionally equipped 
with a weak topology. Since von Neumann algebras satisfy the above conditions, they will be the focus of the following analysis. In concluding our general remarks 
about representations, let us add that, on moving to (faithful) representations, the set of states in the representation (folium) is a dense set in the set of 
states of the initial algebra, see Section III.2 in \cite{Haag}. Therefore, the transition to a representation is not in essence a limitation of the description.

In the introduction to \cite{BFR} the authors rightly note that since in AQFT all structures are formulated in terms of local quantities, it is inappropriate to 
demand the a priori existence of a distinguished global Hilbert space of states. The authors also note that whichever category of operator algebras is used to 
model the theory needs to be a tensorial category. By basing the theory on abstract $C^*$-algebras rather than von Neumann algebras, the authors at the outset 
exclude any difficulties regarding Hilbert spaces since none are required. As pointed out on pages 8 and 9 of the related paper \cite{FR}, this choice also 
circumvents problems related to generalizing the topological field theory construction to the case of infinite dimensional Hilbert spaces. Selecting 
$C^*$-algebras as the basic object may help to avoid some of the complexities involved incorporating Hilbert spaces into the picture but on the downside 
$C^*$-algebras allow for a menagerie of ``canonical'' tensor products. More seriously the exclusion of Hilbert spaces also excludes any chance of a refined theory 
of entanglement. 

The important point to note here is the fact that $C^*$-algebras can be considered as a noncommutative analogue of the family of continuous functions while von Neumann 
algebras can be considered as a noncommutative counterpart of measurable functions. This fact places von Neumann algebras at the heart of noncommutative integration theory. 
In our works, \cite{ML, Maj2, LM}, we proposed a description of large quantum systems (statistical physics, QFT) based on noncommutative analysis, i.e. on noncommutative 
integration theory and noncommutative differentiation. Since as noted above (abstract) von Neumann algebras are at the heart of noncommutative integration theory, we shall 
therefore use abstract von Neumann algebras when engaged in algebraic quantization. We emphasize that the use of abstract von Neumann algebras is not bound in any way to 
the choice of any particular state. See section \ref{S3} for details. 

One very obvious advantage of the von Neumann algebraic framework is the uniqueness of the von Neumann algebra tensor product. So the problem of 
having to decide which tensor product to use never arises. But this raises the question of how the requirement of not demanding a distinguished a priori Hilbert 
space is to be satisfied. We will demonstrate that the theory of the standard form of von Neumann algebras defines a tensorial category which enables one to 
incorporate Hilbert spaces into the picture at a local level. More specifically this approach will fulfil the criterion of not demanding the a priori existence of a 
distinguished global Hilbert space of states by relegating Hilbert spaces to a local phenomenon, thereby still leaving room for a theory of entanglement at a local level. It 
should be noted that the application of the standard form of von Neumann algebras to local algebras is nothing new, although to the best of our knowledge it doesn't seem to 
have done in the context of curved spacetime yet. As early as 1984 Doplicher and Longo already demonstrated the utility of these ideas in the context of Minkowski space - see 
\cite{DL}. 

Our approach to addressing the above challenges will be inductive in the sense that we will analyse the behaviour of local von Neumann algebras constructed using the free Klein-Gordon field with respect to these paradigms after which we will indicate how the emergent structure may be formalised into an axiomatic framework. Before more clearly outlining our objectives, we pause to demonstrate that the von Neumann algebra approach is still very much alive by drawing attention to some recent papers on this topic written from a von Neumann algebra perspective. Specifically the following: 

\begin{enumerate}
\item An algebra of observables for de Sitter space by Chandrasekaran, Longo, Penington, Witten \cite{CLPW}. Here the authors describe an algebra of observables for a static patch in de Sitter 
space, with operators gravitationally dressed to the worldline of an observer. The algebra is a von Neumann algebra of Type $mathrm{II}_1$. There is a natural notion of entropy for a state of such an algebra. 
In their paper the authors also point out the difficulty of isolating a concept of entropy for type III factors. As a contribution to this debate the authors of the present paper wish to point out 
that in one of their earlier papers \cite[Definition 5.3]{ML}, they identified a class of states who do admit a concept of entropy which proves to be a natural generalisation of von Neumann entropy.
\item Gravity and the crossed product by Witten\cite{Witt}: In the context of quantum gravity Witten here uses crossed products to give meaning to the notion of the $1/N$ expansion of the ambient 
algebra. Passing to the crossed product then enables him to introduce a notion of entropy for this system. One of the main benefits of passing to the crossed product (as also explained by Witten) is 
that it will transform the modular automorphism group - which is not inner for type III algebras - to an inner automorphism. The generator of this automorphism, which is some sort of Hamiltonian, is 
then affiliated to this enlarged algebra. The value of Witten's paper is that it shows that in many contexts crossed products find application in a natural way.
\begin{itemize}
\item We pause to remind the reader that crossed products werre similarly used in \cite{LM} to introduce the notion of $L^{\cosh-1}$-regularity which is a regularity restriction ensuring that for a 
large number of examples the field operators of Minkowski space may in a natural way be embedded into the Orlicz space $L^{\cosh-1}(\cM)$, which in turn may be viewed as a space which is home to 
``moment generating'' observables.
\end{itemize}
\item Algebras and states in JT gravity by Penington, Witten \cite{PW}. The authors analyze the algebra of boundary observables in canonically quantised JT gravity with or without matter. In the 
absence of matter, this algebra is commutative, generated by the ADM Hamiltonian. After coupling to a bulk quantum field theory, it becomes a highly noncommutative algebra of type $II_\infty$ with a 
trivial center.
\end{enumerate}

The above papers indicate that the von Neumann algebra approach still has strong support and that for at least for some models, it is to be preferred to the 
$C^*$-algebra approach. The above remarks should also indicate why in this article we choose to deal with AQFT based on von Neumann algebras.

The theme running throughout this paper is the suitability of von Neumann algebras for AQFT, in particular as regards the corpus of noncommutative analysis (specifically 
noncommutative integration theory and quantum Orlicz spaces) that von Neumann algebras in standard form bring to the table. This line of enquiry was initiated in our first 
paper \cite{LM} which focused on the special relativistic setting with the objective here being to see to what extent the conclusions drawn in the former paper carry over to 
the setting of curved spacetime. Here too the scheme proposed in \cite{LM} based on selected Orlicz spaces, proves to be essential for the analysis of QFT in curved 
spacetime. These local von Neumann algebras are constructed from local $C^*$-algebras by firstly associating a canonical naturally occurring quasi-free state to each local 
$C^*$-algebra, and then performing the GNS construction and taking the double commutant of the copy of this algebra in that representation. We hasten to add to that the use 
of quasi-free states does not stem from the fact that we want to somehow insert a Hadamard-like vacuum state into the theory, but rather from the fact that quasi-free states 
are natural objects which harmonise very well with the role played by the Orlicz space $L^{\cosh - 1}$ in noncommutative analysis \cite{ML2}. This can be seen by noting that 
this space may be viewed as the home of observables with all moments finite \cite[\S 3]{ML2}; a property which is also at the heart of the definition of quasi-free states. 
Thus, the use of such states perfectly in line with our work on QFT published in \cite{LM}.

One objection that has occasionally been raised regarding a formalism based on von Neumann algebras is the fact that von Neumann algebras seem to require an a priori Hilbert 
space which would seem to impose a random a priori restriction on the mathematical formalism modelling AQFT. We address this concern in \S \ref{S3}. On this point we note 
that the category of von Neumann algebras is equivalent to the abstractly defined category of $W^*$-algebras ($C^*$-algebras that are also Banach dual spaces). When such an 
algebra $M$ is equipped with a faithful normal semifinite weight $\nu$, the there is a unique canonical Hilbert space that may be constructed from the pairing $(M,\nu)$. 
These facts are elegantly captured by the theory of \emph{standard forms} of von Neumann algebras. We specifically show that von Neumann algebras in standard form do in fact 
form a tensorial category. In this approach where QFT is described by a net of local von Neumann algebras, there is no global Hilbert space as such. Rather each local von 
Neumann algebra describing some bounded region of spacetime comes equipped with its own unique ``canonical'' Hilbert space. Hilbert space geometry therefore turns out to be a 
local phenomenon which is induced by the local algebras themselves. At this point it is very important to note that naturally occuring Hilbert spaces at the local level opens 
the door for the incorporation into AQFT of a theory of entanglement at a local level.

In part 1 of the paper (comprising sections \ref{S4}-\ref{LieDer}) we investigate the role of (quantum) Killing vector fields in AQFT. We in particular show that in 
important examples local flows of Killing vector fields may locally be lifted to the algebra level. We also provide very mild criteria under which infinitesimal 
generators of this lifted action exist as (locally) densely defined *-derivations -- operators we may regard as quantum Lie derivatives. This therefore shows that the 
covariance property for global flows noted by Brunetti, Fredenhagen and Verch in \cite[Proposition 2.3(b)]{BFV}, will for good models extend to the local setting as well. It 
is well known that the Killing flows on Minkowski space correspond exactly to the action of the Poincar\'e group. The local covariance property of Killing flows in the 
present context may therefore also be seen as the curved analogue of the covariance property associated with the Poincar\'e group in the Minkowski space setting. The 
investigation into Killing flows therefore continues the investigation of local dynamics initiated in \cite{LM}, but now in a curved setting.

In part 2 (sections \ref{S7}-\ref{S9}) we consider local von Neumann algebras constructed from appropriate representations of Bosonic $\mathrm{CCR}$ algebras. Here our work 
is guided by three questions: Firstly an identification of those states which yield ``good'' representations of such algebras, secondly to show that some of the common 
examples of nets of of local von Neumann algebras do in fact describe a locally covariant quantum field theory in the sense of \cite[Definition 2.1]{BFV}, and thirdly to 
investigate the the extent to which the applications of Orlicz space geometry to AQFT, as demonstrated in \cite{LM} for the Minkowski space setting, can be carried over to 
the curved setting. So the fact that we here employ a construction similar to that given in Sorkin-Johnson state construction, stems from our desire to demonstrate that 
the well-established examples in the literature align with our approach.

In part 3 we finally discuss conclusions and options for further development.   

\section{Preliminaries}\label{S2}

Throughout the pair $(M,g)$ will denote a Lorentzian manifold $M$ equipped with a Lorentzian metric $g$. Such manifolds will also sometimes be referred to as spacetimes. A particularly 
class of Lorentzian manifolds are the globally hyperbolic manifolds. There are several equivalent descriptions of global hyperbolicity. Two particularly elegant conditions for global hyperbolicity 
are that the spacetime admits a Cauchy hypersurface, or alternatively that $M$ is isometric to $\mathbb{R}\times S$ with metric $-\beta dt^2+g_t$ where $\beta$ is a smooth positive function, 
$g_t$ is a Riemannian metric on $S$ depending smoothly on $t \in \mathbb{R}$ and each $\{t\}\times S$. We shall in this paper restrict attention to globally hyperbolic spacetimes. Given a 
globally hyperbolic spacetime $M$, the class of open relatively compact globally hyperbolic submanifolds $\mathcal{O}\subset M$ will be denoted by $\mathcal{K}(M,g)$. The precise content of what we 
mean by this statement is given in the discussion following definition \ref{D.causal}.

As was the case in \cite{LM}, we shall here be concerned with local algebras generated by field operators which are solutions of the Klein-Gordon equation. $C^{\infty}_0(M)$ will 
denote the space of smooth, real valued functions on $M$ which have compact support. Following Dimock \cite{Dim}, Brunetti, Fredehagen and Verch \cite{BFV} and B\"ar, Ginoux and Pf\"affle 
\cite{BGP}, we will on the manifold $M$ describe the $\mathrm{CCR}$ algebra of bosonic fields given by solutions of the Klein-Gordon equation. The global hyperbolicity of $M$ guarantees the 
existence of global fundamental solutions $\psi$ for the Klein-Gordon equation $(\Box + m^2 + \xi R)\phi = 0$, where $m\geq 0$ , $\xi \geq 0$ are constants, and $R$ is the scalar curvature of the 
metric on $M$. Given any vector bundle $E\to M$, we write $\mathscr{D}(M,E)$ for the space of compactly supported smooth sections in $E$. The solutions of the Klein-Gordon equation are of the form 
$G(f)$ ($f\in \mathscr{D}(M,E))$, where the operator $G$ is $G = G^{+} - G^{-}$ where $G^{+/-}$ are respectively the advanced/retarded Green's operators $G^{+/-}: \mathscr{D}(M,E)\to C^{\infty}(M,E)$.

When $C^*$-algebras are in view we shall denote them by $\mathcal{A}, \mathcal{B}$, with the symbols $\mathcal{M}, \mathcal{N}$ being used to denote von Neumann algebras. Given a faithful normal 
semifinite weight $\nu$ on a von Neumann algebra, one may of course pass to the GNS-representation $(\pi_\nu:\mathcal{M}\to B(H_\nu))$ of $\cM$ with respect to $\nu$. We will hereafter abbreviate 
\emph{faithful normal semifinite} to just \emph{fns}. With $\eta$ denoting the canonical embedding of the left-ideal $\mathfrak{n}_\nu=\{a\in \mathcal{M}\colon \nu(a^*a)<\infty\}$ into $H_\nu$, one 
may of course densely define an antilinear closable operator $S_0$ on $H_\nu$ by the prescription $\eta(a)\to \eta(a^*)$, $a\in \mathfrak{n}_\nu \cap \mathfrak{n}^*_\nu$. With $S$ denoting the closure 
of $S_0$, the operator $\Delta=S^*S$ is then referred to as the modular operator and the anti-unitary $J$ in the polar decomposition $S=J\Delta^{1/2}$ as modular conjugation. It is well known that the 
prescription $\sigma_t^\nu:a\to \pi_\nu^{-1}(\Delta^{it}\pi_\nu(a)\Delta^{-it})$, where $a\in \mathcal{M}$, defines an automorphism group on $\cM$ (the so called modular automorphism group) and 
$j_\nu:f\to JfJ$ ($J \in \pi_\nu(\mathcal{M})$) an anti-linear $*$-isomorphism from $\pi_\nu(\mathcal{M})$ to $\pi_\nu(\mathcal{M})'$. For the sake of ease of notation we will identify $\mathcal{M}$ with 
$\pi_\nu(\mathcal{M})$. The natural positive cone $\mathscr{P}_\nu$ of $H_\nu$ may then be defined to be the closure of the set $\{\pi_\nu(a)J_\nu(\eta(a))\colon 
a\in \mathfrak{n}_\nu\cap\mathfrak{n}^*_\nu\}$ (\cite[p. 146]{Tak}). In the case where $\nu$ is a state, this formula simplifies to $\{\eta(aj(a))\colon a\in \mathcal{M}\}$. The content of Haagerup's 
theorem regarding the standard form of a von Neumann algebra, is that the quintuple $(\mathcal{M}, \pi_\nu, H_\nu, j_\nu, \mathscr{P}_\nu)$ uniquely identifies the GNS representation of $\mathcal{M}$ 
up to a spatial $*$-isomorphism. 

Modular theory is also the gateway to the theory of quantum $L^p$ and Orlicz spaces, in that for general von Neumann algebras $\mathcal{M}$ one first needs to enlarge the algebra to the 
\emph{crossed product} $\qM=\mathcal{M}\rtimes_\nu\mathbb{R}$. This algebra admits a canonical faithful normal semifinite trace which then allows us to further enlarge the algebra to the algebra of 
$\tau$-measurable operators $\widetilde{\qM}$ affiliated with $\qM$. The actual construction of quantum $L^p$ and Orlicz spaces then happens inside this algebra; for brief details regarding this 
process we refer the reader to \cite{LM}, and to \cite{GLbook} for a detailed account. We shall here have occasion to occasionally refer to the Orlicz space $L^{\cosh-1}(\cM)$.
 
The focus of this paper is of course local algebras of Lorentzian spacetimes. These are collections of $C^*$ (alternatively von Neumann) algebras $\cA(\mathcal{O})$ each corresponding to an element 
$\mathcal{O}$ of some regular class of submanifolds in such a way the entire collection indexed by the submanifolds forms what is known as a (weak) quasi-local algebra. The basic idea here is that 
each local algebra $\cA(\mathcal{O})$ is a home for the observables of that region of spacetime. For the sake of the reader we record the definition of a quasi-local $C^*$-algebra as given in 
\cite{BGP}. The indexing set $I$ needs to be a directed set satisfying a (weak) orthogonality relation. Specifically a set $I$ is a directed set satisfying an orthogonality relation if it is a 
partially ordered set also carrying a relation $\perp$ in such a way that 
\begin{enumerate}
\item for all $\alpha, \beta\in I$ there exists some $\gamma\in I$ with $\gamma\geq \alpha$ and $\gamma\geq \beta$;
\item for every $\alpha\in I$ we can find some $\beta\in I$ with $\beta\perp\alpha$;
\item if $\alpha\leq \beta$ and $\beta\perp\gamma$ then also $\alpha\perp\gamma$;
\item if $\alpha\perp\beta$ and $\alpha\perp\gamma$, there exists $\delta\in I$ such that $\delta\geq \beta$, $\delta\geq \gamma$ and $\delta\perp\alpha$
\end{enumerate} 
If only conditions (1)-(3) are satisfied we refer to the orthogonality relation as a weak orthogonality relation. A quasi-local algebra (alt. weak quasi-local algebra) is a pair 
$(\cA, \{\cA_\alpha\}_{\alpha\in I})$ where $I$ is a directed set satisfying an orthogonality relation (alt. weak orthogonality relation) such that the following holds:
\begin{itemize}
\item[(i)] $\cA_\alpha\subset\cA_\beta$ whenever $\alpha\leq \beta$;
\item[(ii)] the algebras $\cA_\alpha$ all carry a common unit;
\item[(iii)] $\cup_\alpha\cA_\alpha$ is norm-dense in $\cA$;
\item[(iv)] if $\alpha\perp \beta$ then $[\cA_\alpha, \cA_\beta]=\{0\}$.
\end{itemize} 
For a collection of local algebras the partial order on the indexing collection of submanifolds is containment, with $\mathcal{O}_1$ said to be orthogonal to $\mathcal{O}_2$ if the regions 
$\mathcal{O}_1$ and $\mathcal{O}_2$ are causally independent. To obtain the corresponding notion of a von Neumann quasi-local algebra one simply replaces $\cA$ and the $\cA_\alpha$'s with 
von Neumann algebras and replaces the norm-density in (iii) with $\sigma$-weak density. 

The local algebra generated by solutions to the Klein-Gordon equation and indexed by $\mathcal{K}(M,g)$ does turn out to be a quasi-local algebra. Following \cite{BGP} we may construct this algebra 
as follows: For each $\mathcal{O} \in \mathcal{K}(M,g)$, one firstly constructs $L^2(\mathcal{O})$ where integration is by the volume form on $\mathcal{O}$. The algebra $\cA(\mathcal{O})$ is then a 
copy of the $\mathrm{CCR}$ algebra constructed from Weyl operators $W(\psi)$ where the $\psi$ ranges over the solutions of the local Klein-Gordon equation on $\mathcal{O}$, namely 
$\{\psi\colon \psi=\widetilde{G}_{\mathcal{O}}(f), f\in \mathscr{D}(\mathcal{O},E)\}$. Here we note that the operator $\widetilde{G}_{\mathcal{O}}$ is for every $f\in \mathscr{D}(\mathcal{O},E)\}$ 
given by $\widetilde{G}_{\mathcal{O}}(f)=G(f_{ext}){\upharpoonright}{\mathcal{O}}$ where $f_{ext}$ denotes the element of $\mathscr{D}(M,E)$ obtained by assigning the value 0 to all points 
$p\in M\backslash\mathcal{O}$. The local algebras of Brunetti, Fredenhagen and Verch \cite{BFV} are very similar with the main difference being that their Weyl operators are constructed from 
elements of the set $\{\psi\colon \psi=\widetilde{G}_{\mathcal{O}}(f), f\in C_0^\infty(\mathcal{O})$. Readers wishing to see an axiomatic description of this approach would be well-advised to 
consult the papers of Fredenhagen and Rejzner \cite{FR}, and Buchholz and Fredenhagen \cite{BF}.

Von Neumann local algebras are typically constructed by passing to a representation by means of a regular state of the $C^*$-algebra setting and then taking the double commutant of the 
representation. At the forefront of such ``good'' states are the quasi-free states. Representations with these states typically lead to the iconical Araki-Woods factors. See \cite{Der} 
for an account of this process. However in many cases more than just quasi-freeness is needed. Hadamard states have the added advantage that they are related to Quantum Weak Energy 
Inequalities (QWEIs) which in turn ensure that the existence of ``good'' Wightman fields at the microscopic level harmonises with observance of the second law of thermodynamics at the 
macroscopic level. Further physical reasons also bring passive and stationary states into the picture. This topic will be explored in more detail in Part \ref{Pt.2}. As noted previously 
\cite{Der} is a good reference for information on quasi-free states and Araki-Woods factors. The paper \cite{KM} gives useful information regarding quasi-free, Hadamard and stationary states.

Besides the references mentioned above we for the sake of the reader note the following:  The book of Wald \cite{Wald2} contains valuable information regarding relativity and differential 
geometry. Taking the appendices into account, the book \cite{BGP} will provide the reader with a very good ``quick'' introduction to Lorentzian geometry and an insightful account of the 
Klein-Gordon and related equations. The books of Gallier and Quiantance \cite{GQ1, GQ2} give an excellent account of local flows albeit in a Riemann geometric context. Minguzzi's notes 
\cite{Min} give a very comprehensive account of causality theory. 

\section{The standard form of von Neumann algebras}\label{S3}

We will show that there is a categorical way to incorporate Hilbert spaces into the present framework without going to the extreme of assuming the a priori 
existence of some global Hilbert space of states. On a philosophical level we agree with the fact that the theory should be based on the principles of AQFT. In 
the modern approach to canonical quantization this means that to each quantum system there corresponds a $C^*$ or von Neumann algebra which encodes the 
information of that system. For us the basic object is however an abstract von Neumann algebra equipped with some fns weight. With this approach each such 
abstract von Neumann algebra describing observables associated with some local region of spacetime allows for the construction of a unique Hilbert space associated 
with that particular algebra by means of the GNS process. This Hilbert space is therefore a product of the system rather than an a priori given object. 

For the reader's convenience, we recall that the equivalent abstract definition of von Neumann algebras identifies these algebras as $C^*$-algebra which have a 
(necessarily unique) predual. In fact such abstract von Neumann algebras are the departure point for the treatment of this topic in both of \cite{Sakai} and 
\cite{Tak1}. What is crucial for our work is the following well known fact: any von Neumann algebra $\mathcal{M}$ (abstract or concrete) admits a faithful normal 
weight $\nu$ for which the associated GNS construction yields a faithful standard representation $\pi$ of $\mathcal{M}$, i.e. a normal $^*$-isomorphism into the 
algebra of all bounded linear operators $B(H_\nu)$ on the GNS Hibert space $H_\nu$ with the image $\pi(\mathcal{M})$ being a von Neumann algebra in standard form, 
see either \cite[Theorem IX.1.2 \& Definition IX.1.13]{Tak} or \cite[Theorem 10.59]{GLbook}.

At this point there are two related categorical options for taking the theory further. We know from Maharam's theorem that the category of abelian von Neumann algebras equipped with some 
faithful normal semifinite trace is equivalent to the category of localizable measure algebras (see \cite[Chapter 1]{GLbook}). In a quantum setting the pairings $(\mathcal{M},\nu)$ 
where $\nu$ is a faithful normal semifinite weight may therefore be regarded as noncommutative analogues of localizable measure algebras, and which we may therefore refer to as \emph{von Neumann 
measure algebras} (cf. Definition \ref{vnmasf-cat}). It is this very fact that underlies the rationale behind using von Neumann algebras as a formalism for Quantum information Theory. Here two such pairings, $(\mathcal{M}_0,\nu_0)$ 
and $(\mathcal{M}_1,\nu_1)$, are equivalent if there is a *-isomorphism $\iota$ from $\mathcal{M}_0$ onto $\mathcal{M}_1$ such that $\nu_0=\nu_1\circ\iota$. A measure algebraic approach would 
therefore be based on a categorical analysis of these pairings. However, as we shall shortly see, as far as quantum information theory is concerned an approach which more strongly emphasises the 
GNS construction may be more beneficial as those are the tools which facilitate a description of entanglement. The two approaches are interlinked in that equivalent pairings 
$(\mathcal{M},\nu)$ will produce standard forms that are copies of each other, but they are not synonymous. When bringing the theory of the standard form of a von Neumann algebra into 
play, it quickly becomes clear von Neumann algebras in standard form are also a category which admit a tensor structure and which does not require an a priori global Hilbert space, since 
the form itself encodes the ``canonical'' Hilbert space for that von Neumann algebra. On a related note, we observe that since the von Neumann algebra tensor product is unique there is, 
in contrast to the $C^*$ context, here no selection of the ``correct'' tensor product to be made. In addition the tensor structure of standard forms of von Neumann algebras is refined 
enough to allow for a description of entangled states, and therefore by extension a refined theory of quantum information theory. We briefly outline these facts.

The standard form of a von Neumann algebra is an abstract characterisation of the uniqueness of the GNS-representation of the pair $(\cM,\nu)$ where $\nu$ is a faithful normal 
semifinite weight on the von Neumann algebra $\cM.$ Just as in classical measure where replacing a given measure with an equivalent measure will leave most of the measure theoretic 
constructs intact, different faithful normal semifinite weights may produce a GNS space which in all aspects is a copy of the first. Haagerup proved a~very deep theorem giving 
expression to this fact by essentially showing that any representation of $\cM$ which admits objects that mimic the action of $J_\nu$ and $\mathscr{P}_\nu^\natural$ is a faithful 
copy of the GNS-representation of the pair $(\cM,\nu).$ This claim may be made exact 
with the following definition:

\begin{definition}\label{D8.defstdfrm} Given a von Neumann algebra $\cM$ equipped with a~faithful normal semifinite weight $\nu$, a quintuple 
$(\cM, \pi_0, H_0, J, \mathscr{P})$ where $\pi_0$ is a faithful representation of $\cM$ on the Hilbert space $H_0$, $J:H_0\to H_0$ anti-linear isometric 
involution, and $\mathscr{P}$ a self-dual cone of $H_0$, is said to be a \emph{standard form} of $\cM$ if the following conditions hold:
\begin{itemize}
\item $J\pi_0(\cM) J=\pi_0(\cM)'$ (the commutant of $\pi_0(\cM)$),
\item $JzJ=z^*$ for all $z$ in the centre of $\pi_0(\cM)$,
\item $J\xi=\xi$ for all $\xi\in \mathscr{P}$,
\item $a(JaJ)\mathscr{P}\subseteq \mathscr{P}$ for all $a\in \pi_0(\cM)$.
\end{itemize}
(Recall that when we say that $\mathscr{P}$ is a self-dual cone, we mean that $\xi\in \mathscr{P}$ if and only if $\langle\xi,\zeta\rangle\geq 0$ for all 
$\zeta\in \mathscr{P}.$)
\end{definition}

The value of the above concept is derived from the following very deep and useful theorem:

\begin{theorem}[{\cite{haag-stdfm}}]\label{T8.haag-stdfm}
The standard form of a von Neumann algebra $\cM$ is unique in the sense that if $$(\cM, \pi, H, J, \mathscr{P})\mbox{ and }(\widetilde{\cM}, \widetilde{\pi}, \widetilde{H}, 
\widetilde{J}, \widetilde{\mathscr{P}})$$are two standard forms, and $\alpha:\pi(\cM)\to\widetilde{\pi}(\widetilde{\cM})$ is a $*$-isomorphism, 
then there exists a unique unitary operator $u:H\to\widetilde{H}$ such that
\begin{itemize}
\item $\alpha(x)=uxu^*$ for $x\in \pi_0(\cM)$;
\item $\widetilde{J}=uJu^*$;
\item $\widetilde{\mathscr{P}}=u\mathscr{P}.$
\end{itemize}
\end{theorem}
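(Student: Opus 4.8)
\emph{Overall strategy.} The plan is to separate the (elementary) uniqueness of $u$ from the (substantial) existence. Uniqueness I would obtain by a direct computation with central projections. For existence I would first reduce, by transport of structure, to the case where $\widetilde\cM=\cM$ and $\alpha$ is essentially the identity, then reduce further to comparing an arbitrary standard form with the concrete GNS standard form of one fixed fns weight; I would dispatch the $\sigma$-finite case by moving the cone representative of a faithful normal state, and bootstrap to the general case by a compression argument.

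\emph{Uniqueness of $u$.} Suppose $u_1,u_2\colon H\to\widetilde H$ both satisfy the three displayed conditions, and set $w=u_1^*u_2$. From $u_1xu_1^*=\alpha(x)=u_2xu_2^*$ for $x\in\pi(\cM)$ we get $wx=xw$, so $w\in\pi(\cM)'$; from $u_1Ju_1^*=\widetilde J=u_2Ju_2^*$ we get $Jw=wJ$, hence $JwJ=w$. Since $J\pi(\cM)'J=\pi(\cM)$, this places $w$ in the centre of $\pi(\cM)$, on which the second axiom gives $JwJ=w^*$; hence $w=w^*$ and $w=2p-\jed$ for a central projection $p$. Put $q=\jed-p$; as $JpJ=p$ and $JqJ=q$, the fourth axiom (used with $a=p$ and with $a=q$) gives $p\xi,q\xi\in\mathscr P$ for every $\xi\in\mathscr P$, while the third condition gives $w\mathscr P=u_1^*u_2\mathscr P=\mathscr P$, so also $w\xi=p\xi-q\xi\in\mathscr P$. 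Self-duality of $\mathscr P$ then forces $0\le\langle w\xi,q\xi\rangle=-\|q\xi\|^2$, i.e. $q\xi=0$ for all $\xi\in\mathscr P$; since a self-dual cone is total in $H$, $q=0$, so $w=\jed$ and $u_1=u_2$.

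\emph{Existence.} Transporting $\widetilde\cM$ onto $\cM$ along the $*$-isomorphism $\widetilde\pi^{-1}\circ\alpha\circ\pi$, an operation preserving all four axioms, I may assume two standard forms $(\cM,\pi_i,H_i,J_i,\mathscr P_i)$, $i=1,2$, of the \emph{same} algebra are given; by composing the unitaries sought it then suffices to match each of them, in the strong sense of the theorem, with the GNS standard form $(\cM,\pi_\nu,H_\nu,J_\nu,\mathscr P_\nu)$ of one fixed fns weight $\nu$ on $\cM$, whose existence was recalled above. Suppose first $\cM$ is $\sigma$-finite, so that $\nu$ may be taken to be a faithful normal state $\varphi$; invoking the representation of normal positive functionals by unique vectors of the self-dual cone, let $\xi_i\in\mathscr P_i$ and $\xi_\nu\in\mathscr P_\nu$ be the cone vectors of $\varphi$, which are cyclic and separating as $\varphi$ is faithful. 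Since $\|\pi_i(x)\xi_i\|^2=\varphi(x^*x)=\|\pi_\nu(x)\xi_\nu\|^2$, the assignment $\pi_i(x)\xi_i\mapsto\pi_\nu(x)\xi_\nu$ extends to a unitary $u_i\colon H_i\to H_\nu$ with $u_i\pi_i(\cdot)u_i^*=\pi_\nu(\cdot)$; the same formula shows $u_i$ intertwines the Tomita involutions $\pi_i(x)\xi_i\mapsto\pi_i(x^*)\xi_i$ and $\pi_\nu(x)\xi_\nu\mapsto\pi_\nu(x^*)\xi_\nu$, hence, by uniqueness of the polar decomposition, the associated modular conjugations and modular operators. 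Finally I would invoke the fact that in any standard form the conjugation and the self-dual cone must coincide with the modular conjugation $J_{\xi_i}$ and the natural cone $\overline{\{\pi_i(a)J_{\xi_i}\pi_i(a)J_{\xi_i}\xi_i:a\in\cM\}}$ of a cyclic separating vector $\xi_i$ lying in the cone; this yields $u_iJ_iu_i^*=J_\nu$ and $u_i\mathscr P_i=\mathscr P_\nu$, and $u=u_2^{-1}u_1$ is the required unitary. For general $\cM$ one reduces to this case by compressing: $\cM$ admits a maximal orthogonal family of $\sigma$-finite projections summing to $\jed$, one applies the previous step after reducing each standard form by such a projection, and the uniqueness already established is used to check that the resulting partial unitaries are compatible and assemble to $u$.

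\emph{Main obstacle.} The difficulty is concentrated in the cone theory quoted in the $\sigma$-finite step. What is really needed is Haagerup's rigidity statement that the anti-unitary $J$ and the cone $\mathscr P$ of an abstract standard form are \emph{not} free data but are forced to be the modular conjugation and the natural cone $\overline{\Delta_\xi^{1/4}\cM_+\xi}$ of any cyclic separating vector $\xi\in\mathscr P$; this in turn rests on the self-duality of natural cones, the Powers--St\o rmer-type estimates $\|\xi_\psi-\xi_{\psi'}\|^2\le\|\psi-\psi'\|\le\|\xi_\psi-\xi_{\psi'}\|\,\|\xi_\psi+\xi_{\psi'}\|$, and the resulting homeomorphism between normal positive functionals and cone vectors, i.e. on the full strength of Tomita--Takesaki modular theory. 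Once that machinery is granted, everything above is essentially bookkeeping, together with the technical but routine compression argument needed for the non-$\sigma$-finite case.
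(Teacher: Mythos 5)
The paper does not actually prove this statement: it is quoted verbatim as Haagerup's theorem with a citation to \cite{haag-stdfm}, so there is no in-paper argument to compare yours against. Judged on its own terms, your uniqueness argument is complete and correct, and is in fact exactly Haagerup's: the computation placing $w=u_1^*u_2$ in the centre, the use of $JzJ=z^*$ to make $w$ a central symmetry $2p-\jed$, and the self-duality argument killing $q=\jed-p$ are all sound (note $\langle p\xi,q\xi\rangle=0$ because $p\perp q$, which is what makes the inner product collapse to $-\|q\xi\|^2$). The reduction of existence to matching an arbitrary standard form against the GNS form of one fixed faithful state, via cone representatives and the uniqueness of the polar decomposition of the Tomita involution, is likewise the right skeleton.

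The gap is the one you name yourself, and it should be stated as such: the assertion that in an abstract quintuple satisfying Definition \ref{D8.defstdfrm} every $\varphi\in\cM_*^+$ has a (unique) representative in $\mathscr P$, and that for any cyclic separating $\xi\in\mathscr P$ one is forced to have $J=J_\xi$ and $\mathscr P=\overline{\Delta_\xi^{1/4}\cM_+\xi}$, is not background Tomita--Takesaki theory but is the actual content of Haagerup's theorem; with it granted, existence really is bookkeeping, so the ``proof'' as written establishes uniqueness of $u$ but only reduces existence to the theorem being proved. A second, smaller issue is the non-$\sigma$-finite patching: the diagonal corners $e_jJe_jJH$ of a maximal orthogonal family $(e_j)$ of $\sigma$-finite projections do \emph{not} span $H$ (the off-diagonal corners $e_jJe_kJH$, $j\neq k$, are missed), so the partial unitaries attached to the individual $e_j$ cannot simply be summed. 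The fix is to run your argument over the directed family of finite sums $e_F=\sum_{j\in F}e_j$, which are still $\sigma$-finite, note that $e_FJe_FJ\to\jed$ strongly, and use your uniqueness step to see that the unitaries $u_F$ are coherent under enlargement of $F$; they then converge to the desired $u$. With that modification, and with the cone rigidity lemma supplied from \cite{haag-stdfm}, the outline closes up.
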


\begin{definition}\label{Stdfm-cat}
On the basis of the above theorem we define a category \textbf{VN-SF} for which the objects are the quintuples described in Definition \ref{D8.defstdfrm}. 
The isomorphisms for this category are the spatial *-isomorphisms described in Theorem \ref{T8.haag-stdfm}. 

Next let $(\mathcal{N},\nu_{\mathcal{N}})$ and $(\mathcal{M},\nu_{\mathcal{M}})$ be pairings for which $\mathcal{N}$ is for some projection $e\in \cM$ 
a sub-von Neumann algebra of $e\cM e$ and $\nu_{\cN}=\nu_{\mathcal{M}}{\upharpoonright}{\cN}$ to $\cN$ is still semifinite. Let $e_{(\cM:\cN)}$ be the projection of $H_{\nu_{\cM}}$ 
(the GNS space of the pairing $(\mathcal{M},\nu_{\mathcal{M}})$) onto the subspace $H_{\cN}$ generated by $\{\eta_\nu(a):a\in \mathfrak{n}_{\nu_{\cN}}\cap\mathfrak{n}^*_{\nu_{\cN}}\}$. It can 
easily be seen that $H_{\cN}$ is a copy of $H_{\nu_{\cN}}$. We then say that the standard form of a pairing $(\mathcal{N},\nu_{\mathcal{N}})$ embeds into the standard form of the pairing 
$(\mathcal{M},\nu_{\mathcal{M}})$ if it is of the form $(\cN, \pi_{\nu_{\cM}}{\upharpoonright}{\cN}, H_\cN, J_{\nu_{\cM}}{\upharpoonright}{H_{\cN}}, \mathscr{P}_{\nu_{\cN}})$ where 
$\mathscr{P}_{\nu_{\cN}}\subset \mathscr{P}_{\nu_{\cM}}$.

Morphisms in the category \textbf{VN-SF} are the simply defined to be any combination of embeddings and isomorphisms.
\end{definition}

\begin{remark} It is clear from \cite[\S 10.5]{GLbook} that subalgebras $\mathcal{N}$ of some von Neumann algebra $\mathcal{M}$ equipped with a faithful normal semifinite weight $\nu$ 
for which there either exists a normal conditional expectation $\mathbb{E}:\mathcal{M}\to \mathcal{N}$ with $\nu\circ\mathbb{E}=\nu$, or a projection $e$ in the centralizer 
$\mathcal{M}_\nu$ for which $\mathcal{N}=e\mathcal{M}e$ with $\nu{\upharpoonright}{e\mathcal{M}e}$ still semifinite, provide particular examples of the standard form embeddings described above.
\end{remark}

The category that best suits our needs is in fact the subcategory of the above for which the objects are standard forms of  pairings $(\cM,\nu)$ of von Neumann algebras and an associated fns weight (referred to as 
von Neumann measure algebras in standard form), since this subcategory retains the benefits of access to standard forms, but more strongly emphasises the integration theory aspect. 

\begin{definition}\label{vnmasf-cat}
The category \textbf{VN-MA-SF} is defined as follows:
\begin{description}
\item[Objects] The objects consist of quintuples $(\cM, \nu, H_\nu, J_\nu, \mathscr{P}_\nu)$ where $\cM$ is a von Neumann algebra equipped with a faithful normal semifinite weight, and 
$(H_\nu, J_\nu, \mathscr{P}_\nu)$ respectively the Hilbert space, modular conjugation and natural positive cone generated by performing the GNS construction for the pair $(\cM, \nu)$. 
\item[Morphisms] Two objects $(\cM_1, \nu_1, H_{\nu_1}, J_{\nu_1}, \mathscr{P}_{\nu_1})$ and $(\cM_2, \nu_2, H_{\nu_2}, J_{\nu_2}, \mathscr{P}_{\nu_2})$ are said to be isomorphic 
if there exists a *-isomorphism $\alpha$ from $\cM_1$ onto $\cM_2$ such that $\nu_2\circ\alpha = \nu_1$. (Recall that we have already noted that if such a *-isomorphism exists, 
it will induce a unitary operator from $H_{\nu_1}$ to $H_{\nu_2}$ which identifies the triples $(H_{\nu_1}, J_{\nu_1}, \mathscr{P}_{\nu_1})$ and $(H_{\nu_2}, J_{\nu_2}, \mathscr{P}_{\nu_2})$.) 
We say that $(\cM_1, \nu_1, H_{\nu_1}, J_{\nu_1}, \mathscr{P}_{\nu_1})$ embeds into $(\cM_2, \nu_2, H_{\nu_2}, J_{\nu_2}, \mathscr{P}_{\nu_2})$ if for some projection $e\in \cM_2$, $\cM_1$ 
is a subalgebra of $e\cM_2 e$, $\nu_1=\nu_2{\upharpoonright}\cM_1$, $H_{\nu_1}\subseteq H_{\nu_2}$, $J_{\nu_1}=J_{\nu_2}{\upharpoonright}H_{\nu_1}$ and 
$\mathscr{P}_{\nu_1}\subseteq \mathscr{P}_{\nu_2}$. Morphisms are then defined to be combinations of isomorphisms and embeddings.
\end{description}
\end{definition}

The following Lemma provides criteria under which the condition $J_{\nu_1}=J_{\nu_2}{\upharpoonright}H_{\nu_1}$ pertains.

\begin{lemma}\label{mod-subalg2} Let $\cM$ be a von Neumann algebra equipped with an fns weight $\nu$ and let $\cN$ be a von Neumann subalgebra for which $\nu{\upharpoonright}\cN$ is still semifinite. 
Then the GNS Hilbert space of the pair $(\cN, \nu{\upharpoonright}\cN)$ corresponds to the subspace $H_{\cN}=\overline{\{\eta_{\cM}(a):a\in \mathfrak{n}_\nu\cap\mathfrak{n}^*_\nu\cap\cN\}}$ of 
$H_\nu=H_{\cM}$. If $H_{\cN}$ is an invariant subspace of the modular conjugation operator $J_{\cM}$ for $(\cM, \nu)$, then under this identification $J_{\cM}{\upharpoonright}H_{\cN}$ is exactly the 
modular conjugation operator $J_{\cN}$ for the pair $(\cN, \nu{\upharpoonright}\cN)$, in which case $\mathscr{P}_{\cN}\subseteq\mathscr{P}_{\cM}$
\end{lemma}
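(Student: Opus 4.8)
The plan is to realise the GNS space of $(\cN,\nu{\upharpoonright}\cN)$ as the prescribed subspace of $H_{\cM}$, transport the closed antilinear operator $S$ along this realisation, and then read off the modular conjugation and the positive cone from its polar decomposition. Write $\nu_{\cN}=\nu{\upharpoonright}\cN$. Since $\mathfrak{n}_{\nu_{\cN}}=\mathfrak{n}_\nu\cap\cN$ and $\nu_{\cN}(b^*a)=\nu(b^*a)$ for $a,b\in\mathfrak{n}_{\nu_{\cN}}$, the prescription $\eta_{\cN}(a)\mapsto\eta_{\cM}(a)$ is inner-product preserving and extends to an isometry $V\colon H_{\nu_{\cN}}\to H_{\cM}$. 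Semifiniteness of $\nu_{\cN}$ guarantees that $\eta_{\cN}(\mathfrak{n}_{\nu_{\cN}}\cap\mathfrak{n}^*_{\nu_{\cN}})$ is dense in $H_{\nu_{\cN}}$, so the range of $V$ is exactly $\overline{\{\eta_{\cM}(a):a\in\mathfrak{n}_\nu\cap\mathfrak{n}^*_\nu\cap\cN\}}=H_{\cN}$; thus $V$ is a unitary of $H_{\nu_{\cN}}$ onto $H_{\cN}$. As $V$ intertwines $\pi_{\nu_{\cN}}(a)$ with $\pi_\nu(a){\upharpoonright}H_{\cN}$ for every $a\in\cN$, the subspace $H_{\cN}$ is invariant under the $*$-algebra $\pi_\nu(\cN)$, so the orthogonal projection $e$ of $H_{\cM}$ onto $H_{\cN}$ lies in $\pi_\nu(\cN)'$. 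Suppressing $V$ and identifying $H_{\nu_{\cN}}=H_{\cN}$, the operator $S_0^{\cN}\colon\eta_{\cN}(a)\mapsto\eta_{\cN}(a^*)$ is carried to the restriction of $S_0^{\cM}$ to $D:=\eta_{\cM}(\mathfrak{n}_\nu\cap\mathfrak{n}^*_\nu\cap\cN)$, and since $V$ is unitary it intertwines closures, so $S_{\cN}=\overline{S_{\cM}{\upharpoonright}D}^{\,H_{\cN}}$. No hypothesis on $J_{\cM}$ has been used so far.

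Now assume $H_{\cN}$ is invariant under $J_{\cM}$, so that $e$ commutes with $J_{\cM}$ and $T:=J_{\cM}{\upharpoonright}H_{\cN}$ is an antiunitary involution of $H_{\cN}$. For $\xi\in D$ one has $S_{\cM}\xi\in D\subseteq H_{\cN}$, and from $S_{\cM}=J_{\cM}\Delta_{\cM}^{1/2}$ and $J_{\cM}H_{\cN}=H_{\cN}$ we get $\Delta_{\cM}^{1/2}\xi=J_{\cM}(S_{\cM}\xi)\in H_{\cN}$. Hence $\Delta_{\cM}^{1/2}$ maps $D$ into $H_{\cN}$ and, on $D$, $S_{\cM}=T\circ\Delta_{\cM}^{1/2}$ as maps into $H_{\cN}$. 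Writing $L$ for the closure of $\Delta_{\cM}^{1/2}{\upharpoonright}D$ computed \emph{inside} $H_{\cN}$ — a closed, positive, symmetric operator on $H_{\cN}$ — and passing to $H_{\cN}$-closures (legitimate since $T$ is bounded) gives $S_{\cN}=TL$. The decisive point is that $L$ is in fact \emph{self-adjoint}. Granting this, $S_{\cN}=TL$ is the polar decomposition of $S_{\cN}$ (for $T$ is antiunitary, hence a partial isometry with $\ker T=0=\ker S_{\cN}$, and $L\ge 0$ is self-adjoint with $S_{\cN}^*S_{\cN}=L^*L=L^2$), so comparison with $S_{\cN}=J_{\cN}\Delta_{\cN}^{1/2}$ forces $J_{\cN}=T=J_{\cM}{\upharpoonright}H_{\cN}$ (and, as a byproduct, $\Delta_{\cN}^{1/2}=L$). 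To see that $L$ is self-adjoint one shows that $H_{\cN}$ reduces $\Delta_{\cM}$ — equivalently that $e$ commutes with $\Delta_{\cM}^{it}$ for all $t$, i.e.\ $e\in\cM_\nu$ — and that $D$ is a core for the self-adjoint operator $\Delta_{\cM}^{1/2}{\upharpoonright}H_{\cN}$ on $H_{\cN}$. Here the facts that $e$ commutes with $\pi_\nu(\cN)$ and with $J_{\cM}$ (hence also with $J_{\cM}\pi_\nu(\cN)J_{\cM}\subseteq\pi_\nu(\cM)'$) must be combined to conclude that the modular automorphism group of $\nu$ leaves $\cN$, and therefore $H_{\cN}$, globally invariant, the semifiniteness of $\nu_{\cN}$ being what then makes $D$ a core.

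For the cone, recall from the preliminaries that $\mathscr{P}_{\cM}=\overline{\{\pi_\nu(a)J_{\cM}\eta_{\cM}(a):a\in\mathfrak{n}_\nu\cap\mathfrak{n}^*_\nu\}}$ and similarly $\mathscr{P}_{\cN}=\overline{\{\pi_{\nu_{\cN}}(a)J_{\cN}\eta_{\cN}(a):a\in\mathfrak{n}_{\nu_{\cN}}\cap\mathfrak{n}^*_{\nu_{\cN}}\}}$. Under the identifications above (and using $J_{\cN}=J_{\cM}{\upharpoonright}H_{\cN}$ from the previous step) each generator of $\mathscr{P}_{\cN}$ equals $\pi_\nu(a)J_{\cM}\eta_{\cM}(a)$ for some $a\in\mathfrak{n}_\nu\cap\mathfrak{n}^*_\nu\cap\cN$, which is a generator of $\mathscr{P}_{\cM}$; since $\mathscr{P}_{\cM}$ is closed, $\mathscr{P}_{\cN}\subseteq\mathscr{P}_{\cM}$.

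The main obstacle is the middle step: establishing that $L$ is self-adjoint, equivalently that $H_{\cN}$ reduces the modular operator $\Delta_{\cM}$. Everything else is essentially bookkeeping with the GNS construction, and it is precisely at this point that the $J_{\cM}$-invariance hypothesis (together with semifiniteness of $\nu{\upharpoonright}\cN$) has to be used in an essential way; one should expect to have to argue that $e\in\cM_\nu$ rather than to be able to avoid it.
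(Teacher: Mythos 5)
Your proposal reproduces the paper's architecture faithfully up to the decisive point: identify the GNS space of $(\cN,\nu{\upharpoonright}\cN)$ with $H_{\cN}\subseteq H_{\cM}$, observe that $S^{\cN}$ is the restriction of $S^{\cM}$, and factor $S^{\cN}=T L$ with $T=J_{\cM}{\upharpoonright}H_{\cN}$ and $L$ the ($H_{\cN}$-)closure of $\Delta_{\cM}^{1/2}{\upharpoonright}D$. The cone inclusion at the end is also handled exactly as in the paper. The problem is the middle step, which you correctly flag as the crux but then only gesture at. You assert that one must show $L$ is self-adjoint by proving that $H_{\cN}$ reduces $\Delta_{\cM}$ (equivalently that $e$ commutes with every $\Delta_{\cM}^{it}$), and you propose to get this by ``combining'' the facts that $e$ commutes with $\pi_\nu(\cN)$ and with $J_{\cM}$. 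That implication is not established by anything you write: deducing $\sigma^\nu_t$-invariance of $\cN$ (or $[e,\Delta_{\cM}^{it}]=0$) from $e\in\pi_\nu(\cN)'\cap\{J_{\cM}\}'$ is essentially the hard direction of a Takesaki-type theorem on the existence of weight-preserving conditional expectations, and as written it is an unproven claim sitting exactly where the proof needs to close. So the proposal is incomplete at its load-bearing step.

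The gap is also avoidable, and this is where the paper's route differs from yours. The paper does not first establish self-adjointness of $L$; it notes that $L$ is \emph{positive in the form sense} ($\langle\xi,L\xi\rangle=\langle\xi,\Delta_{\cM}^{1/2}\xi\rangle\geq 0$ on $\mathrm{dom}(S^{\cN})$, since $\Delta_{\cM}^{1/2}$ is positive self-adjoint on the larger space) and then invokes uniqueness of the polar decomposition against $S^{\cN}=J_{\cN}\Delta_{\cN}^{1/2}$. Made fully precise, this amounts to the classical Tomita-theory lemma: writing $u=TJ_{\cN}$, one has $L=u\Delta_{\cN}^{1/2}$ with $u$ unitary and $\langle\xi,u\Delta_{\cN}^{1/2}\xi\rangle\geq 0$ on $\mathrm{dom}(\Delta_{\cN}^{1/2})$, and a unitary $u$ satisfying $\langle\xi,uA\xi\rangle\geq 0$ for a positive self-adjoint injective $A$ must equal the identity. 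This yields $J_{\cN}=T$ directly, and the self-adjointness of $L$ (i.e.\ $\Delta_{\cN}^{1/2}=\Delta_{\cM}^{1/2}{\upharpoonright}\mathrm{dom}(S^{\cN})$) drops out as a corollary rather than being required as an input. If you want to salvage your version, replace the reduction-of-$\Delta_{\cM}$ detour with this positivity argument; otherwise you are left proving something strictly harder than the lemma itself, without a proof.
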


\begin{proof} It is clear that the closable densely defined anti-linear operator $$S_0^{\cN}:\eta_{\cM}(a)\mapsto\eta_{\cM}(a^*)\mbox{ for all }a\in \mathfrak{n}_\nu\cap\mathfrak{n}^*_\nu\cap\cN$$ 
corresponding to the pair $(\cN, \nu{\upharpoonright}\cN)$, is a restriction of the closable densely defined anti-linear operator 
$$S_0^{\cM}:\eta_{\cM}(a)\mapsto\eta_{\cM}(a^*)\mbox{ for all }a\in \mathfrak{n}_\nu\cap\mathfrak{n}^*_\nu.$$  
If momentarily we regard the closed extension $S^{\cN}$ of $S_0^{\cN}$ as a map into $H_{\cM}$, it is then clear from \cite[Lemma VI.1.4]{Tak} that $S^{\cM}$ is an extension of $S^{\cN}$. That is 
$S^{\cM}{\upharpoonright}\mathrm{dom}(S^{\cN})=S^{\cN}$. Since $H_{\cN}$ is an invariant subspace of $J_{\cM}$, the operator $J_{\cM}$ restricts to an anti-unitary on $H_{\cN}$. Let us write 
$\widehat{J}_{\cM}$ for this restriction. We will then have that $\widehat{J}_{\cM}S^{\cN}$ is a closed operator on $H_{\cN}$ for which we have that 
$$\widehat{J}_{\cM}S^{\cN}= J_{\cM}S^{\cM}{\upharpoonright}\mathrm{dom}(S^{\cN})= \Delta^{1/2}_{\cM}{\upharpoonright}\mathrm{dom}(S^{\cN}).$$ In view of the fact that $\mathrm{dom}(S^{\cN}) 
\subset \mathrm{dom}(S^{\cM})$, it is palpably clear that $\langle \xi, \Delta^{1/2}_{\cM}\xi\rangle\geq 0$ for all $\xi \in \mathrm{dom}(S^{\cN})$ and hence that 
$\Delta^{1/2}_{\cM}{\upharpoonright}\mathrm{dom}(S^{\cN})$ is a positive operator. By construction $S^{\cN}= \widehat{J}_{\cM}(\Delta^{1/2}_{\cM}{\upharpoonright}\mathrm{dom}(S^{\cN}))$ is then 
a polar decomposition of $S^{\cN}$. Since $\Delta^{1/2}_{\cN}$ has dense range, the uniqueness of the polar decomposition now ensures that $J_{\cN}=\widehat{J}_{\cM}$ as was required. (Although we shall not 
need this, we also have that $\Delta^{1/2}_{\cN}= \Delta^{1/2}_{\cM}{\upharpoonright}\mathrm{dom}(S^{\cN})$.) 

The final claim is now an obvious consequence of the definition of the natural positive cone \cite[Definition 8.40]{GLbook}. 
\end{proof}

\subsection{\textbf{VN-SF} and \textbf{VN-MA-SF} are tensorial categories}

As can be seen from the following result, the von Neumann algebra tensor product is a tensor structure on the category \textbf{VN-SF}. 

\begin{proposition}[Proposition 2a, \cite{IC}]\label{vnstdform_1} Let $(\cM, \pi_{\cM}, H_{\cM}, J_{\cM}, \mathscr{P}^+_{\cM})$ and 
$(\cN, \pi_{\cN}, H_{\cN}, J_{\cN}, \mathscr{P}^+_{\cN})$ be standard forms for $\cM$ and $\cN$. Then $(\cM\overline{\otimes}\cN, \pi_{\cM}\otimes \pi_{\cN}, 
H_{\cM}\overline{\otimes}H_{\cN}, J_{\cM}\otimes J_{\cN}, \mathscr{P}^+)$ is a standard form for $\cM\overline{\otimes}\cN$ where $\mathscr{P}^+$ is the closure 
of $\{aJaJ(x\otimes y)\colon a\in (\cM\overline{\otimes}\cN), x\in \mathscr{P}^+_{\cM}, y\in\mathscr{P}^+_{\cN}\}$.
\end{proposition}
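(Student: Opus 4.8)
The plan is to reduce the statement to the GNS construction and then invoke general modular theory. By Theorem~\ref{T8.haag-stdfm} every standard form of a von Neumann algebra is spatially isomorphic to the one produced by the GNS construction from an fns weight, so I may assume $(\cM,\pi_{\cM},H_{\cM},J_{\cM},\mathscr{P}^+_{\cM})$ and $(\cN,\pi_{\cN},H_{\cN},J_{\cN},\mathscr{P}^+_{\cN})$ are the GNS standard forms of fns weights $\nu_{\cM}$ and $\nu_{\cN}$. Put $\nu:=\nu_{\cM}\otimes\nu_{\cN}$, an fns weight on $\cM\overline{\otimes}\cN$. The classical analysis of Tomita--Takesaki theory for tensor products of weights (see \cite{Tak}) identifies $H_\nu$ with $H_{\cM}\overline{\otimes}H_{\cN}$ via $\eta_\nu(a\otimes b)=\eta_{\nu_{\cM}}(a)\otimes\eta_{\nu_{\cN}}(b)$ on elementary tensors, whence the modular operator and conjugation factorise: $\Delta_\nu=\Delta_{\nu_{\cM}}\otimes\Delta_{\nu_{\cN}}$ and $J_\nu=J_{\cM}\otimes J_{\cN}$. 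In particular $J_\nu$ is exactly the operator $J:=J_{\cM}\otimes J_{\cN}$ in the statement, and the associated GNS representation is $\pi_{\cM}\otimes\pi_{\cN}$ on $H_{\cM}\overline{\otimes}H_{\cN}$. Since the GNS quintuple $(\cM\overline{\otimes}\cN,\pi_\nu,H_\nu,J_\nu,\mathscr{P}_\nu)$ of an fns weight is always a standard form in the sense of Definition~\ref{D8.defstdfrm} (this is Haagerup's theorem recalled in Section~\ref{S3}), it already satisfies all four conditions and $\mathscr{P}_\nu$ is self-dual. The entire problem therefore collapses to showing that the cone $\mathscr{P}^+$ described in the statement coincides with the natural cone $\mathscr{P}_\nu$.

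For the inclusion $\mathscr{P}^+\subseteq\mathscr{P}_\nu$ I would first evaluate the generators of $\mathscr{P}_\nu$ recalled in Section~\ref{S2} on the elementary tensors $c=a\otimes b$ with $a\in\mathfrak{n}_{\nu_{\cM}}\cap\mathfrak{n}^*_{\nu_{\cM}}$ and $b\in\mathfrak{n}_{\nu_{\cN}}\cap\mathfrak{n}^*_{\nu_{\cN}}$: using the factorisations above one gets $\pi_\nu(c)J_\nu\eta_\nu(c)=\big(\pi_{\nu_{\cM}}(a)J_{\cM}\eta_{\nu_{\cM}}(a)\big)\otimes\big(\pi_{\nu_{\cN}}(b)J_{\cN}\eta_{\nu_{\cN}}(b)\big)$, an elementary tensor of generators of $\mathscr{P}^+_{\cM}$ and $\mathscr{P}^+_{\cN}$; passing to limits, every $x\otimes y$ with $x\in\mathscr{P}^+_{\cM}$, $y\in\mathscr{P}^+_{\cN}$ lies in $\mathscr{P}_\nu$. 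As $\mathscr{P}_\nu$ satisfies condition~4 of Definition~\ref{D8.defstdfrm}, it follows that $a(JaJ)(x\otimes y)\in\mathscr{P}_\nu$ for every $a\in\cM\overline{\otimes}\cN$, and hence the closure $\mathscr{P}^+$ of such vectors is contained in $\mathscr{P}_\nu$. The reverse inclusion is the delicate one: a general generator $\pi_\nu(c)J_\nu\eta_\nu(c)$ of $\mathscr{P}_\nu$ with $c\in\mathfrak{n}_\nu\cap\mathfrak{n}^*_\nu$ must be approximated by vectors $a_i(Ja_iJ)(x_i\otimes y_i)$. Here one uses that the algebraic tensor product of $\mathfrak{n}_{\nu_{\cM}}$ and $\mathfrak{n}_{\nu_{\cN}}$ is $\sigma$-strong* dense in, and a core for, $\eta_\nu$, together with the fact that on norm-bounded sets $a\mapsto a(JaJ)\xi$ is norm-continuous, so that after a Kaplansky-type truncation the problem is reduced to the case $c=a\otimes b$ treated above, now with the extra algebra element absorbing the cross terms.

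The step I expect to be the main obstacle is exactly this last inclusion $\mathscr{P}_\nu\subseteq\mathscr{P}^+$. It is not enough to know that product vectors $x\otimes y$ of cone elements lie in $\mathscr{P}_\nu$, because in general $\mathscr{P}_\nu$ is strictly larger than the closed cone they generate; one genuinely needs the action of the \emph{whole} algebra $\cM\overline{\otimes}\cN$ --- not merely its elementary tensors --- to recover all of $\mathscr{P}_\nu$, and the approximation argument must be organised so that replacing a general $c\in\mathfrak{n}_\nu\cap\mathfrak{n}^*_\nu$ by finite sums of elementary tensors is compatible with the non-linear map $c\mapsto\pi_\nu(c)J_\nu\eta_\nu(c)$. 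By contrast, conditions~1 and~2 of Definition~\ref{D8.defstdfrm} could also be verified directly and painlessly: on elementary tensors $J(a\otimes b)J=(J_{\cM}aJ_{\cM})\otimes(J_{\cN}bJ_{\cN})$ and $J(z_1\otimes z_2)J=z_1^*\otimes z_2^*$, and $\sigma$-weak density of algebraic tensors together with the tensor commutation theorem $(\cM\overline{\otimes}\cN)'=\cM'\overline{\otimes}\cN'$ and $Z(\cM\overline{\otimes}\cN)=Z(\cM)\overline{\otimes}Z(\cN)$ then deliver the general case. Since the statement is quoted as Proposition~2a of \cite{IC}, one may of course simply appeal to it; the outline above is meant only to indicate the structure of such a proof.
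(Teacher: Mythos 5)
The paper offers no proof of this proposition at all: it is imported verbatim as Proposition 2a of \cite{IC}, so there is nothing in the text to compare your argument against line by line. That said, your outline follows what is surely the intended route (reduce to GNS standard forms via Theorem \ref{T8.haag-stdfm}, form the tensor product weight $\nu=\nu_{\cM}\otimes\nu_{\cN}$, factorise $S_\nu$ and hence $J_\nu=J_{\cM}\otimes J_{\cN}$, and identify the two cones), and the first half of your argument is sound: the computation $\pi_\nu(a\otimes b)J_\nu\eta_\nu(a\otimes b)=\bigl(\pi_{\nu_{\cM}}(a)J_{\cM}\eta_{\nu_{\cM}}(a)\bigr)\otimes\bigl(\pi_{\nu_{\cN}}(b)J_{\cN}\eta_{\nu_{\cN}}(b)\bigr)$, the passage to limits to get $x\otimes y\in\mathscr{P}_\nu$, and the appeal to axiom 4 of Definition \ref{D8.defstdfrm} to conclude $\mathscr{P}^+\subseteq\mathscr{P}_\nu$ are all correct.

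The gap is in the reverse inclusion, exactly where you predicted it. The difficulty is that $c\mapsto\pi_\nu(c)J_\nu\eta_\nu(c)$ is quadratic in $c$: if you approximate a general $c\in\mathfrak{n}_\nu\cap\mathfrak{n}_\nu^*$ by finite sums $\sum_k a_k\otimes b_k$, the resulting generator splits into a double sum containing cross terms $\pi_\nu(a_k\otimes b_k)J_\nu\eta_\nu(a_l\otimes b_l)$ with $k\neq l$, and these are not of the form $d(JdJ)(x\otimes y)$ for any single $d$; the phrase ``the extra algebra element absorbing the cross terms'' does not describe an actual mechanism. Two standard ways to close this. First, self-duality reversal: if one shows that $\mathscr{P}^+$ is itself a self-dual cone, then from $\mathscr{P}^+\subseteq\mathscr{P}_\nu$ and self-duality of $\mathscr{P}_\nu$ one gets $\mathscr{P}_\nu=\mathscr{P}_\nu^{\vee}\subseteq(\mathscr{P}^+)^{\vee}=\mathscr{P}^+$, since passing to dual cones reverses inclusions; but self-duality of $\mathscr{P}^+$ is essentially the content of Cuculescu's proof and is not free. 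Second, in the $\sigma$-finite case one may choose cyclic and separating vectors $\xi_{\cM}\in\mathscr{P}^+_{\cM}$ and $\xi_{\cN}\in\mathscr{P}^+_{\cN}$; then $\xi_{\cM}\otimes\xi_{\cN}$ is cyclic and separating for $\cM\overline{\otimes}\cN$ and lies in $\mathscr{P}_\nu$ by your first half, and the characterisation of the natural cone as $\overline{\{aJ_\nu aJ_\nu(\xi_{\cM}\otimes\xi_{\cN})\colon a\in\cM\overline{\otimes}\cN\}}$ (see \cite[Theorem IX.1.2]{Tak}) gives $\mathscr{P}_\nu\subseteq\mathscr{P}^+$ at once; the general case then needs a further reduction to $\sigma$-finite corners that your sketch does not supply. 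Since the paper itself simply cites \cite{IC}, appealing to that reference remains the cleanest resolution.
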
 

The above result even holds for infinite tensor products (see \cite[Proposition 2b]{IC}). This tensor structure also behaves well with regard to the subcategory \textbf{VN-MA-SF}. 

\begin{proposition}[Proposition 3g, \cite{IC}] Let $(\cM, \pi_{\cM}, H_{\cM}, J_{\cM}, \mathscr{P}^+_{\cM})$ and $(\cN, \pi_{\cN}, H_{\cN}, J_{\cN}, \mathscr{P}^+_{\cN})$ 
be standard forms for $\cM$ and $\cN$ with these algebras are respectively equipped with the fns weights $\nu_{\cM}$ and $\nu_{\cN}$.

Let $(\cM\overline{\otimes}\cN, \pi_{\cM}\otimes \pi_{\cN}, H_{\cM}\overline{\otimes}H_{\cN}, J_{\cM}\otimes J_{\cN}, \mathscr{P}^+)$ be the standard form for 
$\cM\overline{\otimes}\cN$ described in Proposition \ref{vnstdform_1}. Then the following holds:
\begin{itemize} 
\item There exists a weight $\nu_{\cM}\otimes\nu_{\cN}$ such that $a\in \mathfrak{n}_{\nu_{\cM}}$ and $b\in \mathfrak{n}_{\nu_{\cN}}$ imply 
$a\otimes b\in \mathfrak{n}_{\nu_{\cM}\otimes\nu_{\cN}}$ with $\eta_{\nu_{\cM}\otimes\nu_{\cN}}(a\otimes b)= \eta_{\nu_{\cM}}(a)\otimes \eta_{\nu_{\cN}}(b)$.
\item $\Delta_{\nu_{\cM}\otimes\nu_{\cN}}=\Delta_{\nu_{\cM}}\otimes\Delta_{\nu_{\cN}}$ (equivalently 
$S_{\nu_{\cM}\otimes\nu_{\cN}}=S_{\nu_{\cM}}\otimes S_{\nu_{\cN}}$).
\item $(\nu_{\cM}\otimes\nu_{\cN})'=(\nu_{\cM})'\otimes(\nu_{\cN})'$ where $\nu'$ stands for the ``opposite weight'' on the commutant \cite[Definition VII.1.18]{Tak}.
\item Let $A\subset\mathfrak{n}_{\nu_{\cM}}$ and $B\subset\mathfrak{n}_{\nu_{\cN}}$ be linear subspaces which are self-adjoint in the sense that 
$A=\{a^*\colon a\in A\}$ and $B=\{b^*\colon b\in B\}$. If $\eta_{\nu_{\cN}}(A)$ and $\eta_{\nu_{\cN}}(B)$ are respectively cores for $S_{\nu_{\cM}}$ and 
$S_{\nu_{\cN}}$, then $\nu_{\cM}\otimes\nu_{\cN}$ is the unique fns weight for which $a\otimes b\in \mathfrak{n}_{\nu_{\cM}\otimes\nu_{\cN}}$ for all $a\in A$ 
and $b\in B$.
\end{itemize}
\end{proposition}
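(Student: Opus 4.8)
The plan is to realise $\nu_{\cM}\otimes\nu_{\cN}$ through the tensor product of the left Hilbert algebras attached to $\nu_{\cM}$ and $\nu_{\cN}$, to read the first three assertions off the standard structure theory of left Hilbert algebras, and to treat the uniqueness claim separately.

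First I would form the left Hilbert algebras $\mathfrak{A}_{\cM}\subseteq H_{\nu_{\cM}}$ and $\mathfrak{A}_{\cN}\subseteq H_{\nu_{\cN}}$ attached to $\nu_{\cM}$ and $\nu_{\cN}$, i.e.\ the images $\eta_{\nu_{\cM}}(\mathfrak{n}_{\nu_{\cM}}\cap\mathfrak{n}^*_{\nu_{\cM}})$ and $\eta_{\nu_{\cN}}(\mathfrak{n}_{\nu_{\cN}}\cap\mathfrak{n}^*_{\nu_{\cN}})$ equipped with the operations inherited from $\cM$ and $\cN$, whose left von Neumann algebras are $\cM$ and $\cN$ and whose modular data are $(S_{\nu_{\cM}},\Delta_{\nu_{\cM}},J_{\nu_{\cM}})$ and $(S_{\nu_{\cN}},\Delta_{\nu_{\cN}},J_{\nu_{\cN}})$. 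The algebraic tensor product $\mathfrak{A}_{\cM}\odot\mathfrak{A}_{\cN}$ is again a left Hilbert algebra inside $H_{\nu_{\cM}}\overline{\otimes}H_{\nu_{\cN}}$; by the commutation theorem for tensor products its left von Neumann algebra is $\cM\overline{\otimes}\cN$, the closure $S$ of its involution is the closure of $S_{\nu_{\cM}}\odot S_{\nu_{\cN}}$, which is $S_{\nu_{\cM}}\overline{\otimes}S_{\nu_{\cN}}$, and hence $\Delta=S^*S=\Delta_{\nu_{\cM}}\otimes\Delta_{\nu_{\cN}}$ and $J=J_{\nu_{\cM}}\otimes J_{\nu_{\cN}}$. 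I would then define $\nu_{\cM}\otimes\nu_{\cN}$ to be the fns weight on $\cM\overline{\otimes}\cN$ corresponding to this left Hilbert algebra under the canonical bijection between fns weights and equivalence classes of left Hilbert algebras. The second assertion is then immediate, and the first follows by noting that for $a\in\mathfrak{n}_{\nu_{\cM}}$ and $b\in\mathfrak{n}_{\nu_{\cN}}$ the vector $\eta_{\nu_{\cM}}(a)\otimes\eta_{\nu_{\cN}}(b)$ is left bounded for $\mathfrak{A}_{\cM}\odot\mathfrak{A}_{\cN}$ with associated left multiplication operator $\pi_{\nu_{\cM}}(a)\otimes\pi_{\nu_{\cN}}(b)$; hence it lies in the completion, which forces $a\otimes b\in\mathfrak{n}_{\nu_{\cM}\otimes\nu_{\cN}}$ with $\eta_{\nu_{\cM}\otimes\nu_{\cN}}(a\otimes b)=\eta_{\nu_{\cM}}(a)\otimes\eta_{\nu_{\cN}}(b)$.

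For the third assertion I would pass to the commutant side. By the tensor product commutation theorem $(\cM\overline{\otimes}\cN)'=\cM'\overline{\otimes}\cN'$, and the commutant left Hilbert algebra of $\mathfrak{A}_{\cM}\odot\mathfrak{A}_{\cN}$ contains $\mathfrak{A}_{\cM}'\odot\mathfrak{A}_{\cN}'$ and generates the same right von Neumann algebra with modular data $\Delta^{-1}=\Delta_{\nu_{\cM}}^{-1}\otimes\Delta_{\nu_{\cN}}^{-1}$ and $J=J_{\nu_{\cM}}\otimes J_{\nu_{\cN}}$. Since the opposite weight $\nu'$ on a commutant is by definition the fns weight attached, through $J$ and $\Delta^{-1}$, to this commutant left Hilbert algebra (cf.\ \cite[Definition VII.1.18]{Tak}), the uniqueness in the left Hilbert algebra / weight correspondence yields $(\nu_{\cM}\otimes\nu_{\cN})'=\nu_{\cM}'\otimes\nu_{\cN}'$.

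The delicate point is the fourth, uniqueness, assertion. Self-adjointness of $A$ and $B$ makes $\eta_{\nu_{\cM}}(A)\odot\eta_{\nu_{\cN}}(B)$ invariant under the involution $\eta_{\nu_{\cM}}(a)\otimes\eta_{\nu_{\cN}}(b)\mapsto\eta_{\nu_{\cM}}(a^*)\otimes\eta_{\nu_{\cN}}(b^*)$; combining the hypothesis that $\eta_{\nu_{\cM}}(A)$ and $\eta_{\nu_{\cN}}(B)$ are cores for $S_{\nu_{\cM}}$ and $S_{\nu_{\cN}}$ with the elementary fact that $D_1\odot D_2$ is a core for $T_1\overline{\otimes}T_2$ whenever $D_i$ is a core for the closed operator $T_i$, one obtains that $\eta_{\nu_{\cM}}(A)\odot\eta_{\nu_{\cN}}(B)$ is a core for $S=S_{\nu_{\cM}\otimes\nu_{\cN}}$, so that $A\odot B$ already generates the full left Hilbert algebra of $\nu_{\cM}\otimes\nu_{\cN}$. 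Consequently, if $\theta$ is any fns weight on $\cM\overline{\otimes}\cN$ with $a\otimes b\in\mathfrak{n}_\theta$ and $\eta_\theta(a\otimes b)=\eta_{\nu_{\cM}}(a)\otimes\eta_{\nu_{\cN}}(b)$ for all $a\in A$, $b\in B$ (the normalisation forced by the ambient standard form), then $\theta$ and $\nu_{\cM}\otimes\nu_{\cN}$ have the same associated full left Hilbert algebra and hence coincide by injectivity of the correspondence. I expect the main obstacle to be precisely the passage of the core property through the closure that defines $\overline{\otimes}$ of the unbounded operators, together with the bookkeeping needed to confirm that the bare membership condition $a\otimes b\in\mathfrak{n}_\theta$ is being applied in conjunction with this normalisation, without which the weight would be pinned down only up to the Connes cocycle freedom.
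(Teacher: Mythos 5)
The paper offers no proof of this proposition: it is quoted directly from Cuculescu \cite[Proposition 3g]{IC} (the same material appears in \cite[\S VIII.4]{Tak}), so there is nothing internal to compare against. Your left-Hilbert-algebra construction --- forming $\mathfrak{A}_{\cM}\odot\mathfrak{A}_{\cN}$, reading off $S$, $\Delta$, $J$ from the closure of the tensor involution, and invoking the weight/left-Hilbert-algebra bijection --- is precisely the standard argument underlying that citation, and your closing caveat that the uniqueness in the fourth bullet is only meaningful once the normalisation $\eta_\theta(a\otimes b)=\eta_{\nu_{\cM}}(a)\otimes\eta_{\nu_{\cN}}(b)$ is imposed (rather than the bare membership $a\otimes b\in\mathfrak{n}_\theta$) is a correct reading of what the statement must mean.
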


\subsection{The structure of both \textbf{VN-SF} and \textbf{VN-MA-SF} allow for entanglement}\label{ss-entangle} 

In QFT, the Reeh-Schlieder theorem implies that any chosen vector $\Psi$ can be approximated by the action of some localized operator on the vacuum. This result shows that vacuum posses small 
but nonvanishing and even long distance correlations which are not expected in classical Physics. So quantum correlations appeared, c.f. the Haag book \cite{Haag} for details. 

On the other hand AQFT, as it was mentioned, puts the emphasis on localized observables. We remind that in Quantum Theory knowledge of a system is described by a 
state. Thus quantum correlations should be encoded in some specific states. In particular, there are states of two (or a group) localized particles, even when the 
particles are separated by a large distance, which are not just a convex combination of individual quantum states. Such states are called entangled states and 
what is important, they describe quantum (so non-classsical) correlations. We remind that a state which is a convex combination of individual quantum states is 
called a separable state. It is worth pointing out that such states contain only classical correlations.

Consequently, in any axiomatic description of quantum fields, so even in Quantum Gravity, a characterization of quantum entanglement should be given. In this 
section we present within the framework of $W^*$-algebraic QFT, a general description of entangled states. This is a generalization of the characterization of 
entangled and separable states given in Dirac's approach to Quantum Mechanics \cite{Maj}.

We begin by recalling some more facts of the theory of von Neumann algebra in the standard form. Denote by ${\cM}_*$ the predual of von Neumann algebra $\cM$. 
If $\cM$ is $\sigma$-finite, then to each $\omega \in {\cM}_*^+$, there corresponds a unique $\xi \in \cP$ with 
$\omega(\cdot)  = (\xi, \cdot \xi) \equiv \omega_{\xi}(\cdot)$. (See \cite{Araki} and \cite{BR}, \cite{Tak} for a recent account of the theory.) Furthermore, the mapping $\xi \mapsto \omega_{\xi}$ 
is a homeomorphism when both $\cP$ and ${\cM}^+_*$ are equiped with the norm topology.

Let us consider a composite system associated with two regions, say $1 + 2$; let $(\cM, \pi_{\cM}, H_{\cM}, J_{\cM}, \mathscr{P}^+_{\cM})$ be associated with the 
region $1$ ( $(\cN, \pi_{\cN}, H_{\cN}, J_{\cN}, \mathscr{P}^+_{\cN})$ with the region $2$ respectively). The structure of the von Neumann algebra associated with 
the region $1+2$ is described by Proposition \ref{vnstdform_1}. It is of crucial importance that the set of normal states of the composite system is not a convex 
combination of tensor products of individual normal states. More precisely:

\begin{proposition}[Prop 2c, \cite{IC}]\label{vnstdform_2} Let $(\cM, \pi_{\cM}, H_{\cM}, J_{\cM}, \mathscr{P}^+_{\cM})$ and $(\cN, \pi_{\cN}, H_{\cN}, J_{\cN}, \mathscr{P}^+_{\cN})$ be standard forms 
for $\cM$ and $\cN$ and $(\cM\overline{\otimes}\cN, \pi_{\cM}\otimes \pi_{\cN}, H_{\cM}\overline{\otimes}H_{\cN}, J_{\cM}\otimes J_{\cN}, \mathscr{P}^+)$ the standard form for 
$\cM\overline{\otimes}\cN$ described in Proposition \ref{vnstdform_1}. Let $\mathscr{P}^+_0$ be the 
closed convex hull of $\{(x\otimes y)\colon x\in \mathscr{P}^+_{\cM}, y\in\mathscr{P}^+_{\cN}\}$. Then the following holds: 
\begin{itemize} 
\item If either of $\cM$ or $\cN$ is commutative, then $\mathscr{P}^+ = \mathscr{P}^+_0$.
\item If both $\cM$ and $\cN$ are noncommutative, then $\mathscr{P}^+ \neq \mathscr{P}^+_0$.
\end{itemize}
\end{proposition}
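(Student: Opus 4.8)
The plan is to establish the two inclusions $\mathscr{P}^+_0\subseteq\mathscr{P}^+$ and $\mathscr{P}^+\subseteq\mathscr{P}^+_0$ in the commutative case, and to produce an explicit vector of $\mathscr{P}^+\setminus\mathscr{P}^+_0$ in the doubly‑noncommutative case. The inclusion $\mathscr{P}^+_0\subseteq\mathscr{P}^+$ holds unconditionally: putting $a=1$ in the description of $\mathscr{P}^+$ in Proposition~\ref{vnstdform_1} shows that every product vector $x\otimes y$ with $x\in\mathscr{P}^+_{\cM}$, $y\in\mathscr{P}^+_{\cN}$ lies in $\mathscr{P}^+$, and since a self‑dual cone is automatically closed and convex, $\mathscr{P}^+$ then contains the closed convex hull $\mathscr{P}^+_0$. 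So all the work lies in the reverse inclusion (commutative case) and the strict inequality (noncommutative case), and in both I would reduce to the simplest local building blocks, namely $\bC$ and $M_2(\bC)$.

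For the first bullet, say $\cN$ is commutative (the construction in Proposition~\ref{vnstdform_1} is symmetric in $\cM,\cN$), and realise $\cN\cong L^\infty(X,\mu)$; the non‑$\sigma$‑finite case reduces to the $\sigma$‑finite one by the usual decomposition. Then $\cM\overline{\otimes}\cN$ is the constant direct integral $\int_X^{\oplus}\cM\,d\mu$, its standard‑form Hilbert space is $L^2(X;H_{\cM})$, the conjugation $J_{\cM}\otimes J_{\cN}$ acts fibrewise as $J_{\cM}$, and — the one structural input — the natural cone $\mathscr{P}^+$ is exactly the set of sections $\xi$ with $\xi(x)\in\mathscr{P}^+_{\cM}$ for $\mu$‑a.e. $x$ (check this directly from fibrewise self‑duality, or cite the direct‑integral decomposition of standard forms). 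Given such a $\xi$, approximate it in $L^2(X;H_{\cM})$ by a simple function $\sum_{j=1}^{n}\I_{E_j}\,x_j$ with the $E_j$ disjoint of finite measure and $x_j\in\mathscr{P}^+_{\cM}$; this is possible since $\mathscr{P}^+_{\cM}$‑valued step functions are dense in $\{\xi:\xi(x)\in\mathscr{P}^+_{\cM}\text{ a.e.}\}$ (project the values of an ordinary approximating step function onto $\mathscr{P}^+_{\cM}$ by the nonexpansive nearest‑point map). Each such approximant equals $\sum_j x_j\otimes\I_{E_j}$, a finite sum of product vectors from $\mathscr{P}^+_{\cM}$ and $\mathscr{P}^+_{\cN}$ (note $\I_{E_j}\in L^2_+(X,\mu)=\mathscr{P}^+_{\cN}$), hence lies in $\mathscr{P}^+_0$; passing to the limit gives $\xi\in\mathscr{P}^+_0$, so $\mathscr{P}^+\subseteq\mathscr{P}^+_0$ and the two cones coincide.

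For the second bullet I would work entirely inside a copy of $M_2(\bC)\overline{\otimes}M_2(\bC)$. A noncommutative von Neumann algebra always contains two nonzero orthogonal equivalent projections $e_1\sim e_2$ (if not, the type decomposition forces it to be abelian), which may be taken $\sigma$‑finite; then the subalgebra $\cM_0\subseteq\cM$ generated by $e_1,e_2$ and a partial isometry implementing $e_1\sim e_2$ is a copy of $M_2(\bC)$ with unit $p=e_1+e_2$. The crucial point is to choose the fns weight $\nu_{\cM}$ so that $\cM_0$ embeds as a \emph{standard‑form} subobject, i.e. with $\mathscr{P}^+_{\cM_0}\subseteq\mathscr{P}^+_{\cM}$: take $\nu_{\cM}(x)=\lambda(\Tr_{M_2}\otimes\psi)(pxp)+\nu'(p^{\perp}xp^{\perp})$, where $\psi$ is a faithful normal state on $e_1\cM e_1$ (using $p\cM p\cong M_2(\bC)\overline{\otimes}(e_1\cM e_1)$) and $\nu'$ is any fns weight on $p^{\perp}\cM p^{\perp}$; then $p\in\cM_{\nu_{\cM}}$ and $\cM_0$ is globally $\sigma^{\nu_{\cM}}$‑invariant (equivalently, a $\nu_{\cM}$‑preserving conditional expectation $p\cM p\to\cM_0$ exists), so by the Remark after Definition~\ref{Stdfm-cat} together with Lemma~\ref{mod-subalg2} the standard form of $(\cM_0,\nu_{\cM}{\upharpoonright}\cM_0)$ embeds in that of $(\cM,\nu_{\cM})$ with $J_{\cM}{\upharpoonright}H_{\cM_0}=J_{\cM_0}$ and $\mathscr{P}^+_{\cM_0}=\mathscr{P}^+_{\cM}\cap H_{\cM_0}$. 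Do the same for $\cN$ and tensor: Proposition~\ref{vnstdform_1} then identifies the standard form of $\cM_0\overline{\otimes}\cN_0\cong M_4(\bC)$ inside that of $\cM\overline{\otimes}\cN$, with $H_{\cM_0}\overline{\otimes}H_{\cN_0}\subseteq H_{\cM}\overline{\otimes}H_{\cN}$, $\mathscr{P}^+_{\cM_0\overline{\otimes}\cN_0}\subseteq\mathscr{P}^+$, and $\mathscr{P}^+_{0,\cM_0\overline{\otimes}\cN_0}\subseteq\mathscr{P}^+_0$. In the matrix picture $H_{\cM_0}\overline{\otimes}H_{\cN_0}=M_4(\bC)$ with the Hilbert–Schmidt inner product, $J_{\cM_0}\otimes J_{\cN_0}$ is the adjoint, $\mathscr{P}^+_{\cM_0\overline{\otimes}\cN_0}$ is the cone of positive semidefinite $4\times4$ matrices and $\mathscr{P}^+_{0,\cM_0\overline{\otimes}\cN_0}$ is the cone of separable positive operators on $\bC^2\otimes\bC^2$; the rank‑one projection $\xi_0$ onto a Bell vector is positive semidefinite but, by the Peres partial‑transpose test, not separable, so $\xi_0\in\mathscr{P}^+_{\cM_0\overline{\otimes}\cN_0}\setminus\mathscr{P}^+_{0,\cM_0\overline{\otimes}\cN_0}$. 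Finally transfer: $\xi_0\in\mathscr{P}^+$; and were $\xi_0\in\mathscr{P}^+_0$, applying the projection $P=P_{\cM_0}\otimes P_{\cN_0}$ of $H_{\cM}\overline{\otimes}H_{\cN}$ onto $H_{\cM_0}\overline{\otimes}H_{\cN_0}$ — which sends $x\otimes y\mapsto(P_{\cM_0}x)\otimes(P_{\cN_0}y)$ with $P_{\cM_0}(\mathscr{P}^+_{\cM})\subseteq\mathscr{P}^+_{\cM}\cap H_{\cM_0}=\mathscr{P}^+_{\cM_0}$, hence carries $\mathscr{P}^+_0$ into $\mathscr{P}^+_{0,\cM_0\overline{\otimes}\cN_0}$ — and using $P\xi_0=\xi_0$ would give $\xi_0\in\mathscr{P}^+_{0,\cM_0\overline{\otimes}\cN_0}$, a contradiction. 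Hence $\mathscr{P}^+\neq\mathscr{P}^+_0$.

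The routine parts are the computation of the standard form of $M_n(\bC)$ (Hilbert–Schmidt space, adjoint as $J$, positive semidefinite matrices as the cone) and the approximation argument in the commutative case; the entanglement input is just the classical fact that not every positive operator on $\bC^2\otimes\bC^2$ is separable. The delicate point — and the one I would take most care over — is the bookkeeping in the noncommutative case: producing an $M_2$ subalgebra whose GNS subspace is genuinely $J_{\cM}$‑invariant while keeping $\nu_{\cM}$ faithful normal semifinite, and then verifying that the subspace projection $P_{\cM_0}$ maps the large cone $\mathscr{P}^+_{\cM}$ into the small cone $\mathscr{P}^+_{\cM_0}$ and product vectors to product vectors (this follows from the standard‑form axiom $a(JaJ)\mathscr{P}\subseteq\mathscr{P}$ applied to $p$, together with the structure of the conditional‑expectation‑implementing projection $p\cM p\to\cM_0$).
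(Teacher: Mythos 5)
The paper offers no proof of this proposition at all: it is quoted verbatim from Cuculescu \cite{IC} (Proposition 2c there), so there is no in-house argument to measure yours against, and what you have written is a genuine addition. As far as I can check your proposal is correct and is the argument one would expect to find in \cite{IC}: the unconditional inclusion $\mathscr{P}^+_0\subseteq\mathscr{P}^+$ by taking $a=\jed$ and using that a self-dual cone is closed and convex; the identification of $\mathscr{P}^+$ with a.e.\ cone-valued sections in the commutative case; and the $M_2(\bC)\overline{\otimes}M_2(\bC)$/Bell-state reduction in the noncommutative case. Three points deserve to be made explicit. First, the reduction to your chosen GNS standard forms is harmless but should be stated: Theorem \ref{T8.haag-stdfm} gives unitaries $u,u'$, and $u\otimes u'$ carries $\mathscr{P}^+$ to $\mathscr{P}^+$ and product vectors to product vectors, so the dichotomy is invariant under change of standard form. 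Second, in the commutative case do not justify the identification of $\mathscr{P}^+$ with the section cone $C$ by ``uniqueness of the cone'': for a fixed algebra acting standardly on a fixed Hilbert space the self-dual cone is \emph{not} unique (already for $\cM=\bC$ on $\bC$ every ray, paired with the matching conjugation, satisfies all four axioms of Definition \ref{D8.defstdfrm}). What does work, and is close to what you wrote, is: the generators $aJaJ(x\otimes y)$ are visibly a.e.\ cone-valued, so $\mathscr{P}^+\subseteq C$; both cones are self-dual (the self-duality of $C$ being your fibrewise computation, with an essential-separable-valuedness argument if $H_{\cM}$ is nonseparable); hence $C=C^{\circ}\subseteq(\mathscr{P}^+)^{\circ}=\mathscr{P}^+$. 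Your nonexpansive nearest-point projection then makes the step-function approximation legitimate. Third, the claim $P_{\cM_0}(\mathscr{P}^+_{\cM})\subseteq\mathscr{P}^+_{\cM_0}$, which you rightly single out as the delicate point, is best proved by factoring $P_{\cM_0}=e_{\mathbb{E}}\,(pJ_{\cM}pJ_{\cM})$ and using the description $\mathscr{P}^+=\overline{\Delta_\nu^{1/4}\eta(\mathfrak{m}_\nu^+)}$ together with the fact that both projections commute with $\Delta_\nu^{1/4}$ (because $p$ lies in the centralizer and $\mathbb{E}$ is $\nu$-preserving); your appeal to the axiom $a(JaJ)\mathscr{P}\subseteq\mathscr{P}$ alone handles the first factor but not the second. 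With these details filled in I regard the proof as complete, and it buys the reader something the paper itself does not supply.
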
 

The striking conclusion for physics is: separable normal states of composition system, describing classical correlations, form only a subset in the set of normal 
states. Consequently, there are states $\mathscr{P}^+ \setminus \mathscr{P}^+_0$, entangled states, which encode non-classical - quantum correlations. Keeping the 
Reeh-Schlieder theorem in mind, this result is not surprising. A system composed of two quantum subsystems must have quantum correlations. However, the great 
advantage of the above result is the description of states with quantum correlations.

It is worth pointing out that the above nice result was obtained in the framework of $W^*$-algebras. This is huge advantage over the $C^*$-approach to QFT.

\part{Quantum Killing vector fields for AQFT}

In the brief comment near the end of the introduction regarding $L^{\cosh-1}$-regularity we already indicated the utility of quantum integration theory for QFT, albeit 
for the special relativistic setting. We shall revisit this issue for general spacetimes in the next part. For now we pass to exploring differential structures for curved 
spacetimes. In so doing we shall in this part primarily focus on the $\mathrm{CCR}$ local algebras described in \cite{BGP}. To be more specific about differentiation, 
more specific algebras and states need to be considered. For now we content ourselves with sketching the broad panorama that begs further analysis.

\section{Killing local flows on compacta of Lorentzian manifolds}\label{S4}

It is known that the local algebraic formalism as proposed in \cite{BFV} behaves well with respect to groups of isometries (see \cite[Proposition 2.3(b)]{BFV}). However Killing 
vector fields generally correspond to \emph{local} flows of isometries. So to fully understand how concepts like Killing vector fields may be lifted to the algebraic context, 
we need to see to what extent the content of \cite[Proposition 2.3(b)]{BFV} may be recovered when dealing with local flows of isometries. The primary challenge we need to overcome 
is the fact that at the manifold level we are now dealing with local rather than global isometries which only exhibit ``group-like'' behaviour for short times. This necessitates 
the creation of new proof strategies.

\begin{definition}\label{D.causal} Let $(M,g)$ be a time-oriented Lorentzian manifold. A tangent vector $v$ is \emph{causal} if $v\neq 0$ and $g(v,v)\leq 0$. A curve is called \emph{causal} if 
all of its tangent vectors are causal. The causal curves between two points in a set $V$ is usually denoted by $J^+(V)\cap J^-(V)$ where 
$J^+(V)=\cup_{p\in V}J^+(p)$ and $J^+(p)$ is the \emph{causal future} (resp. $J^-(p)$ the \emph{causal past}) of the point $p$, etc.  

A subset $V\subset M$ is said to be \emph{causally convex} if it contains all causal curves between two points in the set, that is if $J^+(V)\cap J^-(V)\subset V$. For 
an arbitrary subset $V\subset M$ the \emph{causally convex hull} of $V$ is defined to be the union of the set of the images of all causal curves which start and end in $V$. We shall however write 
$\mathrm{cau}(V)$ for the causal hull.
 
A subset $\Omega$ of $(M,g)$ is called \emph{causally compatible} if for all points $x \in \Omega$ we have $J^\pm_\Omega(x) = J_M^\pm(x)\cap \Omega$.
\end{definition}

Note that the inclusion $J^\pm_\Omega(x) \subseteq J_M^\pm(x)\cap \Omega$ always holds. The condition of being causally compatible therefore means that whenever 
two points in $\Omega$ can be joined by a causal curve in $M$, this can also be done inside $\Omega$.

\medskip

\textbf{Notation:} We will denote the class of all relatively compact causally compatible globally hyperbolic open subsets of a globally hyperbolic spacetime 
$(M,g)$ by $\mathcal{I}_0(M,g)$. B\"ar, Ginoux and Pf\"affle \cite{BGP} use the notation $\mathcal{I}(M,g)$ for $\mathcal{I}_0(M,g)\cup\{\emptyset, M\}$ added. 
Both these classes are clearly subclasses of the class of all open relatively compact causally convex subsets of $M$. We follow \cite{BFV} by denoting the set of 
all open relatively compact causally convex subsets of $M$ by $\mathcal{K}(M, g)$. Elements of this class can all be shown to be globally hyperbolic in their 
own right

It is of independent interest to note that global hyperbolicity may be characterised in terms of compactness of the causally convex hull,

\begin{proposition}The following are equivalent for a Lorentzian manifold: 
\begin{itemize} 
\item The manifold is globally hyperbolic.
\item The manifold is causal and the causally convex hull of any compact subset is compact. \cite[Proposition 2.5]{HoMin}
\end{itemize}
Both imply that the causally convex hull of any relatively compact set is relatively compact. \cite[Page 4]{Min} 
\end{proposition}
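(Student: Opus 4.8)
The plan is to take the two bullet-point equivalences as essentially imported from the literature (they are attributed to \cite[Proposition 2.5]{HoMin} and \cite[Page 4]{Min}), and to focus the argument on the one genuinely new claim: that either of the equivalent conditions forces the causally convex hull of a \emph{relatively compact} set to be relatively compact. So the structure I would use is: first recall (without reproving) that global hyperbolicity is equivalent to the conjunction of causality and ``the causally convex hull $\mathrm{cau}(K)$ of any compact $K$ is compact''; then derive the relatively-compact version as a corollary; and finally, for completeness, indicate why the two bulleted conditions imply each other, citing \cite{HoMin, Min} for the nontrivial direction.

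The core step is the relative-compactness deduction. Let $V\subset M$ be relatively compact, so $\overline{V}$ is compact. I would first observe that the causally convex hull operator is monotone with respect to inclusion: if $V\subseteq W$ then $\mathrm{cau}(V)\subseteq \mathrm{cau}(W)$, since any causal curve starting and ending in $V$ also starts and ends in $W$. Hence $\mathrm{cau}(V)\subseteq \mathrm{cau}(\overline{V})$. Now apply the compact case of the characterisation to the compact set $K=\overline{V}$: this gives that $\mathrm{cau}(\overline{V})$ is compact. Therefore $\mathrm{cau}(V)$ is contained in a compact set, hence $\overline{\mathrm{cau}(V)}\subseteq \mathrm{cau}(\overline{V})$ is a closed subset of a compact set and thus compact, i.e. $\mathrm{cau}(V)$ is relatively compact. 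This is the whole argument for the final ``Both imply…'' assertion.

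For the equivalence of the two bullets themselves, I would simply note that this is exactly the content of \cite[Proposition 2.5]{HoMin}, and not attempt to reprove it; at most I would remark on the easy direction — that a globally hyperbolic spacetime is automatically causal (indeed strongly causal), and that global hyperbolicity of $M$ makes all causal diamonds $J^+(p)\cap J^-(q)$ compact, from which compactness of $\mathrm{cau}(K)$ for compact $K$ follows by a standard covering/finiteness argument over a finite subcover of $K$ by such diamonds. The converse direction — that causality plus compactness of causally convex hulls of compacta yields global hyperbolicity — is the subtler implication and is precisely what \cite{HoMin} supplies, so I would defer to it.

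The main obstacle, to the extent there is one, is purely bookkeeping: being careful that ``causally convex hull'' as defined in Definition \ref{D.causal} (the union of images of all causal curves starting and ending in $V$, written $\mathrm{cau}(V)$) is the same object used in the cited results, and that monotonicity and the passage from $V$ to $\overline{V}$ are used in the right order. There is a mild subtlety in that $\mathrm{cau}(\overline{V})$ need not equal $\overline{\mathrm{cau}(V)}$ in general, but we do not need equality — only the inclusion $\mathrm{cau}(V)\subseteq \mathrm{cau}(\overline{V})$ together with compactness of the right-hand side, which already forces relative compactness of the left-hand side. No delicate analysis beyond that is required once the compact case is granted.
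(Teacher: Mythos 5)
Your proposal is correct, and it matches the paper's treatment in spirit: the paper offers no proof of this Proposition at all, simply attributing the equivalence of the two bullets to \cite[Proposition 2.5]{HoMin} and the final assertion to \cite[Page 4]{Min}. Where you go beyond the paper is in actually deriving the final ``Both imply\dots'' claim rather than citing it, and your argument is sound: monotonicity of $\mathrm{cau}(\cdot)$ gives $\mathrm{cau}(V)\subseteq\mathrm{cau}(\overline{V})$, the compact case applied to $K=\overline{V}$ makes the right-hand side compact (hence closed, $M$ being Hausdorff), so $\overline{\mathrm{cau}(V)}$ is a closed subset of a compact set and is therefore compact. That is exactly the right level of detail, and your observation that one only needs the inclusion, not equality of $\mathrm{cau}(\overline{V})$ with $\overline{\mathrm{cau}(V)}$, is the correct point to flag. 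One minor caveat on your optional remark about the easy direction: a finite subcover of $K$ by causal diamonds does not by itself cover $\mathrm{cau}(K)$, so the ``standard covering/finiteness argument'' as sketched would need the stronger fact that $J^{+}(K)\cap J^{-}(K)$ is compact for compact $K$ in a globally hyperbolic spacetime; since you explicitly defer the equivalence to the cited reference, this does not affect the validity of your proof, but the sketch as written should not be mistaken for a complete argument.
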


We are particularly concerned with Killing vector fields on Lorentzian manifolds. These are (smooth) vector fields $X$ for which the Lie derivative of the metric 
tensor vanishes, that is $L_Xg = 0$. The primary reason for our interest stems from the following fact

 \begin{remark}
It is well-known that the Killing vector fields on Minkowski space correspond exactly to the action of the Poincar\'e group on Minkowski space. So for Lorentzian 
manifolds these fields may be viewed as a curved analogue of the action of the Poincar\'e group on Minkowski space.
\end{remark}
 
Killing vector fields are very beautifully characterised as those vector fields with isometric local flows. For the sake of the reader we record the essentials.

\begin{definition}\label{Lor-isom} Let $M$ and $N$ be Lorentzian manifolds with metric tensors $g_1$ and $g_2$. An isometry from $M$ to $N$ is a 
diffeomorphism $\Psi: M \to N$ that preserves the metric tensor, that is $\Psi^*(g_2) = g_1$.   Explicitly, $\langle d\Psi_p(v), d\Psi_p(w)\rangle_{\Psi(p)} = \langle v, w\rangle_p$ 
for all $v,w\in T_p(M)$. For Lorentzian manifolds this means that for each $p$ in the domain of $\Psi$, the map $d\Psi_p$ is a Lorentzian 
isometry from $T_p(M)$ to $T_{\Psi(p)}(M)$. (See the comment after Definition 3.6 of \cite{ONeil}.) \end{definition}

\begin{theorem} 
A vector field $Z$ on a semi-Riemannian manifold is a Killing vector field iff the stages $\Psi_s^Z$ of the local flow are isometries. 
(\cite[9.21]{ONeil})
\end{theorem}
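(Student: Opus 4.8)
The plan is to reduce the equivalence to the fundamental relation between the Lie derivative and pullback along a local flow. First I would recall that for any smooth tensor field $T$ and any $s_0$ in the domain of the flow of $Z$ one has $\frac{d}{ds}\big|_{s=s_0}(\Psi_s^Z)^*T = (\Psi_{s_0}^Z)^*(L_Z T)$. This is obtained directly from the local one-parameter group property $\Psi_{s_0+t}^Z = \Psi_t^Z\circ\Psi_{s_0}^Z$ (valid for $t$ small), the definition $L_Z T = \frac{d}{dt}\big|_{t=0}(\Psi_t^Z)^*T$, and the fact that pullback by the fixed diffeomorphism $\Psi_{s_0}^Z$ is linear and continuous, hence commutes with the $t$-derivative. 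Specialising to the metric tensor $T = g$, which is symmetric and of type $(0,2)$, we obtain a smooth curve $s\mapsto(\Psi_s^Z)^*g$ of symmetric $(0,2)$-tensors on the (shrinking) domain of the flow whose $s$-derivative at every point is the pullback of $L_Z g$.

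With this in hand both implications are short. If $Z$ is Killing, i.e.\ $L_Z g = 0$ in the sense of the definition adopted above, then the derivative of $s\mapsto(\Psi_s^Z)^*g$ vanishes identically, so this curve is locally constant on its domain; since it takes the value $g$ at $s=0$ (because $\Psi_0^Z = \mathrm{id}$), we conclude $(\Psi_s^Z)^*g = g$ wherever $\Psi_s^Z$ is defined, which is precisely the assertion that each stage $\Psi_s^Z$ is a (local) isometry in the sense of Definition \ref{Lor-isom}. Conversely, if each $\Psi_s^Z$ is an isometry then $(\Psi_s^Z)^*g = g$ identically in $s$; differentiating at $s=0$ and using the identity above with $s_0 = 0$ gives $L_Z g = 0$, so $Z$ is Killing.

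The step I would treat with the most care --- rather than the essentially formal algebra --- is the domain bookkeeping, since a Killing field need not be complete and $\Psi_s^Z$ is in general only a diffeomorphism between open subsets whose size depends on $s$. The clean way to handle this is to argue pointwise: fix $p$, choose $\varepsilon>0$ and a neighbourhood of $p$ on which $\Psi_s^Z$ is defined for all $\abs{s}<\varepsilon$, note that $s\mapsto((\Psi_s^Z)^*g)_p$ is then a genuine smooth curve in the finite-dimensional space of symmetric bilinear forms on $T_pM$, and run the ODE/constancy argument there; the conclusion then propagates along the whole maximal interval of definition of the integral curve through $p$ by the standard open-and-closed continuation argument, using that $\{s : (\Psi_s^Z)^*g = g \text{ near } \Psi_s^Z(p)\}$ is open and closed in that interval. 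Nothing beyond the flow--Lie-derivative identity and elementary uniqueness for linear ODEs is required.
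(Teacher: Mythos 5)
The paper offers no proof of this statement at all --- it is quoted verbatim from O'Neill [9.21] --- so the only comparison available is with the standard argument in that reference, and your proof is exactly that argument: the identity $\frac{d}{ds}\big|_{s=s_0}(\Psi_s^Z)^*g=(\Psi_{s_0}^Z)^*(L_Zg)$ gives both implications at once, and your handling of the domain issues for an incomplete Killing field is careful and correct. No gaps.
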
 

\textbf{Note:} Lemma 4.4.8 of \cite{BGP} is very important. It shows that for globally hyperbolic spacetimes the relatively compact causally compatible globally 
hyperbolic open subsets is a directed set with an orthogonality relation. However more is true. What is actually proved is the following:

\begin{proposition}\label{directed} Let $(M,g)$ be a globally hyperbolic Lorentzian manifold. For any compact subset $K$ there exists some 
$\mathcal{O}\in \mathcal{K}(M,g)$ containing $K$. Moreover the class of subsets $\mathcal{K}(M,g)$ is a directed set with orthogonality relation. (Compare 
\cite[Lemma A.5.11]{BGP}.) Finally for any $\mathcal{O}_1\in \mathcal{I}_0(M,g)$ there exists $\mathcal{O}_2\in \mathcal{K}(M,g)$ such that 
$\mathcal{O}_1\subset\mathcal{O}_2$.
\end{proposition}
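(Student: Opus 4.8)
The three assertions will be handled in turn, the first one --- that every compact $K\subseteq M$ is contained in some $\mathcal{O}\in\mathcal{K}(M,g)$ --- carrying the real weight. For this I would invoke \cite[Lemma A.5.11]{BGP}: strong causality of a globally hyperbolic spacetime gives a neighbourhood basis of causally convex sets, which one may shrink to be relatively compact; covering $K$ by finitely many such, passing to the union $U$, and taking the causally convex hull $\mathrm{cau}(\overline{U})$ (compact, by the Proposition preceding this one), one is left with the task of enclosing a compact causally convex set in a relatively compact causally convex \emph{open} one, which is exactly what that lemma achieves. The output lies in $\mathcal{I}_0(M,g)\subseteq\mathcal{K}(M,g)$ (the inclusion was recorded in \S\ref{S2}) and contains $K$. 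I regard producing an open causally convex neighbourhood from a compact causally convex core as the genuine obstacle here; the remaining two assertions are bookkeeping around it.

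Granting the first assertion, the four conditions defining a directed set with orthogonality relation (listed in \S\ref{S2}) are verified for $\mathcal{K}(M,g)$ as follows. Upward directedness: for $\mathcal{O}_1,\mathcal{O}_2\in\mathcal{K}(M,g)$ apply the first assertion to the compact set $\overline{\mathcal{O}_1\cup\mathcal{O}_2}$. Existence of an orthogonal element: $\emptyset\in\mathcal{K}(M,g)$ is causally independent of every member --- and this is the honest reason, since for spatially compact spacetimes a relatively compact region need not admit any nonempty causally disjoint companion. Heredity of orthogonality under passing to subsets is immediate from monotonicity of $J^{\pm}$. For the fourth condition, given $\mathcal{O}_1\perp\mathcal{O}_2$ and $\mathcal{O}_1\perp\mathcal{O}_3$, I would use the first assertion to obtain $\mathcal{O}_4'\in\mathcal{K}(M,g)$ with $\overline{\mathcal{O}_2\cup\mathcal{O}_3}\subseteq\mathcal{O}_4'$ and then set $\mathcal{O}_4=\mathcal{O}_4'\setminus\bigl(J^+(\overline{\mathcal{O}_1})\cup J^-(\overline{\mathcal{O}_1})\bigr)$. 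Since global hyperbolicity forces $J^{\pm}$ of a compact set to be closed, $\mathcal{O}_4$ is open and relatively compact; since $J^{\pm}(\mathcal{O}_1)$ is open (being a union of chronological futures resp.\ pasts, as $\mathcal{O}_1$ is open) with closure $J^{\pm}(\overline{\mathcal{O}_1})$, the hypotheses $\mathcal{O}_i\cap(J^+(\mathcal{O}_1)\cup J^-(\mathcal{O}_1))=\emptyset$ pass to the closed sets $J^{\pm}(\overline{\mathcal{O}_1})$, whence $\mathcal{O}_2\cup\mathcal{O}_3\subseteq\mathcal{O}_4$; $\mathcal{O}_4$ is causally disjoint from $\mathcal{O}_1$ by construction; and $\mathcal{O}_4$ is causally convex because a future-directed causal curve joining two points that lie outside both $J^+(\overline{\mathcal{O}_1})$ and $J^-(\overline{\mathcal{O}_1})$ cannot meet either set without dragging one of its endpoints into it, so such a curve stays in $\mathcal{O}_4'$ and avoids $J^{\pm}(\overline{\mathcal{O}_1})$. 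Hence $\mathcal{O}_4\in\mathcal{K}(M,g)$ satisfies the requirements.

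For the third assertion, recall that a globally hyperbolic spacetime splits as $\mathbb{R}\times S$ and is in particular non-compact, so a relatively compact $\mathcal{O}_1\in\mathcal{I}_0(M,g)$ cannot exhaust $M$; choosing $p\in M\setminus\overline{\mathcal{O}_1}$ and applying the first assertion to the compact set $\overline{\mathcal{O}_1}\cup\{p\}$ yields $\mathcal{O}_2\in\mathcal{K}(M,g)$ with $\overline{\mathcal{O}_1}\cup\{p\}\subseteq\mathcal{O}_2$, so that $\mathcal{O}_1\subsetneq\mathcal{O}_2$ (with the trivial convention that if $\mathcal{O}_1=\emptyset$ one takes any nonempty member of $\mathcal{K}(M,g)$). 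Thus, apart from the appeal to \cite[Lemma A.5.11]{BGP}, the proof is a careful but routine adaptation of the argument behind \cite[Lemma 4.4.8]{BGP} to the somewhat larger class $\mathcal{K}(M,g)$.
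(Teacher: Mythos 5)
Your proposal is essentially correct but follows a genuinely different route from the paper's. The paper's proof is almost entirely a commentary on \cite{BGP}: it observes that although Lemma A.5.11, Propositions A.5.12--A.5.13 and Lemma 4.4.8 there are \emph{stated} for causally compatible sets, their \emph{proofs} actually construct causally convex sets, so all of these results hold verbatim with $\mathcal{K}(M,g)$ in place of $\mathcal{I}_0(M,g)$; the first claim is then a citation of A.5.13, the second a citation of Lemma 4.4.8, and the third follows by applying the first to the compact set $\overline{\mathcal{O}_1}$. You instead re-verify the four directed-set/orthogonality axioms by hand. Your construction for axiom (4) --- $\mathcal{O}_4=\mathcal{O}_4'\setminus\bigl(J^+(\overline{\mathcal{O}_1})\cup J^-(\overline{\mathcal{O}_1})\bigr)$, with openness from closedness of $J^{\pm}$ of a compact set in a globally hyperbolic spacetime and causal convexity from the ``dragging'' argument --- checks out and is more transparent than sending the reader back to \cite{BGP}; your use of $\emptyset$ for axiom (2) matches what \cite{BGP} do for $\mathcal{I}(M,g)$ and is indeed forced in spatially compact spacetimes. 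Your third claim is the paper's argument with some unnecessary bookkeeping for strictness added. What each approach buys: the paper's is shorter but opaque without \cite{BGP} in hand; yours is self-contained modulo the one genuinely hard geometric step (enclosing a compact causally convex set in an open relatively compact causally convex one), which both proofs ultimately delegate to the constructions in \cite[A.5.11--A.5.13]{BGP}.

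One step deserves a warning. You assert that the output of your covering construction ``lies in $\mathcal{I}_0(M,g)\subseteq\mathcal{K}(M,g)$''. Causal compatibility is a statement about the causal \emph{relation} (internal causal curves exist whenever external ones do), while causal convexity constrains the causal \emph{curves} themselves; the former does not imply the latter, and the inclusion that is actually used elsewhere in the paper is the reverse one, $\mathcal{K}(M,g)\subset\mathcal{I}_0(M,g)$ (see the proof of Corollary \ref{compare} --- indeed, if $\mathcal{I}_0\subseteq\mathcal{K}$ held, the third claim of the proposition would be vacuous). The reason your conclusion is nonetheless true is precisely the observation on which the paper's entire proof rests: the specific sets built in \cite[A.5.11--A.5.13]{BGP} happen to be causally convex even though the statements only advertise causal compatibility. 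Since your own description of the final enclosing step already claims causal convexity of the output, the detour through $\mathcal{I}_0(M,g)$ is redundant and should be removed; say directly that the \cite{BGP} construction delivers a causally convex set, which is the non-trivial content one must extract from their proofs rather than from their statements.
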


\begin{proof}  The proof of \cite[Proposition A.5.13]{BGP} relies on Proposition A.5.12 of \cite{BGP}. However the hypothesis of \cite[Proposition A.5.12]{BGP} 
understates what is actually proven. In the second sentence of that proof the authors actually show that the set constructed there is not just causally 
compatible, but in fact causally convex. So the set constructed in \cite[Proposition A.5.13]{BGP} must also be causally convex since it is constructed by an 
application of \cite[Proposition A.5.12]{BGP}. The first claim therefore follows from \cite[Proposition A.5.13]{BGP}. The second claim follows from 
\cite[Lemma 4.4.8]{BGP}. To see this note that the proof of \cite[Lemma 4.4.8]{BGP} relies on an application of \cite[A.5.11, A.5.12 \& A.5.13]{BGP}. We have 
already seen that \cite[A.5.12 \& A.5.13]{BGP} actually yield elements of $\mathcal{K}(M,g)$. A consideration of the proof of \cite[Lemma A.5.11]{BGP} shows that 
here too the authors actually prove causal convexity and not just causal compatibility. Hence the claim follows. The final claim follows by applying the first 
assertion to the fact that $\overline{\mathcal{O}}_1$ is compact.
\end{proof}

\begin{corollary}\label{compare} Let $\cA(M)$ be a local algebra of a globally hyperbolic spacetime $(M,g)$ defined as in Lemma 4.4.8 and Corollary 4.4.12 of 
\cite{BGP}. Then $\cup_{\mathcal{K}(M,g)}\cA(\mathcal{O}) = \cup_{\mathcal{I}_0(M,g)}\cA(\mathcal{O})$. 
\end{corollary}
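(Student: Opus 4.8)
The plan is to prove the two inclusions separately. The inclusion $\cup_{\mathcal{K}(M,g)}\cA(\mathcal{O})\subseteq\cup_{\mathcal{I}_0(M,g)}\cA(\mathcal{O})$ is essentially immediate once one observes that $\mathcal{K}(M,g)\subseteq\mathcal{I}_0(M,g)$. Indeed, an $\mathcal{O}\in\mathcal{K}(M,g)$ is by definition open, relatively compact and causally convex, and as recorded just after the Notation paragraph such a set is globally hyperbolic in its own right; so the only thing left to check is causal compatibility, and this is immediate from causal convexity: if $x\in\mathcal{O}$ and $y\in J^\pm_M(x)\cap\mathcal{O}$, any causal curve of $M$ joining $x$ and $y$ has both endpoints in $\mathcal{O}$, hence lies in $\mathcal{O}$, so $y\in J^\pm_{\mathcal{O}}(x)$ (the reverse inclusion always holds). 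Since for $\mathcal{O}\in\mathcal{K}(M,g)$ the algebra $\cA(\mathcal{O})$ is literally the same subalgebra of $\cA(M)$ regardless of whether we regard $\mathcal{O}$ as an index lying in $\mathcal{K}(M,g)$ or in $\mathcal{I}_0(M,g)$, the claimed inclusion of unions follows.

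For the reverse inclusion I would fix $\mathcal{O}_1\in\mathcal{I}_0(M,g)$ and invoke the final assertion of Proposition \ref{directed} to obtain some $\mathcal{O}_2\in\mathcal{K}(M,g)$ with $\mathcal{O}_1\subset\mathcal{O}_2$. The key point is then to see that the inclusion $\mathcal{O}_1\hookrightarrow\mathcal{O}_2$ is admissible for the $\mathrm{CCR}$-construction of \cite{BGP}, i.e. that $\mathcal{O}_1$ is causally compatible \emph{as a subset of} $\mathcal{O}_2$ and not merely of $M$. This is a short diagram chase: for $x\in\mathcal{O}_1$ one has $J^\pm_{\mathcal{O}_2}(x)=J^\pm_M(x)\cap\mathcal{O}_2$ because $\mathcal{O}_2\in\mathcal{K}(M,g)$ is causally compatible in $M$ (by the previous paragraph), and therefore $J^\pm_{\mathcal{O}_2}(x)\cap\mathcal{O}_1=J^\pm_M(x)\cap\mathcal{O}_1=J^\pm_{\mathcal{O}_1}(x)$, the last equality using causal compatibility of $\mathcal{O}_1$ in $M$; the opposite inclusion $J^\pm_{\mathcal{O}_1}(x)\subseteq J^\pm_{\mathcal{O}_2}(x)\cap\mathcal{O}_1$ is automatic. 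Hence the isotony built into the construction of \cite[Lemma 4.4.8, Corollary 4.4.12]{BGP} supplies a canonical embedding $\cA(\mathcal{O}_1)\subseteq\cA(\mathcal{O}_2)\subseteq\cup_{\mathcal{K}(M,g)}\cA(\mathcal{O})$; as $\mathcal{O}_1$ was arbitrary, $\cup_{\mathcal{I}_0(M,g)}\cA(\mathcal{O})\subseteq\cup_{\mathcal{K}(M,g)}\cA(\mathcal{O})$. Combining the two inclusions yields the asserted equality.

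All the geometric verifications above are genuinely routine; the point that does demand attention -- and hence the main obstacle -- is the bookkeeping about inclusion maps. The expression $\cup_{\mathcal{K}(M,g)}\cA(\mathcal{O})$ only has meaning after all local algebras have been canonically identified with subalgebras of the one ambient algebra $\cA(M)$, and one must be certain that identifying $\cA(\mathcal{O}_1)$ inside $\cA(\mathcal{O}_2)$ and then $\cA(\mathcal{O}_2)$ inside $\cA(M)$ reproduces the direct identification of $\cA(\mathcal{O}_1)$ inside $\cA(M)$. This coherence is precisely what the functoriality underlying \cite[Lemma 4.4.8, Corollary 4.4.12]{BGP} guarantees, so once the category-theoretic observation of the second paragraph is in place nothing further is needed; but it is worth stating explicitly that this coherence, rather than any of the causality estimates, is where the substance of the argument actually resides.
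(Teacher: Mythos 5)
Your proposal is correct and follows exactly the route of the paper's own (one-sentence) proof: one inclusion from $\mathcal{K}(M,g)\subseteq\mathcal{I}_0(M,g)$, the other from the final claim of Proposition \ref{directed} together with the isotony of the construction in \cite[Lemma 4.4.8, Corollary 4.4.12]{BGP}. The additional verifications you supply (causal convexity implies causal compatibility, causal compatibility of $\mathcal{O}_1$ inside $\mathcal{O}_2$, coherence of the embeddings) are precisely the routine details the paper leaves implicit.
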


\begin{proof} This is a straightforward consequence of the final claim of the proposition and the fact that $\mathcal{K}(M,g)\subset \mathcal{I}_0(M,g)$.
\end{proof}

\begin{corollary} Let $Z$ be an arbitrary smooth vector field and $K$ a compact neighbourhood. Then there exists some $t_{K}>0$ such that $K$ is in the domain of 
each $\Psi^Z_s$ with $0<|s|\leq t_{K}$. Given any relatively compact causally compatible globally hyperbolic open subset $\mathcal{O}$ there exists some 
$t_{\mathcal{O}}>0$ such that $\mathcal{O}$ is in the domain of each $\Psi^Z_s$ with $0<|s|\leq t_{\mathcal{O}}$.
\end{corollary}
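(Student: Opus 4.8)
The plan is to deduce both statements from the fundamental theorem on flows of smooth vector fields together with a finite-subcover argument; the Lorentzian (indeed, the causal) structure is irrelevant to this particular claim. Recall that a smooth vector field $Z$ on a smooth manifold $M$ possesses a maximal flow: an open set $\mathcal{D}(Z)\subseteq\mathbb{R}\times M$ containing $\{0\}\times M$, together with a smooth map $(s,p)\mapsto\Psi^Z_s(p)$ on $\mathcal{D}(Z)$, such that for each $p\in M$ the slice $\{s\in\mathbb{R}:(s,p)\in\mathcal{D}(Z)\}$ is an open interval about $0$. (See \cite{GQ1, GQ2} for a detailed treatment in the Riemannian setting, which applies verbatim here.) This is the only external input I would invoke.

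For the first assertion, fix $p\in K$. Since $(0,p)\in\mathcal{D}(Z)$ and $\mathcal{D}(Z)$ is open in $\mathbb{R}\times M$, there is $\varepsilon_p>0$ and an open neighbourhood $U_p\ni p$ in $M$ such that $[-\varepsilon_p,\varepsilon_p]\times U_p\subseteq\mathcal{D}(Z)$ (pass to a basic product neighbourhood of $(0,p)$ inside $\mathcal{D}(Z)$ and shrink the time interval to a closed one). The collection $\{U_p\}_{p\in K}$ is an open cover of the compact set $K$, hence it admits a finite subcover $U_{p_1},\dots,U_{p_n}$. Put $t_K:=\min\{\varepsilon_{p_1},\dots,\varepsilon_{p_n}\}>0$. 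Then $[-t_K,t_K]\times K\subseteq\mathcal{D}(Z)$, which is exactly the statement that $K$ lies in the domain of $\Psi^Z_s$ for every $s$ with $0<|s|\leq t_K$.

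For the second assertion, observe that $\mathcal{O}$ being relatively compact means $\overline{\mathcal{O}}$ is compact. Applying the first assertion with $K=\overline{\mathcal{O}}$ yields some $t_{\overline{\mathcal{O}}}>0$ for which $\overline{\mathcal{O}}$, and hence a fortiori $\mathcal{O}$, lies in the domain of each $\Psi^Z_s$ with $0<|s|\leq t_{\overline{\mathcal{O}}}$; one then sets $t_{\mathcal{O}}:=t_{\overline{\mathcal{O}}}$. I expect no genuine obstacle in this argument: its entire content is the openness of the maximal flow domain plus compactness. The causal compatibility and global hyperbolicity of $\mathcal{O}$ play no role here, although they will become essential later when one wants these local flows to lift compatibly to the local algebras $\cA(\mathcal{O})$.
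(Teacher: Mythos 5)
Your proposal is correct and follows essentially the same route as the paper: both arguments rest on the openness of the flow domain of a smooth vector field together with a finite-subcover/minimum argument over the compact set $K$, and both handle the second claim by applying the first to $\overline{\mathcal{O}}$. The only cosmetic difference is that you extract product neighbourhoods $[-\varepsilon_p,\varepsilon_p]\times U_p$ inside the maximal flow domain (handling both signs of $s$ at once), whereas the paper covers $K$ by the open sets $\mathrm{dom}(\Psi^Z_{t_p})$ and treats positive and negative times separately using the monotonicity of these domains.
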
 

\begin{proof} The first part follows from a compactness argument. First consider positive indices $t_p$. Each $p\in K$ is an element of 
$\mathrm{dom}(\Psi_{t_p}^Z)$ for $t_p>0$ small enough. By compactness $K$ can be covered by finitely many of the $\mathrm{dom}(\Psi_{t_p}^Z)$'s. Since the domains 
increase as $t$ decreases to 0, taking the minimum of the remaining $t$'s yields a $t>0$ for which $K\subset \mathrm{dom}(\Psi_{s}^Z)$ for each $0<s<t$. Now repeat 
the argument for negative $t$'s. For the second claim simply apply the above to the closure of $\mathcal{O}$.
\end{proof}

\begin{lemma} 
Let $(M,g)$ be a time-oriented Lorentzian manifold and $Z$ a Killing vector field with $\Psi_s^Z$ the stages of the local flow. Then $\Psi_s^Z$ will map causal 
curves between two points in its domain to causal curves. 
\end{lemma}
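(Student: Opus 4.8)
The plan is to reduce everything to the characterisation of Killing fields via isometric local flows together with the definition of a Lorentzian isometry. First I would invoke \cite[9.21]{ONeil}: since $Z$ is a Killing vector field, each stage $\Psi_s^Z$ is an isometry from its (open) domain $\mathrm{dom}(\Psi_s^Z)$ onto its image. By Definition \ref{Lor-isom} this means precisely that for every $p\in\mathrm{dom}(\Psi_s^Z)$ and all $v,w\in T_p(M)$ one has $\langle d(\Psi_s^Z)_p(v), d(\Psi_s^Z)_p(w)\rangle_{\Psi_s^Z(p)} = \langle v,w\rangle_p$; in particular each $d(\Psi_s^Z)_p$ is a linear isomorphism of tangent spaces which preserves the metric.

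Next, let $\gamma\colon I\to M$ be a causal curve whose image lies in $\mathrm{dom}(\Psi_s^Z)$ (this is how ``a causal curve between two points in the domain'' is to be read), and set $\sigma = \Psi_s^Z\circ\gamma$. Since $\Psi_s^Z$ is smooth and $\gamma$ takes values in its domain, $\sigma$ is a curve of the same regularity class as $\gamma$, and the chain rule gives $\sigma'(t) = d(\Psi_s^Z)_{\gamma(t)}\bigl(\gamma'(t)\bigr)$ for (almost) every $t$. Applying the metric-preserving identity above with $v=w=\gamma'(t)$ yields $g\bigl(\sigma'(t),\sigma'(t)\bigr) = g\bigl(\gamma'(t),\gamma'(t)\bigr)\le 0$, the inequality being exactly the defining condition on the tangent vectors of a causal curve. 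Moreover, since $d(\Psi_s^Z)_{\gamma(t)}$ is a linear isomorphism, $\sigma'(t)=0$ would force $\gamma'(t)=0$; as $\gamma'(t)\ne 0$ by the definition of causality, we get $\sigma'(t)\ne 0$. Hence every tangent vector of $\sigma$ is causal, so $\sigma$ is a causal curve, which is the assertion.

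There is essentially no deep obstacle here; the one point requiring a little care is the bookkeeping about domains — one must read ``causal curve between two points in the domain'' as a curve whose entire image lies in $\mathrm{dom}(\Psi_s^Z)$, so that the composition $\Psi_s^Z\circ\gamma$ is actually defined — together with the observation that the non-vanishing clause in the definition of a causal vector is precisely what the invertibility of $d\Psi_s^Z$ takes care of. Should one additionally wish the image curve to inherit time-orientation (future- versus past-directedness), one would note that $s\mapsto\Psi_s^Z$ is continuous with $\Psi_0^Z=\mathrm{id}$, so on each connected component of its domain $d\Psi_s^Z$ varies continuously from the identity and hence cannot interchange the two halves of the light cone; but for the present statement, which concerns only causality, this refinement is unnecessary.
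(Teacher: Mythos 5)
Your proposal is correct and follows essentially the same route as the paper's own proof: both arguments rest on the fact that the stages $\Psi_s^Z$ of the local flow of a Killing field are isometries, so that $d(\Psi_s^Z)_p$ preserves the metric and hence sends causal tangent vectors to causal tangent vectors, whence $\Psi_s^Z\circ\gamma$ is causal whenever $\gamma$ is. Your version is in fact slightly more careful, since you explicitly note that the invertibility of $d(\Psi_s^Z)_p$ guarantees the non-vanishing clause in the definition of a causal vector, a point the paper's proof passes over.
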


\begin{proof}
Let $g$ be the canonical Lorentzian-Riemannian metric tensor on $M$ and $Z$ a Killing field with stages $\Psi_s^Z$. For part 1 all but the claim about 
preservation of causal compatibility are a consequence of continuity. Given a causally compatible subset $K$ of the domain of $\Psi_s^Z$, the claim about causal 
compatibility will follow if we can show that $\Psi_s^Z$ (and therefore also $\Psi_{-s}^Z=(\Psi_s^Z)^{-1}$) maps causal curves to causal curves. Let $t\to x(t)$ 
$(a\leq t\leq b)$ be a causal curve in $K$. Since by definition the stages $\Psi_s^Z$ of the local flow of $Z$ are isometries we surely have that 
$0 \geq \langle x'(t_0),x'(t_0)\rangle_{x(t_0)} = \lim_{t\to 0} \frac{1}{t^2}\langle x(t_0+t)-x(t_0), x(t_0+t)-x(t_0)\rangle_{x(t_0)} =  \lim_{t\to 0} 
\frac{1}{t^2}\langle d(\Psi_s)_{x(t_0)}(x(t_0+t))-d(\Psi_s)_{x(t_0)}(x(t_0)), d(\Psi_s)_{x(t_0)}(x(t_0+t))-d(\Psi_s)_{x(t_0)}(x(t_0))\rangle_{\Psi_s(x(t_0))} = 
\langle d(\Psi_s)_{x(t_0)}(x'(t_0)),d(\Psi_s)_{x(t_0)}(x'(t_0))\rangle_{\Psi_s(x(t_0))}$. So stages of local flows of Killing fields preserve causal 
curves. 
\end{proof}

\begin{proposition}\label{locflow} Let $\Psi_s^Z$ be the stages of a local flow of a Killing vector field $Z$. For any $\mathcal{O}\in \mathcal{K}(M,g)$ there 
exists some $0<t_0$ so that $\Psi_s^Z$ is defined on $\mathcal{O}_\cup=\cup_{s\in[-t_0,t_0]}\Psi_s^Z(\mathcal{O})$ for all $s\in [-t_0,t_0]$ with $\Psi_{s+r}^Z=\Psi_s^Z\Psi_r^Z$ 
on $\mathcal{O}$ whenever $-t_0<s, r, s+r<t_0$ and with each $\Psi_s$ $(-t_0<s<t_0)$ mapping $\mathcal{O}$ onto another element of $\mathcal{K}(M,g)$.
\end{proposition}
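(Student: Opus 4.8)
The plan is to manufacture $t_0$ by a chain of compactness arguments and then verify the three claims one at a time. Since $\mathcal{O}\in\mathcal{K}(M,g)$, its closure $\overline{\mathcal{O}}$ is compact, so — by the same compactness argument used in the corollaries above — I would first fix $a>0$ with $\overline{\mathcal{O}}\subseteq\mathrm{dom}(\Psi_u^Z)$ for all $|u|\le a$; equivalently $[-a,a]\subseteq I_p$ for every $p\in\overline{\mathcal{O}}$, where $I_p$ denotes the maximal interval of the integral curve of $Z$ through $p$. Since the smooth flow map $\Phi\colon(u,p)\mapsto\Psi_u^Z(p)$ is continuous on its open domain, which contains the compact set $[-a,a]\times\overline{\mathcal{O}}$, the set $L:=\bigcup_{|u|\le a}\Psi_u^Z(\overline{\mathcal{O}})$ is compact. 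By the characterisation of global hyperbolicity in terms of compactness of causally convex hulls (\cite[Proposition 2.5]{HoMin}), $C:=\mathrm{cau}(L)$ is then compact and contains the image of every causal curve with endpoints in $L$; applying the compactness argument once more to $C$ I would obtain $t_0$ with $0<t_0\le a$ and $C\subseteq\mathrm{dom}(\Psi_u^Z)$ for all $|u|\le t_0$. With this choice the first two assertions are essentially bookkeeping: since $t_0\le a$, each $\Psi_s^Z$ with $|s|\le t_0$ is defined on $\mathcal{O}$, and $\mathcal{O}_\cup\subseteq\bigcup_{|r|\le a}\Psi_r^Z(\overline{\mathcal{O}})=L\subseteq C\subseteq\mathrm{dom}(\Psi_s^Z)$ for $|s|\le t_0$; while for $p\in\mathcal{O}$ and $-t_0<s,r,s+r<t_0$ one has $r,s+r\in(-a,a)\subseteq I_p$, so the standard group property of flows ($I_{\Psi_r^Z(p)}=I_p-r$ and $\Psi_s^Z(\Psi_r^Z(p))=\Psi_{s+r}^Z(p)$ whenever $s\in I_p-r$) yields $\Psi_{s+r}^Z=\Psi_s^Z\Psi_r^Z$ on $\mathcal{O}$.

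Next I would show $\Psi_s^Z(\mathcal{O})\in\mathcal{K}(M,g)$ for $-t_0<s<t_0$, i.e. that $\Psi_s^Z(\mathcal{O})$ is open, relatively compact and causally convex. Openness is immediate because $\Psi_s^Z$ restricts to a diffeomorphism of the open set $\mathrm{dom}(\Psi_s^Z)\supseteq\overline{\mathcal{O}}$ onto $\mathrm{dom}(\Psi_{-s}^Z)$; relative compactness follows since $\Psi_s^Z(\overline{\mathcal{O}})$ is compact (hence closed in $M$) and contains $\Psi_s^Z(\mathcal{O})$, so $\overline{\Psi_s^Z(\mathcal{O})}\subseteq\Psi_s^Z(\overline{\mathcal{O}})$. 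For causal convexity, take $x=\Psi_s^Z(x')$ and $y=\Psi_s^Z(y')$ with $x',y'\in\mathcal{O}$ and let $c$ be a causal curve in $M$ from $x$ to $y$. Since $x,y\in\Psi_s^Z(\overline{\mathcal{O}})\subseteq L$, the image of $c$ lies in $\mathrm{cau}(L)=C\subseteq\mathrm{dom}(\Psi_{-s}^Z)$, so $\Psi_{-s}^Z$ may legitimately be applied along all of $c$; by the Lemma on Killing local flows (stages of the local flow of a Killing field carry causal curves between points of their domain to causal curves), $\Psi_{-s}^Z\circ c$ is again a causal curve, with endpoints $\Psi_{-s}^Z(x)=x'$ and $\Psi_{-s}^Z(y)=y'$ (using $\Psi_{-s}^Z\Psi_s^Z=\mathrm{id}$ on $\mathrm{dom}(\Psi_s^Z)\supseteq\mathcal{O}$). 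Causal convexity of $\mathcal{O}$ then forces $\Psi_{-s}^Z\circ c$ to have image in $\mathcal{O}$, so applying $\Psi_s^Z$ back (legitimate because $C\subseteq\mathrm{dom}(\Psi_{-s}^Z)$ and $\Psi_s^Z\Psi_{-s}^Z=\mathrm{id}$ there) recovers $c$ with image in $\Psi_s^Z(\mathcal{O})$. Hence $\Psi_s^Z(\mathcal{O})$ is causally convex, which finishes that claim.

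I expect the causal convexity step to be the main obstacle, and it is the reason the compact set $C=\mathrm{cau}(L)$ has to be introduced at all: $\Psi_{-s}^Z$ is only a partial map, so one cannot legitimately ``pull back'' a causal curve joining two points of $\Psi_s^Z(\mathcal{O})$ until one knows the curve cannot run out of $\mathrm{dom}(\Psi_{-s}^Z)$. Global hyperbolicity — precisely in the form that causally convex hulls of compacta are compact — is what rescues the argument, since it confines every causal curve between points of the fixed compact set $L$ to the fixed compact set $C$, which can then be forced inside $\mathrm{dom}(\Psi_{-s}^Z)$ by shrinking $t_0$. Everything else is a routine unwinding of definitions together with the standard theory of flows of smooth vector fields.
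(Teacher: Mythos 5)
Your proof is correct and follows the same underlying strategy as the paper's: isolate causal convexity as the hard claim, confine any causal curve joining two points of $\Psi_s^Z(\mathcal{O})$ to a fixed compact set on which the reverse stage $\Psi_{-s}^Z$ is guaranteed to be defined, pull the curve back to a causal curve joining points of $\mathcal{O}$, invoke causal convexity of $\mathcal{O}$, and push forward again. The differences are in the auxiliary machinery. First, you establish compactness of $L=\bigcup_{|u|\le a}\Psi_u^Z(\overline{\mathcal{O}})$ directly from joint continuity of the flow map on $[-a,a]\times\overline{\mathcal{O}}$, which is cleaner than the paper's Tychonoff-product argument. Second, for the confinement step the paper uses Proposition \ref{directed} to wrap the compact union inside a causally convex element $\mathcal{O}_\cup$ of $\mathcal{K}(M,g)$ and then uses causal convexity of that set, whereas you work with the causally convex hull $C=\mathrm{cau}(L)$ and its compactness via the Hounnonkpe--Minguzzi characterisation of global hyperbolicity; both routes deliver exactly the same confinement, and both ingredients are recorded in the paper, so this is a matter of taste. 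Two cosmetic remarks: under the paper's literal definition of $\mathrm{cau}(V)$ as a union of images of causal curves, the inclusion $L\subseteq C$ you use in the chain $\mathcal{O}_\cup\subseteq L\subseteq C\subseteq\mathrm{dom}(\Psi_s^Z)$ is not automatic, so you should either take $C=L\cup\mathrm{cau}(L)$ (still compact) or run the final shrinking of $t_0$ over that union; and your explicit verification of openness and relative compactness of $\Psi_s^Z(\mathcal{O})$ fills in details the paper delegates to the general theory of local flows.
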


\begin{proof} The most difficult property to prove is the preservation of causal convexity for $t$ in some interval. The rest of the claims are fairly direct 
consequences of the description in for example \S 9.3 of \cite{GQ1}. We proceed with proving the claim regarding causal convexity. Let $K$ be a compact set and 
select $t>0$ so that $\Psi_s^Z$ is defined on $K$ for all $s\in [-t,t]$ with $\Psi_{s+r}^Z=\Psi_s^Z\Psi_r^Z$ on $K$ whenever $-t<s, r, s+r<t$. By Tychonoff's 
theorem $\Pi_{s\in[-t,t]}\Psi_s^Z(K)$ is compact. We recall that $\Pi_{s\in[-t,t]}\Psi_s^Z(K)$ is equipped with the coarsest topology for which all the 
projections $\pi_t\colon \Pi_{s\in[-t,t]}\Psi_s^Z(K)\to \Psi_t^Z(K)$ are continuous \cite[Example 3, Ch 1, \S2.3]{Bour}. That in turn ensures that the function 
$\iota\colon \Pi_{s\in[-t,t]}\Psi_s^Z(K)\to \cup_{s\in[-t,t]}\Psi_s^Z(K)$ which for every $x\in \Psi_t^Z(K)$ is given by $\iota(t,x)=x\in \Psi_t^Z(K)$, is a 
continuous surjection. But then the image $\cup_{s\in[-t,t]}\Psi_s^Z(K)$ is itself compact. We may take $K$ to be the closure of some $\mathcal{O}\in \mathcal{K}(M,g)$. 
By Proposition \ref{directed} we may then select another element $\mathcal{O}_{\cup}\in\mathcal{K}(M,g)$ containing $\cup_{s\in[-t,t]}\Psi_s^Z(K)$. Now pick 
$0<t_0<t$ so that $\Psi_s^Z$ is defined on $\mathcal{O}_{\cup}$ for all $s\in [-t_0,t_0]$ with $\Psi_{s+r}^Z=\Psi_s^Z\Psi_r^Z$ on $\mathcal{O}_{\cup}$ whenever 
$-t_0<s, r, s+r<t_0$. Given $p,q\in \mathcal{O}$ the causal convexity of $\mathcal{O}_{\cup}$ ensures that any causal curve $\gamma$ joining $\Psi_{t_0}^Z(p)$ and 
$\Psi_{t_0}^Z(q)$ is contained in $\mathcal{O}_{\cup}$. Since $\Psi_{-t_0}^Z$ is defined on $\mathcal{O}_{\cup}$ it will map the causal curve $\gamma$ onto a 
causal curve $\Psi_{-t_0}^Z(\gamma)$ joining $p$ and $q$. The causal convexity of $\mathcal{O}$ ensures that this curve is contained in $\mathcal{O}$ joining $p$ 
and $q$. But then $\gamma=\Psi_{t_0}^Z(\Psi_{-t_0}^Z(\gamma))$ is a causal curve inside $\Psi_{t_0}^Z(\mathcal{O})$ joining $\Psi_{t_0}^Z(p)$ and 
$\Psi_{t_0}^Z(q)$. So $\Psi_{t_0}^Z(\mathcal{O})$ is causally convex.
\end{proof}

\section{Killing flows on local algebras: an illustrative example based on the Klein-Gordon equation}\label{KillingFlows}

\begin{remark}\label{BGP-localg}
We will use the ideas in Lemma 4.4.8 and Corollary 4.4.12 of \cite{BGP} to construct a local algebra from the collection $\mathcal{K}(M,g)$. Each local algebra 
$\cA(\mathcal{O})$ will be the Weyl algebra corresponding to the solutions of the localised version of the Klein-Gordon equation as described in Proposition 3.5.1 
of \cite{BGP}. We remind the reader that a Weyl algebra is a unital $C^*$-algebra generated by a set of unitaries $\{W(f)\colon f\in \cH\}$ governed by the 
relations $W(f)W(g)=e^{-i\sigma(f,g)/2}W(f+g)$ and $W(f)^*=W(-f)$, where $\cH$ is a real symplectic space equipped with a non-singular symplectic form $\sigma$. 

To see the claim that the local algebras mentioned above are Weyl algebras of solutions of the Klein-Gordon equation, note that Theorem 3.2.15 of \cite{BD} confirms that in the 
globally hyperbolic case the quotient space described prior to \cite[Lemma 4.3.8]{BGP} may be thought of as a space of solutions of the Klein-Gordon equation. Let $\mathcal{O}$, 
$t_0$ and $\Psi_s^Z$ be as in the previous Proposition. The remaining challenge is to show that for any stage $\Psi_s^Z$ $(0<s\leq t_0)$, we will for the local algebras 
constructed according to the prescription in Lemma 4.4.8 and Corollary 4.4.12 of \cite{BGP} have that the map $\Psi_s^Z\colon \mathcal{O}\to M$ canonically lifts to a 
$*$-isomorphism $\pi_s^Z$ from $\cA(\mathcal{O})$ to $\cA(\Psi_s^Z(\mathcal{O}))$ with $\pi_s^Z\circ\pi_r^Z=\pi_{r+s}^Z$ whenever the same is true for the $\Psi_s^Z$'s. 
This seems to only be possible for the class of sets $\mathcal{K}(M,g)$ described in \cite{BFV}.
\end{remark}

\medskip

\textbf{Note:} For the rest of this section $\cA(M)$ will be the local algebra constructed from the collection $\mathcal{K}(M,g)$ according to the ideas in 
Lemma 4.4.8 and Corollary 4.4.12 of \cite{BGP} 

\medskip

\begin{lemma}\label{Green} Let $(M,g)$ be globally hyperbolic and let $\Psi$ be an isometric diffeomorphism which maps $\mathcal{O}\in\mathcal{K}(M,g)$ onto 
another element $\Psi(\mathcal{O})$ of $\mathcal{K}(M,g)$. Let $P_{\mathcal{O}}$ and $P_{\Psi(\mathcal{O})}$ be the differential operators corresponding to the 
Klein-Gordon equation on $\mathcal{O}$ and $\Psi(\mathcal{O})$ respectively. Let $\widetilde{G}_{\mathcal{O}}^\pm$ (resp. $\widetilde{G}_{\Psi(\mathcal{O})}^\pm$) 
be the corresponding Green's operators as described in \cite[Prop 3.5.1]{BGP}. 
\begin{itemize} 
\item If $\Psi^{-1}$ preserves time-orientation, then $f\to\widetilde{G}^+_{\Psi(\mathcal{O})}(f\circ\Psi^{-1})\circ\Psi$ (where 
$f\in \mathscr{D}(\mathcal{O}, E)$) agrees with $\widetilde{G}^+_{\mathcal{O}}$.
\item If $\Psi^{-1}$ reverses time-orientation, then $f\to\widetilde{G}^+_{\Psi(\mathcal{O})}(f\circ\Psi^{-1})\circ\Psi$ (where 
$f\in \mathscr{D}(\mathcal{O}, E)$) agrees with $\widetilde{G}^-_{\mathcal{O}}$.
\end{itemize}
\end{lemma}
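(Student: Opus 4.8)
The plan is to exploit the uniqueness of the advanced and retarded Green's operators on a globally hyperbolic spacetime. Since $\mathcal{O}$ and $\Psi(\mathcal{O})$ are both globally hyperbolic in their own right (being elements of $\mathcal{K}(M,g)$), each of $\widetilde{G}^\pm_{\mathcal{O}}$ and $\widetilde{G}^\pm_{\Psi(\mathcal{O})}$ is the \emph{unique} operator (on $\mathcal{O}$, resp. on $\Psi(\mathcal{O})$) satisfying $P\,\widetilde{G}^\pm = \mathrm{id}$, $\widetilde{G}^\pm P = \mathrm{id}$ on compactly supported sections, and the support condition $\mathrm{supp}(\widetilde{G}^\pm f)\subseteq J^\pm(\mathrm{supp}\,f)$ (\cite[Prop.\ 3.5.1]{BGP}). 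Writing $Hf := \widetilde{G}^+_{\Psi(\mathcal{O})}(f\circ\Psi^{-1})\circ\Psi$, which in bundle language is $Hf = \Psi^*\bigl(\widetilde{G}^+_{\Psi(\mathcal{O})}((\Psi^{-1})^*f)\bigr)$, the lemma reduces to checking that $H$ satisfies the three characterising properties of $\widetilde{G}^+_{\mathcal{O}}$ when $\Psi^{-1}$ preserves time-orientation, and those of $\widetilde{G}^-_{\mathcal{O}}$ when $\Psi^{-1}$ reverses it.

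First I would record the naturality of the Klein--Gordon operator under isometries. Because $\Psi$ is an isometry, $\Psi^*(g{\upharpoonright}\Psi(\mathcal{O})) = g{\upharpoonright}\mathcal{O}$, so $\Psi^*$ intertwines the d'Alembertians, $\Box_{\mathcal{O}}\circ\Psi^* = \Psi^*\circ\Box_{\Psi(\mathcal{O})}$, and since scalar curvature is an isometry invariant it carries $R_{\Psi(\mathcal{O})}$ to $R_{\mathcal{O}}$; as $m^2$ is a constant this yields $P_{\mathcal{O}}\circ\Psi^* = \Psi^*\circ P_{\Psi(\mathcal{O})}$ (here $\Psi^*$ acts on sections of $E$ via the bundle map covering $\Psi$, which for the scalar field is simply composition with $\Psi$). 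Since $\Psi$ is a diffeomorphism, $\Psi^*$ restricts to bijections $\mathscr{D}(\Psi(\mathcal{O}),E)\to\mathscr{D}(\mathcal{O},E)$ and $C^\infty(\Psi(\mathcal{O}),E)\to C^\infty(\mathcal{O},E)$, so $H$ is a well-defined map $\mathscr{D}(\mathcal{O},E)\to C^\infty(\mathcal{O},E)$. Applying this naturality (with $\Psi$ and with $\Psi^{-1}$) together with the Green's operator identities on $\Psi(\mathcal{O})$ then gives at once $P_{\mathcal{O}}Hf = \Psi^*\bigl(P_{\Psi(\mathcal{O})}\widetilde{G}^+_{\Psi(\mathcal{O})}((\Psi^{-1})^*f)\bigr) = \Psi^*((\Psi^{-1})^*f) = f$ and, for $f\in\mathscr{D}(\mathcal{O},E)$, $HP_{\mathcal{O}}f = \Psi^*\bigl(\widetilde{G}^+_{\Psi(\mathcal{O})}P_{\Psi(\mathcal{O})}((\Psi^{-1})^*f)\bigr) = f$.

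It remains to verify the support condition, and this is the only place the time-orientation hypothesis enters. From $\mathrm{supp}(\widetilde{G}^+_{\Psi(\mathcal{O})}h)\subseteq J^+_{\Psi(\mathcal{O})}(\mathrm{supp}\,h)$ we get $\mathrm{supp}(Hf) = \Psi^{-1}\bigl(\mathrm{supp}(\widetilde{G}^+_{\Psi(\mathcal{O})}((\Psi^{-1})^*f))\bigr)\subseteq \Psi^{-1}\bigl(J^+_{\Psi(\mathcal{O})}(\Psi(\mathrm{supp}\,f))\bigr)$. An isometry preserves the causal character of tangent vectors and hence maps causal curves between points of its domain to causal curves (the curve-by-curve argument already used above for the stages of a Killing flow). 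If $\Psi^{-1}$ preserves time-orientation then so does $\Psi$, and both therefore send future-directed causal curves to future-directed ones, whence $\Psi^{-1}\bigl(J^+_{\Psi(\mathcal{O})}(\Psi(A))\bigr) = J^+_{\mathcal{O}}(A)$ for every $A\subseteq\mathcal{O}$; this gives $\mathrm{supp}(Hf)\subseteq J^+_{\mathcal{O}}(\mathrm{supp}\,f)$, and uniqueness forces $H = \widetilde{G}^+_{\mathcal{O}}$. If instead $\Psi^{-1}$ reverses time-orientation, the same reasoning turns future cones into past cones, $\Psi^{-1}\bigl(J^+_{\Psi(\mathcal{O})}(\Psi(A))\bigr) = J^-_{\mathcal{O}}(A)$, so $\mathrm{supp}(Hf)\subseteq J^-_{\mathcal{O}}(\mathrm{supp}\,f)$ and uniqueness forces $H = \widetilde{G}^-_{\mathcal{O}}$. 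The main obstacle here is not any single hard estimate but the bookkeeping: one must consistently track how $\Psi$ and $\Psi^{-1}$ act on the causal cones, and be careful that the pullback is taken on sections of $E$ rather than merely on functions --- harmless for the Klein--Gordon field (trivial line bundle), but worth a sentence of care in a general bundle formulation.
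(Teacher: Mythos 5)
Your proposal is correct and follows essentially the same route as the paper's proof: both verify that $f\mapsto\widetilde{G}^{+}_{\Psi(\mathcal{O})}(f\circ\Psi^{-1})\circ\Psi$ satisfies the two Green's-operator identities via the intertwining $P_{\mathcal{O}}\circ\Psi^{*}=\Psi^{*}\circ P_{\Psi(\mathcal{O})}$ (which the paper simply cites from Dimock rather than rederiving from isometry invariance), then check the support condition according to whether $\Psi^{-1}$ preserves or reverses time-orientation, and close by uniqueness of advanced/retarded Green's operators on globally hyperbolic manifolds. The only cosmetic difference is that you spell out the naturality of the Klein--Gordon operator and the transformation of the causal cones in slightly more detail than the paper does.
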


\begin{proof} We observe that $\mathcal{O}, \Psi(\mathcal{O}) \in \mathcal{K}(M,g)$ are in effect globally hyperbolic Lorentzian manifolds in their own right. 
We therefore have access to the results of Dimock \cite{Dim} who showed on page 226 of his paper that $\Psi^*P_{\mathcal{O}}=P_{\Psi(\mathcal{O})}\Psi^*$ where 
$\Psi^*$ is the push-forward. Given $f\in\mathscr{D}(\mathcal{O},E)$ this will by \cite[Def 3.4.1]{BGP} mean that 
$$\widetilde{P}_{\mathcal{O}}(\widetilde{G}^\pm_{\Psi(\mathcal{O})}(f\circ\Psi^{-1})\circ\Psi)= P_{\Psi(\mathcal{O})}\widetilde{G}^\pm_{\Psi(\mathcal{O})}(f\circ\Psi^{-1})\circ\Psi$$ 
$$= PG(f\circ\Psi^{-1}_{ext}){\upharpoonright}{\Psi(\mathcal{O})}\circ\Psi= (f\circ\Psi^{-1}_{ext}){\upharpoonright}{\Psi(\mathcal{O})}\circ\Psi=f.$$Similarly 
$$\widetilde{G}^\pm_{\Psi(\mathcal{O})}(\widetilde{P}_{\mathcal{O}}(f)\circ\Psi^{-1})\circ\Psi = \widetilde{G}^\pm_{\Psi(\mathcal{O})}(\widetilde{P}_{\Psi(\mathcal{O})}(f\circ\Psi^{-1}))\circ\Psi$$ 
$$=G(P(f\circ\Psi^{-1})_{ext})\circ\Psi = (f\circ\Psi^{-1})_{ext}\circ\Psi=f.$$If $\Psi^{-1}$ preserves time-orientation, then since it also preserves causal 
curves, we may therefore conclude from \cite[Prop 3.5.1]{BGP} that $$\mathrm{supp}(\widetilde{G}^\pm_{\Psi(\mathcal{O})}(f\circ\Psi^{-1})\circ\Psi) \subseteq 
\Psi^{-1}(\mathrm{supp}(\widetilde{G}^\pm_{\Psi(\mathcal{O})}(f\circ\Psi^{-1})))\subset \Psi^{-1}(J^\pm_{\Psi(\mathcal{O})}(f\circ\Psi^{-1}))= J^\pm_{\mathcal{O}}(f).$$Thus 
$f\to \widetilde{G}^\pm_{\Psi(\mathcal{O})}(f\circ\Psi^{-1})\circ\Psi$ is an advanced Green's operator for the action of 
$P_\mathcal{O}$ on $\mathcal{O}$. But for globally hyperbolic Lorentzian manifolds such operators are unique, which then proves the claim in this case. The proof 
of the case where $\Psi^{-1}$ reverses time-orientation is similar.
\end{proof}

\begin{theorem}\label{localg-ex}
Let $(M,g)$ be a globally hyperbolic and let $\Psi$ be an isometric diffeomorphism which maps $\mathcal{O}\in\mathcal{K}(M,g)$ onto another element 
$\Psi(\mathcal{O})$ of $\mathcal{K}(M,g)$. 
\begin{enumerate}
\item For the local algebra described in part (1) of \cite[Theorem 3.4]{LM} the prescription $f\to f\circ \Psi^{-1}$ where $f\in C^\infty_0(\mathcal{O})$ lifts to 
a $*$-isomorphism $\pi_\Psi$ from $\cA(\mathcal{O})$ to $\cA(\Psi(\mathcal{O}))$ which sends $W(f)$ where $f\in C^\infty_0(\mathcal{O})$ to $W(f\circ \Psi^{-1})$.
\item If $\Psi^{-1}$ preserves time-orientation then for the local algebra described in \cite[Lemmata 4.4.8 \& 4.4.10]{BGP} we also have that the map 
$\Psi\colon \mathcal{O}\to \Psi(\mathcal{O})$ canonically lifts to a $*$-isomorphism $\pi_\Psi$ from $\cA(\mathcal{O})$ to $\cA(\Psi(\mathcal{O}))$ which 
implements $\Psi$ in the sense of sending $W(\widetilde{G}_{\mathcal{O}}(f))$ where $f\in \mathscr{D}(\mathcal{O}, E)$ to 
$W(\widetilde{G}_{\Psi(\mathcal{O})}(f\circ\Psi^{-1}))$.
\item If $\Psi^{-1}$ reverses time-orientation then for the local algebra described in \cite[Lemmata 4.4.8 \& 4.4.10]{BGP} we have that the map $\Psi\colon \mathcal{O}\to \Psi(\mathcal{O})$ 
canonically lifts to a $*$-anti-isomorphism $\pi_\Psi$ from $\cA(\mathcal{O})$ to $\cA(\Psi(\mathcal{O}))$ which implements $\Psi$ in the above sense, but this time sending 
$W(\widetilde{G}_{\mathcal{O}}(f))$ to 
$W(-\widetilde{G}_{\Psi(\mathcal{O})}(f\circ\Psi^{-1}))$.
\end{enumerate}
\end{theorem}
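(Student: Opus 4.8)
The plan is to reduce each of the three assertions to a statement about the (pre)symplectic spaces underlying the relevant Weyl algebras, and then to invoke the functoriality of the CCR construction. Recall that if $(\cH_1,\sigma_1)$, $(\cH_2,\sigma_2)$ are (pre)symplectic spaces and $T:\cH_1\to\cH_2$ is a linear bijection, then: if $\sigma_2(Tf,Tg)=\sigma_1(f,g)$ the prescription $W(f)\mapsto W(Tf)$ extends uniquely to a $*$-isomorphism of the associated Weyl $C^*$-algebras; while if $\sigma_2(Tf,Tg)=-\sigma_1(f,g)$ the same prescription extends uniquely to a $*$-\emph{anti}-isomorphism. The latter is immediate from the Weyl relations: then $W(Tg)W(Tf)=e^{-i\sigma_2(Tg,Tf)/2}W(T(f+g))=e^{-i\sigma_1(f,g)/2}W(T(f+g))$, which is precisely the image of $W(f)W(g)=e^{-i\sigma_1(f,g)/2}W(f+g)$, and $*$-preservation is clear from $W(f)^*=W(-f)$.

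\textbf{Parts (2) and (3).} Here the underlying space of $\cA(\mathcal{O})$ (resp.\ $\cA(\Psi(\mathcal{O}))$) is the solution space $\{\widetilde{G}_{\mathcal{O}}(f):f\in\mathscr{D}(\mathcal{O},E)\}$ equipped with the standard symplectic form $\sigma_{\mathcal{O}}(\psi_1,\psi_2)=\int_{\Sigma}(\psi_1\nabla_n\psi_2-\psi_2\nabla_n\psi_1)\,d\Sigma$ over a Cauchy surface $\Sigma$ of $\mathcal{O}$ with future-directed unit normal $n$. First I would invoke Lemma \ref{Green} (together with the ``similar'' computation it omits for the time-reversing case) to conclude that $\widetilde{G}_{\Psi(\mathcal{O})}(f\circ\Psi^{-1})\circ\Psi$ equals $\widetilde{G}^{+}_{\mathcal{O}}(f)$ or $\widetilde{G}^{-}_{\mathcal{O}}(f)$ according to the stated convention, so that $\widetilde{G}_{\Psi(\mathcal{O})}(f\circ\Psi^{-1})=\widetilde{G}_{\mathcal{O}}(f)\circ\Psi^{-1}$ when $\Psi^{-1}$ preserves time-orientation and $\widetilde{G}_{\Psi(\mathcal{O})}(f\circ\Psi^{-1})=-\widetilde{G}_{\mathcal{O}}(f)\circ\Psi^{-1}$ when it reverses it; in particular the prescription descends to the quotient (respects $\ker\widetilde{G}_{\mathcal{O}}=\mathrm{ran}\,\widetilde{P}_{\mathcal{O}}$), by Dimock's intertwining relation $\Psi^*P_{\mathcal{O}}=P_{\Psi(\mathcal{O})}\Psi^*$ already used in Lemma \ref{Green}. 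In both cases the induced map on solutions is $\psi\mapsto\psi\circ\Psi^{-1}$ (in (3) only after absorbing the sign, which is exactly why the image is written $W(-\widetilde{G}_{\Psi(\mathcal{O})}(f\circ\Psi^{-1}))$). Finally, since $\Psi$ is an isometry it carries $\Sigma$ to a Cauchy surface of $\Psi(\mathcal{O})$, preserves the covariant derivative and the induced surface measure, and maps $n$ to $+$ (resp.\ $-$) the future normal of the image surface according as it preserves or reverses time-orientation; hence $\sigma_{\Psi(\mathcal{O})}(\psi_1\circ\Psi^{-1},\psi_2\circ\Psi^{-1})=\pm\sigma_{\mathcal{O}}(\psi_1,\psi_2)$. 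Thus the induced map is a symplectomorphism in case (2) and an anti-symplectomorphism in case (3), and the first paragraph delivers the claimed $*$-isomorphism, resp.\ $*$-anti-isomorphism, with the indicated action on generators.

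\textbf{Part (1).} This is handled by the same method but more directly: for the model of \cite[Theorem 3.4(1)]{LM}, in which the generators $W(f)$ are indexed by $f\in C^\infty_0(\mathcal{O})$ and the presymplectic form is $\sigma_{\mathcal{O}}(f,g)=\int_{\mathcal{O}}f\,\widetilde{G}_{\mathcal{O}}(g)\,d\mathrm{vol}_g$, the transformation rule for $\widetilde{G}$ from Lemma \ref{Green} together with the volume-preserving change of variables under the isometry $\Psi$ shows that $f\mapsto f\circ\Psi^{-1}$ intertwines $\sigma_{\mathcal{O}}$ and $\sigma_{\Psi(\mathcal{O})}$; since this map is clearly a linear bijection of $C^\infty_0(\mathcal{O})$ onto $C^\infty_0(\Psi(\mathcal{O}))$, it lifts to the stated $*$-isomorphism sending $W(f)$ to $W(f\circ\Psi^{-1})$.

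\textbf{Main obstacle.} I expect the principal difficulty to be the sign bookkeeping in the time-reversing case: one must track simultaneously the sign picked up by the causal propagator $\widetilde{G}=\widetilde{G}^{+}-\widetilde{G}^{-}$ under the interchange of advanced and retarded fundamental solutions (Lemma \ref{Green}), the sign picked up by the symplectic form when the future-directed normal is reversed, and the sign deliberately absorbed into the formula $W(\widetilde{G}_{\mathcal{O}}(f))\mapsto W(-\widetilde{G}_{\Psi(\mathcal{O})}(f\circ\Psi^{-1}))$, and verify that these combine so that the induced map on solutions is genuinely anti-symplectic (rather than symplectic with a stray sign elsewhere, which would wrongly produce a $*$-isomorphism). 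A secondary point needing care is the assertion, used throughout, that $\Psi$ carries a Cauchy surface of $\mathcal{O}$ to one of $\Psi(\mathcal{O})$ and respects the attendant geometric data; this is precisely where the hypothesis that $\Psi$ is an \emph{isometric} diffeomorphism of globally hyperbolic spacetimes, and not merely a diffeomorphism, is essential.
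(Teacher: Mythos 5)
Your treatment of parts (2) and (3) follows essentially the same skeleton as the paper's proof: both use Lemma \ref{Green} to determine how the advanced/retarded Green's operators transform, conclude that $\widetilde{G}_{\Psi(\mathcal{O})}(f\circ\Psi^{-1})=\pm\widetilde{G}_{\mathcal{O}}(f)\circ\Psi^{-1}$ according to time-orientation, and then feed the resulting (anti-)symplectic map into the Bogoliubov/CCR functoriality (BGP Theorem 4.2.9 in the paper). The differences are two. First, you verify the $\pm$ behaviour of the symplectic form via the Cauchy-surface representation $\int_\Sigma(\psi_1\nabla_n\psi_2-\psi_2\nabla_n\psi_1)\,d\Sigma$, which forces you to also justify that $\Psi$ carries Cauchy surfaces to Cauchy surfaces and flips the future normal; the paper instead stays with the spacetime form $\tau(\widetilde{G}(g),\widetilde{G}(f))=\int_M\langle g,\widetilde{G}(f)\rangle\,dV$ and needs only measure-preservation of the isometry plus Lemma \ref{Green}, which is more economical and avoids invoking the equivalence of the two descriptions of the form. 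Second, for the anti-isomorphism you assert directly that an anti-symplectic bijection induces a $*$-anti-isomorphism of Weyl algebras; the paper reaches the same conclusion by first producing a $*$-isomorphism onto the Weyl algebra built over the negated form and then observing (via BGP Example 4.2.2) that its generators are the adjoints $W(\varphi)^*=W(-\varphi)$ of the original ones, so that composing with the involution gives the anti-isomorphism. Your one-step version is a legitimate streamlining, though the extension from generators to the whole $C^*$-algebra deserves at least a citation of the uniqueness of the Weyl $C^*$-algebra.

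For part (1), however, your reconstruction does not match the statement you are proving. You posit that the presymplectic form on $C^\infty_0(\mathcal{O})$ is $\int_{\mathcal{O}}f\,\widetilde{G}_{\mathcal{O}}(g)\,d\mathrm{vol}_g$; but if that were the form, then by your own Lemma \ref{Green} analysis the map $f\mapsto f\circ\Psi^{-1}$ would be \emph{anti}-symplectic whenever $\Psi^{-1}$ reverses time-orientation, and part (1) --- which carries no time-orientation hypothesis and promises a $*$-isomorphism --- would be false. The paper's proof of claim (1) involves no Green's operator at all: it shows only that the isometry preserves the volume measure, hence the $L^2(\mathcal{O})$ pairing $\int_{\mathcal{O}}f\bar{g}\,d\mu$, and the symplectic structure of the algebra in \cite[Theorem 3.4(1)]{LM} is built from that pairing; this is why no orientation assumption is needed there. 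You should either check the actual construction in \cite{LM} or restrict your part (1) argument to the orientation-preserving case.
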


\begin{proof}
\textbf{Claim 1:} Firstly note that \cite[Prop 5.19]{GQ2} suggests that if $d\mu$ is the measure coming from the volume form, then on $\mathcal{O}$ the stage 
$\Psi_s$ is measure preserving. So for any $f, g\in C_0^\infty(M)$ with support in $\mathcal{O}$ we have that $\int_{\mathcal{O}}f\overline{g}\,d\mu = 
\int_{\Psi(\mathcal{O})}(f\circ\Psi^{-1})(\overline{g\circ\Psi^{-1}})\,d\mu$ where we have used the fact that $f\circ\Psi^{-1}$ and $g\circ\Psi^{-1}$ are elements 
of $C_0^\infty(M)$ supported on $\Psi(\mathcal{O})$. The claim now follows from \cite[Theorem 4.2.9]{BGP}.

\medskip 

\textbf{Claims 2 and 3:} Suppose that $\Psi$ either reverses or preserves time-orientation, and let $\widetilde{P}_{\mathcal{O}}$ be the Klein-Gordon operator and 
$\widetilde{G}_{\mathcal{O}}$ the Green's operator associated with $\mathcal{O}$ as described in \cite[Prop 3.5.1]{BGP}. It is now clear from Lemma \ref{Green} 
that the prescription $f\to f\circ\Psi^{-1}$ will map $\widetilde{G}^+_{\mathcal{O}}(f)-\widetilde{G}^-_{\mathcal{O}}(f)$ onto either 
$\widetilde{G}^+_{\Psi(\mathcal{O})}(f\circ\Psi^{-1})\circ\Psi-\widetilde{G}^-_{\Psi(\mathcal{O})}(f\circ\Psi^{-1})\circ\Psi$ or 
$\widetilde{G}^-_{\Psi(\mathcal{O})}(f\circ\Psi^{-1})\circ\Psi-\widetilde{G}^+_{\Psi(\mathcal{O})}(f\circ\Psi^{-1})\circ\Psi$, depending on whether $\Psi$ either 
preserves or reverses time-orientation. 

\medskip

\emph{Case 1 - preservation of time-orientation:} If $\Psi$ preserves time orientation then the facts about preservation of 
measure noted in the first part of the proof, ensure that \begin{eqnarray*}
\int_M\langle g, \widetilde{G}_{\mathcal{O}}(f)\rangle\,dV 
&=& \int_M\langle g, \widetilde{G}_{\Psi(\mathcal{O})}(f\circ\Psi^{-1})\circ\Psi\rangle\,dV\\
&=& \int_M\langle g\circ\Psi^{-1}, \widetilde{G}_{\Psi(\mathcal{O})}(f\circ\Psi^{-1})\rangle\,dV.
\end{eqnarray*}
Since $\Psi$ preserves time-orientation, the prescription $\psi\to \psi\circ\Psi^{-1}$ yields a symplectic map mapping 
$\{\widetilde{G}_{\mathcal{O}}(f)\colon f\in \mathscr{D}(\mathcal{O},E)\}$ onto $\{\widetilde{G}_{\Psi(\mathcal{O})}(f)\colon 
f\in\mathscr{D}(\Psi(\mathcal{O}),E)\}$. To in this case obtain the conclusion, all that is required is to apply Theorem 4.2.9 
of \cite{BGP} to the above fact.

\medskip

\emph{Case 2 - reversal of time-orientation:} If on the other hand $\Psi$ reverses time-orientation, a similar computation shows that 
$\int_M\langle g, \widetilde{G}_{\mathcal{O}}(f)\rangle\,dV = -\int_M\langle g\circ\Psi^{-1}, \widetilde{G}_{\Psi(\mathcal{O})}(f\circ\Psi^{-1})\rangle\,dV$. To 
obtain the conclusion we will in this case need to work a bit harder. So here the prescription $f\to f\circ\Psi^{-1}$ yields a *-isomorphism from 
$\cA(\mathcal{O})$ to the algebra $\widetilde{\cA}(\Psi(\mathcal{O}))$ constructed from the degenerate symplectic form 
$(\varphi,\varrho)\to -\int_M\langle \varphi, \widetilde{G}_{\Psi(\mathcal{O})}(\varrho)\rangle\,dV$ on $\mathscr{D}(\Psi(\mathcal{O}),E)$, which maps 
$W(\widetilde{G}_{\mathcal{O}}(f))$ to $\widetilde{W}(\widetilde{G}_{\Psi(\mathcal{O})}(f\circ\Psi^{-1}))$ (where $\widetilde{W}(\varphi)$ are the Weyl operators 
generated using the above symplectic form. A careful consideration of the third displayed equation in \cite[Example 4.2.2]{BGP} shows that for any 
$\varphi \in\mathscr{D}(\Psi(\mathcal{O}),E)$, the Weyl operator $\widetilde{W}(\varphi)$ described above corresponds to $W(-\varphi) = W(\varphi)^*$. Thus the 
algebras $\cA(\Psi(\mathcal{O}))$ and  $\widetilde{\cA}(\Psi(\mathcal{O}))$ agree as they are generated by the same operators. The *-isomorphism obtained above 
then sends $W(\widetilde{G}_{\mathcal{O}}(f))$ to $W(-\widetilde{G}_{\Psi(\mathcal{O})}(f\circ\Psi^{-1})) = 
W(\widetilde{G}_{\Psi(\mathcal{O})}(f\circ\Psi^{-1}))^*$. By now following this *-isomorphism with involution, we obtain the promised *-anti-isomorphism. (Some 
technicalities have been suppressed for the sake of clarity. For full details see pages 129-133 of \cite{BGP}.)
\end{proof}

\begin{remark} 
If $(M,g)$ is connected then all local diffeomorphisms either preserve or reverse time-orientation. See the discussion preceding \cite[Cor 7.9]{ONeil}.
\end{remark}

\begin{lemma}\label{time-orient} Let $\Psi_s^Z$ be the stages of a local flow of a Killing vector field $Z$ and let $\mathcal{O}\in \mathcal{K}(M,g)$ be given. 
Let $0<t_0$ be as in Propositions 4.10. For any $s\in [-t_0, t_0]$ the stage $\Psi_s^Z$ will then preserve time-orientation on $\mathcal{O}$.
\end{lemma}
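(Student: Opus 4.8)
The plan is to run a connectedness argument in the flow parameter $s$, using that $\Psi_0^Z=\mathrm{id}$ trivially preserves time-orientation and that being future-directed is a ``sign condition'' which depends continuously on $s$.

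Since $(M,g)$ is globally hyperbolic it is time-orientable, so fix a smooth timelike future-directed vector field $T$ on $M$. I would use the elementary fact that a nonzero causal vector $w\in T_qM$ is future-directed if and only if $g_q(w,T_q)<0$, together with the observation that, $T_q$ being strictly timelike, $g_q(w,T_q)\neq 0$ for every nonzero causal $w$. Now fix $p\in\mathcal{O}$ and a future-directed causal vector $v\in T_pM$, and consider
$$\phi(s)=g_{\Psi_s^Z(p)}\!\left(d(\Psi_s^Z)_p(v),\,T_{\Psi_s^Z(p)}\right),\qquad s\in[-t_0,t_0].$$
By Proposition \ref{locflow} the point $\Psi_s^Z(p)$ stays inside $\mathcal{O}_\cup$ for $s\in[-t_0,t_0]$, and by joint smoothness of the local flow map $(q,s)\mapsto\Psi_s^Z(q)$ (see e.g. \S9.3 of \cite{GQ1}) both $s\mapsto\Psi_s^Z(p)$ and $s\mapsto d(\Psi_s^Z)_p$ are continuous; hence $\phi$ is continuous on $[-t_0,t_0]$.

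Because $Z$ is a Killing field, each stage $\Psi_s^Z$ is an isometry (Theorem 9.21 of \cite{ONeil}), so $d(\Psi_s^Z)_p(v)$ is again a nonzero causal vector; consequently $\phi(s)\neq 0$ for all $s$. Since $\phi(0)=g_p(v,T_p)<0$ and $[-t_0,t_0]$ is connected, the intermediate value theorem gives $\phi(s)<0$ for every $s\in[-t_0,t_0]$, i.e. $d(\Psi_s^Z)_p(v)$ is future-directed. As $p\in\mathcal{O}$ and the future-directed causal vector $v$ were arbitrary, $\Psi_s^Z$ carries future-directed causal vectors of $\mathcal{O}$ to future-directed causal vectors of $\Psi_s^Z(\mathcal{O})$, which is exactly the assertion that $\Psi_s^Z$ preserves the time-orientation on $\mathcal{O}$ induced from $M$. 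The only slightly delicate point is the continuous dependence of the differential $d(\Psi_s^Z)_p$ on $s$, which is why one appeals to the joint smoothness of the flow; note that the sign test via $T$ is purely pointwise, so no connectedness of $\mathcal{O}$ itself is needed.
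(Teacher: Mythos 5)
Your proof is correct, but it takes a genuinely different route from the paper's. The paper argues via the dichotomy for isometries of connected Lorentzian manifolds (each either preserves or reverses time-orientation, cf.\ the remark preceding the lemma, which points to \cite{ONeil}), noting that $\mathcal{O}$ is globally hyperbolic and hence connected, and then exploits the local group law: writing $\Psi_s^Z=\Psi_{s/2}^Z\circ\Psi_{s/2}^Z$, the composite preserves time-orientation whichever alternative holds for $\Psi_{s/2}^Z$. You instead run connectedness in the flow parameter $s$: fixing a global future-directed timelike field $T$, you observe that $s\mapsto g\bigl(d(\Psi_s^Z)_p(v),T\bigr)$ is continuous, never vanishes (isometries send nonzero causal vectors to nonzero causal vectors, and a nonzero causal vector is never $g$-orthogonal to a timelike one), and is negative at $s=0$, hence negative throughout $[-t_0,t_0]$. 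Both arguments correctly rest on the stages of a Killing flow being isometries. The trade-off is this: the paper's proof is shorter, but its halving trick tacitly requires the preserve/reverse status of $\Psi_{s/2}^Z$ to be the same on $\mathcal{O}$ and on $\Psi_{s/2}^Z(\mathcal{O})$, which is why one really wants $\Psi_{s/2}^Z$ viewed on a single connected domain such as $\mathcal{O}_\cup$; your pointwise sign test sidesteps connectedness of the domain entirely (as you note) and makes explicit the continuity in $s$ and the normalisation $\Psi_0^Z=\mathrm{id}$, at the modest cost of introducing the reference field $T$ and the continuous dependence of $d(\Psi_s^Z)_p$ on $s$.
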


\begin{proof} We remind the reader that $\mathcal{O}$ is globally hyperbolic in its own right and that globally hyperbolic Lorentzian manifolds are connected by definition (see \cite{BGP}). 
It is clear from the preceding remark that $\Psi_{s/2}^Z$ will for any $s\in [-t_0, t_0]$ either preserve or reverse time-orientation. But in either case $\Psi_s^Z= \Psi_{s/2}^Z\circ \Psi_{s/2}^Z$ 
will then preserve time-orientation. The claim therefore follows.
\end{proof}

Theorem \ref{localg-ex} and Lemma \ref{time-orient} now yields the following result:

\begin{corollary}\label{loc-flow} Let $(M,g)$ be a globally hyperbolic manifold and let $Z$ be a Killing vector field. Let $\mathcal{O}\in \mathcal{K}(M,g)$. 
Select a positive real $t_0>0$ so that $\Psi_s^Z$ is defined on $\mathcal{O}_{\cup}$ for all $s\in [-t_0,t_0]$ with $\Psi_{s+r}^Z=\Psi_s^Z\Psi_r^Z$ on 
$\mathcal{O}_{\cup}$ whenever $-t_0<s, r, s+r<t_0$ and with each $\Psi_s^Z$ $(-t_0<s<t_0)$ mapping $\mathcal{O}$ onto a relatively compact causally compatible 
globally hyperbolic open subset of $M$. 
\begin{enumerate}
\item For any stage $\Psi_s^Z$ $(0<s\leq t_0)$ there exists a $*$ isomorphism $\pi_s^Z$ from 
$\cA(\mathcal{O})$ to $\cA(\Psi_s^Z(\mathcal{O}))$ which implements the action of the local flow at the algebraic level. 
\item On passing to the Fock representation of these Weyl algebras one typically encounters in specific examples (see for example \cite{Dim}, \cite[\S 4.7]{BGP} 
and \cite[5.2.24]{KM}), we note that for any $a\in \cA(\mathcal{O})$ we have that $\pi_s(Z)(a)$ strongly converges to $a$ as $s\to 0$. 
\end{enumerate}
\end{corollary}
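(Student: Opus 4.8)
The plan is to prove the two parts separately: Part 1 is essentially a bookkeeping exercise built on the machinery already assembled, while Part 2 requires a genuine continuity argument in the Fock representation.

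For Part 1, fix $\mathcal{O}\in\mathcal{K}(M,g)$ and $t_0$ as in Proposition \ref{locflow}, and take a stage $\Psi_s^Z$ with $0<s\leq t_0$. First I would invoke Proposition \ref{locflow} to conclude that $\Psi_s^Z$ restricts to an isometric diffeomorphism of $\mathcal{O}$ onto another element $\Psi_s^Z(\mathcal{O})\in\mathcal{K}(M,g)$ (it is isometric precisely because it is a stage of a Killing flow), and then invoke Lemma \ref{time-orient} to conclude that on $\mathcal{O}$ this diffeomorphism preserves time-orientation. With these two facts in hand, part (2) of Theorem \ref{localg-ex} applies verbatim and produces the desired $*$-isomorphism $\pi_s^Z:\cA(\mathcal{O})\to\cA(\Psi_s^Z(\mathcal{O}))$ sending $W(\widetilde{G}_{\mathcal{O}}(f))$ to $W(\widetilde{G}_{\Psi_s^Z(\mathcal{O})}(f\circ(\Psi_s^Z)^{-1}))$. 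The cocycle identity $\pi_s^Z\circ\pi_r^Z=\pi_{s+r}^Z$ (for the relevant range of $s,r$) then follows from the fact that the lift is determined on the generating Weyl unitaries by its action on symplectic data, together with the local group law $\Psi_{s+r}^Z=\Psi_s^Z\Psi_r^Z$ on $\mathcal{O}_\cup$ guaranteed by Proposition \ref{locflow}.

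For Part 2, I would pass to the Fock representation of $\cA(\mathcal{O}_\cup)$ available in the concrete examples cited, inside which all the algebras $\cA(\Psi_s^Z(\mathcal{O}))$ ($|s|\leq t_0$) and $\cA(\mathcal{O})$ sit as sub-von Neumann algebras on a common Fock space $\mathcal{F}$, so that strong convergence of $\pi_s^Z(a)$ to $a$ is meaningful. Since each $\pi_s^Z$ is norm-contractive and finite linear combinations of Weyl unitaries are norm-dense in $\cA(\mathcal{O})$, a standard $3\varepsilon$-argument reduces the claim to $a=W(\widetilde{G}_{\mathcal{O}}(f))$ for a single test section $f$, where $\pi_s^Z(a)=W(\widetilde{G}_{\Psi_s^Z(\mathcal{O})}(f\circ(\Psi_s^Z)^{-1}))$. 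The first key step is then to observe that $(\Psi_s^Z)^{-1}\to\mathrm{id}$ in the $C^\infty$-topology on a neighbourhood of $\overline{\mathcal{O}}$ as $s\to 0$ (smooth dependence of a flow on its time parameter), whence $(f\circ(\Psi_s^Z)^{-1})_{ext}\to f_{ext}$ in $\mathscr{D}(\mathcal{O}_\cup,E)$; using continuity of the global Green operator $G$ and of the symplectic form, and the fact that stages of Killing flows are volume-preserving so that the one-particle inner products are mutually compatible, one obtains $\widetilde{G}_{\Psi_s^Z(\mathcal{O})}(f\circ(\Psi_s^Z)^{-1})\to\widetilde{G}_{\mathcal{O}}(f)$ in the one-particle Hilbert space of the representation. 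The second key step is the standard fact that in a regular (Fock/quasi-free) representation the Weyl map is continuous from the one-particle norm to the strong operator topology: expanding $\|(W(\psi_s)-W(\psi))W(\varphi)\Omega\|^2$ via the Weyl relations and the Gaussian form of the vacuum two-point function shows it tends to $0$, and since the vectors $W(\varphi)\Omega$ are total and $\|W(\psi_s)-W(\psi)\|\leq 2$, strong convergence $W(\psi_s)\to W(\psi)$ on all of $\mathcal{F}$ follows.

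The main obstacle lies entirely in Part 2 and is twofold. First, one must pin down precisely the topology carried by the symplectic/one-particle space attached to $\mathcal{O}_\cup$ and verify that $\psi\mapsto\widetilde{G}_{\Psi_s^Z(\mathcal{O})}(\psi\circ(\Psi_s^Z)^{-1})$, reinterpreted as an operation on the ambient symplectic space of $\cA(\mathcal{O}_\cup)$, depends continuously on $s$ with respect to that topology — this is exactly where the $C^\infty$-dependence of the flow and the measure-preservation property do the real work. Second, one must acknowledge that strong convergence is genuinely representation-dependent: the argument needs the chosen representation to be regular and its restrictions to the subalgebras to be the natural Fock representations of those subalgebras, which is precisely why the statement is phrased for the Fock representations occurring in the concrete examples of \cite{Dim}, \cite[\S 4.7]{BGP} and \cite[5.2.24]{KM} rather than at the level of the abstract $C^*$-algebra.
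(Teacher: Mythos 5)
Your proposal is correct and follows essentially the same route as the paper: Part 1 is obtained by combining Proposition \ref{locflow}, Lemma \ref{time-orient} and Theorem \ref{localg-ex}, and Part 2 by establishing strong convergence on the generating Weyl unitaries and then extending via norm-density of their span together with uniform boundedness of the $\pi_s^Z$. The only difference is that where you work out the Weyl-operator convergence by hand (smooth dependence of the flow on $s$, continuity of the Green operator, and strong continuity of the Weyl map in a regular Fock representation --- correctly flagging the compatibility of the one-particle inner products as the delicate point), the paper simply outsources that step to \cite[Proposition 4.6.10]{BGP}.
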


\begin{proof} Only the final claim needs to be proven. We know from \cite[Proposition 4.6.10]{BGP} that  
$\pi_s^Z(W(\widetilde{G}_{\mathcal{O}}(f)))=W(\widetilde{G}_{\Psi_s^Z(\mathcal{O})}(f\circ(\Psi_s^Z)^{-1}))= W(\widetilde{G}_{\mathcal{O}}(f)(\Psi_s^Z)^{-1})$ is 
strongly convergent to $W(\widetilde{G}_{\mathcal{O}}(f))$ on Fock space as $s\to 0$. The same is therefore true for the span of the Weyl operators, which is norm 
dense in the $\mathrm{CCR}$ algebra $\cA(\mathcal{O})$. So by suitably approximating we can show that for any $a\in \cA(\mathcal{O})$, 
$\pi_s^Z(a)\to a$ strongly as $s\to 0$.
\end{proof}

\section{Quantum Killing Lie derivatives}\label{LieDer} 

We shall here pass to the von Neumann algebra setting assuming all our von Neumann local algebras $\cM(\mathcal{O})$ to be the $\sigma$-weak closure of the 
matching $C^*$-algebras $\cA(\mathcal{O})$ in some appropriate representation. (We shall elaborate more on what we mean by ``appropriate representations'' in the 
next part of the paper.) In the notation of Corollary \ref{loc-flow} we shall also assume that for any $a\in \cM(\mathcal{O})$ we have that $\pi_s^Z(a)$ strongly 
converges to $a$ as $s\to 0$. To see that this holds in the Fock space case alluded to in Corollary \ref{loc-flow}, notice that by \cite[Theorem 6]{Der} the Bogoliubov 
transformations which are used to construct the $\pi^Z_t$s are implemented by a unitary. Thus the $\pi_t^Z$s admit a canonical extension to the double commutants of 
the $\cA(\mathcal{O})$s. In addition since each of the $\pi_t^Z$s described in Corollary \ref{loc-flow} is contractive, the convergence of $(\pi_t^Z(a))$ noted there 
is actually in the $\sigma$-strong topology. By convexity each $\cA(\mathcal{O})$ is $\sigma$-strongly dense in $\cM(\mathcal{O})$, and hence in this case the 
convergence we are assuming is just a continuous extension of the convergence noted in part (2) of Corollary \ref{loc-flow}. We pass to considering the consequences of 
strong continuity at 0. Formally Quantum Killing Lie derivatives are the infinitesimal generators of the quantum local flows of Killing vector fields. These quantum Lie 
derivatives are defined as follows: 

\begin{definition} Let $(M,g)$ be globally hyperbolic. Given a Killing vector field $Z$ on $(M,g)$, we define $\delta_Z$ to be the operator 
$\cup_{\mathcal{O}\in \mathcal{K}(M,g)}\cM(\mathcal{O})\to \cM(M)$ with domain 
$\mathrm{dom}(\delta_Z)=\{f\in \cM(\mathcal{O})\colon \mathcal{O}\in \mathcal{K}(M,g), \lim_{t\to 0}\frac{1}{t}(\pi_t^Z(f)-f)\mbox{ exists as an element of }\cM(M)\}$ and with values given by 
$\delta_Z(f)= \lim_{t\to 0}\frac{1}{t}(\pi_t^Z(f)-f)$.
\end{definition}

There is a very well-developed theory of semigroups of operators which we may use as a guide in our development. However we are here not dealing with semigroups and there is no obvious 
way to reduce the questions we wish to answer to the semigroup setting. The tools we shall need in our analysis therefore need to be developed from scratch. We proceed with developing 
the technical background we require to get positive results.

\subsection{Generators of one-parameter local groups of contractions}

We are ultimately interested in describing the generators of the quantum Killing flows described in the previous subsection. To be able to achieve that 
objective the appropriate machinery needs to be developed. This subsection is devoted to laying that groundwork.

\begin{definition}\label{lcg-def} We call a set $\{T_t\colon t\in \mathbb{R}\}$ of partially defined linear contractions on a von Neumann algebra $\cM$ a strongly continuous 
one-parameter local group of contractions if there exists an associated collection $\{D_\alpha\colon \alpha\in (0,\infty)\}$ of subspaces such that 
\begin{enumerate}
\item $D_\alpha\subseteq D_\beta$ if $0<\beta\leq \alpha$ with $\cup_{\alpha>0}D_\alpha$ strongly dense in $\cM$;
\item $D_\alpha \subseteq\mathrm{dom}(T_t)$ if $|t|\leq \alpha$ with $T_0=\mathrm{Id}$;
\item for every $x\in D_\alpha$ we have $T_sT_t(x)=T_{s+t}(x)$ whenever $s, t, s+t$ all belong to $[-\alpha, \alpha]$; 
\item for every $x\in \cup_{\alpha>0}D_\alpha$ we have that $T_t(x)\to x$ strongly as $t\to 0$.
\end{enumerate}
Note that for large $\alpha>0$, the subspace $D_\alpha$ may just be $\{0\}$.  
\end{definition}

\begin{remark}\label{aaa}
Since for each $x\in D_\alpha$ the set $\{T_t(x)\colon t\in(-\alpha,\alpha)\}$ is bounded, the convergence in part (4) above is actually $\sigma$-strong. 
If the $T_t$s preserve adjoints the convergence is even $\sigma$-strong*.
\end{remark}

\begin{lemma}\label{lcg-lem1} Let $\{T_t\colon t\in \mathbb{R}\}$ be a strongly continuous 1-parameter local group of contractions on $\cM$. For every $\alpha>0$ 
and every $x\in D_\alpha$ the function $t\to T_t(x)$ is then a point to $\sigma$-strong continuous function from $[-\alpha,\alpha]$ to $\cM$.
\end{lemma}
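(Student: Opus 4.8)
The plan is to bootstrap the strong (equivalently, by Remark~\ref{aaa}, $\sigma$-strong) continuity at $t=0$ that part~(4) of Definition~\ref{lcg-def} provides, to continuity at an arbitrary $t_0\in[-\alpha,\alpha]$, using the local group law of part~(3) to translate the increment $T_{t_0+h}(x)-T_{t_0}(x)$ back to the origin.

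First I would make the standard reduction: since $\{T_t(x):|t|\le\alpha\}$ is norm-bounded (each $T_t$ is a contraction, so $\|T_t(x)\|\le\|x\|$) and on norm-bounded subsets of $\cM$ the $\sigma$-strong and strong operator topologies coincide, it suffices to prove strong-operator continuity of $t\mapsto T_t(x)$ on $[-\alpha,\alpha]$. The core case is $|t_0|<\alpha$. For $|h|<\alpha-|t_0|$ the numbers $t_0,h,t_0+h$ all lie in $[-\alpha,\alpha]$, so applying part~(3) to $x\in D_\alpha$ gives both that $T_h(x)\in\mathrm{dom}(T_{t_0})$ and that $T_{t_0+h}(x)=T_{t_0}(T_h(x))$; hence $T_{t_0+h}(x)-T_{t_0}(x)=T_{t_0}\big(T_h(x)-x\big)$. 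Now $T_h(x)-x\to0$ strongly as $h\to0$ by part~(4) (note $x\in D_\alpha\subseteq\bigcup_{\beta>0}D_\beta$), and this net lies in the ball of radius $2\|x\|$; since each $T_{t_0}$ is $\sigma$-strongly continuous on norm-bounded sets, it follows that $T_{t_0}\big(T_h(x)-x\big)\to0$ strongly, i.e.\ $T_{t_0+h}(x)\to T_{t_0}(x)$. The point $t_0=0$ is exactly part~(4), and at the endpoints $t_0=\pm\alpha$ one runs the identical argument using only those $h$ for which $\pm\alpha+h$ stays in $[-\alpha,\alpha]$ (the function being defined only on that side there), yielding the requisite one-sided continuity.

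The step that needs the most care — and the only genuine obstacle — is the use of the fact that each individual map $T_{t_0}$ is $\sigma$-strongly continuous on norm-bounded sets. This is not a consequence of axioms~(1)--(4) on their own and must be supplied separately; it is automatic once the $T_t$ are normal, and in particular it holds in the situation of Corollary~\ref{loc-flow}, where the maps $\pi_t^Z$ are $*$-isomorphisms (indeed unitarily implemented). One could alternatively avoid it if the subspaces $D_\beta$ were assumed stable under the flow, say $T_t(D_\beta)\subseteq D_{\beta-|t|}$, for then $T_{t_0}(x)\in D_{\alpha-|t_0|}$ and applying part~(4) to $T_{t_0}(x)$ directly (after writing $T_{t_0+h}(x)=T_h(T_{t_0}(x))$) would finish the proof; but in the stated generality there is no reason for $T_{t_0}(x)$ to lie in $\bigcup_\beta D_\beta$, which is precisely why I would place the analytic burden on the continuity of $T_{t_0}$ instead. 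All remaining ingredients — the domain bookkeeping making the compositions in part~(3) legitimate, and the interchange of $\sigma$-strong and strong topologies on bounded sets — are routine.
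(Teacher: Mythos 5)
Your argument is correct in outline and its domain bookkeeping is right, but it diverges from the paper's proof at the interior points, and the divergence is instructive. The paper handles $t_0\in(-\alpha,\alpha)$ by ``replacing $x$ with $T_{t_0}(x)$'' and applying part (4) of Definition~\ref{lcg-def} to $\lim_{s\to 0}T_s(T_{t_0}(x))$ --- i.e.\ precisely the route you consider and reject, since nothing in the definition places $T_{t_0}(x)$ in $\bigcup_{\beta>0}D_\beta$ (the axioms give no stability of the $D_\beta$ under the flow). Your alternative, factoring the increment as $T_{t_0}(T_h(x)-x)$ and invoking $\sigma$-strong continuity of the single map $T_{t_0}$ on bounded sets, is essentially what the paper itself does at the endpoints: its estimate $\|T_{-\alpha}(x)\xi-T_{-\alpha+s}(x)\xi\|\leq \|T_{-\alpha}\|\cdot\|x\xi-T_s(x)\xi\|$ only makes sense if $T_{-\alpha}$ is read as acting compatibly on the Hilbert space level, which is exactly the spatial/normality hypothesis you isolate; contractivity of $T_{-\alpha}$ on $\cM$ alone controls operator norms, not the strong topology. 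So the extra assumption you flag is not a defect of your proof relative to the paper's --- it is tacitly present, unacknowledged, in the paper's own argument, and both versions are rescued in the intended application (Corollary~\ref{loc-flow} and the subsequent von Neumann algebra setting), where the $T_t$ are the unitarily implemented $*$-isomorphisms $\pi_t^Z$ and hence $\sigma$-strongly continuous on bounded sets. One small caution: normality ($\sigma$-weak continuity) of a general linear contraction does not by itself give $\sigma$-strong continuity on bounded sets; that implication is safe for normal $*$-homomorphisms, which is the case that matters here, so your parenthetical should be read with that restriction.
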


\begin{proof} A straightforward modification of the proof of \cite[Corollary 2.3]{Pazy} suffices to prove the result. Continuity at $t=0$ follows from the 
definition. For $t\in (-\alpha,\alpha)$ we may simply replace $x$ with $T_t(x)$ and apply the claims in the definition to the limit $\lim_{s\to 0}T_s(T_t(x))$.  
The endpoints need to be handled with more care as one needs to check for one-sided continuity. If for example $t=-\alpha$ then for $s>0$ small enough we will for 
any $\xi \in H$ have that $\|T_{-\alpha}(x)\xi-T_{-\alpha+s}(x)\xi\|\leq \|T_{-\alpha}\|.\|x(\xi)-T_s(x)(\xi)\|\to 0$ as $s\searrow 0$. 
\end{proof}

\begin{lemma} Let $\{T_t\colon t\in \mathbb{R}\}$ be a strongly continuous 1-parameter local group of contractions on $\cM$ and let $A$ be its infinitesimal 
generator. For every $x\in \cup_{\alpha>0}D_\alpha$ the following will hold for every $t$ with $|t|$ small enough:
\begin{enumerate}
\item $\lim_{h\searrow 0}\frac{1}{h}\int_t^{t+h}T_s(x)\,ds= \lim_{h\nearrow 0}\int_{t-h}^tT_s(x)\,ds =T_t(x)$.
\item $\int_0^{t}T_s(x)\,ds\in \mathrm{dom}(A)$ with $A\int_0^{t}T_s(x)\,ds=T_t(x)-x$.
\end{enumerate}
\end{lemma}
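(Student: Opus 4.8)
The plan is to adapt the classical fundamental-theorem-of-calculus arguments for one-parameter semigroups (as in \cite{Pazy}) to the $\sigma$-strong topology and to the fact that here the $T_t$ are only locally and partially defined. Throughout I would fix $\alpha>0$ with $x\in D_\alpha$ and work only with $|t|<\alpha$ and $|h|<\alpha-|t|$; then every integrand $T_s(x)$ that occurs below has $s$ in a compact subinterval of $(-\alpha,\alpha)$, so $T_s(x)$ is defined, $\|T_s(x)\|\le\|x\|$, and $s\mapsto T_s(x)$ is point-to-$\sigma$-strong continuous by Lemma \ref{lcg-lem1}, hence also $\sigma$-weakly continuous. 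For a compact subinterval $I$ I would then read $\int_I T_s(x)\,ds$ as the $\sigma$-weak integral, equivalently as the operator $\xi\mapsto\int_I T_s(x)\xi\,ds$; it lies in $\cM$ because it is the $\sigma$-weak limit of the uniformly norm-bounded Riemann sums $\tfrac{|I|}{n}\sum_k T_{s_k}(x)\in\cM$, and norm-bounded balls of $\cM$ are $\sigma$-weakly closed by Banach--Alaoglu.

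For part (1), the key step is the estimate: for $\xi\in H$,
\[
\Bigl\|\tfrac1h\int_t^{t+h}T_s(x)\xi\,ds-T_t(x)\xi\Bigr\|=\Bigl\|\tfrac1h\int_t^{t+h}\bigl(T_s(x)\xi-T_t(x)\xi\bigr)\,ds\Bigr\|\le\sup_{s\in[t,t+h]}\|T_s(x)\xi-T_t(x)\xi\|\longrightarrow0
\]
as $h\searrow0$, by the continuity recalled above; the identical estimate handles the left-hand average $\tfrac1h\int_{t-h}^{t}T_s(x)\,ds$ and the symmetric average, so all of these converge strongly to $T_t(x)$. (I note in passing that the middle expression displayed in (1) appears to be missing a factor $1/h$ and should be read as the left-hand average.)

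For part (2), I would set $I_t:=\int_0^t T_s(x)\,ds\in\cM$ and establish the identity $T_h(I_t)=\int_0^t T_{h+s}(x)\,ds$. For $s\in[0,t]$ the numbers $h,s,h+s$ all lie in $[-\alpha,\alpha]$, so property (3) of Definition \ref{lcg-def} gives $T_h(T_s(x))=T_{h+s}(x)$; the identity then follows by applying $T_h$ to the Riemann sums approximating $I_t$ and passing to the limit, once one knows that $I_t\in\mathrm{dom}(T_h)$ and that $T_h$ is $\sigma$-weakly continuous on bounded sets (see the remark on this below). Granting the identity, a change of variables ($u=h+s$) together with additivity of the integral gives
\[
\tfrac1h\bigl(T_h(I_t)-I_t\bigr)=\tfrac1h\int_t^{t+h}T_s(x)\,ds-\tfrac1h\int_0^{h}T_s(x)\,ds ,
\]
and letting $h\to0$ from either side, the two terms converge strongly, by part (1), to $T_t(x)$ and to $T_0(x)=x$ respectively. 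Hence $I_t\in\mathrm{dom}(A)$ with $A(I_t)=T_t(x)-x$.

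The step I expect to be the main obstacle is precisely the legitimacy of $T_h(I_t)=\int_0^t T_{h+s}(x)\,ds$, which has no classical analogue: in the local, partially-defined framework one must guarantee both that $I_t$ lands in $\mathrm{dom}(T_h)$ and that $T_h$ may be pulled through the $\sigma$-weak integral, and this needs some continuity of $T_h$ beyond mere boundedness. In the intended application this comes for free --- as recorded at the start of \S\ref{LieDer}, the $\pi_s^Z$ arise from Bogoliubov transformations implemented by unitaries, hence are normal and globally defined --- so I would either make explicit a standing assumption that the $T_t$ are $\sigma$-weakly continuous on bounded sets, or else shrink $\alpha$ and arrange that $T_s(D_\alpha)\subseteq D_{\alpha-|s|}$ with each $D_\alpha$ taken $\sigma$-weakly closed, which keeps $I_t$ inside $D_{\alpha-|t|}\subseteq\mathrm{dom}(T_h)$. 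With that point settled, the rest is the routine computation indicated above.
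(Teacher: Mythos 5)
Your proposal is correct and follows essentially the same route as the paper's own proof: part (1) from the point-to-$\sigma$-strong continuity of $s\mapsto T_s(x)$ via the averaging estimate, and part (2) from the identity $\frac{1}{h}(T_h-\I)\int_0^tT_s(x)\,ds=\frac{1}{h}\bigl[\int_t^{t+h}T_s(x)\,ds-\int_0^hT_s(x)\,ds\bigr]$ followed by an appeal to part (1). The one place where you go beyond the paper is in flagging the need to justify pulling $T_h$ through the ($\sigma$-weak) integral --- a point the published proof passes over silently --- and your proposed remedies (normality of the $\pi_s^Z$ in the intended application, or a mild strengthening of the domain hypotheses) are exactly the right way to close that gap.
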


\begin{proof} Given $x\in \cup_{\alpha>0}D_\alpha$ we clearly have that $x\in D_\beta$ for some $\beta>0$. In view of the continuity noted in Lemma 
\ref{lcg-lem1}, part (1) follows from a straightforward generalisation a well known fact regarding the Riemann integral . It remains to prove (2). Given 
$t\in (0,\beta)$, it then follows from Definition \ref{lcg-def} and Lemma \ref{lcg-lem1} that we will for any $h\in (0,\beta-t)$ which is small enough so 
that $|h|<|t|$ have that
$$\frac{1}{h}(T_h-\I)\int_0^tT_s(x)\,ds = \frac{1}{h}\int_0^t(T_{s+h}-T_s)(x)\,ds = \frac{1}{h}\left[\int_t^{t+h}T_s(x)\,ds-\int_0^hT_s(x)\,ds\right].$$As 
$h\searrow 0$, the right hand side will by part (1) converge to $T_t(x)-x$. The claim now holds by definition.
\end{proof}

\begin{corollary} Let $\{T_t\colon t\in \mathbb{R}\}$ be a strongly continuous 1-parameter local group of contractions on $\cM$. Then the infinitesimal generator 
is $\sigma$-strong densely defined. If $\{T_t\}$ preserves adjoints it is even $\sigma$-strong* densely defined.
\end{corollary}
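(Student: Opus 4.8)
The plan is to use the preceding Lemma as the engine, approximating an arbitrary element of a dense subset of $\cM$ by Ces\`aro-type averages of the flow which, by that Lemma, already lie in $\mathrm{dom}(A)$.

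Concretely, I would fix $x\in\cup_{\alpha>0}D_\alpha$, say $x\in D_\beta$, and for $0<t<\beta$ set $x_t:=\tfrac1t\int_0^tT_s(x)\,ds$. By part (2) of the preceding Lemma, $\int_0^tT_s(x)\,ds\in\mathrm{dom}(A)$, and since $\mathrm{dom}(A)$ is a linear subspace we get $x_t\in\mathrm{dom}(A)$. Because each $T_s$ is a contraction we have the uniform bound $\|x_t\|\le\|x\|$, so the net $(x_t)_{0<t<\beta}$ is norm-bounded; and part (1) of the same Lemma, read at the left endpoint $t=0$ (where it becomes $\tfrac1h\int_0^hT_s(x)\,ds\to T_0(x)=x$), says that $x_t\to x$ strongly as $t\searrow 0$. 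A norm-bounded net that converges strongly converges $\sigma$-strongly, so in fact $x_t\to x$ in the $\sigma$-strong topology.

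Hence every $x\in\cup_{\alpha>0}D_\alpha$ is a $\sigma$-strong limit of elements of $\mathrm{dom}(A)$; that is, $\cup_{\alpha>0}D_\alpha$ lies in the $\sigma$-strong closure of the subspace $\mathrm{dom}(A)$. Since by Definition \ref{lcg-def}(1) the set $\cup_{\alpha>0}D_\alpha$ is dense in $\cM$, it follows that $\mathrm{dom}(A)$ is $\sigma$-strong dense in $\cM$. For the refinement, I would observe that if every $T_t$ preserves adjoints, then by Remark \ref{aaa} (together with the endpoint-shift device used in the proof of Lemma \ref{lcg-lem1}) the map $s\mapsto T_s(x)$ is point-to-$\sigma$-strong$^*$ continuous, so that also $x_t^*=\tfrac1t\int_0^tT_s(x^*)\,ds\to x^*$ $\sigma$-strongly, i.e. $x_t\to x$ $\sigma$-strong$^*$. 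Repeating the previous paragraph with ``$\sigma$-strong$^*$'' in place of ``$\sigma$-strong'' then yields that $\mathrm{dom}(A)$ is $\sigma$-strong$^*$ dense.

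I do not anticipate a genuine obstacle; everything reduces to the preceding Lemma. The only points requiring a moment's care are: first, confirming that the approximants $x_t$ really sit in $\mathrm{dom}(A)$ — this is exactly part (2) of that Lemma, and it matters that $\mathrm{dom}(A)$ is a subspace closed under scalar multiples; and second, keeping the topologies straight — the contraction hypothesis is precisely what makes the approximating net norm-bounded, which is what upgrades the strong convergence supplied by the Lemma to $\sigma$-strong (respectively, via Remark \ref{aaa}, $\sigma$-strong$^*$) convergence.
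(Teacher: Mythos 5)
Your argument is correct and is exactly the one the paper intends: the Ces\`aro averages $\tfrac1t\int_0^tT_s(x)\,ds$ lie in $\mathrm{dom}(A)$ by part (2) of the preceding Lemma, converge to $x$ by part (1), and the contraction bound together with Remark \ref{aaa} upgrades this to $\sigma$-strong (resp.\ $\sigma$-strong$^*$) convergence, after which density of $\cup_{\alpha>0}D_\alpha$ finishes the proof. The paper merely states "this follows from a combination of the previous lemma and Remark \ref{aaa}," so your write-up is a faithful, more detailed version of the same proof.
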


\begin{proof}  This follows from a combination of the previous lemma and Remark \ref{aaa}.
\end{proof}

\begin{definition} For a $\sigma$-strong* densely defined operator $T\colon\cM\supset \mathrm{dom}(T)\to\cM$ we define the $\sigma$-strong* separating space of 
$T$ to be $S(T)=\{b\in \cM\colon \mbox{ there exists a net }(a_\gamma)\subset \mathrm{dom}(T)\mbox{ such that }a_\gamma\to 0 
\mbox{ and }T(a_\gamma)\to b \quad\sigma-\mathrm{strong}*\}$. 
\end{definition}

For $T$ as above it is clear that $T$ will be $\sigma$-strong* closable (equivalently $\sigma$-weakly closable by convexity) iff $S(T)=\{0\}$. We use this fact to 
posit a closability criterion of *-derivations. The proof is based on an idea of Niknam \cite{Nik}. Closability criteria of this nature seem to have first been 
recorded by Bratteli and Robinson (see \cite[Theorem 4]{BR3}) using a quite different proof. 

\begin{lemma} Let $\delta$ be a $\sigma$-strong* densely defined *-derivation on $\cM$.
\begin{enumerate}
\item Then $\overline{S(\delta)}^{w*}$ is a two sided *-ideal.
\item If there exists a normal state $\omega$ which is faithful on the centre of $\cM$ and for which we have that $\omega\circ\delta=0$, then $\delta$ is $\sigma$-weakly closable.
\end{enumerate}
\end{lemma}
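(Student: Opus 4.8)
The plan is to prove the two claims in sequence, since the second rests on the first. For part (1), I would show that $I := \overline{S(\delta)}^{w*}$ is a $\sigma$-weakly closed two-sided $*$-ideal. The key structural input is the Leibniz identity $\delta(ab) = \delta(a)b + a\delta(b)$, valid on $\mathrm{dom}(\delta)\cdot\mathrm{dom}(\delta)$, together with the fact that $\mathrm{dom}(\delta)$ is $\sigma$-strong$^*$ dense. First I would check that $S(\delta)$ is a $*$-closed subspace: if $b\in S(\delta)$ witnessed by a net $a_\gamma \to 0$, $\delta(a_\gamma)\to b$ $\sigma$-strong$^*$, then since $\delta$ is a $*$-derivation one has $\delta(a_\gamma^*) = \delta(a_\gamma)^*$, so $a_\gamma^* \to 0$ and $\delta(a_\gamma^*)\to b^*$, giving $b^*\in S(\delta)$; linearity is immediate. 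Passing to the $\sigma$-weak closure preserves these, so $I$ is a $\sigma$-weakly closed $*$-subspace. For the ideal property: given $b\in S(\delta)$ with witnessing net $(a_\gamma)$, and $c\in \mathrm{dom}(\delta)$, write $\delta(ca_\gamma) = \delta(c)a_\gamma + c\delta(a_\gamma)$. Since $a_\gamma\to 0$ $\sigma$-strong$^*$ and $\delta(c)$ is bounded, $\delta(c)a_\gamma\to 0$ $\sigma$-strong$^*$; and $c\delta(a_\gamma)\to cb$ $\sigma$-strong$^*$ by boundedness of $c$. Also $ca_\gamma\to 0$. Hence $cb\in S(\delta)$ for all $c\in \mathrm{dom}(\delta)$. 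Since $\mathrm{dom}(\delta)$ is $\sigma$-strong$^*$ dense and multiplication is separately $\sigma$-strong$^*$ continuous on bounded sets (and $S(\delta)$, being a $\sigma$-weakly/$\sigma$-strong$^*$ bounded-net-closed set, absorbs such limits), we get $Mb\subseteq I$ for all $b\in S(\delta)$, hence $\cM I\subseteq I$; the right-ideal property follows either symmetrically or from the $*$-closedness of $I$. Taking $\sigma$-weak closures throughout yields that $I$ is a two-sided $*$-ideal. The main technical care here is the standard one of checking that all the limit manipulations are legitimate for nets in the $\sigma$-strong$^*$ topology and that the closure operations interact correctly with multiplication by fixed bounded elements.

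For part (2), the idea (following Niknam) is to use the faithful-on-the-centre state $\omega$ with $\omega\circ\delta = 0$ to force $I = \{0\}$. Since $I$ is a $\sigma$-weakly closed two-sided ideal of the von Neumann algebra $\cM$, it is of the form $I = z\cM$ for a unique central projection $z\in \cM$. It therefore suffices to show $z = 0$, and since $\omega$ is faithful on the centre, it is enough to show $\omega(z) = 0$, equivalently $\omega(p) = 0$ for every projection $p\le z$ in $I$, or more simply $\omega|_I = 0$. So the crux is to show $\omega(b) = 0$ for every $b\in S(\delta)$ (which then passes to the $\sigma$-weak closure $I$ by normality of $\omega$). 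Given $b\in S(\delta)$ with net $a_\gamma\to 0$, $\delta(a_\gamma)\to b$ $\sigma$-strong$^*$, the naive attempt $\omega(b) = \lim \omega(\delta(a_\gamma)) = \lim 0 = 0$ is exactly what we want — but it is not quite immediate because $\sigma$-strong$^*$ convergence does not by itself give $\omega(\delta(a_\gamma))\to\omega(b)$ unless the net is bounded. The fix is to note that in the definition of the separating space one may, after passing to a subnet or by a standard boundedness reduction, assume the net $(\delta(a_\gamma))$ is bounded (indeed, $S(T)$ is unchanged if one restricts to bounded approximating nets, since $\sigma$-strong$^*$ convergent nets are eventually bounded in the relevant sense on bounded pieces); on bounded nets $\sigma$-strong$^*$ convergence implies $\sigma$-weak convergence, and $\omega$, being normal, is $\sigma$-weakly continuous on bounded sets. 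Hence $\omega(b) = \lim_\gamma \omega(\delta(a_\gamma)) = \lim_\gamma 0 = 0$. Therefore $\omega|_{S(\delta)} = 0$, so by normality $\omega|_I = 0$, so $\omega(z) = 0$, so $z = 0$ by faithfulness of $\omega$ on the centre, so $I = \{0\}$, so $S(\delta) = \{0\}$, which by the remark preceding the lemma means $\delta$ is $\sigma$-strong$^*$ (equivalently $\sigma$-weakly, by convexity) closable.

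The step I expect to be the main obstacle is the boundedness/continuity issue in part (2): making precise the claim that $\omega$ may be pushed through the limit defining $b\in S(\delta)$. One must either (a) verify carefully that the separating space $S(\delta)$ is computed correctly using nets that are bounded (so that normality of $\omega$ applies directly), or (b) use a more delicate argument — e.g. replacing $(a_\gamma)$ by $(a_\gamma(1+a_\gamma^*a_\gamma)^{-1/2})$ or a similar bounded truncation and checking, via the $*$-derivation property, that the truncated net still converges to $0$ with $\delta$-images converging to $b$. A secondary point to get right is the invocation of the ideal-to-central-projection correspondence (standard for von Neumann algebras) and the observation that faithfulness of $\omega$ on the centre is exactly what converts $\omega(z)=0$ into $z=0$. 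Everything else — the Leibniz manipulations, the $*$-closedness of $S(\delta)$, the passage to $\sigma$-weak closures — is routine once the topological bookkeeping is set up.
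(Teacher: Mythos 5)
Your argument is correct and for part (1) is essentially the paper's own proof: Leibniz rule to show $S(\delta)$ is a $\mathrm{dom}(\delta)$-bimodule, then separate $\sigma$-strong*/$\sigma$-weak continuity of multiplication by a fixed element plus density of $\mathrm{dom}(\delta)$ to upgrade to a two-sided ideal, with $*$-closedness coming from $\delta(a^*)=\delta(a)^*$. Part (2) also reaches the right conclusion via the same central-projection argument, but the step you single out as ``the main obstacle'' is not an obstacle at all, and this is where you diverge from the paper. The $\sigma$-strong* topology is finer than the $\sigma$-weak topology on all of $\cM$ (not merely on bounded sets): by Cauchy--Schwarz, $\sigma$-strong convergence of a net already forces $\sigma$-weak convergence, with no boundedness hypothesis. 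Since a normal state is by definition $\sigma$-weakly continuous, the ``naive'' computation $\omega(b)=\lim_\gamma\omega(\delta(a_\gamma))=0$ is perfectly legitimate as it stands. Consequently your proposed repairs are unnecessary, and one of them --- the assertion that $S(T)$ is unchanged if one restricts to bounded approximating nets --- is an unsubstantiated claim you should not lean on (nor is it clear that the truncation $a_\gamma(1+a_\gamma^*a_\gamma)^{-1/2}$ stays in $\mathrm{dom}(\delta)$, since that domain need not be closed under functional calculus). The paper sidesteps the whole issue even more cleanly by observing that, directly from the definition, $S(\delta)\subseteq\overline{\mathrm{ran}(\delta)}^{w*}$ (the $\sigma$-strong* and $\sigma$-weak closures of the convex set $\mathrm{ran}(\delta)$ coincide), so the normal state $\omega$, vanishing on $\mathrm{ran}(\delta)$, annihilates $\overline{S(\delta)}^{w*}$ outright; you may want to adopt that phrasing, as it removes any temptation to worry about pushing $\omega$ through an unbounded net.
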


\begin{proof} The first claim is an easy consequence of the product rule for derivations. Specifically given any $b\in S(\delta)$, $b_0\in \mathrm{dom}(\delta)$ 
and $(a_\gamma)\subset \mathrm{dom}(\delta)$ with $a_\gamma\to 0$ and $\delta(a_\gamma)\to b$, we have that $a_\gamma b_0, b_0a_\gamma \in \mathrm{dom}(\delta)$. 
Moreover both $(b_0a_\delta)$ and $(a_\gamma b_0)$ converge $\sigma$-strong* to 0 with $\delta(b_0a_\gamma)=\delta(b_0)a_\gamma +b_0\delta(a_\gamma)\to b_0b$ and 
similarly $\delta(a_\gamma b_0)\to bb_0$. So $S(\delta)$ is an $\mathrm{dom}(\delta)$-ideal. Now let $c\in \cM$ be given. By the density of $\mathrm{dom}(\delta)$ 
we may select a net $(b_\gamma)\subset \mathrm{dom}(\delta)$ converging $\sigma$-strong* to $c$. From what we have just shown, $(bb_\gamma), (b_\gamma b)\subset  S(\delta)$. 
Since by convexity the $\sigma$-weak and $\sigma$-strong* closures of $S(\delta)$ agree, it follows that $bc, cb\in \overline{S(\delta)}^{w*}$.
 
For the second claim note that since $\overline{S(\delta)}^{w*}$ is a weak* closed ideal, there must exists a central projection $e$ such that 
$e\cM=\overline{S(\delta)}^{w*}$. Note that by definition $S(\delta)\subseteq\overline{\mathrm{ran}(\delta)}^{w*}$. (Here we used the fact that by convexity the $\sigma$-weak 
and $\sigma$-strong* closures of $\mathrm{ran}(\delta)$ agree.) So if there is indeed a normal state such that $\omega\circ\delta=0$, then that state annihilates $\overline{S(\delta)}^{w*}$. 
Faithfulness of $\delta$ on the centre would then ensure that $e=0$, which then forces $S(\delta)=\{0\}$.
\end{proof}

\subsection{Existence of Quantum Killing Lie derivatives}

\begin{remark}\label{QLder} Let $Z$ be a Killing vector field on $(M,g)$. Given $\mathcal{O}\in \mathcal{K}(M,g)$ we know from Corollary \ref{loc-flow} that there 
exists $t_0 >0$ such that for all $t\in [-t_0, t_0]$ the *-homomorphisms $\pi_t^Z$ are defined on $\cM(\mathcal{O})$ and implement the action of the local flow 
$\Psi_t^Z$ as described in Proposition \ref{locflow}. For a fixed $s$ let $\mathscr{T}_s$ be the set of all $\mathcal{O}\in\mathcal{K}(M,g)$ for which the 
*-homomorphisms $\pi_t^Z$ are defined on $\cM(\mathcal{O})$ and implement the action of the local flow $\Psi_t^Z$ for all $t\in [-s, s]$. Now let $D_S$ be 
$\mathrm{span}\{\cM(\mathcal{O})\colon \mathcal{O}\in \mathscr{T}_s\}$. The continuity of the $\pi_t^Z$'s ensure that the $D_s$'s satisfy the criteria of 
Definition \ref{lcg-def}.
\end{remark}

\begin{theorem}\label{LieDer-exist}  Let $(M,g)$ be a globally hyperbolic Lorentzian manifold and let $\cM(M)$ be as before. If for each Killing vector field $Z$ 
and each $\mathcal{O}\in \mathcal{K}(M,g)$ there exists a normal state $\omega_{\mathcal{O}}$ on $\cM(\mathcal{O})$ which is faithful on the centre of $\cM(\mathcal{O})$
and for which we have that $\omega_{\mathcal{O}}\circ\delta_Z=0$, the operator $\delta_Z$ is then a $\sigma$-strong* densely defined closable unital *-derivation.
\end{theorem}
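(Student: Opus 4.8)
The plan is to verify the three assertions of the theorem in turn: that $\delta_Z$ is $\sigma$-strong* densely defined, that it is a unital *-derivation, and that it is $\sigma$-weakly closable, the last via the closability criterion for *-derivations proved just above. \emph{Dense domain.} First I would invoke Remark \ref{QLder}, by which the family $\{\pi_t^Z\}$, together with the subspaces $D_s=\mathrm{span}\{\cM(\mathcal{O})\colon\mathcal{O}\in\mathscr{T}_s\}$, is a strongly continuous one-parameter local group of contractions on $\cM(M)$ in the sense of Definition \ref{lcg-def}. By Lemma \ref{time-orient} and Theorem \ref{localg-ex} each $\pi_t^Z$ occurring here is a \emph{unital} *-\emph{isomorphism} onto its range, so this local group preserves adjoints, and $\delta_Z$ is exactly its infinitesimal generator; hence $\delta_Z$ is $\sigma$-strong* densely defined by the corollary on $\sigma$-strong* density of the generator established above. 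Moreover $\pi_t^Z(1)=1$ for all relevant $t$, so $1\in\mathrm{dom}(\delta_Z)$ with $\delta_Z(1)=0$.

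\emph{Unital *-derivation.} Let $f,g\in\mathrm{dom}(\delta_Z)$, say $f\in\cM(\mathcal{O}_1)$ and $g\in\cM(\mathcal{O}_2)$. Using Proposition \ref{directed} pick $\mathcal{O}\in\mathcal{K}(M,g)$ with $\mathcal{O}_1,\mathcal{O}_2\subseteq\mathcal{O}$, so that $f,g,fg\in\cM(\mathcal{O})$ and, by Corollary \ref{loc-flow}, $\pi_t^Z$ is defined on $\cM(\mathcal{O})$ for all sufficiently small $|t|$. Multiplicativity of $\pi_t^Z$ yields
$$\frac{1}{t}\bigl(\pi_t^Z(fg)-fg\bigr)=\pi_t^Z(f)\cdot\frac{1}{t}\bigl(\pi_t^Z(g)-g\bigr)+\frac{1}{t}\bigl(\pi_t^Z(f)-f\bigr)\cdot g.$$
The second summand converges $\sigma$-strong* to $\delta_Z(f)\,g$ since right multiplication by a fixed element of $\cM(M)$ is $\sigma$-strong* continuous. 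For the first summand, $\pi_t^Z(f)\to f$ $\sigma$-strong* with $\|\pi_t^Z(f)\|=\|f\|$ (cf.\ Remark \ref{aaa}); the one genuinely technical point is that $\tfrac{1}{t}(\pi_t^Z(g)-g)$ is norm bounded for $|t|$ small, which I would obtain from Banach--Steinhaus: $\sigma$-strong* convergence of $\tfrac1t(\pi_t^Z(g)-g)$ to $\delta_Z(g)$ forces pointwise boundedness, hence uniform boundedness along every sequence $t_n\to 0$, hence $\limsup_{t\to 0}\|\tfrac1t(\pi_t^Z(g)-g)\|<\infty$. Joint $\sigma$-strong* continuity of multiplication on bounded sets then gives that the first summand tends to $f\,\delta_Z(g)$. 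Thus $fg\in\mathrm{dom}(\delta_Z)$ with $\delta_Z(fg)=\delta_Z(f)g+f\delta_Z(g)$; linearity of the domain is handled identically (pass to a common $\mathcal{O}$), and $\delta_Z(f^*)=\delta_Z(f)^*$ follows from $\tfrac1t(\pi_t^Z(f^*)-f^*)=\bigl(\tfrac1t(\pi_t^Z(f)-f)\bigr)^*$ and $\sigma$-strong* continuity of the adjoint. So $\mathrm{dom}(\delta_Z)$ is a unital *-subalgebra of $\cM(M)$ on which $\delta_Z$ acts as a *-derivation.

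\emph{Closability.} Here I would apply the closability lemma above. By its first part $\overline{S(\delta_Z)}^{w*}$ is a weak*-closed two-sided *-ideal of $\cM(M)$, hence equals $e\cM(M)$ for a central projection $e$, and $S(\delta_Z)\subseteq\overline{\mathrm{ran}(\delta_Z)}^{w*}$. For each $\mathcal{O}\in\mathcal{K}(M,g)$ the hypothesis $\omega_{\mathcal{O}}\circ\delta_Z=0$ says $\omega_{\mathcal{O}}$ annihilates $\mathrm{ran}(\delta_Z)$, hence by normality annihilates $\overline{\mathrm{ran}(\delta_Z)}^{w*}$ and therefore $e\cM(M)$, so $\omega_{\mathcal{O}}(e)=0$; faithfulness of $\omega_{\mathcal{O}}$ on the centre of the relevant local algebra, as $\mathcal{O}$ exhausts $\mathcal{K}(M,g)$, then forces $e=0$, i.e.\ $S(\delta_Z)=\{0\}$, which is precisely $\sigma$-weak (equivalently $\sigma$-strong*) closability of $\delta_Z$.

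The step I expect to be the main obstacle is this last one: the passage between the purely local data (the states $\omega_{\mathcal{O}}$ live a priori only on $\cM(\mathcal{O})$, and faithfulness is only on $Z(\cM(\mathcal{O}))$) and the global central projection $e\in Z(\cM(M))$ produced by the ideal argument. Making ``faithful on the centre of $\cM(\mathcal{O})$'' annihilate $e$ cleanly requires either that $\cM(M)$ be a factor (then $e\in\{0,1\}$ and the argument is immediate, the centre condition being vacuous), or that the $\omega_{\mathcal{O}}$ be chosen compatibly --- for instance as restrictions of a single normal state via normal conditional expectations $\cM(M)\to\cM(\mathcal{O})$ --- or that one verify the family $\{\omega_{\mathcal{O}}\}$ separates points of $Z(\cM(M))$. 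The secondary point of care is the Banach--Steinhaus estimate in the derivation step, since an arbitrary $\sigma$-strong* convergent net need not be norm bounded and the product rule would break down without it.
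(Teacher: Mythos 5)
Your proof follows essentially the same route as the paper's: invoke Remark \ref{QLder} to place $\{\pi_t^Z\}$ within the framework of strongly continuous one-parameter local groups of contractions (giving $\sigma$-strong* density of the domain via the corollary on generators), derive the unital *-derivation property from multiplicativity and adjoint-preservation of the $\pi_t^Z$'s, and obtain closability from the separating-space lemma together with the hypothesised states. The two points you flag as requiring care are exactly the points the paper's much terser proof glosses over: the norm-boundedness of $\tfrac{1}{t}(\pi_t^Z(g)-g)$ needed to run the product rule in the $\sigma$-strong* topology is not addressed there at all (your Banach--Steinhaus argument supplies it), and on closability the paper works one local algebra at a time --- its proof concludes only that $\delta_Z$ is closable ``on $\cM(\mathcal{O})$'' --- so the local-to-global passage you worry about (from states $\omega_{\mathcal{O}}$ faithful on $Z(\cM(\mathcal{O}))$ to a central projection of $\cM(M)$) is sidestepped rather than resolved in the paper, and your identification of it as the genuine obstacle, together with the compatibility conditions you propose to close it, is a real refinement rather than a defect of your argument.
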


\begin{proof}
Remark \ref{QLder} ensures that we have access to the theory of the previous section. Let $\cM(\mathcal{O})$ be given. The fact that $\delta_Z$ is a derivation 
follows by applying the product rule to the fact that we will for small enough $t$ have that $\pi_t^Z(ab)=\pi_t^Z(a)\pi_t^Z(b)$. Similarly the fact that 
$\pi_t^Z(a^*)=\pi_t^Z(a)^*$ ensures that $\delta_Z$ is a $*$-derivation. In view of the strong continuity noted above, the fact that $\delta_Z$ is 
$\sigma$-strong* densely defined follows from the results of the previous subsection. Then the results in the previous subsection also ensure that $\delta_Z$ is 
$\sigma$-strong* closable on $\cM(\mathcal{O})$. 
\end{proof}

\part{Von Neumann algebraic representations of local algebras}\label{Pt.2}

\section{The selection of appropriate von Neumann algebraic representations}\label{S7}

In this Section we will describe in detail von Neumann algebras related to subsets $\cO \in \cK(M,g)$ as well as mappings between them with some emphasis on dynamical maps originating 
from the Klein-Gordon equation. We start with selecting the appropriate algebras that describe the fields. The key observation which makes this selection possible is that the Weyl algebra 
constructed from the proper symplectic space encompasses, via affiliation, the fundamental (unbounded) observables for Quantum Field Theory in the curved spacetime. 
We shall here look specifically at regular (and in particular quasi-free) representations of these algebras. We recall that elements of the Weyl algebra, Weyl operators, are of the form
\begin{equation}
W(f) = e^{i \phi(f)},
\end{equation}
where $\phi(f)$ is the smeared field operator satisfying the commutation relation
$$(\phi(f)\phi(g) - \phi(g)\phi(f)) \psi 
= \sigma(f,g) \psi \mbox{ for all }f,g \in \cH, \psi \in \rm{dom}(\phi(f)),$$and $\cH$ is a symplectic space. As was already mentioned in the introduction, 
the description of a system based on von Neumann algebras includes 
the algebraic structure of observables, while the GNS construction 
induced by the given state $\omega$ describes the above structure in the representation reflecting the current state of the system.
However, when describing a system with curved spacetime, an example of dynamic relations 
 given by the (semi-classical) Einstein equation
should be taken into account, cf. \cite{Wald2}. It is assumed that the back-reaction of the quantum field on the spacetime
is described by the equation
\begin{equation}
G_{ab} = 8\pi \langle T_{ab}\rangle_{\omega},
\end{equation}
where $T_{ab}$ is the stress-energy tensor and $G_{ab}$ the Einstein tensor. It is important to note that $T_{ab}$ can be expressed as 
a quadratic form of the (unbounded) field operators. Therefore, even when regarding the semiclassical Einstein equation as an effective approximation rather than a 
complete theory of gravity coupled to quantum matter, this equation clearly points to the necessity of including unbounded observables in the description.

Based on recent results, we can say that there are 
two possibilities to enlarge the set of observables for the 
description of the system in question. It is worth noting that if the set of observables is larger, then 
the set of states (continuous functionals on that set) consists of more regular states. An illustration of this claim was given in \cite{ML2}.

The first option is to limit ourselves to a special class of states, specifically to the subclass of quasi-free states 
that satisfy Hadamard's condition.
The second is to use Orlicz space techniques as was described in our recent paper \cite{LM}.
So in this Section we will limit ourselves to quasi-free states with some emphasis on Hadamard states.

Within the framework of algebraic quantization of field theory, we are not interested in using microlocal analysis to describe states that may be 
analogues of the vacuum in Minkowski space. As far as states are concerned, we are rather interested in  those states whose GNS representation well 
illustrates the presented theory. Among such states the quasi-free states are of course particularly significant in view of the fact that their 
GNS representation yield Araki-Woods factors. Within this class the Hadamard states are those which are widely used in QFT in particular due to the 
fact that the expectation value of the quantum stress energy tensor corresponding to a Hadamard state makes sense. As we shall shortly see, there is no 
need to search far and wide ``good'' quasi-free states which well illustrate the theory. Such states in fact emerge from the actual geometry of the 
underlying Lorentzian manifold. We do, however, in section \ref{S9} we point out the existence of a structure, thoroughly described in our paper 
\cite{LM}, which characterizes states with well-defined expectation values on a large family of unbounded field observables.

We reiterate that the idea of quasi-free state, introduced by Derek Robinson \cite{Rob1}, was suggested by QFT where models of free and generalized free fields are characterized 
by the property that all their Wightman functions $W_n$ are uniquely determined by their two point functions $W_2$. This concept is also important in Quantum Statistical Physics, see 
\cite{BR2} for details.

We show how for any $\mathcal{O}\in\mathcal{K}(M,g)$ we may construct such a quasi-free state conditioned to $\mathcal{O}$. As before we shall follow the notational convention of \cite{BGP}. 
Let $\widetilde{G}_{\mathcal{O}}^\pm$ be the adv/ret Green's operators associated with $\mathcal{O}$ and $\widetilde{G}_{\mathcal{O}}= \widetilde{G}_{\mathcal{O}}^+-\widetilde{G}_{\mathcal{O}}^-$. 
A crucial ingredient in the construction of this example is the local boundedness of the Green's operator noted below:

\begin{proposition}\label{Gbd}\cite[Proposition 3.1]{FV}
For any $\mathcal{O}\in\mathcal{K}(M,g)$, the operator $i\widetilde{G}_\mathcal{O}$ is a densely defined symmetric operator on $L^2(\mathcal{O},dV)$ with domain $\mathscr{D}(\mathcal{O},E)$ 
which extends to a bounded operator $i\overline{G}_\mathcal{O}$ on $L^2(\mathcal{O},dV)$.
\end{proposition}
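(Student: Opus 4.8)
My plan is to dispatch the symmetry cheaply and then prove the $L^2$-bound by an energy-estimate argument, after first enlarging the region to one in which $\overline{\mathcal{O}}$ is genuinely compact.

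The statement has two parts -- symmetry of $i\widetilde{G}_{\mathcal{O}}$ on $\mathscr{D}(\mathcal{O},E)$ and its $L^2$-boundedness -- of which only the second is substantial. For symmetry I would write $\widetilde{G}_{\mathcal{O}}=\widetilde{G}^+_{\mathcal{O}}-\widetilde{G}^-_{\mathcal{O}}$ and invoke the standard fact that, on a globally hyperbolic spacetime, the advanced and retarded Green's operators of the Klein--Gordon operator are formal transposes of one another with respect to the bilinear pairing $(u,v)\mapsto\int uv\,dV$, i.e. $(\widetilde{G}^\pm_{\mathcal{O}}f,g)=(f,\widetilde{G}^\mp_{\mathcal{O}}g)$ for $f,g\in\mathscr{D}(\mathcal{O},E)$ (this is the kind of duality already used implicitly in Lemma~\ref{Green}; cf.\ \cite[\S3.4]{BGP}). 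Hence $\widetilde{G}_{\mathcal{O}}$ is formally skew for this pairing, and since $\widetilde{G}_{\mathcal{O}}$ is the complexification of a real operator this turns, after reinserting the factor $i$ and passing to the sesquilinear inner product, into $\langle i\widetilde{G}_{\mathcal{O}}f,g\rangle=\langle f,i\widetilde{G}_{\mathcal{O}}g\rangle$. As $\mathscr{D}(\mathcal{O},E)$ is dense in $L^2(\mathcal{O},dV)$, $i\widetilde{G}_{\mathcal{O}}$ is then densely defined and symmetric.

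The real content is the a~priori estimate $\|\widetilde{G}_{\mathcal{O}}f\|_{L^2(\mathcal{O},dV)}\le C\|f\|_{L^2(\mathcal{O},dV)}$ for $f\in\mathscr{D}(\mathcal{O},E)$, with $C$ depending only on $\mathcal{O}$; boundedness of $i\overline{G}_{\mathcal{O}}$, and hence its self-adjointness, follow at once. Since $\overline{\mathcal{O}}$ is compact, Proposition~\ref{directed} supplies an $\mathcal{O}'\in\mathcal{K}(M,g)$ with $\overline{\mathcal{O}}\subset\mathcal{O}'$; as $\mathcal{O}$ is causally convex it is causally compatible in $\mathcal{O}'$, so by the restriction property of Green's operators (in the spirit of Lemma~\ref{Green}) one has $\widetilde{G}_{\mathcal{O}}f=\widetilde{G}_{\mathcal{O}'}(f_{ext}){\upharpoonright}\mathcal{O}$. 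The advantage of $\mathcal{O}'$ is that there $\overline{\mathcal{O}}$ is genuinely compact, hence confined to a slab $\{\tau\in[a,b]\}$ of a temporal splitting $\mathbb{R}\times\Sigma'$ of $\mathcal{O}'$. Put $u=\widetilde{G}_{\mathcal{O}'}(f_{ext})$: it is a smooth solution of the homogeneous Klein--Gordon equation on $\mathcal{O}'$, supported in $J^+(\mathrm{supp}\,f_{ext})\cup J^-(\mathrm{supp}\,f_{ext})$, so it has compact support on each slice $\Sigma'_\tau$. Choosing the Cauchy surface $\Sigma=\Sigma'_{a-1}$, which lies to the past of $\mathrm{supp}\,f_{ext}$, one has $u{\upharpoonright}\Sigma=\widetilde{G}^+_{\mathcal{O}'}(f_{ext}){\upharpoonright}\Sigma$, and a backward energy estimate for the inhomogeneous equation $Pv=f_{ext}$ (started from trivial data on a Cauchy surface $\Sigma'_{b+1}$ to the future of $\mathrm{supp}\,f_{ext}$) bounds the Cauchy data of $u$ on $\Sigma$ in the energy norm $H^1(\Sigma)\oplus L^2(\Sigma)$ by $C\int_a^b\|f_{ext}(\tau)\|_{L^2(\Sigma'_\tau)}\,d\tau$, which by Cauchy--Schwarz over the finite interval is $\le C'\|f\|_{L^2(\mathcal{O})}$. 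A forward homogeneous energy estimate then propagates this, bounding $\|u(\tau)\|_{H^1\oplus L^2}$ uniformly for $\tau\in[a-1,b]$ by the same quantity; since $\mathcal{O}\subset\{\tau\in[a,b]\}$, we conclude
\[
\|\widetilde{G}_{\mathcal{O}}f\|_{L^2(\mathcal{O},dV)}^{2}\ \le\ \int_{a-1}^{b}\|u(\tau)\|_{L^2(\Sigma'_\tau)}^{2}\,d\tau\ \le\ C''\,\|f\|_{L^2(\mathcal{O})}^{2},
\]
and density of $\mathscr{D}(\mathcal{O},E)$ gives the bounded extension $i\overline{G}_{\mathcal{O}}$.

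The main obstacle is this $L^2$-bound, and within it the genuinely analytic input is the two energy estimates for the Klein--Gordon equation combined with a systematic use of relative compactness: finite propagation speed confines all the relevant data to a fixed compact subset of $\mathcal{O}'$, so the energy-estimate constants are uniform; the finite temporal extent $[a,b]$ of $\overline{\mathcal{O}}$ keeps the Duhamel and Gr\"onwall factors finite; and finiteness of the time interval lets Cauchy--Schwarz convert $L^1_\tau L^2_x$ control of the source into $L^2$ control. Conceptually the boundedness also reflects the fact that, after restriction to a relatively compact region, $\widetilde{G}_{\mathcal{O}}$ is a Fourier integral operator of negative order; but extracting a proof from that is substantially heavier, and the energy-estimate route above is essentially the one behind \cite[Proposition~3.1]{FV}.
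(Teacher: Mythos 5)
The paper offers no proof of this proposition at all: it is imported verbatim by citing \cite[Proposition 3.1]{FV}, so there is no internal argument to compare against, and your reconstruction is, as you yourself note, essentially the energy-estimate argument of Fewster--Verch. The outline is sound --- the advanced/retarded duality $(\widetilde{G}^\pm_{\mathcal{O}}f,g)=(f,\widetilde{G}^\mp_{\mathcal{O}}g)$ gives symmetry of $i\widetilde{G}_{\mathcal{O}}$, and confinement of $\overline{\mathcal{O}}$ to a compact temporal slab together with finite propagation speed makes the inhomogeneous (Duhamel) and homogeneous energy estimates uniform, so the $L^2$ bound follows as you describe.
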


\begin{example}
Given $f,g\in \mathscr{D}(\mathcal{O},E)$, we will by the preceding proposition and Lemma 4.3.5 of \cite{BGP} have that 
\begin{eqnarray*}
\left|\int_M\langle f, \widetilde{G}_{\mathcal{O}}g\rangle\,dV\right| &=& \left|\int_M\langle f+\mathrm{ker}(\overline{G}_\mathcal{O}), \overline{G}_\mathcal{O}(g)\rangle\,dV\right| \\
&\leq& \|\overline{G}_\mathcal{O}\|.\|f+\mathrm{ker}(\overline{G}_\mathcal{O})\|.\|g+\mathrm{ker}(\overline{G}_\mathcal{O})\|\\
&\leq& \|\overline{G}_\mathcal{O}\|.\|f+\overline{\mathrm{ker}(\widetilde{G}_\mathcal{O})}\|.\|g+\overline{\mathrm{ker}(\widetilde{G}_\mathcal{O})}\|
\end{eqnarray*}
The final inequality follows from the fact that $\overline{\mathrm{ker}(\widetilde{G}_\mathcal{O})}\subset \mathrm{ker}(\overline{G}_\mathcal{O})$.

We will extract an inner product satisfying equation (5.30) of \cite{KM} from the above and use that to construct a quasi-free state. To do this we will make use of the correspondences described in 
\cite[Proposition 5.2.12]{KM}. In our setting the space $\mathrm{SOL}(\mathcal{O})$ referred to by \cite{KM} will just be $\mathrm{SOL}(\mathcal{O})= 
\{\widetilde{G}_{\mathcal{O}}(f)\colon f\in \mathscr{D}(\mathcal{O},E)\}$ and the space $\mathcal{E}(\mathcal{O})$ the quotient space $\mathcal{E}(\cO)= 
\mathscr{D}(\mathcal{O},E)/\mathrm{ker}(\widetilde{G}_{\mathcal{O}})$ (to see this consider equation (5.3) of \cite{KM} alongside the definition of this space). If one compares equations (5.4) and 
(5.12) in \cite{KM}, the the bilinear form $\tau$ they define on $\mathrm{SOL}$ is just $\tau(\widetilde{G}_{\mathcal{O}}(f), \widetilde{G}_{\mathcal{O}}(g)) =\int_M\langle f, 
\widetilde{G}_{\mathcal{O}}(g)\rangle\,dV$. Writing $[f]$ for the equivalence class $f+\overline{\mathrm{ker}(\widetilde{G}_\mathcal{O})}$, it is now an exercise to see that 
$\int_M\langle f, \widetilde{G}_{\mathcal{O}}(g)\rangle\,dV = \int_M\langle x_f, \widetilde{G}_{\mathcal{O}}(y_g)\rangle\,dV$ for any $x_f\in [f]$, $y_g\in [g]$. Similarly $\widetilde{G}_{\mathcal{O}}f 
= \overline{G}_{\mathcal{O}}[f]$. That means that we may write $\int_M\langle f, \widetilde{G}_{\mathcal{O}}(g)\rangle\,dV = \int_M\langle [f], \overline{G}_{\mathcal{O}}[g]\rangle\,dV$.  For any 
$f,g \in \mathscr{D}(\mathcal{O},E)$, it follows from the previously displayed inequality that $$|\tau(\widetilde{G}_\mathcal{O}(f), \widetilde{G}_{\mathcal{O}}(g))|^2 \leq 
\|\overline{G}_\mathcal{O}\|^2\|[f]\|^2\|[g]\|^2$$for all $[f], [g]\in \mathcal{E}(\mathcal{O})$. We write $H_{[\cdot]}$ for the real Hilbert space generated by $\mathcal{E}(\mathcal{O})$ equipped 
with the quotient norm norm inherited from $L^2(\mathcal{O},dV)/\mathrm{ker}(\overline{G}_\mathcal{O})$ Here we worked with real Hilbert spaces for the sake of simplicity, but all of the above can of 
course be complexified in the obvious manner, which we now assume to be the case. Thus the prescription 
$$\mu^\mathcal{O}([f],[g])=\tfrac{\|\overline{G}_\mathcal{O}\|}{2}\langle [f],[g]\rangle\quad f, g\in \mathscr{D}(\mathcal{O},E)$$ 
defines an inner product on $\mathcal{E}$ which will satisfy 
$$\frac{1}{4}|\tau(\widetilde{G}_{\mathcal{O}}(f), \widetilde{G}_{\mathcal{O}}(g))|^2\leq \mu^{\mathcal{O}}([f],[f]).\mu^{\mathcal{O}}([g],[g])\mbox{ for all }[f],[g]\in \mathcal{E}.$$
What we need in order to have access to \cite[Propositions 5.2.23 \& 5.2.24]{KM} is much the same thing, but for $\mathrm{SOL}(\mathcal{O})$ not $\mathcal{E}(\mathcal{O})$. Since these two spaces are 
bijective by \cite[Proposition 5.2.12]{KM} we may now define a matching inner product $\widetilde{\mu}^\mathcal{O}$ on $\mathrm{SOL}(\mathcal{O})$ by simply setting 
$\widetilde{\mu}^\mathcal{O}(\widetilde{G}_{\mathcal{O}}(f),\widetilde{G}_{\mathcal{O}}(g))=\mu^{\mathcal{O}}([f],[g])$ for all $[f], [g]\in \mathcal{E}$. That will then furnish us with an inner 
product which satisfies (5.30) of \cite{KM}, namely that 
\begin{equation}\label{KM:5.30}
\frac{1}{4}|\tau(x, y)|^2\leq \widetilde{\mu}^{\mathcal{O}}(x,x).\widetilde{\mu}^{\mathcal{O}}(y,y)\mbox{ for all }x,y \in \SOL(\mathcal{O}).
\end{equation}
By definition (see also Theorem 5.2.24 in \cite{KM}) there is then a unique quasi-free state $\omega_\mathcal{O}$ corresponding to the two-point functions 
$(\omega_\mathcal{O})_2(f,g)=\widetilde{\mu}^{\mathcal{O}}(\widetilde{G}_{\mathcal{O}}f,\widetilde{G}_{\mathcal{O}}g)+i\frac{1}{2}\tau(\widetilde{G}_{\mathcal{O}}(f), \widetilde{G}_{\mathcal{O}}(g))$. 
This follows, see \cite{MV}, from the recipe for a quasi-free state described there, namely 
\begin{equation}
\omega_s(W(x)) = \exp\{-\tfrac{1}{2}s(x,x)\}
\end{equation}
where the quadratic form $s(\cdot, \cdot)$ satisfies $|\sigma(f,g)|^2 \leq s(f,f) s(g,g)$, $\sigma$ is the symplectic form, and there is the corresponding formula for the two-point function. 
We shall write $\cM(\mathcal{O})$ for the resultant von Neumann algebra.
\end{example}

\subsection{Constructing Araki-Woods factors}\label{AWF}

We provide further details on the construction of Araki-Woods factors in the present setting. Our starting point is \cite[Proposition 5.2.23 \& Theorem 5.2.24]{KM}. Our exposition 
makes use of details of the proof of \cite[Proposition 3.1]{KW}. Let $\mathcal{O}\in \mathcal{K}(M,g)$ and consider the quasi-free state on $\cM(\mathcal{O})$ with the two point function 
given by $$(\widetilde{G}_\mathcal{O}f,\widetilde{G}_\mathcal{O}g)\to \widetilde{\mu}^{\mathcal{O}}(\widetilde{G}_\mathcal{O}f,\widetilde{G}_\mathcal{O}g)+ 
i\frac{1}{2}\tau(\widetilde{G}_\mathcal{O}f, \widetilde{G}_\mathcal{O}g)$$ where $f,g \in\mathscr{D}(\mathcal{O},E)$. The completion of the vector space $\SOL(\mathcal{O})$ under the inner product 
$\widetilde{\mu}^{\mathcal{O}}$ then yields a real Hilbert space $H_\mu$. Our task is to turn this real Hilbert space into a complex Hilbert space. The first step in this endeavour is to 
notice that an application of the Riesz Lemma to equation (\ref{KM:5.30}) ensures there exists a contractive operator $A_\mathcal{O}$ on $H_\mu$ satisfying $A_\mathcal{O}^*=-A_\mathcal{O}$ 
such that $$\frac{1}{2}\tau(x,y)=\widetilde{\mu}^{\mathcal{O}}(x,A_\mathcal{O}y)\mbox{ for all }x,y\in \SOL(\mathcal{O}).$$In the present setting the non-degeneracy of $\tau$ ensures that 
$A_\mathcal{O}$ is injective. The isometry $\mathcal{J}_{\cO}$ in the polar decomposition of $A_\mathcal{O}$ can then be shown to satisfy $[\mathcal{J}_{\cO},A_\mathcal{O}]=0$, 
$\mathcal{J}_{\cO}^2=-\I$ and $\mathcal{J}_{\cO}^*=-\mathcal{J}_{\cO}$. $H_\mu$ may now be made 
into a complex Hilbert space $\mathcal{H}_\mu$ by for all $x,y\in \SOL(\mathcal{O})$ defining a new inner product as follows:
\begin{align*}
\langle\!\langle x, y\rangle\!\rangle_\mathcal{O} &= \widetilde{\mu}^{\mathcal{O}}(x,y) +i\widetilde{\mu}^{\mathcal{O}}(x,\mathcal{J}_{\cO}y)\\
ix &= -\mathcal{J}_{\cO}x
\end{align*}
We next define an operator $K_\mathcal{O}:\SOL(\mathcal{O})\to \mathcal{H}_\mu\oplus \mathcal{H}_\mu$ by means of the prescription 
$$K_\mathcal{O}(x) = [\tfrac{1}{2}(|A_\mathcal{O}|+\I)]^{1/2}x\oplus C[\tfrac{1}{2}(\I-|A_\mathcal{O}|)]^{1/2}x$$where $C$ is complex conjugation. One may therefore equivalently define the operator 
$K_\mathcal{O}$ as a map into $\mathcal{H}_\mu\oplus \overline{\mathcal{H}}_\mu$, where $\overline{\mathcal{H}}_\mu$ is the conjugate Hilbert space, by means of the prescription 
$K_\mathcal{O}(x) = [\frac{1}{2}(|A_\mathcal{O}|+\I)]^{1/2}x\oplus \overline{[\frac{1}{2}(\I-|A_\mathcal{O}|)]^{1/2}x}$. One then defines $H_\mathcal{O}$ to be 
the closure of $\mathrm{ran}(K_\mathcal{O})+i\mathrm{ran}(K_\mathcal{O})$ in $\mathcal{H}_\mu\oplus \overline{\mathcal{H}}_\mu$; equivalently as 
$\mathcal{H}_\mu\oplus \chi_{\mathbb{R}\backslash\{1\}}(\overline{|A_\mathcal{O}|})\overline{\mathcal{H}}_\mu$. The pair $(H_\mathcal{O}, K_{\mathcal{O}})$ is the 1-particle 
structure described in \cite[Proposition 5.2.23]{KM}. With $\omega_\mathcal{O}$ denoting the resultant quasi-free state, a precise description of the GNS space generated by $\omega_\mathcal{O}$ 
may then be found in \cite[Theorem 5.2.24(b)]{KM}.

The approach to construct the GNS space for $\omega_\mathcal{O}$ described above, is the Kay-Wald approach. The Araki-Woods approach is similar 
and yields a copy of the above GNS space. However it uses a different approach to construct a complex inner product. We shall refer to \cite[\S 4.8.2]{Ger} to elucidate this 
approach. Specifically with $A_\mathcal{O}$ and $\mathcal{J}_{\cO}$ as before this approach uses the complex inner product on $\SOL(\mathcal{O})$ given by 
$$\langle x, y\rangle_{AW} = -\tau(x,\mathcal{J}_{\cO}y)+i\tau(x,y).$$(The sign differences in the formula recorded here when compared with that in \cite{Ger} is for the purpose of aligning that 
two exposition with that of \cite{KW}.) The two inner products are related in the sense that for all $x, y\in \SOL(\mathcal{O})$ one has  
\begin{eqnarray*}
\langle x, y\rangle_{AW} &=& -\tau(x,\mathcal{J}_{\cO}y)+i\tau(x,y)\\
&=& -2\widetilde{\mu}^{\cO}(x,A_\mathcal{O}\mathcal{J}_{\cO}y)-i\tau(x,\mathcal{J}_{\cO}^2y)\\
&=& 2\widetilde{\mu}^{\cO}(x,|A_\mathcal{O}|y)-i2\widetilde{\mu}^{\cO}(x,\mathcal{J}_{\cO}^2A_\mathcal{O}y)\\
&=& 2[\widetilde{\mu}^{\cO}(x,|A_\mathcal{O}|y)+i\widetilde{\mu}^{\cO}(x,\mathcal{J}_{\cO}|A_\mathcal{O}|y)]\\
&=& 2\langle\!\langle x, |A_\mathcal{O}|y\rangle\!\rangle_\mathcal{O}
\end{eqnarray*}
Writing $\mathcal{H}_\mu^{AW}$ for the complex Hilbert space yielded by the above inner product, the Araki-Woods version of the 1-particle space is given by 
$H_\mathcal{O}^{AW}=\mathcal{H}_\mu^{AW}\oplus \chi_{\mathbb{R}\backslash\{1\}}(\overline{|A_\mathcal{O}|})\overline{\mathcal{H}}_\mu^{AW}$. We leave it as an exercise to verify that 
the operator $\frac{1}{\sqrt{2}}|A_\mathcal{O}|^{-1/2}$ induces a densely defined isometry with dense range from $\mathcal{H}_{\mu}$ to $\mathcal{H}_\mu^{AW}$. These two Hilbert spaces 
are therefore unitarily equivalent. It follows from previously displayed set of equalities that for the operator $\rho=\frac{1}{2}|A_\mathcal{O}|^{-1}$ 
we have that $\langle x, \rho x\rangle_{AW} = \widetilde{\mu}^{\mathcal{O}}(x,x)$ for all $x\in \SOL(\mathcal{O})$. This is exactly 
the operator $\rho$ required to construct the Araki-Woods factors as described in \cite[\S 9.3]{Der}. The operator $\gamma$ which in Derezinski's description of the Araki-Woods 
approach parametrises the construction of these factors (see \cite[\S 9.3]{Der}), then corresponds to $\gamma = \rho(\I+\rho)^{-1} = (\I+2|A_\mathcal{O}|)^{-1}$. For details of the above 
facts we refer the reader to \cite[\S 4.8.2]{Ger}.

In view of the above correspondence we shall where convenient pass to the Araki-Woods representation to prove some of the facts 
required subsequently. 

\subsection{The behaviour of $\{\cM(\mathcal{O}):\mathcal{O}\in\mathcal{K}(M,g)\}$ with respect to local isometries}

Let $\mathcal{O}\in \mathcal{K}(M,g)$ be as before and let $\Psi$ be an isometry which preserves time-orientation. 
Consider the quasi-free state on $\cM(\mathcal{O})$ with the two point function given by $(\widetilde{G}_\mathcal{O}f,\widetilde{G}_\mathcal{O}g)\to 
\widetilde{\mu}^{\mathcal{O}}(\widetilde{G}_\mathcal{O}f,\widetilde{G}_\mathcal{O}g)+i\frac{1}{2}\tau(\widetilde{G}_\mathcal{O}f, \widetilde{G}_\mathcal{O}g)$ where $f,g \in\mathscr{D}(\mathcal{O},E)$. 
We know from Theorem \ref{localg-ex} that for $f, g\in \mathscr{D}(\mathcal{O},E)$ we have 
\begin{eqnarray*}
\tau(\widetilde{G}_{\mathcal{O}}(g), \widetilde{G}_{\mathcal{O}}(f)) &=& \int_M\langle g, \widetilde{G}_{\mathcal{O}}(f)\rangle\,dV\\
&=& \int_M\langle g\circ\Psi^{-1}, \widetilde{G}_{\Psi(\mathcal{O})}(f\circ\Psi^{-1})\rangle\,dV\\
&=& \tau(\widetilde{G}_{\Psi(\mathcal{O})}(g\circ\Psi^{-1}), \widetilde{G}_{\Psi(\mathcal{O})}(f\circ\Psi^{-1})).
\end{eqnarray*}
We show that we similarly have that 
$$\widetilde{\mu}^{\mathcal{O}}(\widetilde{G}_\mathcal{O}f,\widetilde{G}_\mathcal{O}g) = \widetilde{\mu}^{\Psi(\mathcal{O})}(\widetilde{G}_{\Psi(\mathcal{O})}(f\circ\Psi^{-1}),
\widetilde{G}_{\Psi(\mathcal{O})}(g\circ\Psi^{-1})).$$To see this we firstly note that the facts noted about the preservation of measure in the first part of the proof of Theorem
\ref{localg-ex}, ensure that $$\|f\|_\mathcal{O}^2=\int_M\langle f, f\rangle\,dV = \int_M\langle f\circ\Psi^{-1}, f\circ\Psi^{-1}\rangle\,dV= \|f\circ\Psi^{-1}\|^2_{\Psi(\mathcal{O})}$$for all 
$f,g\in \mathscr{D}(\mathcal{O},E)$. When combined with Lemma \ref{Green}, this in fact ensures that that $\|\overline{G}_\mathcal{O}\|= \|\overline{G}_{\Psi(\mathcal{O})}\|$. The claimed 
equality will then follow if we can additionally show that $\langle[f],[g]\rangle = \langle[f\circ\Psi^{-1}],[g\circ\Psi^{-1}]\rangle$ 
for all $f,g\in \mathscr{D}(\mathcal{O},E)$. This in turn firstly follows from the fact noted above that $f\to f\circ\Psi^{-1}$ defines a 
unitary from $L^2(\mathcal{O},dV)$ to $L^2(\Psi(\mathcal{O}),dV)$, and secondly from the fact that by Lemma \ref{Green} this prescription 
also bijectively maps $\mathrm{ker}(\widetilde{G}_\mathcal{O})$ onto $\mathrm{ker}(\widetilde{G}_{\Psi(\mathcal{O})})$. We therefore have 
that $\|[f]\|_\mathcal{O}^2= \|[f\circ\Psi^{-1}]\|^2_{\Psi(\mathcal{O})}$, as was required. We equivalently have that 
$$\|\widetilde{G}_{\mathcal{O}}(f)\|_{\widetilde{\mu}_\mathcal{O}}= \|\widetilde{G}_{\Psi(\mathcal{O})}(f\circ\Psi^{-1}]\|_{\widetilde{\mu}_{\Psi(\mathcal{O})}}
\mbox{for all }[f]\in\mathcal{E}(\mathcal{O}).$$

We proceed to sum up the above analysis. The context is set by the symplectic space, $(\SOL(\mathcal{O}), \tau_{\cO}) = \mathscr{D}(\cO, E)/\mathrm{ker}(\widetilde{G}_\mathcal{O})$. (See 
\cite[Page 129]{BGP}.) Within this context we observed the following.
\begin{enumerate}
\item The map $T: (\SOL(\mathcal{O}), \tau_{\cO}) \to (\SOL(\mathcal{O}, \tau_{\Psi(\cO)}): \widetilde{G}_\mathcal{O}f \mapsto \widetilde{G}_{\Psi(\mathcal{O})}(f \circ \Psi^{-1})$ is a symplectic map, 
for the $\mathrm{CCR}$ algebras $(\cA(\cO), \tau_{\cO})$ and  $(\cA(\Psi(\cO)), \tau_{\Psi(\cO)})$ where $\cA(\cO)$ is the Weyl algebra over the symplectic space $(\SOL(\mathcal{O}),\tau_{\cO})$, etc. 
The Bogoliubov transform therefore ensures that 
the given prescription induces a *-isomorphism $\alpha_\Psi$ from $(\cA(\cO), \tau_{\cO})$ onto  $(\cA(\Psi(\cO)), \tau_{\Psi(\cO}))$.
\item On $\cA((\cO))$ and $\cA(\Psi(\cO))$ there exist quasi-free states $\omega_{\cO}$ and $\omega_{\Psi(\cO)}$ with the corresponding 
two-point functions respectively given by 
$$(\widetilde{G}_\mathcal{O}f,\widetilde{G}_\mathcal{O}g)\to \widetilde{\mu}^{\mathcal{O}}(\widetilde{G}_\mathcal{O}f,\widetilde{G}_\mathcal{O}g)+i\frac{1}{2}\tau(\widetilde{G}_\mathcal{O}f, 
\widetilde{G}_\mathcal{O}g)$$ and 
\begin{align*}
&(\widetilde{G}_{\Psi(\mathcal{O})}(f\circ\Psi^{-1}),\widetilde{G}_{\Psi(\mathcal{O})}(g\circ\Psi^{-1})) \\
& \to \qquad \widetilde{\mu}^{\Psi(\mathcal{O})}(\widetilde{G}_{\Psi(\mathcal{O})}(f\circ\Psi^{-1}),\widetilde{G}_{\Psi(\mathcal{O})}(g\circ\Psi^{-1}))\\
& \qquad \qquad +i\frac{1}{2}\tau(\widetilde{G}_{\Psi(\mathcal{O})}(f\circ\Psi^{-1}), \widetilde{G}_{\Psi(\mathcal{O})}(g\circ\Psi^{-1}))
\end{align*}
where $f,g\in \mathscr{D}(\mathcal{O},E)$.
\item The self-same map $\widetilde{G}_\mathcal{O}f \mapsto \widetilde{G}_{\Psi(\mathcal{O})}(f \circ \Psi^{-1})$ precisely identifies the tools needed to construct $\omega_{\cO}$ with 
those needed to construct $\omega_{\Psi(\cO)}$. When these facts are considered alongside the process described in subsection \ref{AWF}, it is clear that we also have that 
$\omega_{\Psi(\cO)}\circ \alpha_\Psi = \omega_{\cO}$. 
\item It therefore follows from the discussion in the third paragraph of section \ref{S3} that the standard forms of $\pi_{\omega_{\cO}}(\mathcal{A}(\cO))''$ and 
$\pi_{\omega_{\Psi(\cO)}}(\mathcal{A}(\Psi(\cO)))''$ are isomorphic.
\end{enumerate}

We therefore arrive at the following:

\begin{proposition}
$\alpha_\Psi$ has a weak extension $(\alpha_\psi)_{ext}$ which yields a $^*$-isomorphism between $(\cM(\cO),\omega_{\cO})$ and $(\cM(\Psi(\cO)),\omega_{\Psi(\cO)})$. This *-isomorphism 
yields a spatial isomorphism of the form described in Theorem \ref{T8.haag-stdfm} which identifies the standard forms of the von Neumann algebras $\cM(\cO)$ and $\cM(\Psi(\cO))$ as respectively 
determined by $\omega_{\cO}$ and $\omega_{\Psi(\cO)}$. There are therefore well-defined mappings between the nets of von Neumann measure algebra $\cO \mapsto (\cM(\cO),\omega_{\cO})$ and  
$\Psi(\cO) \mapsto (\cM(\Psi(\cO)),\omega_{\Psi(\cO)})$ where $\cO \in \cK(M,g)$.
\end{proposition}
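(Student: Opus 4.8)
The plan is to promote the $C^*$-algebraic isomorphism $\alpha_\Psi$ to the von Neumann level by the standard GNS-intertwining argument, then to identify the resulting spatial data with the unitary furnished by Haagerup's uniqueness theorem, and finally to observe that the whole construction is canonical in $\cO$. Throughout I would use the three bullet points established immediately before the statement: that $\alpha_\Psi$ is a $^*$-isomorphism $\cA(\cO)\to\cA(\Psi(\cO))$, that $\omega_{\Psi(\cO)}\circ\alpha_\Psi=\omega_\cO$, and that the prescription $\widetilde{G}_\cO f\mapsto\widetilde{G}_{\Psi(\cO)}(f\circ\Psi^{-1})$ identifies the data used to build $\omega_\cO$ with that used to build $\omega_{\Psi(\cO)}$.

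First I would build $(\alpha_\Psi)_{ext}$. Since $\omega_\cO=\omega_{\Psi(\cO)}\circ\alpha_\Psi$, the densely defined map $\pi_{\omega_\cO}(a)\Omega_{\omega_\cO}\mapsto\pi_{\omega_{\Psi(\cO)}}(\alpha_\Psi(a))\Omega_{\omega_{\Psi(\cO)}}$ ($a\in\cA(\cO)$) preserves inner products and so extends uniquely to a unitary $u_0\colon H_{\omega_\cO}\to H_{\omega_{\Psi(\cO)}}$ with $u_0\,\pi_{\omega_\cO}(a)\,u_0^{*}=\pi_{\omega_{\Psi(\cO)}}(\alpha_\Psi(a))$ for all $a\in\cA(\cO)$. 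Conjugation by $u_0$ is then a normal $^*$-isomorphism carrying $\pi_{\omega_\cO}(\cA(\cO))''=\cM(\cO)$ onto $\pi_{\omega_{\Psi(\cO)}}(\cA(\Psi(\cO)))''=\cM(\Psi(\cO))$; this is $(\alpha_\Psi)_{ext}$. By $\sigma$-weak density of the $\mathrm{CCR}$ algebras it restricts to $\alpha_\Psi$, and since both $\omega_{\Psi(\cO)}\circ(\alpha_\Psi)_{ext}$ and $\omega_\cO$ are normal and agree on the $\sigma$-weakly dense subalgebra $\cA(\cO)$, we get $\omega_{\Psi(\cO)}\circ(\alpha_\Psi)_{ext}=\omega_\cO$.

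Next I would bring in the standard form. The quasi-free state $\omega_\cO$ is faithful on $\cM(\cO)$ --- its GNS vector is cyclic by construction, and it is separating by the explicit description of the one-particle structure in subsection \ref{AWF} (here the injectivity of $A_\cO$, guaranteed by the non-degeneracy of $\tau$, is what is used). Hence $\omega_\cO$ is an fns (state) weight, and $(\cM(\cO),\pi_{\omega_\cO},H_{\omega_\cO},J_{\omega_\cO},\mathscr{P}_{\omega_\cO})$ meets the requirements of Definition \ref{D8.defstdfrm}, as does the corresponding quintuple for $\Psi(\cO)$. A routine computation with the closable operator $S_0$ shows $u_0\,S^{\omega_\cO}\,u_0^{*}=S^{\omega_{\Psi(\cO)}}$, whence, comparing polar decompositions, $u_0\,J_{\omega_\cO}\,u_0^{*}=J_{\omega_{\Psi(\cO)}}$ and $u_0\,\mathscr{P}_{\omega_\cO}=\mathscr{P}_{\omega_{\Psi(\cO)}}$; equivalently, by the uniqueness clause of Theorem \ref{T8.haag-stdfm} applied to $(\alpha_\Psi)_{ext}$, this $u_0$ is precisely the spatial unitary that theorem produces. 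This establishes that the two standard forms --- equivalently the objects $(\cM(\cO),\omega_\cO,H_{\omega_\cO},J_{\omega_\cO},\mathscr{P}_{\omega_\cO})$ and $(\cM(\Psi(\cO)),\omega_{\Psi(\cO)},H_{\omega_{\Psi(\cO)}},J_{\omega_{\Psi(\cO)}},\mathscr{P}_{\omega_{\Psi(\cO)}})$ of \textbf{VN-MA-SF} --- are isomorphic, which gives the first two assertions.

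Finally, for the net statement I would note that for each $\cO\in\cK(M,g)$ the maps $\alpha_\Psi$, and hence $(\alpha_\Psi)_{ext}$, are canonically determined by $\Psi$, and that they are compatible with the inclusions $\cA(\cO_1)\hookrightarrow\cA(\cO_2)$ for $\cO_1\subseteq\cO_2$ in $\cK(M,g)$, since the defining prescription $\widetilde{G}_\cO f\mapsto\widetilde{G}_{\Psi(\cO)}(f\circ\Psi^{-1})$ commutes with the restriction maps on Green's operators used to construct those inclusions (Lemma \ref{Green}, Theorem \ref{localg-ex}). Thus $\{(\alpha_\Psi)_{ext}\}_{\cO\in\cK(M,g)}$ assembles into a morphism between the nets $\cO\mapsto(\cM(\cO),\omega_\cO)$ and $\Psi(\cO)\mapsto(\cM(\Psi(\cO)),\omega_{\Psi(\cO)})$ of von Neumann measure algebras. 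I expect the main obstacle to be the faithfulness of $\omega_\cO$ on $\cM(\cO)$ --- without it the GNS quintuple is not literally a standard form, and one would instead have to invoke Theorem \ref{T8.haag-stdfm} for the abstract algebras --- together with the bookkeeping needed to check that the spatial unitaries respect the embedding structure of Definition \ref{vnmasf-cat} (the inclusions $H_{\omega_{\cO_1}}\subseteq H_{\omega_{\cO_2}}$ and the compatibility of the $J$'s and the cones), for which Lemma \ref{mod-subalg2} is the relevant tool.
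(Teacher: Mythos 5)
Your proposal is correct and follows essentially the same route as the paper, which offers no separate proof but derives the proposition directly from the preceding summary (the symplectic map inducing $\alpha_\Psi$, the identity $\omega_{\Psi(\cO)}\circ\alpha_\Psi=\omega_\cO$, and the uniqueness of standard forms from Theorem \ref{T8.haag-stdfm}); your GNS-intertwining unitary $u_0$ is exactly the standard way of making that derivation explicit. You also rightly flag the faithfulness of $\omega_\cO$ on $\cM(\cO)$ as the one point the paper leaves implicit, though note that the separating property of the GNS vector hinges on the spectrum of $|A_\cO|$ avoiding $1$ rather than on the injectivity of $A_\cO$ alone.
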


\subsection{Isotonic properties of $\{\cM(\mathcal{O}):\mathcal{O}\in\mathcal{K}(M,g)\}$}

Given $\mathcal{O}$ we will write $(\cM(\cO), \mathcal{F}^+(H_\mathcal{O}), J_\mathcal{O}, \mathscr{P}_\mathcal{O})$ for the standard form of $\cM(\cO)$ determined by the GNS 
representation generated by $\omega_{\mathcal{O}}$. (Here we made use of the fact that the GNS Hilbert space is precisely $\mathcal{F}^+(H_\mathcal{O})$ 
\cite[Theorem 5.2.24(b)]{KM}.) Let $\mathcal{O}_1, \mathcal{O}_2\in\mathcal{K}(M,g)$ be given with $\mathcal{O}_1\subset \mathcal{O}_2$. We will show that 
there is a standard form homomorphism (in the sense of definition \ref{Stdfm-cat}) from the GNS standard form of the pair $(\cM(\cO_1),\omega_{\mathcal{O}_1})$ into the GNS 
standard form of the pair $(\cM(\cO_2),\omega_{\mathcal{O}_2})$. We shall achieve this by constructing a subalgebra of $\cM(\cO_2)$ whose GNS standard form is 
isomorphic to that of the pair $(\cM(\cO_1),\omega_{\mathcal{O}_1})$ and which also embeds into that of $(\cM(\cO_2),\omega_{\mathcal{O}_2})$. 
 
For ease of notation we will hereafter where convenient write $\mu_k$, $\omega_k$, $J_k$, $G_k$ \dots for $\widetilde{\mu}^{\mathcal{O}_k}$, $\omega_{\mathcal{O}_k}$, $J_{\mathcal{O}_k}$, 
$\widetilde{G}_{\cO_k}(f_{ext})$,\dots. At the risk of a clash of notation we will denote the $L^2$-closure of a subspace $F$ of $L^2(\mathcal{O}_k,dV)$ by $[F]$. Here $H_{\mu_k}$ is of course 
then a scaled version of $H_{[\cdot]_k}=[\mathscr{D}(\mathcal{O},E)]/[\mathrm{ker}(\widetilde{G}_k)]$. To complete the task we set for ourselves we need some information 
on how the Klein-Gordon PDO behaves with respect to a ``shift in focus'' from $\mathcal{O}_1$ to $\mathcal{O}_2$. For this we need the following proposition. In proving this 
proposition we shall constantly make use of the fact that the PDO $P_\mathcal{M}$ corresponding to the Klein-Gordon equation on $\mathcal{M}$ will map $C^\infty(\mathcal{O},E)$ 
back $C^\infty(\mathcal{O},E)$ and that the operator $P_{\mathcal{O}}$ is nothing but the restriction of $P_\mathcal{M}$ to the $\mathcal{O}$ context. (See \cite[Proposition 3.5.1]{BGP} and the 
discussion surrounding it.) 

\begin{proposition} Let $\mathcal{O}\in\mathcal{K}(M,g)$ and $f\in C^\infty(\mathcal{O},E)$ be given. If $f$ has bounded partial derivatives of all orders, then 
there exists $(f_n)\subset \mathscr{D}(\mathcal{O},E)$ such that $P_\mathcal{O}f_n\to P_\mathcal{O}f$ in $L^2(\mathcal{O},dV)$.
\end{proposition}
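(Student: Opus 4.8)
The plan is to take the $f_n$ to be smooth compactly supported truncations of $f$. First I would fix an exhaustion $K_1\subset \mathrm{int}(K_2)\subset K_2\subset\cdots$ of $\mathcal{O}$ by compact sets with $\bigcup_n K_n=\mathcal{O}$; since $\mathcal{O}\in\mathcal{K}(M,g)$ is relatively compact, the volume form is finite on $\mathcal{O}$ and $\mathrm{vol}(\mathcal{O}\setminus K_n)\to 0$. For each $n$ pick a cutoff $\chi_n\in C_0^\infty(\mathcal{O})$ with $0\le\chi_n\le 1$, $\chi_n\equiv 1$ on $K_n$ and $\mathrm{supp}\,\chi_n\subset\mathrm{int}(K_{n+1})$, and set $f_n:=\chi_n f$, which lies in $\mathscr{D}(\mathcal{O},E)$ because $\chi_n$ is compactly supported in $\mathcal{O}$ and $f\in C^\infty(\mathcal{O},E)$.

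Next I would expand $P_\mathcal{O}(\chi_n f)$ via the product rule. Since $P_\mathcal{O}$ is the restriction of $P_M=\Box+m^2+\xi R$, whose principal part $\Box=\nabla^a\nabla_a$ is second order and built from the covariantly constant metric, one has the exact commutator identity
$$P_\mathcal{O}(\chi_n f)=\chi_n\,P_\mathcal{O} f+2\,g(\nabla\chi_n,\nabla f)+(\Box\chi_n)\,f,$$
so $[P_\mathcal{O},\chi_n]f$ is first order in $f$ and supported in the collar $C_n:=K_{n+1}\setminus\mathrm{int}(K_n)$, whose volume tends to $0$. Hence
$$\|P_\mathcal{O} f_n-P_\mathcal{O} f\|_{L^2(\mathcal{O})}\le\|(1-\chi_n)\,P_\mathcal{O} f\|_{L^2(\mathcal{O})}+\big\|2g(\nabla\chi_n,\nabla f)+(\Box\chi_n)f\big\|_{L^2(C_n)}.$$
For the first summand the hypothesis that $f$ has bounded partial derivatives of every order forces $P_\mathcal{O} f\in L^\infty(\mathcal{O})\subseteq L^2(\mathcal{O})$ (finite volume), and since $0\le 1-\chi_n\le 1$ with $\chi_n\uparrow 1$ pointwise on $\mathcal{O}$, dominated convergence gives $\|(1-\chi_n)P_\mathcal{O} f\|_{L^2}\to 0$. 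This part is routine.

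The main obstacle is the collar term. The crude estimate $\big\|2g(\nabla\chi_n,\nabla f)+(\Box\chi_n)f\big\|_{L^2(C_n)}\le\big(2\|\nabla f\|_\infty\|\nabla\chi_n\|_\infty+c\,\|f\|_\infty\|\Box\chi_n\|_\infty\big)\,\mathrm{vol}(C_n)^{1/2}$ does not suffice on its own, since a careless choice of $\chi_n$ lets the sup-norms of $\nabla\chi_n$ and $\Box\chi_n$ grow faster than $\mathrm{vol}(C_n)^{1/2}$ decays. The way I would control this is to choose the $\chi_n$ far more carefully: working in the product splitting $\mathcal{O}\cong(a,b)\times S$ afforded by global hyperbolicity, build $\chi_n=\phi_n\circ\rho$ from a fixed ``distance-to-the-boundary'' type function $\rho$ and profiles $\phi_n$ which make the transition from $1$ to $0$ across a band spanning several scales with a logarithmically graded derivative, so that $\|\nabla\chi_n\|_{L^2(C_n)}$ and $\|\nabla^2\chi_n\|_{L^2(C_n)}$ — and hence, after inserting the uniformly bounded factors $\nabla f$ and $f$, the entire collar term — tend to $0$. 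Verifying that such $\chi_n$ retain pointwise convergence to $1$ while achieving this $L^2$-control on their first and second derivatives is the technical heart of the argument; once it is in hand, combining it with the dominated-convergence bound above yields $P_\mathcal{O} f_n\to P_\mathcal{O} f$ in $L^2(\mathcal{O})$.
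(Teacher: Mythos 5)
You have isolated the right difficulty, but the step you defer as ``the technical heart of the argument'' is not a technicality to be supplied later: it cannot be done. Since $\chi_n\equiv 1$ on $K_n$ and $\chi_n\equiv 0$ off $K_{n+1}$, the collar $C_n$ carries all of $\nabla\chi_n$, so you are asking for $\chi_n\in C_0^\infty(\mathcal{O})$ with $\chi_n\to 1$ in $L^2(\mathcal{O},dV)$ and $\|\nabla\chi_n\|_{L^2(\mathcal{O})}\to 0$. Already for $\mathcal{O}$ a bounded open subset of Minkowski space this is impossible: the Friedrichs inequality $\|\chi\|_{L^2(\mathcal{O})}\le \mathrm{diam}(\mathcal{O})\,\|\nabla\chi\|_{L^2(\mathcal{O})}$, valid for every $\chi\in C_0^\infty(\mathcal{O})$, together with $\|\chi_n\|_{L^2}\to\mathrm{vol}(\mathcal{O})^{1/2}>0$, forces $\liminf_n\|\nabla\chi_n\|_{L^2}>0$. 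The logarithmically graded profiles you have in mind are the standard device for excising sets of zero $W^{1,2}$-capacity (points in dimension $\ge 2$) at small Dirichlet cost; $\partial\mathcal{O}$ is a hypersurface of positive capacity, so the device does not apply. The second-order collar term is obstructed as well: for any $L^2$ distributional solution $u$ of $P_\mathcal{O}u=0$ one has $\int_\mathcal{O}(P_\mathcal{O}\chi_n)\bar{u}\,dV=0$, hence $\int_\mathcal{O}(\Box\chi_n)\bar{u}\,dV=-\int_\mathcal{O}(m^2+\xi R)\chi_n\bar{u}\,dV$, whose right-hand side has a nonzero limit for suitable $u$, so $\|\Box\chi_n\|_{L^2}$ cannot tend to $0$ either. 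In short, the collar term cannot be killed by a clever choice of cutoff alone; one would need decay of $f$ and $\nabla f$ at $\partial\mathcal{O}$, which boundedness of the derivatives does not give. (The same duality --- $P_\mathcal{O}(\mathscr{D}(\mathcal{O},E))$ is $L^2$-orthogonal to every $L^2$ solution of $P_\mathcal{O}u=0$, whereas $P_\mathcal{O}f$ need not be --- is a constraint that any proof of the statement has to confront, not only yours.)

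For comparison, the paper takes a different route: it covers $\overline{\mathcal{O}}$ by finitely many charts, reduces to the Euclidean case, and there approximates $f$ itself in $W^{2,2}(\mathcal{O})$, combining the truncations $\zeta_k f$ with mollifications of the correction term $\zeta_k f\chi_{\mathcal{O}_{k+1}\setminus\mathcal{O}_k}$, and then deduces $P_\mathcal{O}f_n\to P_\mathcal{O}f$ from the $W^{2,2}\to L^2$ continuity of the second-order operator $P_\mathcal{O}$. That argument never requires the gradients of the cutoffs to be small in $L^2$, which is why it does not meet the Friedrichs obstruction head-on; its burden is instead to control the $W^{2,2}$ norm of the correction term. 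Your reduction of the problem to the collar estimate is clean, but as it stands the proposal is missing its essential ingredient, and that ingredient is unobtainable in the form you describe.
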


In the proof we shall make use of Sobolev space techniques. For background on Sobolev spaces we refer the reader to \cite{Evans}.

\begin{proof} \emph{Step 1:} We will first prove the proposition for the case where the manifold in question is $\mathbb{R}^N$. We prove that there exists a 
sequence of functions $(f_n)\subset \mathscr{D}(\mathcal{O},E)$ which converges to $f$ in the Sobolev space $W^{2,2}(\mathcal{O})$. Since on $\mathbb{R}^N$ 
the Klein-Gordon operator is a PDO with a very regular structure, it will then follow from this fact and the theory of Sobolev spaces that we must then have that 
$P_\mathcal{O}f_n\to P_\mathcal{O}f$ in $L^2(\mathcal{O},d\lambda_N)$.

We firstly note that the assumption on $f$ ensures that we will for any multi-index $\alpha$ with $|\alpha|\leq 2$ have that 
$\int_{\mathcal{O}}\langle D^\alpha(f),D^\alpha(f)\rangle\,d\lambda_N \leq \|D^\alpha(f)\|_\infty^2 \lambda_N(\mathcal{O})<\infty$ (where $\lambda_N$ is Lebesgue 
measure). So $f$ clearly belongs to the Sobolev space $W^{2,2}(\mathcal{O})$. We now introduce the open subsets $\mathcal{O}_k= 
\{x\in \mathcal{O}:d(x,\partial\mathcal{O})\leq 2^{-k}\}$. We clearly have that $\mathcal{O}_k\subset \mathcal{O}_{k+1}$ for all $k$ with $\mathcal{O} = 
\cup_{k=1}\mathcal{O}_k$. A similar argument to what we used above now shows that we will here for any multi-index $\alpha$ with $|\alpha|\leq 2$ have that 
$\int_{\mathcal{O}\backslash\mathcal{O}_k}\langle D^\alpha(f),D^\alpha(f)\rangle\,d\lambda_N \leq \|D^\alpha(f)\|_\infty^2 
\lambda_N(\mathcal{O}\backslash\mathcal{O}_k)\to 0$. That ensures that for the Sobolev space $W^{2,2}(\mathcal{O})$ we have that 
$\|f\chi_{\mathcal{O}\backslash\mathcal{O}_k}\|_{2,2}\to 0$ as $k\to \infty$.

We may also for any $k$ select a bump function $\zeta_k$ such that $\zeta_k=1$ on $\overline{\mathcal{O}_k}$ and $\zeta_k=0$ on 
$\mathcal{O}\backslash\overline{\mathcal{O}_{k+1}}$ (see the corollary to \cite[Theorem 1.11]{War}). The function $\zeta_kf$ will then be a function of compact 
support which agrees with $f$ on $\mathcal{O}_k$ and which is supported on $\mathcal{O}_{k+1}$. We therefore have that 
$$f-\zeta_k.f= f\chi_{\mathcal{O}\backslash\mathcal{O}_k} -\zeta_k.f\chi_{\mathcal{O}_{k+1}\backslash\mathcal{O}_k}.$$We already know that
$\|f\chi_{\mathcal{O}\backslash\mathcal{O}_k}\|_{2,2}\to 0$ as $k\to \infty$. So if we can show that inside $W^{2,2}(\mathcal{O})$ the function 
$\zeta_k.f\chi_{\mathcal{O}_{k+1}\backslash\mathcal{O}_k}$ is the limit of a sequence of smooth functions of compact support, it will 
therefore follow that inside $W^{2,2}(\mathcal{O})$, the function $f$ is then similarly the limit of a sequence of smooth functions of 
compact support. To prove that this the case, we use the theory of mollifiers (see subsection C.4 in \cite{Evans}). Given such a 
mollifier $\eta_\epsilon$ it is an exercise to see that we will for any locally integrable function $g$ on $\mathcal{O}$ have that 
$(\eta_\epsilon*g)(x)= \int_{B(0,\epsilon)}\eta_\epsilon(y)g(x-y)\,dy =0$ whenever $d(x,\mathrm{supp}(g))>\epsilon$. For any 
$\epsilon\geq 2^{-k-2}$ the function $\eta_\epsilon*(\zeta_k.f\chi_{\mathcal{O}_{k+1}\backslash\mathcal{O}_k})$ will therefore be 
supported on $\overline{\mathcal{O}_{k+2}}\backslash\mathcal{V}_k$ where $\mathcal{V}_k=\{x\in\mathcal{O}_k:d(x,\partial\mathcal{O}_k)>2^{-k-2}\}$. 

Assuming this to be the case, the functions $\eta_\epsilon*(\zeta_k.f\chi_{\mathcal{O}_{k+1}\backslash\mathcal{O}_k})$ are then all smooth 
functions of compact support \cite[\S C.4, Theorem 6]{Evans} for which the restrictions to any open subset $V$ with 
$\overline{V}\subset\mathcal{O}$ will converge to $\zeta_k.f\chi_{\mathcal{O}_{k+1}\backslash\mathcal{O}_k}$ inside $W^{2,2}(V)$ as 
$\epsilon\to 0$ \cite[\S 5.3.1, Theorem 1]{Evans}. But $\mathcal{O}_{k+2}$ is such a subset and each 
$\eta_\epsilon *(\zeta_k.f\chi_{\mathcal{O}_{k+1}\backslash\mathcal{O}_k})$ as well as 
$\zeta_k.f\chi_{\mathcal{O}_{k+1}\backslash\mathcal{O}_k}$ are supported on $\mathcal{O}_{k+2}$. Hence as $\epsilon\searrow 0$, the 
functions $\eta_\epsilon*(\zeta_k.f\chi_{\mathcal{O}_{k+1}\backslash\mathcal{O}_k})$ even converge to 
$\zeta_k.f\chi_{\mathcal{O}_{k+1}\backslash\mathcal{O}_k}$ in $W^{2,2}(\mathcal{O})$. This then proves the proposition in the $\mathbb{R}^N$ case.

\bigskip

\emph{Step 2:} The manifold $(M,g)$ admits an orientable atlas, and so by compactness $\overline{\mathcal{O}}$ can be covered with finitely many oriented 
charts $U_1,\dots, U_m$ (see \cite[Definition 1.41]{ONeil}) which correspond to relatively compact open regions in $\mathbb{R}^N$ via diffeomorphisms 
$\rho_i$ from $V_i$ to $U_i$. We can also arrange matters so that the Jacobian $|\mathrm{det}(d\rho_i)|$ is bounded. (To see this note that for such triple 
$(U,V,\rho)$ where $U$ is a neighbourhood of $p\in \mathcal{O}$ we can first replace $U$ with a smaller compact neighbourhood $K$ (ensuring the compactness 
of $\rho_i^{-1}(K)$ and the boundedness of the Jacobian, and then replace $U$ with the interior of $K$.) The compact set $\overline{\mathcal{O}}$ can then 
be covered with such modified charts, with compactness ensuring that only finitely many such charts are necessary. For any given $U_i$ we will only work 
with $U_i\cap\mathcal{O}$. In so doing we will wherever convenient assume that functions supported on $U_i\cap\mathcal{O}$ are extended to all of $U_i$ by 
assigning the value 0 on $U_i\backslash(U_i\cap\mathcal{O})$ (and similarly for functions supported on $\rho_i^{-1}(U_i\cap\mathcal{O})$). Since 
$f\in C^\infty(\mathcal{O},E)$ is a function with partial derivatives of all orders bounded, the function $f{\upharpoonright}{U_i\cap\mathcal{O}}\circ\rho_i$ is similarly 
a smooth function with partial derivatives of all orders bounded. 
Writing $\widetilde{P}_i$ for the coordinatisation of $P_{U_i}$, it follows from step 1 that there exists a sequence 
$(g_{n,i})\subset \mathscr{D}(U_i\cap\mathcal{O},E)$ for which $(\widetilde{P}_i(g_{n,i}))$ converges to $\widetilde{P}_i(f{\upharpoonright}{U_i\cap\mathcal{O}}\circ\rho_i)$ 
in $L^2(\rho_i^{-1}(U_i\cap\mathcal{O}),\lambda_N)$-norm. The diffeomorphisms $\rho_i$ now induce measurable isomorphisms between the Borel algebras 
$\mathcal{B}(U_i)$ and $\mathcal{B}(V_i)$ with the Jacobian playing the role of a Radon-Nikodym derivative (see \cite[Prop 5.19]{GQ2}). We therefore have that 
\begin{eqnarray*}
&&\int_{U_i\cap\mathcal{O}}\langle P_i(f{\upharpoonright}{U_i\cap\mathcal{O}})-P_i(g_{n,i}\circ\rho_i^{-1}), P_i(f{\upharpoonright}{U_i\cap\mathcal{O}})-P_i(g_{n,i}\circ\rho_i^{-1})\rangle\,dV\\
&=& \int|\langle \widetilde{P}_i(f{\upharpoonright}{U_i\cap\mathcal{O}}\circ\rho_i)-\widetilde{P}_i(g_{n,i}), 
\widetilde{P}_i(f{\upharpoonright}{U_i\cap\mathcal{O}}\circ\rho_i)-\widetilde{P}_i(g_{n,i})\rangle|\,|\mathrm{det}(d\rho_i)|\,d\lambda_N\\
&\leq& \|\,|\mathrm{det}(d\rho_i)|\,\|_\infty\int|\langle \widetilde{P}_i(f{\upharpoonright}{U_i\cap\mathcal{O}}\circ\rho_i)-\widetilde{P}_i(g_{n_i}), 
\widetilde{P}_i(f{\upharpoonright}{U_i\cap\mathcal{O}}\circ\rho_i)-\widetilde{P}_i(g_{n,i})\rangle\,d\lambda_N
\end{eqnarray*}
Thus $(P_i(g_{n,i}\circ\rho_i^{-1}))$ converges to $P_i(f{\upharpoonright}{U_i\cap\mathcal{O}})= P_{\mathcal{O}}(f){\upharpoonright}{U_i\cap\mathcal{O}}$. Since $g_{n,i}\in 
\mathscr{D}(\rho_i^{-1}(U_i\cap\mathcal{O}), E)$ for each $n$, we clearly have that $g_{n,i}\circ\rho_i^{-1}\in \mathscr{D}(U_i\cap\mathcal{O}, E)$ for 
each $n$. We may then extend $g_{n,i}\circ\rho_i^{-1}$ to an element of $\mathscr{D}(\mathcal{O}, E)$ in the obvious way by assigning the value 0 to the 
extension on $\mathcal{O}\backslash U_i$. It is clear that then $P_\mathcal{O}(g_{n,i}\circ\rho_i^{-1})$ is then a similar extension of 
$P_i(g_{n,i}\circ\rho_i^{-1})$. The sequence of functions $(f_n)\subset \mathscr{D}(\mathcal{O}, E)$ defined by $f_n =\sum_{i=1}^m g_{n,i}\circ\rho_i^{-1}$ is 
then a sequence for which 
\begin{eqnarray*}
&&\int_M\langle P_\mathcal{O}(f-f_n), P_\mathcal{O}(f-f_n)\rangle\,dV \\
&\leq& \sum_{i=1}^m \int_{U_i\cap\mathcal{O}}\langle P_i(f{\upharpoonright}{U_i\cap\mathcal{O}})-P_i(g_{n,i}\circ\rho_i^{-1}), 
P_i(f{\upharpoonright}{U_i\cap\mathcal{O}})-P_i(g_{n,i}\circ\rho_i^{-1})\rangle\,dV\\
&\to& 0\mbox{ as }n\to\infty
\end{eqnarray*}     
\end{proof}

\begin{corollary}
For any $f\in \mathrm{ker}(\widetilde{G}_2)$, the restriction $f{\upharpoonright}{\mathcal{O}_1}$ will belong to $[\mathrm{ker}(\widetilde{G}_1)]$.
\end{corollary}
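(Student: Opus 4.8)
The plan is to combine the Proposition just proved with the standard support properties of the Green's operators on a globally hyperbolic manifold. The key preliminary observation is that every element of $\mathrm{ker}(\widetilde{G}_2)$ is of the form $P_{\mathcal{O}_2}u$ with $u\in\mathscr{D}(\mathcal{O}_2,E)$. Indeed, if $f\in\mathscr{D}(\mathcal{O}_2,E)$ satisfies $\widetilde{G}_2 f=\widetilde{G}_2^+f-\widetilde{G}_2^-f=0$, set $u:=\widetilde{G}_2^+f=\widetilde{G}_2^-f\in C^\infty(\mathcal{O}_2,E)$. By \cite[Proposition 3.5.1]{BGP} we have $P_{\mathcal{O}_2}u=P_{\mathcal{O}_2}\widetilde{G}_2^+f=f$, while $\mathrm{supp}(u)\subseteq J^+_{\mathcal{O}_2}(\mathrm{supp}\,f)\cap J^-_{\mathcal{O}_2}(\mathrm{supp}\,f)$; since $\mathcal{O}_2$ is globally hyperbolic in its own right and $\mathrm{supp}\,f$ is compact, this set is compact, so $u\in\mathscr{D}(\mathcal{O}_2,E)$ and $f=P_{\mathcal{O}_2}u$.

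Next I would pass to $\mathcal{O}_1$. Because $u$ is a smooth section of compact support in $\mathcal{O}_2$, all of its partial derivatives are bounded on $\mathcal{O}_2$, and hence (since $\mathcal{O}_1\subset\mathcal{O}_2$ is relatively compact) the restriction $u{\upharpoonright}{\mathcal{O}_1}$ is an element of $C^\infty(\mathcal{O}_1,E)$ with bounded partial derivatives of all orders. Moreover, locality of the Klein--Gordon PDO — recall that $P_{\mathcal{O}_1}$ is nothing but the restriction of $P_M$ to the $\mathcal{O}_1$ context — gives $f{\upharpoonright}{\mathcal{O}_1}=(P_{\mathcal{O}_2}u){\upharpoonright}{\mathcal{O}_1}=P_{\mathcal{O}_1}(u{\upharpoonright}{\mathcal{O}_1})$.

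Now I would apply the Proposition immediately preceding this corollary, with its manifold taken to be $\mathcal{O}_1$ and its section taken to be $u{\upharpoonright}{\mathcal{O}_1}$: there exists a sequence $(f_n)\subset\mathscr{D}(\mathcal{O}_1,E)$ with $P_{\mathcal{O}_1}f_n\to P_{\mathcal{O}_1}(u{\upharpoonright}{\mathcal{O}_1})=f{\upharpoonright}{\mathcal{O}_1}$ in $L^2(\mathcal{O}_1,dV)$. Each $P_{\mathcal{O}_1}f_n$ lies in $\mathrm{ker}(\widetilde{G}_1)$, since by \cite[Proposition 3.5.1]{BGP} $\widetilde{G}_1^\pm P_{\mathcal{O}_1}{\upharpoonright}{\mathscr{D}(\mathcal{O}_1,E)}=\mathrm{id}$, whence $\widetilde{G}_1(P_{\mathcal{O}_1}f_n)=f_n-f_n=0$. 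Therefore the $L^2$-limit $f{\upharpoonright}{\mathcal{O}_1}$ belongs to the $L^2(\mathcal{O}_1,dV)$-closure $[\mathrm{ker}(\widetilde{G}_1)]$, which is the claim.

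The only genuinely non-routine point is the first paragraph: recognising that an element of $\mathrm{ker}(\widetilde{G}_2)$ can be written as $P_{\mathcal{O}_2}u$ for a \emph{compactly supported} $u$, so that after restriction to $\mathcal{O}_1$ the bounded-derivatives hypothesis of the preceding Proposition is satisfied. Once that is in place, everything reduces to a direct invocation of that Proposition together with the locality of $P$ and the defining identities of the Green's operators.
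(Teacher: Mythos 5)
Your proof is correct and follows essentially the same route as the paper's: write $f=P_{\mathcal{O}_2}u$ with $u\in\mathscr{D}(\mathcal{O}_2,E)$, observe that $u{\upharpoonright}{\mathcal{O}_1}$ has bounded partial derivatives of all orders because it is the restriction of a compactly supported smooth section, and then invoke the preceding approximation proposition. The only difference is one of detail: the paper simply cites $\mathrm{ker}(\widetilde{G}_k)=P_{\mathcal{O}_k}(\mathscr{D}(\mathcal{O}_k,E))$ from \cite[p.~129]{BGP}, whereas you re-derive it from the support properties of the Green's operators, and you make explicit the concluding step that each $P_{\mathcal{O}_1}f_n$ lies in $\mathrm{ker}(\widetilde{G}_1)$ so the $L^2$-limit lands in $[\mathrm{ker}(\widetilde{G}_1)]$.
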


\begin{proof} We remind the reader that $\mathrm{ker}(\widetilde{G}_k)=P_{\mathcal{O}_k}(\mathscr{D}(\mathcal{O}_k,E))$. (See page 129 of \cite{BGP}.) So let 
$f=P_{\mathcal{O}_2}(g)$, where $g\in \mathscr{D}(\mathcal{O}_2,E)$, be given. (Here $P_{\mathcal{O}}$ is the partial differential operator corresponding to 
the Klein-Gordon equation on $\mathcal{O}$.) The corollary will then follows from the preceding proposition once we notice that since it is the restriction of 
a function of compact support, the function $g{\upharpoonright}{\mathcal{O}_1}\in C^\infty(\mathcal{O}_1,E)$ must have bounded partial derivatives of all orders.
\end{proof}

\begin{proposition}\label{[]isom} Let $e_{(2:1)}$ be the projection from $H_{[\cdot]_2}$ onto the closure of $\{[f_{ext}]_2:f\in\mathscr{D}(\mathcal{O}_1,E)\}$. Then the map 
$$U:f_{ext} + \overline{\mathrm{ker}(\widetilde{G}_2)} \mapsto f+\overline{\mathrm{ker}(\widetilde{G}_1)}\quad f\in \mathscr{D}(\mathcal{O}_1,E)$$extends to a unitary operator from 
$e_{(2:1)}H_{[\cdot]_2}$ onto $H_{[\cdot]_1}$.
\end{proposition}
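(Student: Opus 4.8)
The plan is to realise both $H_{[\cdot]_1}$ and $H_{[\cdot]_2}$ as quotients of $L^2$-spaces and to measure norms by distance to a subspace. Write $P_k = L^2(\mathcal{O}_k,dV)$, let $K_k = [\mathrm{ker}(\widetilde{G}_k)]$ be the $L^2(\mathcal{O}_k,dV)$-closure of $\mathrm{ker}(\widetilde{G}_k)$, and let $q_k\colon P_k\to H_{[\cdot]_k}=P_k/K_k$ be the quotient map, so that $\|q_k(v)\| = \mathrm{dist}_{P_k}(v,K_k)$ for $v\in P_k$. Extension by zero identifies $P_1$ with the orthogonal summand $\chi_{\mathcal{O}_1}P_2$ of the splitting $P_2 = \chi_{\mathcal{O}_1}P_2 \oplus \chi_{\mathcal{O}_2\setminus\mathcal{O}_1}P_2$; write $\Pi_1$ for the corresponding orthogonal projection $P_2\to P_1$ (multiplication by $\chi_{\mathcal{O}_1}$). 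For $f\in\mathscr{D}(\mathcal{O}_1,E)$ we regard $f_{ext}\in\mathscr{D}(\mathcal{O}_2,E)$ simply as the image of $f$ under this embedding.

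First I would record two inclusions. Since $\mathrm{ker}(\widetilde{G}_k)=P_{\mathcal{O}_k}(\mathscr{D}(\mathcal{O}_k,E))$ and the Klein--Gordon operator on $\mathcal{O}_2$ restricts to that on $\mathcal{O}_1$ without enlarging supports, one has $P_{\mathcal{O}_2}(g_{ext})=(P_{\mathcal{O}_1}(g))_{ext}$ for $g\in\mathscr{D}(\mathcal{O}_1,E)$; hence $(\mathrm{ker}(\widetilde{G}_1))_{ext}\subseteq\mathrm{ker}(\widetilde{G}_2)$, and passing to closures in $P_2$, $K_1\subseteq K_2$. Secondly, the preceding corollary states precisely that $\Pi_1(\mathrm{ker}(\widetilde{G}_2))\subseteq [\mathrm{ker}(\widetilde{G}_1)]=K_1$; as $\Pi_1$ is continuous and $K_1$ is closed, $\Pi_1(K_2)\subseteq K_1$.

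The core step is the identity $\mathrm{dist}_{P_2}(f_{ext},K_2)=\mathrm{dist}_{P_1}(f,K_1)$ for every $f\in\mathscr{D}(\mathcal{O}_1,E)$. The inequality ``$\le$'' is immediate from $K_1\subseteq K_2$ together with the observation that for $f\in P_1$ and $k\in K_1\subseteq P_1$ the norm $\|f-k\|$ is the same whether computed in $P_1$ or in $P_2$. For ``$\ge$'', given $k\in K_2$ I would decompose $f_{ext}-k$ along the orthogonal splitting of $P_2$: since $f_{ext}\in P_1$ this yields $\|f_{ext}-k\|_{P_2}^2 = \|f-\Pi_1 k\|_{P_1}^2 + \|(I-\Pi_1)k\|^2 \ge \|f-\Pi_1 k\|_{P_1}^2 \ge \mathrm{dist}_{P_1}(f,K_1)^2$, where the last inequality uses $\Pi_1 k\in K_1$. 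Taking the infimum over $k\in K_2$ completes the identity.

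Granting the identity, the rest is routine. It shows $f_{ext}\in K_2 \Leftrightarrow f\in K_1$, so $U\colon q_2(f_{ext})\mapsto q_1(f)$ is well defined on $\{q_2(f_{ext})\colon f\in\mathscr{D}(\mathcal{O}_1,E)\}$ and is norm-preserving there; polarisation promotes this to preservation of inner products, so $U$ is a linear isometry on that subspace. By definition $\{q_2(f_{ext})\colon f\in\mathscr{D}(\mathcal{O}_1,E)\}$ is dense in $e_{(2:1)}H_{[\cdot]_2}$, and since $\mathscr{D}(\mathcal{O}_1,E)$ is dense in $P_1=[\mathscr{D}(\mathcal{O}_1,E)]$ and $q_1$ is a continuous surjection, $\{q_1(f)\colon f\in\mathscr{D}(\mathcal{O}_1,E)\}$ is dense in $H_{[\cdot]_1}$. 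Hence $U$ extends uniquely to a surjective isometry, i.e.\ a unitary, from $e_{(2:1)}H_{[\cdot]_2}$ onto $H_{[\cdot]_1}$. Since the preceding corollary (and the Sobolev-approximation proposition behind it) has already done the analytic work, I do not expect any genuine obstacle here; the only point demanding care is the bookkeeping of the orthogonal decomposition and the distance-to-subspace description of the quotient norm.
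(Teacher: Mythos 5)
Your proof is correct and follows essentially the same route as the paper's: both arguments reduce the claim to the distance identity $\mathrm{dist}_{L^2(\mathcal{O}_2)}(f_{ext},[\ker\widetilde{G}_2])=\mathrm{dist}_{L^2(\mathcal{O}_1)}(f,[\ker\widetilde{G}_1])$ and derive it from the same two inputs, namely the locality identity $P_{\mathcal{O}_2}(g_{ext})=(P_{\mathcal{O}_1}g)_{ext}$ giving $(\ker\widetilde{G}_1)_{ext}\subseteq\ker\widetilde{G}_2$, and the preceding corollary that restriction to $\mathcal{O}_1$ carries $\ker\widetilde{G}_2$ into $[\ker\widetilde{G}_1]$. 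The only difference is one of packaging: the paper exhibits the best approximant $b_f$ of $f$ in $[\ker\widetilde{G}_1]$ and verifies by an explicit orthogonality computation that $(b_f)_{ext}$ is the best approximant of $f_{ext}$ in $[\ker\widetilde{G}_2]$, whereas you obtain the same equality by a two-sided estimate using the multiplication projection $\chi_{\mathcal{O}_1}$, never needing to name the minimizer.
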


\begin{proof}
Given $f\in\mathscr{D}(\mathcal{O}_1,E)$ with $\widetilde{G}_1(f)\neq 0$, let $b_f\in [\mathrm{ker}(\widetilde{G}_1)]$ be the unique element for which $f-b_f \perp [\mathrm{ker}(\widetilde{G}_1)]$. 
Select $(b_n)\subset \mathscr{D}(\mathcal{O}_1,E)$ such that $P_{\mathcal{O}_1}(b_n)\to b_f$. (Here we are using the fact noted earlier that $\mathrm{ker}(\widetilde{G}_k)= 
P_{\mathcal{O}_k}(\mathscr{D}(\mathcal{O}_k,E))$.) Since (again as noted earlier)  $P_{\mathcal{O}_2}(b_n^{ext})= P_{\mathcal{O}_1}(b_n)_{ext}$ for all $n$, we have that 
$(P_{\mathcal{O}_1}(b_n)_{ext}) \subset P_{\mathcal{O}_2}(\mathscr{D}(M_{\mathcal{O}_2},E))=\mathrm{ker}(\widetilde{G}_2)$. The fact that $\lim_n P_{\mathcal{O}_1}(b_n)_{ext}=b_f^{ext}$, therefore 
ensures that $b_f^{ext}\in [\mathrm{ker}(\widetilde{G}_2)]$. Given any $\upsilon \in  \mathrm{ker}(\widetilde{G}_2)$, we may now use the preceding corollary and the fact that $f$ and $b_f$ are 
supported on $\mathcal{O}_1$, to see that $$\int_M\langle (f-b_f)_{ext},\upsilon\rangle,dV = \int_M\langle f-b_f,\upsilon{\upharpoonright}{\mathcal{O}_1}\rangle,dV=0.$$Thus $(b_f)_{ext}$ is 
the unique element of $[\mathrm{ker}(\widetilde{G}_2)]$ for which $(f-b_f)_{ext}\perp [\mathrm{ker}(\widetilde{G}_2)]$. But that means that 
$\|f+[\mathrm{ker}(\widetilde{G}_1)]\|=\|f-b_f\|=\|f_{ext}+[P_{\mathcal{O}_2}(C_0^\infty(M_{\mathcal{O}_2},E)]]_2\|$. Thus the canonical map from 
$[ext(\mathscr{D}(\mathcal{O}_1,E))+\mathrm{ker}(\widetilde{G}_2]/[\mathrm{ker}(\widetilde{G}_2)]$ to 
$[\mathscr{D}(\mathcal{O}_1,E))]/[\mathrm{ker}(\widetilde{G}_1)]$ is in fact a surjective linear isometry.
\end{proof}

In the following we shall write $\SOL(\cO_2:\cO_1)$ for the subset $\{\widetilde{G}_{\cO_2}(f_{ext}): f\in \mathscr{D}(\cO_1,E)\}$ and similarly write $\cM(\cO_2:\cO_1)$ for the von Neumann subalgebra 
generated by the Weyl operators $\{\pi_2(W(x))\colon x\in \SOL(\cO_2:\cO_1)\}$. By suitably rescaling the above unitary operator, we now arrive at the following 

\begin{corollary} There is a unitary operator $v_{(2:1)}$ from $e_{(2:1)}H_{\mu_2}$ to $H_{\mu_1}$ which maps each $\widetilde{G}_{\cO_2}(f_{ext})$, where $f\in \mathscr{D}(\mathcal{O}_1,E)$, to 
$\widetilde{G}_{\cO_1}(f)$.  
\end{corollary}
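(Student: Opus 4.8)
The plan is to obtain $v_{(2:1)}$ by transporting the unitary $U$ of Proposition \ref{[]isom} across the canonical identifications of each $L^2$-quotient space $H_{[\cdot]_k}$ with its rescaled counterpart $H_{\mu_k}$. First I would make the identification $H_{\mu_k}\cong H_{[\cdot]_k}$ precise: under the bijection $\SOL(\mathcal{O}_k)\leftrightarrow\mathcal{E}(\mathcal{O}_k)$ of \cite[Proposition 5.2.12]{KM} each generator $\widetilde{G}_{\cO_k}(f)$ corresponds to the class $[f]_k$, and by the very definition of $\widetilde{\mu}^{\mathcal{O}_k}$ one has $\|\widetilde{G}_{\cO_k}(f)\|_{\mu_k}^2=\tfrac{\|\overline{G}_{\cO_k}\|}{2}\|[f]_k\|^2$. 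Thus $H_{\mu_k}$ is exactly $H_{[\cdot]_k}$ with its inner product multiplied by the scalar $\tfrac{\|\overline{G}_{\cO_k}\|}{2}$, so the identification is a positive scalar multiple of a unitary and in particular carries the projection lattice of $H_{[\cdot]_2}$ onto that of $H_{\mu_2}$. Under it the projection $e_{(2:1)}$ of Proposition \ref{[]isom} corresponds to the projection of $H_{\mu_2}$ onto the closure of $\{\widetilde{G}_{\cO_2}(f_{ext}):f\in\mathscr{D}(\mathcal{O}_1,E)\}$, which is precisely the domain demanded in the statement.

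I would then define $v_{(2:1)}$ on the dense set of generators by transporting $U$ through these two identifications, which by the action $[f_{ext}]_2\mapsto[f]_1$ of $U$ amounts to the prescription $\widetilde{G}_{\cO_2}(f_{ext})\mapsto\widetilde{G}_{\cO_1}(f)$; the asserted image then appears with no spurious scalar on the target vector. The substance of the claim is the unitarity of this prescription, and by polarisation it suffices to verify that it preserves the $\widetilde{\mu}$-inner product, i.e. that
$$\widetilde{\mu}^{\mathcal{O}_2}(\widetilde{G}_{\cO_2}(f_{ext}),\widetilde{G}_{\cO_2}(g_{ext}))=\widetilde{\mu}^{\mathcal{O}_1}(\widetilde{G}_{\cO_1}(f),\widetilde{G}_{\cO_1}(g))$$
for all $f,g\in\mathscr{D}(\mathcal{O}_1,E)$. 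Expanding both sides through the definition of $\widetilde{\mu}$ turns the left-hand side into $\tfrac{\|\overline{G}_{\cO_2}\|}{2}\langle[f_{ext}]_2,[g_{ext}]_2\rangle$ and the right-hand side into $\tfrac{\|\overline{G}_{\cO_1}\|}{2}\langle[f]_1,[g]_1\rangle$, and Proposition \ref{[]isom} already supplies the equality $\langle[f_{ext}]_2,[g_{ext}]_2\rangle=\langle[f]_1,[g]_1\rangle$ of the $L^2$-quotient inner products, since it shows $U$ is an $L^2$-quotient isometry.

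The main obstacle is therefore the reconciliation of the two region-dependent normalisation constants $\tfrac{\|\overline{G}_{\cO_2}\|}{2}$ and $\tfrac{\|\overline{G}_{\cO_1}\|}{2}$ appearing on the two sides of the displayed identity. Since Proposition \ref{[]isom} already matches the $L^2$-quotient inner products exactly, the $\widetilde{\mu}$-isometry I am after holds precisely once these two scalars are reconciled, and this is the content of the ``suitable rescaling'' alluded to in the statement: one adjusts the unitary $U$ by the scalar dictated by the two $\widetilde{\mu}$-normalisations before transporting it to the $H_{\mu_k}$ picture, so that the resulting $v_{(2:1)}$ is $\widetilde{\mu}$-isometric while still implementing $\widetilde{G}_{\cO_2}(f_{ext})\mapsto\widetilde{G}_{\cO_1}(f)$. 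I would close the argument by noting that $v_{(2:1)}$, being a densely defined $\widetilde{\mu}$-isometry whose range contains the dense set $\SOL(\mathcal{O}_1)=\{\widetilde{G}_{\cO_1}(f):f\in\mathscr{D}(\mathcal{O}_1,E)\}$, extends by continuity to a unitary from $e_{(2:1)}H_{\mu_2}$ onto $H_{\mu_1}$; the surjectivity is inherited from that of $U$ onto $H_{[\cdot]_1}$ in Proposition \ref{[]isom}.
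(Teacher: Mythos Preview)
Your approach is exactly the paper's: its entire proof is the single line ``By suitably rescaling the above unitary operator, we now arrive at the following,'' and you have correctly unpacked this as transporting the $U$ of Proposition~\ref{[]isom} through the scalar identifications $H_{[\cdot]_k}\leftrightarrow H_{\mu_k}$ coming from $\widetilde{\mu}^{\mathcal{O}_k}=\tfrac{\|\overline{G}_{\cO_k}\|}{2}\langle\cdot,\cdot\rangle_{[\cdot]_k}$.

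There is, however, a genuine tension in your final paragraph which you have detected but not actually resolved. You correctly observe that the desired $\widetilde{\mu}$-identity carries mismatched constants $\tfrac{\|\overline{G}_{\cO_2}\|}{2}$ and $\tfrac{\|\overline{G}_{\cO_1}\|}{2}$ on its two sides, and you then assert that a scalar adjustment of $U$ can simultaneously (i) render the resulting map $\widetilde{\mu}$-isometric and (ii) still implement the literal prescription $\widetilde{G}_{\cO_2}(f_{ext})\mapsto\widetilde{G}_{\cO_1}(f)$. But these two requirements are incompatible unless $\|\overline{G}_{\cO_1}\|=\|\overline{G}_{\cO_2}\|$: multiplying a linear map by a nontrivial scalar $c$ sends $\widetilde{G}_{\cO_2}(f_{ext})$ to $c\,\widetilde{G}_{\cO_1}(f)$, so the very rescaling that restores isometry destroys the stated action on generators. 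What the rescaling actually produces is a unitary sending $\widetilde{G}_{\cO_2}(f_{ext})$ to $\sqrt{\|\overline{G}_{\cO_2}\|/\|\overline{G}_{\cO_1}\|}\;\widetilde{G}_{\cO_1}(f)$. The paper glosses over the same point, and for the downstream use (matching two-point functions up to the structures that define the quasi-free states) a single fixed positive scalar is harmless; but you should not claim that the rescaled map achieves both properties verbatim.
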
 

With the above technology now at our disposal we pass to an analysis of the von Neumann subalgebra of $\cM(\cO_2:\cO_1)$. Given $f, g \in \mathscr{D}(\cO_1,E)$, it follows fairly immediately 
that
\begin{eqnarray*}
\tau_2(G_2(f_{ext}), G_2(g_{ext})) &=& \int_{\cO_2}\langle f_{ext},G_2(g_{ext})\rangle\,dV\\
&=& \int_{\cO_1}\langle f, G_1(g)\rangle\,dV\\
&=& \tau_2(G_1(f), G_1(g))
\end{eqnarray*}
The above is a fairly easy consequence of the definition of $G_k$ as described in \cite[Proposition 3.5.1]{BGP} and the fact that here  
$g$ is supported on $\mathcal{O}_1$. The map $T: \SOL(\cO_1) \to \SOL(\cO_2:\cO_1):G_1(f)\mapsto G_2(f_{ext})$ is therefore clearly a 
symplectic map. Writing $\cA(\cO_1)$, $\cA(\cO_2:\cO_1)$ for the corresponding Weyl algebras, there must by the Bogoliubov transform exist a *-isomorphism 
$\alpha_{(2:1)}$ from $\cA(\cO_1)$ onto $\cA(\cO_2:\cO_1)$ which maps Weyl operators of the form $W(G_1(f))$ ($f \in 
\mathscr{D}(\cO_1,E)$) onto those of the form $W(G_2(f_{ext}))$. (Note that it clearly follows from \cite[Corollary 4.2.11]{BGP} that 
$\cA(\cO_2:\cO_1)$ may be regarded as a sub-algebra of $\cA(\cO_2)$.) Since by the preceding corollary we additionally have that 
$$\mu_2(G_2(f_{ext}), G_2(g_{ext})) = \mu_1(G_1(f), G_1(g))\mbox{ for all }f, g \in \mathscr{D}(\cO_1,E),$$
it is clear that the map $T$ also identifies the restriction of the two point function of $\omega_{\cO_2}$ to $\SOL(\cO_2:\cO_1)$, 
namely $$(G_2(f_{ext}), G_2(g_{ext}))\to \mu_2(G_2(f_{ext}), G_2(g_{ext}))+\frac{1}{2} \tau_2(G_2(f_{ext}), G_2(g_{ext})),$$ with 
the two point function for $\omega_{\cO_1}$, namely $$(G_1(f), G_1(g))\to \mu_1(G_1(f), G_1(g))+\frac{1}{2} \tau_1(G_1(f), G_1(g).$$ 
This then ensures that the state $\omega_{\cO_2}\circ\alpha_{(2:1)}$ on $\cA(\cO_1)$ agrees $\omega_{\cO_1}$. Here we made use of the 
fact that since $\alpha_{(2:1)}$ maps onto $\cA(\cO_2:\cO_1)$, we are in the composition $\omega_{\cO_2}\circ\alpha_{(2:1)}$ effectively 
dealing with the restriction of the state $\omega_{\cO_2}$ to $\cA(\cO_2:\cO_1)$. This leads to the following proposition: 

\begin{proposition}\label{P:defn-M(2:1)} 
\begin{enumerate}
\item There exists a *-isomorphism $\alpha_{(2:1)}$ from $\cA(\cO_1)$ onto $\cA(\cO_2:\cO_1)$ which maps Weyl operators of 
the form $W(G_1(f))$ ($f \in \mathscr{D}(\cO_1,E)$) onto those of the form $W(G_2(f_{ext}))$ and for which we additionally have that 
$(\omega_2{\upharpoonright}{\cA(\cO_2:\cO_1)})\circ\alpha_{(2:1)}=\omega_{\cO_1}$. 
\item Let $p_{(2:1)}$ be the projection from $H_{\omega_2}$ (the GNS Hilbert space of the pair $(\cM(\cO_2),\omega_2)$ where $\cM(\cO_2) =\pi_2(\cA(\cO_2))''$) onto 
the Hilbert subspace generated by $\{\eta_2(W(x))\colon x\in \SOL(\cO_2:\cO_1)\}$, and write $\cM(\cO_2:\cO_1)$ for the von Neumann 
subalgebra of $\cM(\cO_2)$ generated by the Weyl operators $\{\pi_2(W(x))\colon \SOL(\cO_2:\cO_1)\}$. Then $\cM(\cO_2:\cO_1)$ commutes with 
$p_{(2:1)}$. In addition the GNS Hilbert space of the pair $(\cA(\cO_2:\cO_1),\omega_2{\upharpoonright}{\cA(\cO_2:\cO_1)})$ may be identified with 
$p_{(2:1)}H_{\omega_2}$ and $\pi_{(2:1)}(\cA(\cO_2:\cO_1))''$ with the restriction of $\cM(\cO_2:\cO_1)$ to $p_{(2:1)}H_{\omega_2}$ where in this instance 
$\pi_{(2:1)}$ denotes the *-isomorphism forming part of the GNS construction for the pair $(\cA(\cO_2:\cO_1),\omega_2{\upharpoonright}{\cA(\cO_2:\cO_1)})$.
\end{enumerate}
\end{proposition}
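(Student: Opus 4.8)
The plan is to treat part (1) as essentially a summary of the discussion immediately preceding the statement. First I would recall that the map $T\colon \SOL(\cO_1)\to\SOL(\cO_2:\cO_1)$, $G_1(f)\mapsto G_2(f_{ext})$, was shown above to be a symplectic bijection, so that the Bogoliubov transform (\cite[Theorem 4.2.9]{BGP}, together with \cite[Corollary 4.2.11]{BGP} which places $\cA(\cO_2:\cO_1)$ inside $\cA(\cO_2)$) yields the $^*$-isomorphism $\alpha_{(2:1)}\colon\cA(\cO_1)\to\cA(\cO_2:\cO_1)$ sending $W(G_1(f))$ to $W(G_2(f_{ext}))$. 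Then I would combine the displayed identity $\tau_2(G_2(f_{ext}),G_2(g_{ext}))=\tau_1(G_1(f),G_1(g))$ with the preceding corollary (the unitary $v_{(2:1)}$, giving $\mu_2(G_2(f_{ext}),G_2(g_{ext}))=\mu_1(G_1(f),G_1(g))$) to conclude that $T$ carries the two-point function of $\omega_{\cO_1}$ exactly onto the restriction of that of $\omega_{\cO_2}$ to $\SOL(\cO_2:\cO_1)$. Since a quasi-free state is uniquely determined by its two-point function (the recipe $\omega_s(W(x))=\exp\{-\tfrac12 s(x,x)\}$), this forces $(\omega_2{\upharpoonright}\cA(\cO_2:\cO_1))\circ\alpha_{(2:1)}=\omega_{\cO_1}$, which is (1).

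For (2) the strategy is the standard identification of the GNS data of a sub-$C^*$-algebra with a cyclic subrepresentation. Write $\Omega_2=\eta_2(\I)$ for the GNS cyclic vector of $(\cM(\cO_2),\omega_2)$, so that $\eta_2(a)=\pi_2(a)\Omega_2$ for all $a$, and recall $\cM(\cO_2:\cO_1)=\pi_2(\cA(\cO_2:\cO_1))''$. Since the linear span of $\{W(x)\colon x\in\SOL(\cO_2:\cO_1)\}$ is norm-dense in $\cA(\cO_2:\cO_1)$ and $\SOL(\cO_2:\cO_1)$ is a linear subspace, the closed subspace generated by $\{\eta_2(W(x))\colon x\in\SOL(\cO_2:\cO_1)\}$ equals the cyclic subspace $K:=\overline{\pi_2(\cA(\cO_2:\cO_1))\Omega_2}$; that is, $p_{(2:1)}H_{\omega_2}=K$. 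Because $\cA(\cO_2:\cO_1)$ is a $^*$-algebra, $K$ is invariant under $\pi_2(\cA(\cO_2:\cO_1))$ and under its adjoints, hence $K$ reduces this algebra and $p_{(2:1)}\in\pi_2(\cA(\cO_2:\cO_1))'=\cM(\cO_2:\cO_1)'$; in particular $\cM(\cO_2:\cO_1)$ commutes with $p_{(2:1)}$, which is the first assertion.

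Next I would note that $\Omega_2\in K$ is cyclic for $\pi_2(\cA(\cO_2:\cO_1))$ acting on $K$ and that $\langle\Omega_2,\pi_2(a)\Omega_2\rangle=\omega_2(a)$ for $a\in\cA(\cO_2:\cO_1)$, so that $(K,\pi_2(\cdot){\upharpoonright}K,\Omega_2)$ is a cyclic representation of $\cA(\cO_2:\cO_1)$ implementing $\omega_2{\upharpoonright}\cA(\cO_2:\cO_1)$. By uniqueness of the GNS construction this is unitarily equivalent, via the unitary carrying $\Omega_{(2:1)}$ to $\Omega_2$, to $(\pi_{(2:1)},H_{(2:1)},\Omega_{(2:1)})$; under this equivalence $H_{(2:1)}$ is identified with $p_{(2:1)}H_{\omega_2}$ and $\pi_{(2:1)}(a)$ with $\pi_2(a){\upharpoonright}K$. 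Finally, since $p_{(2:1)}\in\cM(\cO_2:\cO_1)'$, the reduction $x\mapsto xp_{(2:1)}{\upharpoonright}K$ is a normal $^*$-homomorphism of $\cM(\cO_2:\cO_1)$ onto the reduced von Neumann algebra $\cM(\cO_2:\cO_1){\upharpoonright}K$, so it maps the $\sigma$-weakly dense subalgebra $\pi_2(\cA(\cO_2:\cO_1))$ onto a $\sigma$-weakly dense subalgebra of $\cM(\cO_2:\cO_1){\upharpoonright}K$; hence $\bigl(\pi_2(\cA(\cO_2:\cO_1)){\upharpoonright}K\bigr)''=\cM(\cO_2:\cO_1){\upharpoonright}K$, which under the above identification is exactly $\pi_{(2:1)}(\cA(\cO_2:\cO_1))''$. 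This gives the second assertion.

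I do not expect a serious obstacle here: (1) is already contained in the preamble, and (2) is a routine application of GNS uniqueness together with reduction theory. The points needing a little care are the verification that the subspace ``generated by the $\eta_2(W(x))$'' is precisely the cyclic subspace $\overline{\pi_2(\cA(\cO_2:\cO_1))\Omega_2}$ (which relies on norm-density of the span of Weyl operators and on $\SOL(\cO_2:\cO_1)$ being a subspace), and the invocation that reduction by a projection in the commutant preserves the double commutant, so that $\pi_{(2:1)}(\cA(\cO_2:\cO_1))''$ genuinely coincides with the reduction of $\cM(\cO_2:\cO_1)$ and is not merely contained in it.
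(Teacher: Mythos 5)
Your proposal is correct and follows essentially the same route as the paper: part (1) is read off from the preceding discussion (symplectic map, Bogoliubov transform, matching of two-point functions), and part (2) establishes that $p_{(2:1)}H_{\omega_2}$ is the cyclic subspace for the $*$-subalgebra generated by the Weyl operators over $\SOL(\cO_2:\cO_1)$ (hence reducing, hence $p_{(2:1)}$ lies in the commutant), followed by GNS uniqueness and the standard fact that reduction by a projection in the commutant carries a $\sigma$-weakly dense subalgebra onto a $\sigma$-weakly dense subalgebra of the reduced von Neumann algebra. The only cosmetic difference is that the paper verifies invariance of the subspace by an explicit Weyl-relation computation $\pi_2(W(x))\eta_2(W(y))=e^{-i\tau_2(x,y)}\eta_2(W(x+y))$ and then passes to adjoints, whereas you package the same facts as ``$K$ is the cyclic subspace of a $*$-algebra''; the substance is identical.
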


\begin{proof} The first claim is a direct consequence of the preceding discussion. The identification of $p_{(2:1)}H_{\omega_2}$ with the 
GNS Hilbert space of the pair $(\cA(\cO_2:\cO_1),\omega_2{\upharpoonright}{\cA(\cO_2:\cO_1)})$ is a fairly straghtforward exercise. Once we show 
that $\cM(\cO_2:\cO_1)$ commutes with $p_{(2:1)}$, the final claim then follows by noting that strong convergence of nets in 
$\mathrm{span}\{\pi_{(2:1)}(W(x))\colon \SOL(\cO_2:\cO_1)\}$ correspond to strong convergence of the matching nets in $\mathrm{span}\{\pi_2(W(x))\colon \SOL(\cO_2:\cO_1)\}$. To see the claim 
regarding commutation with $p_{(2:1)}$ we firstly note that we will for any $x,y \in \SOL(\cO_2:\cO_1)$ have that 
$\pi_2(W(x))\eta_2(W(y)) = \eta_2(W(x)W(y))=e^{-i\tau_2(x,y)}\eta_2(W(x+y))$. Each $\pi_2(W(x))$ ($x\in \SOL(\cO_2:\cO_1)$) therefore 
maps $\overline{\mathrm{span}}\{\pi_2(W(x))\colon \SOL(\cO_2:\cO_1)\}=p_{(2:1)}H_{\omega_2}$ back into itself. So for each $x\in \SOL(\cO_2:\cO_1)$ and any 
$a, b\in H_{\omega_2}$ we have that $$\langle a, p_{(2:1)}\pi_2(W(x))p_{(2:1)}b\rangle = \langle a, \pi_2(W(x))p_{(2:1)}b\rangle$$ and 
hence that $p_{(2:1)}\pi_2(W(x))p_{(2:1)} = \pi_2(W(x))p_{(2:1)}$. Since this is true for the generators of $\cM(\cO_2:\cO_2)$ it must 
be true for all elements of $\cM(\cO_2:\cO_2)$ with the claim regarding commutation then following by duality.
\end{proof}

It remains to show that the GNS standard form of the pair  $(\cM(\cO_2:\cO_1),\omega_2{\upharpoonright}{\cM(\cO_2:\cO_1)})$ as described above, embeds 
into the GNS standard form of the pair $(\cM(\cO_2),\omega_2)$. For this task we fisrtly need the following specialization of \cite[Theorem 5.2.24(b)(i)-(iii)]{KM}.

\begin{lemma} Let $p_{(2:1)}$ and $H_{\omega_2}$ be as before. Then the following holds true:
\begin{enumerate}
\item With $\widetilde{e}_{(2:1)}$ denoting the projection of $\mathcal{H}_{\mu_2}$ onto the Hilbert subspace of $\mathcal{H}_{\mu_2}$ 
generated by $\SOL(\cO_2:\cO_1)$, the space $p_{(2:1)}H_{\omega_2}$ is the bosonic (symmetrized) Fock space $\mathcal{F}^+(\widetilde{e}_{(2:1)}\mathcal{H}_{\mu_2})= 
\Gamma(\widetilde{e}_{(2:1)})\mathcal{F}_+(\mathcal{H}_{\mu_2})$. In particular $p_{(2:1)}=\Gamma(\widetilde{e}_{(2:1)})$ and $H_{\omega_2} =\mathcal{F}_+(\mathcal{H}_{\mu_2})$.
\item With $\Psi_2$ denoting the vacuum vector of the Fock space $\mathcal{F}_+(\mathcal{H}_{\mu_2})$, the subspace of the finite complex linear combinations of $\Psi_2$ and
all vectors of the form $$a^*(x_1)\dots a^*(x_n)\Psi_2 \mbox{ for }n = 1, 2,\dots$$where $x_k\in \SOL(\cO_2:\cO_1)$ for all $1\leq k\leq n$, is a dense subspace of $p_{(2:1)}H_{\omega_2}$.
\end{enumerate}
\end{lemma}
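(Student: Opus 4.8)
The plan is to reduce the statement to the explicit Fock-space realisation of the GNS triple $(\pi_2,H_{\omega_2},\Psi_2)$ of the quasi-free state $\omega_2=\omega_{\cO_2}$ supplied by \cite[Theorem 5.2.24(b)]{KM}. In that realisation $H_{\omega_2}=\mathcal{F}_+(\mathcal{H}_{\mu_2})$ is the bosonic Fock space over the (Kay--Wald) one-particle space $\mathcal{H}_{\mu_2}$, $\Psi_2$ is the Fock vacuum, $\SOL(\cO_2)$ is (real-linearly) embedded in $\mathcal{H}_{\mu_2}$, and $\pi_2(W(x))=\exp(i\Phi_2(x))$ where the GNS field operator $\Phi_2(x)$ is, on the finite-particle domain, a positive multiple of $a(x)+a^*(x)$ (with $x$ identified with its image in $\mathcal{H}_{\mu_2}$) and $a(x)\Psi_2=0$; moreover $\eta_2(W(x))=\pi_2(W(x))\Psi_2$ since $\cA(\cO_2)$ is unital. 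The core of the argument is a single assertion about closed subspaces, which I would prove first: for every real-linear subspace $V\subseteq\SOL(\cO_2)$, writing $e_V$ for the orthogonal projection of $\mathcal{H}_{\mu_2}$ onto the closed complex span $\overline{\mathrm{span}}_{\mathbb{C}}V$, one has
$$\overline{\mathrm{span}}\{\pi_2(W(x))\Psi_2:x\in V\}=\overline{\mathrm{span}}\bigl(\{\Psi_2\}\cup\{a^*(x_1)\cdots a^*(x_n)\Psi_2: n\ge 1,\ x_i\in V\}\bigr)=\mathcal{F}_+\!\bigl(e_V\mathcal{H}_{\mu_2}\bigr)=\Gamma(e_V)\mathcal{F}_+(\mathcal{H}_{\mu_2}).$$

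To establish this I would fix $x\in V$ and use that $t\mapsto\pi_2(W(tx))\Psi_2$ is a real-analytic $H_{\omega_2}$-valued curve (finite-particle vectors are analytic vectors for $\Phi_2(x)$) whose $n$-th derivative at $0$ equals $i^n\Phi_2(x)^n\Psi_2$. An induction on $n$ using that $[a(x),a^*(x)]$ is scalar and $a(x)\Psi_2=0$ shows that $\Phi_2(x)^n\Psi_2$ is a linear combination of the vectors $a^*(x)^k\Psi_2$, $0\le k\le n$, with nonzero coefficient at $k=n$; inverting this triangular relation shows conversely that each $a^*(x)^n\Psi_2$ is a finite linear combination of the derivatives $\Phi_2(x)^k\Psi_2$, hence belongs to $\overline{\mathrm{span}}\{\pi_2(W(tx))\Psi_2:t\in\mathbb{R}\}$. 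Polarising in $x$, and using $W(x)W(y)=e^{-i\tau_2(x,y)/2}W(x+y)$ to rewrite products $\pi_2(W(x_1))\cdots\pi_2(W(x_m))\Psi_2$ as combinations of vectors $\pi_2(W(x))\Psi_2$, upgrades this to the first equality; the second equality is the standard fact that the Wick monomials built from a complex-dense subset of $e_V\mathcal{H}_{\mu_2}$ span $\mathcal{F}_+(e_V\mathcal{H}_{\mu_2})$, and the third is the definition of $\Gamma(e_V)$.

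What remains is bookkeeping. By construction $\SOL(\cO_2:\cO_1)=\{\widetilde{G}_{\cO_2}(f_{ext}):f\in\mathscr{D}(\cO_1,E)\}$ and $\widetilde{e}_{(2:1)}$ is precisely the projection of $\mathcal{H}_{\mu_2}$ onto the closed complex span of this set, so $e_V=\widetilde{e}_{(2:1)}$ when $V=\SOL(\cO_2:\cO_1)$; and $p_{(2:1)}$ is, by its definition in Proposition \ref{P:defn-M(2:1)}, the projection of $H_{\omega_2}$ onto $\overline{\mathrm{span}}\{\eta_2(W(x)):x\in\SOL(\cO_2:\cO_1)\}=\overline{\mathrm{span}}\{\pi_2(W(x))\Psi_2:x\in\SOL(\cO_2:\cO_1)\}$. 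Hence the displayed identity with $V=\SOL(\cO_2:\cO_1)$ gives $p_{(2:1)}H_{\omega_2}=\mathcal{F}_+(\widetilde{e}_{(2:1)}\mathcal{H}_{\mu_2})=\Gamma(\widetilde{e}_{(2:1)})\mathcal{F}_+(\mathcal{H}_{\mu_2})$, and since $p_{(2:1)}$ and $\Gamma(\widetilde{e}_{(2:1)})$ are orthogonal projections with the same range, $p_{(2:1)}=\Gamma(\widetilde{e}_{(2:1)})$; taking instead $V=\SOL(\cO_2)$, whose image has complex-dense span in $\mathcal{H}_{\mu_2}$ by construction of the one-particle space, recovers $H_{\omega_2}=\mathcal{F}_+(\mathcal{H}_{\mu_2})$. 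This is part (1), and the second description of the closed span in the displayed identity is exactly the density claim of part (2). (Alternatively one could transport \cite[Theorem 5.2.24(b)]{KM} applied to $\omega_{\cO_1}$ through the unitary $v_{(2:1)}$ and the GNS identification in Proposition \ref{P:defn-M(2:1)}(2); I prefer the route above because it avoids having to check that $\widetilde{e}_{(2:1)}\mathcal{H}_{\mu_2}$ inherits the ``right'' complex structure, which is not obvious since $A_{\cO_2}$ need not leave that subspace invariant.)

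The main obstacle, modest though it is, lies in the first step: handling the unboundedness of the field operators $\Phi_2(x)$ --- i.e.\ justifying the termwise differentiation of $t\mapsto\pi_2(W(tx))\Psi_2$ and the creation/annihilation manipulations --- which forces one to work on the finite-particle subspace and invoke analyticity. Everything downstream is purely formal.
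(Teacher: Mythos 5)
Your argument is correct and is essentially the paper's own proof written out in full: the paper simply cites the first paragraph of the proof of \cite[Lemma A.2]{KW}, which is exactly the differentiation-of-Weyl-unitaries/analytic-vector argument you give (recovering $a^*(x_1)\cdots a^*(x_n)\Psi_2$ from derivatives of $t\mapsto\pi_2(W(tx))\Psi_2$ and identifying the resulting cyclic subspace with the Fock space over the closed complex span of $\SOL(\cO_2:\cO_1)$). The only difference is presentational: you supply the details that the paper delegates to Kay--Wald, and your closing caveat about the complex structure on $\widetilde{e}_{(2:1)}\mathcal{H}_{\mu_2}$ is a sensible reason for preferring the direct route.
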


\begin{proof} The proof follows from a straightforward modification of the first paragraph of the proof of \cite[Lemma A.2]{KW}.
\end{proof}

The above lemma now enables us to note the following:

\begin{lemma}\label{mod-subalg1} Let $p_{(2:1)}$ and $H_{\omega_2}$ be as before and consider the GNS standard form $(\cM(\cO_2),H_{\omega_2},J_{\cO_2},\mathscr{P}_{\cO_2})$. Then 
$p_{(2:1)}H_{\omega_2}$ is an invariant subspace of the modular conjugation operator $J_{\cO_2}$. 
\end{lemma}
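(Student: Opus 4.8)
The plan is to reduce the assertion to a statement about the one-particle space of the quasi-free state $\omega_{\cO_2}$ and then to exploit the fact that in this representation the whole modular structure is a second quantisation. First I would invoke the preceding lemma, which tells us that $H_{\omega_2}=\mathcal{F}^+(\mathcal{H}_{\mu_2})$ (the symmetric Fock space over the one-particle space built in \S\ref{AWF}) and that $p_{(2:1)}=\Gamma(\widetilde e_{(2:1)})$, where $\widetilde e_{(2:1)}$ is the orthogonal projection of $\mathcal{H}_{\mu_2}$ onto the closed subspace $\mathcal{K}_1:=\widetilde e_{(2:1)}\mathcal{H}_{\mu_2}$ generated by $\{K_{\cO_2}(x)\colon x\in\SOL(\cO_2:\cO_1)\}$. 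Next I would recall, from the Kay--Wald/Araki--Woods description (\cite[Theorem 5.2.24]{KM}, \cite[\S9.3]{Der}, \S\ref{AWF}), that the modular conjugation of the standard form $(\cM(\cO_2),H_{\omega_2},J_{\cO_2},\mathscr{P}_{\cO_2})$ is itself a second quantisation, $J_{\cO_2}=\Gamma(j_{\cO_2})$, for an anti-unitary involution $j_{\cO_2}$ of $\mathcal{H}_{\mu_2}$. Since $\Gamma$ is functorial, $J_{\cO_2}p_{(2:1)}=\Gamma(j_{\cO_2}\widetilde e_{(2:1)})$ and $p_{(2:1)}J_{\cO_2}=\Gamma(\widetilde e_{(2:1)}j_{\cO_2})$, so $J_{\cO_2}$ leaves $p_{(2:1)}H_{\omega_2}$ invariant precisely when $j_{\cO_2}$ commutes with $\widetilde e_{(2:1)}$, that is, precisely when $\mathcal{K}_1$ is $j_{\cO_2}$-invariant. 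This reduces the whole problem to a one-particle statement.

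To prove that $\mathcal{K}_1$ is $j_{\cO_2}$-invariant I would argue that the one-particle data restrict. By Proposition \ref{P:defn-M(2:1)} the $^*$-isomorphism $\alpha_{(2:1)}$ carries $\omega_{\cO_1}$ to $\omega_2{\upharpoonright}\cA(\cO_2:\cO_1)$, so $(\mathcal{K}_1,K_{\cO_2}{\upharpoonright}\SOL(\cO_2:\cO_1))$ is a one-particle structure for $\omega_{\cO_1}$; by the uniqueness of the one-particle structure it is carried, via the rescaled map $v_{(2:1)}$ of the corollary following Proposition \ref{[]isom} (extended to the one-particle level), onto the one-particle structure $(\mathcal{H}_{\mu_1},K_{\cO_1})$ of \S\ref{AWF}. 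Moreover, the identities $\widetilde\mu^{\cO_2}{\upharpoonright}\SOL(\cO_2:\cO_1)=\widetilde\mu^{\cO_1}$, $\tau_2{\upharpoonright}\SOL(\cO_2:\cO_1)=\tau_1$ and $\|\overline G_{\cO_2}\|=\|\overline G_{\cO_1}\|$ established in the preceding subsection show that this equivalence also intertwines $A_{\cO_2}{\upharpoonright}\mathcal{K}_1$ with $A_{\cO_1}$, hence (uniqueness of polar decompositions) $|A_{\cO_2}|{\upharpoonright}$ with $|A_{\cO_1}|$ and $\mathcal{J}_{\cO_2}{\upharpoonright}$ with $\mathcal{J}_{\cO_1}$, and complex conjugation with complex conjugation. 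Since, by the recipe of \S\ref{AWF}, $j_{\cO_2}$ is built solely from $A_{\cO_2}$, $|A_{\cO_2}|$, $\mathcal{J}_{\cO_2}$ and complex conjugation, it follows that $j_{\cO_2}$ restricts to $\mathcal{K}_1$ and that $v_{(2:1)}$ carries this restriction onto $j_{\cO_1}$; in particular $j_{\cO_2}(\mathcal{K}_1)=\mathcal{K}_1$. Feeding this back through the functoriality reduction gives $J_{\cO_2}(p_{(2:1)}H_{\omega_2})=p_{(2:1)}H_{\omega_2}$, and combined with Lemma \ref{mod-subalg2} also the inclusion of natural cones mentioned in the surrounding text.

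I expect the genuine obstacle to be the claim that $A_{\cO_2}$ actually \emph{restricts} to $\mathcal{K}_1$, i.e. that the real subspace $\overline{\SOL(\cO_2:\cO_1)}^{\widetilde\mu^{\cO_2}}$ is $A_{\cO_2}$-invariant; everything else in the second step then follows from polar decomposition and the recipe of \S\ref{AWF}. As $A_{\cO_2}$ is bounded and skew-adjoint, this is equivalent to the vanishing of $\tau_2$ between $\SOL(\cO_2:\cO_1)$ and its $\widetilde\mu^{\cO_2}$-orthogonal complement. To establish this I would pass to the ``source picture'' via Proposition \ref{[]isom} — under which $\overline{\SOL(\cO_2:\cO_1)}^{\widetilde\mu^{\cO_2}}$ corresponds to $e_{(2:1)}H_{[\cdot]_2}$ — use the relation $\widetilde G_{\cO}=\overline G_{\cO}[\,\cdot\,]$ together with the symmetry of $i\widetilde G_{\cO_2}$ from Proposition \ref{Gbd} to rewrite the relevant pairing, and then invoke the corollary preceding Proposition \ref{[]isom} (on the restriction of $\ker\widetilde G_{\cO_2}$ to $\cO_1$) together with the fact that $\SOL(\cO_2:\cO_1)$ is, by construction, the image under the symplectic map $T$ of the symplectically non-degenerate space $\SOL(\cO_1)$, so that the splitting $\SOL(\cO_2)=\SOL(\cO_2:\cO_1)\oplus(\text{complement})$ respects both forms simultaneously. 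That compatibility is the crux; the rest is bookkeeping.
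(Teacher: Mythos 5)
Your overall strategy is the same as the paper's: both arguments rest on the preceding lemma's identification of $p_{(2:1)}H_{\omega_2}$ as a Fock subspace together with the fact that, in the Araki--Woods/Kay--Wald picture, the modular conjugation of $(\cM(\cO_2),\Omega_{\omega_2})$ is a second quantisation (the paper simply cites the explicit formulas in Derezinski's Theorem 40 at this point). Your version is more explicit about the resulting reduction to the one-particle level, and you have correctly located where the actual content sits: everything comes down to the claim that the closed subspace of $H_{\mu_2}$ generated by $\SOL(\cO_2:\cO_1)$ is invariant under $A_{\cO_2}$ (hence under $|A_{\cO_2}|$, $\mathcal{J}_{\cO_2}$, and the one-particle conjugation). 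One small correction to the reduction itself: the modular conjugation is the second quantisation of the swap-with-conjugation map on the \emph{doubled} one-particle space $H_{\cO_2}\subseteq \mathcal{H}_{\mu_2}\oplus\overline{\mathcal{H}}_{\mu_2}$, not of an anti-unitary involution of $\mathcal{H}_{\mu_2}$ alone, so the functoriality computation has to be carried out on the doubled space; this is repairable bookkeeping.

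The genuine gap is that the crux you flag is asserted rather than proved, and it is not bookkeeping. In the source picture $A_{\cO_2}$ is (up to the scalar $\|\overline{G}_{\cO_2}\|^{-1}$) the operator $\overline{G}_{\cO_2}$ acting on $\ker(\overline{G}_{\cO_2})^\perp$, and $\mu_2$ is a multiple of the quotient $L^2$ inner product; so the required invariance is equivalent to the statement that whenever $k\in\overline{\mathrm{ran}\,\overline{G}_{\cO_2}}$ vanishes a.e.\ on $\cO_1$, so does $\overline{G}_{\cO_2}k$ --- equivalently, that the $\mu_2$-orthocomplement of $\SOL(\cO_2:\cO_1)$ coincides with its $\tau_2$-symplectic complement. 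These two complements are defined by genuinely different conditions (one by $(P_{\ker^\perp}g){\upharpoonright}\cO_1=0$, the other by $\widetilde{G}_{\cO_2}(g){\upharpoonright}\cO_1=0$), and since $\widetilde{G}_{\cO_2}(f_{ext})$ for $f$ supported in $\cO_1$ spreads throughout $J^\pm(\cO_1)\cap\cO_2$, their coincidence is a nontrivial propagation statement about the Klein--Gordon Green's operators. None of the ingredients you invoke --- the symmetry of $i\overline{G}_{\cO_2}$ from Proposition \ref{Gbd}, the corollary on restrictions of $\ker\widetilde{G}_2$, or the non-degeneracy of $\tau_1$ --- delivers it, and "the splitting respects both forms simultaneously" is precisely the conclusion, not a premise. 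Until this is established, $A_{\cO_2}$ need not restrict to $\mathcal{K}_1$, the claimed intertwining with $A_{\cO_1}$ is not even well posed, and the second-quantisation reduction yields nothing. (To be fair, the paper's own proof defers exactly this point to the formulas and proof fragments in \cite[Theorem 40]{Der} rather than arguing it from the geometry, so you have at least identified the right pressure point; but as a self-contained argument your proposal is incomplete at its decisive step.)
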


\begin{proof} Taking into account the unitary equivalence of the Araki-Woods and Kay-Wald approached noted in subsection \ref{AWF}, the claim follows from the preceding lemma considered 
alongside the final two equalities in \cite[Theorem 40(3)]{Der}. See also the fragment of the proof of \cite[Theorem 40(6)]{Der} showing that $J_s$ preserves the space $\mathcal{H}$ defined there. 
\end{proof}

We finally arrive at the main result for this subsection.

\begin{proposition} There is a standard form homomorphism from the standard form of $(\cM(\cO_1),\omega_{\cO_1})$ into the standard form of $(\cM(\cO_2),\omega(\cO_2))$.
\end{proposition}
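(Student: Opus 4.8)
The plan is to realise the required standard form homomorphism as the composition of a standard form \emph{isomorphism} followed by a standard form \emph{embedding}, in the sense of Definitions \ref{Stdfm-cat} and \ref{vnmasf-cat}, routed through the intermediate object $(\cM(\cO_2:\cO_1),\omega_2{\upharpoonright}{\cM(\cO_2:\cO_1)})$. All of the substantive work has already been done in Proposition \ref{P:defn-M(2:1)} and in Lemmas \ref{mod-subalg1} and \ref{mod-subalg2}; what remains is to match the Hilbert space, modular conjugation and natural cone data of three standard forms against those definitions.

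First I would establish the embedding. Apply Lemma \ref{mod-subalg2} with $\cM=\cM(\cO_2)$, $\nu=\omega_2$ and $\cN=\cM(\cO_2:\cO_1)$. Since $\omega_2$ is a faithful normal state, its restriction to the von Neumann subalgebra $\cN$ is again a faithful normal (hence semifinite) state, so the hypotheses of the lemma hold; moreover $\omega_2$ being a state forces $\mathfrak{n}_{\omega_2}=\cM(\cO_2)$, so the subspace "$H_\cN$" of that lemma is simply $\overline{\cN\Omega_2}$, which by Proposition \ref{P:defn-M(2:1)}(2) coincides with $p_{(2:1)}H_{\omega_2}$ and is a copy of the GNS space of $(\cN,\omega_2{\upharpoonright}{\cN})$. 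Lemma \ref{mod-subalg1} tells us this subspace is invariant under $J_{\cO_2}$, and so Lemma \ref{mod-subalg2} yields both $J_\cN=J_{\cO_2}{\upharpoonright}{p_{(2:1)}H_{\omega_2}}$ and $\mathscr{P}_\cN\subseteq\mathscr{P}_{\cO_2}$. Taking the projection $e$ in Definition \ref{vnmasf-cat} to be the unit of $\cM(\cO_2)$, this is exactly the assertion that the standard form of $(\cN,\omega_2{\upharpoonright}{\cN})$ embeds into the standard form $(\cM(\cO_2),\omega_2,H_{\omega_2},J_{\cO_2},\mathscr{P}_{\cO_2})$.

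Next I would handle the isomorphism. By Proposition \ref{P:defn-M(2:1)}(1) there is a $*$-isomorphism $\alpha_{(2:1)}:\cA(\cO_1)\to\cA(\cO_2:\cO_1)$ with $(\omega_2{\upharpoonright}{\cA(\cO_2:\cO_1)})\circ\alpha_{(2:1)}=\omega_{\cO_1}$. Because it intertwines the states, the prescription $\pi_1(a)\Omega_1\mapsto\pi_{(2:1)}(\alpha_{(2:1)}(a))\Omega_2$ extends to a unitary $u$ from $H_{\omega_{\cO_1}}$ onto $p_{(2:1)}H_{\omega_2}$ sending $\Omega_1$ to $\Omega_2$; conjugation by $u$ then implements a $*$-isomorphism $\overline{\alpha}_{(2:1)}:\cM(\cO_1)\to\cN$ with $(\omega_2{\upharpoonright}{\cN})\circ\overline{\alpha}_{(2:1)}=\omega_{\cO_1}$, and since $u\Omega_1=\Omega_2$ the same $u$ carries $J_{\omega_{\cO_1}}$ to $J_\cN$ and $\mathscr{P}_{\omega_{\cO_1}}$ to $\mathscr{P}_\cN$. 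By Theorem \ref{T8.haag-stdfm} this is a spatial isomorphism of standard forms, so the standard forms of $(\cM(\cO_1),\omega_{\cO_1})$ and $(\cN,\omega_2{\upharpoonright}{\cN})$ are isomorphic objects of \textbf{VN-MA-SF}. Composing this isomorphism with the embedding of the previous paragraph then produces a morphism of \textbf{VN-MA-SF}, i.e. the desired standard form homomorphism.

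I expect the point requiring the most care — rather than a genuine obstacle — to be the verification needed to feed into Lemma \ref{mod-subalg2}: that the GNS Hilbert space $p_{(2:1)}H_{\omega_2}$ of the intermediate pair really is the subspace "$H_\cN$" of that lemma and is invariant under $J_{\cO_2}$. Both facts are, however, already supplied (the identification by Proposition \ref{P:defn-M(2:1)}(2) using Kaplansky density to pass between $\cA(\cO_2:\cO_1)$ and $\cM(\cO_2:\cO_1)$, and the $J_{\cO_2}$-invariance by Lemma \ref{mod-subalg1}), so once these are invoked the argument reduces to bookkeeping against the definition of the category \textbf{VN-MA-SF}.
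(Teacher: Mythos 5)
Your proposal is correct and follows essentially the same route as the paper: factor the morphism through the intermediate pair $(\cM(\cO_2:\cO_1),\omega_2{\upharpoonright}{\cM(\cO_2:\cO_1)})$, obtaining the isomorphism leg from Proposition \ref{P:defn-M(2:1)}(1) and the embedding leg by feeding Lemma \ref{mod-subalg1} into Lemma \ref{mod-subalg2} via the identification in Proposition \ref{P:defn-M(2:1)}(2). Your write-up merely supplies more of the bookkeeping (the explicit GNS unitary and the appeal to Theorem \ref{T8.haag-stdfm}) that the paper leaves implicit.
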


\begin{proof} Since by Proposition \ref{P:defn-M(2:1)} the pairs $(\cA(\cO_2:\cO_1),\omega_2{\upharpoonright}{\cA(\cO_2:\cO_1)}$ and $(\cA(\cO_1),\omega_{\cO_1})$ are *-isomorphically equivalent, 
it is clear that they will generate ismorphic standard forms for $\cM(\cO_2:\cO_1)$ and $\cM(\cO_1)$ respectively. It therefore remains to show that the standard form of the pair 
$(\cM(\cO_2:\cO_1),\omega_2{\upharpoonright}{\cM(\cO_2:\cO_1)}$ embeds into the standard form of the pair $(\cM(\cO_2),\omega_2)$. This in turn follows from a combination of part (2) of  
Proposition \ref{P:defn-M(2:1)}, and an application of Lemma \ref{mod-subalg2} to what we noted in Lemma \ref{mod-subalg1}.  
\end{proof}

\subsection{Emergent framework}

A comparison of the previous two subsections now yield the following conclusion:

\begin{theorem} With respect to the morphisms defined for the category \textbf{VN-MA-SF} introduced in section \ref{S3}, the collection of pairs 
$\{(\cM(\mathcal{O}), \omega_\mathcal{O}):\mathcal{O}\in \mathcal{K}(M,g)\}$ describes a locally covariant theory in the sense of section 
3 of \cite{FV2}. 
\end{theorem}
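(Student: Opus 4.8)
The plan is to verify the defining axioms of local covariance in the sense of \cite{FV2} for the assignment $\mathcal{O}\mapsto(\cM(\mathcal{O}),\omega_{\mathcal{O}})$, using the results already established in the previous two subsections. Local covariance requires three things: (i) a \emph{functorial} assignment of objects of \textbf{VN-MA-SF} to admissible spacetime regions, together with morphisms in \textbf{VN-MA-SF} attached to admissible embeddings of regions, respecting composition and identities; (ii) the \emph{isotony/causality} compatibility, namely that inclusions $\mathcal{O}_1\subset\mathcal{O}_2$ and causally independent pairs are correctly reflected at the algebraic level; and (iii) the \emph{covariance} condition that isometries intertwine the assigned morphisms. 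The work to establish each of these has effectively already been done, so the proof is mostly an act of assembly.

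First I would set up the functor. To each $\mathcal{O}\in\mathcal{K}(M,g)$ one assigns the object $(\cM(\mathcal{O}),\omega_{\mathcal{O}},H_{\omega_{\mathcal{O}}},J_{\mathcal{O}},\mathscr{P}_{\mathcal{O}})$ of \textbf{VN-MA-SF} constructed in the Example of section \ref{S7} via the quasi-free state $\omega_{\mathcal{O}}$ built from the inner product $\widetilde{\mu}^{\mathcal{O}}$ and the symplectic form $\tau$. To an inclusion $\mathcal{O}_1\hookrightarrow\mathcal{O}_2$ one assigns the embedding morphism furnished by the final Proposition of the subsection on isotonic properties: the standard form of $(\cM(\mathcal{O}_1),\omega_{\mathcal{O}_1})$ maps into that of $(\cM(\mathcal{O}_2),\omega_{\mathcal{O}_2})$ via the isomorphism $\alpha_{(2:1)}$ onto $\cM(\mathcal{O}_2:\mathcal{O}_1)$ composed with the embedding of $\cM(\mathcal{O}_2:\mathcal{O}_1)$ into $\cM(\mathcal{O}_2)$ (Proposition \ref{P:defn-M(2:1)} together with Lemmata \ref{mod-subalg1} and \ref{mod-subalg2}). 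For a general admissible embedding $\chi:\mathcal{O}\to M$ whose image $\chi(\mathcal{O})$ lies in $\mathcal{K}(M,g)$ and which is an isometry onto its image, one composes the isomorphism of the Proposition at the end of the subsection on local isometries (giving $(\alpha_\Psi)_{ext}:(\cM(\mathcal{O}),\omega_{\mathcal{O}})\to(\cM(\chi(\mathcal{O})),\omega_{\chi(\mathcal{O})})$) with the embedding morphism for $\chi(\mathcal{O})\subset M$. Functoriality — preservation of identities is immediate, and preservation of composition follows from uniqueness in Haagerup's Theorem \ref{T8.haag-stdfm}, since two spatial isomorphisms implementing the same $*$-isomorphism must coincide.

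Next I would check causality and the time-slice-type compatibilities. Causal independence of $\mathcal{O}_1,\mathcal{O}_2$ is reflected by commutation $[\cM(\mathcal{O}_1),\cM(\mathcal{O}_2)]=\{0\}$, which in the Weyl picture follows from the vanishing of $\tau$ on $\SOL(\mathcal{O}_1)$ and $\SOL(\mathcal{O}_2)$ when the supports are causally separated (this is the standard CCR causality of \cite{BGP} carried to the von Neumann closure). Covariance — that an isometry $\kappa$ between regions makes the square of assigned morphisms commute — is exactly the content of the last Proposition of the local-isometry subsection, strengthened by the functoriality argument above. At this stage one has verified all the axioms as formulated in section 3 of \cite{FV2}, interpreted with \textbf{VN-MA-SF} as the target category, so the theorem follows.

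The main obstacle, in my estimation, is not any single computation but rather a bookkeeping subtlety: one must be careful that the category-theoretic target \textbf{VN-MA-SF} has morphisms rich enough to serve in the role that $C^*$-algebra \emph{monomorphisms} play in the Brunetti–Fredenhagen–Verch / Fewster–Verch formalism, yet rigid enough that functoriality actually holds. The definition of \textbf{VN-MA-SF} morphisms as "combinations of isomorphisms and embeddings" (Definition \ref{vnmasf-cat}) must be shown to be closed under composition in the way required, and the compatibility of the embedding $e_{(2:1)}$-projections across a chain $\mathcal{O}_1\subset\mathcal{O}_2\subset\mathcal{O}_3$ needs the observation that the restriction-of-$\widetilde{G}$ maps are themselves compatible — which is where Proposition \ref{[]isom} and its Corollary do the real work. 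Once that transitivity of embeddings is in hand, uniqueness in Theorem \ref{T8.haag-stdfm} closes the gap and the rest is formal.
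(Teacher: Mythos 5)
Your proposal is correct and follows essentially the same route as the paper, which offers no explicit proof at all beyond the remark that ``a comparison of the previous two subsections'' (the one on local isometries and the one on isotonic properties) yields the conclusion. Your assembly of the covariance isomorphism $(\alpha_\Psi)_{ext}$, the isotony embedding through $\cM(\cO_2:\cO_1)$, and the functoriality and causality bookkeeping is in fact more detailed than anything the paper records for this theorem.
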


Since the Hadamard states  are a subset of quasi-free states, the above conclusion is obviously valid for these states. 
It is well known (see \cite{Wald2}) that  Hadamard states have well-defined expectation values of the stress-energy tensor. 
Since the description of Hadamard states would go beyond the scope of this work we will limit ourselves to only a few remarks. The key property of these states is given in the language of 
distribution theory. And this description is well adapted to the Wightman axioms (also called G{\aa}rding-Wightman axioms), i.e. field operators localized at a point of spacetime. In this paper, 
we analyze quantum fields localized in spacetime regions. Therefore, continuing the strategy described in the paper \cite{LM}, we will in the section \ref{S9} provide conditions also leading to the 
description of (regular) quantum fields with a well defined average value of the stress-energy tensor operator.

\section{States invariant under the action of a Killing vector field $Z$}\label{S8}

Let $(M,g)$ be a globally hyperbolic spacetime which is stationary, i.e. $(M,g)$ admits a time-like Killing field $Z$ as well as a one parameter group of isometries $\beta_t^Z : M \to M$ whose orbits 
are time-like. Consider the Weyl algebra $\cA(M)$ associated to a real scalar Klein-Gordon field as was described in \cite{KM}. It was shown in \cite{KM} that there exists the quasi-free state 
$\omega_Z$ on $\cA(M)$ which is invariant under the action of $Z$, i.e. $\omega_Z \circ \pi^Z_t = \omega_Z$ where $\pi^Z_t(W(\psi)) = W(\psi \circ \beta^Z_t))$; cf. Section 5.2.7 in \cite{KM}.

Consequently, taking the GNS representation $\pi_{\omega_z}(\cdot)$ of $\cA(M)$ induced by the state $\omega_Z$, see \cite{KM}, we arrive at the dynamical system
\begin{equation}
( \qM(M), \alpha^Z_t),
\end{equation}
where $\qM(M)$ is the weak closure of $\pi_{\omega_Z}(\cA(M))$, $\alpha^Z_t(\pi_{\omega_Z}(a) ) = U^{(Z)}_t \pi_{\omega_Z}(a) {U_t^{{(Z)}}}^*
= \pi_{\omega_Z}(\pi^Z_t(a))$, for all $t \in \Rn$ and $a \in \cA(M)$.

Strong continuity $t \mapsto U^{(Z)}_t$ of the unitary group $\{U^{(Z)}_t\}$ ensures the existence of an infinitesimal generator which may then be interpreted 
as the natural Hamiltonian for the considered system. We pause to justify the claim of continuity. As is pointed out in the discussion following 
equation (5.40) in \cite{KM}, strong continuity of this unitary group is equivalent to the requirement that 
\begin{equation}\label{strcon-ug} \lim_{t\to 0}\omega_Z(a^*\alpha^Z_t(a))=\omega_Z(a^*a)\mbox{ for all }a\in \cA(M). \end{equation}
This will of course follow if the automorphism group $\pi_{\omega_Z}(a)\to \pi_{\omega_Z}(\alpha^Z_t(a))$ where $a\in\cA(M)$ is strongly continuous at 0. With $\mu_Z$ denoting the inner product that 
forms the real part of the two point function of $\omega_Z$ (see \cite[Theorem 5.2.24]{KM}), we claim that this automorphism group will indeed exhibit such strong continuity if for all 
$f\in \mathscr{D}(M,E)$ we have that $\mu_Z(G(f)-G(f)\circ\beta_t^Z,G(f)-G(f)\circ\beta_t^Z)\to 0$ as $t\to 0$. Right at the outset we note that the isometries $\beta_t^Z$ commute with the Green's 
operator $G$. (See for example page 226 of \cite{Dim}.) This ensures that the action of these isometries on $\mathscr{D}(M,E)$, lift to an action on $\{G(f)\colon f\in\mathscr{D}(M,E)\}$. We shall 
constantly make silent use of this fact. Since $\cA(M)$ is generated by the Weyl operators and since the span of vectors of the form $\pi_{\omega_Z}(W(G(f)))\Omega_{\omega_Z}$ is dense in the GNS 
Hilbert space, this claim will follow if we can show that for any $f, g\in \mathscr{D}(M,E)$ we have that $\|\pi_\omega((W(G(f)\circ\beta^Z_t)-W(G(f)))W(G(g)))\Omega_\omega\|_\omega\to 0$ as $t\to 0$. 
Here we dropped the subscript $Z$ for the sake of simplicity. In the following we also set $\psi=G(f)$ and $\varphi=G(g)$ for the sake of simplicity. Now observe that 
\begin{eqnarray*}
&&\|\pi_\omega((W(\psi\circ\beta^Z_t)-W(\psi)))W(\varphi))\Omega_\omega\|^2_\omega\\
&&\quad =  \|\pi_\omega((W(\psi)^*W(\psi\circ\beta^Z_t)-\I))W(\varphi))\Omega_\omega\|^2_\omega\\ 
&&\quad =  \|\pi_\omega((W(-\psi)W(\psi\circ\beta^Z_t)-\I))W(\varphi))\Omega_\omega\|^2_\omega\\ 
&&\quad = \|\pi_\omega((e^{i\sigma(\psi,\psi\circ\beta_t^Z)/2}W(\psi\circ\beta^Z_t-\psi)-\I))W(\varphi))\Omega_\omega\|^2_\omega\\
&&\quad = \omega(2\I-e^{i\sigma(\psi,\psi\circ\beta_t^Z)/2}W(\varphi)^*W(\psi\circ\beta^Z_t-\psi)W(\varphi) -\\ 
&&\quad\qquad e^{-i\sigma(\psi,\psi\circ\beta_t^Z)/2}W(\varphi)^*W(\psi\circ\beta^Z_t-\psi)^*W(\varphi))\\
&&\quad = 2-e^{i\sigma(\psi,\psi\circ\beta_t^Z)/2}e^{i\sigma(\varphi,\psi\circ\beta_t^Z-\psi)}\omega_Z(W(\psi\circ\beta^Z_t-\psi))-\\
&&\quad\qquad e^{i\sigma(\psi,\psi\circ\beta_t^Z)/2}e^{i\sigma(\varphi,\psi-\psi\circ\beta_t^Z)}\omega_Z(W(\psi-\psi\circ\beta^Z_t))\\
&&\quad = 2-e^{i\sigma(\psi+2\varphi,\psi\circ\beta_t^Z-\psi)/2}\omega_Z(W(\psi\circ\beta^Z_t-\psi))-\\
&&\quad\qquad e^{i\sigma(2\varphi-\psi,\psi-\psi\circ\beta_t^Z)/2}\omega_Z(W(\psi-\psi\circ\beta^Z_t))\\
&&\quad = 2-e^{-i\sigma(2\varphi+\psi,\psi\circ\beta_t^Z-\psi)/2}e^{-\mu_Z(\psi-\psi\circ\beta^Z_t,\psi-\psi\circ\beta^Z_t)/2}-\\
&&\quad\qquad e^{i\sigma(2\varphi-\psi,\psi-\psi\circ\beta_t^Z)/2}e^{-\mu_Z(\psi-\psi\circ\beta^Z_t,\psi-\psi\circ\beta^Z_t)/2}
\end{eqnarray*}
(In the third last equality we used the fact that $W(\varphi)^*W(\psi)W(\varphi)= e^{-i\sigma(\varphi,\psi)}W(\psi)$ - a fact which can easily be verified from 
the properties of Weyl operators noted in Remark \ref{BGP-localg}.)  The symplectic form $\sigma$ we have to work with here is of course 
$\sigma(G(f),G(g))=\int_M\langle f, G(g)\rangle\,dV$. For this form we do have that  
$$\sigma(G(f)\circ\beta^Z_t(f),G(2g\pm f))= \int_M\langle f\circ\beta_t^Z, G(2g\pm f)\rangle\,dV \to$$ $$\int_M\langle f, G(2g\pm f)\rangle\,dV = \sigma(G(f),G(2g\pm f))\mbox{ as }t\to 0.$$So 
$\sigma(G(2g\pm f),G(f)\circ\beta_t^Z-G(f))\to 0$ as $t\to 0$. If we now combine this fact with the previously 
centred equations, it clearly follows that if for all $f\in \mathscr{D}(M,E)$ we have that $\mu_Z(G(f)-G(f)\circ\beta_t^Z,G(f)-G(f)\circ\beta_t^Z)\to 0$ as 
$t\to 0$, then $\|\pi_\omega((W(f\circ\beta^Z_t)-W(f))W(g))\Omega_\omega\|_\omega\to 0$ as $t\to 0$. As noted earlier this is sufficient to ensure that equation 
(\ref{strcon-ug}) will hold whenever $\lim_{t\to 0}\mu_Z(G(f)-G(f)\circ\beta_t^Z,G(f)-G(f)\circ\beta_t^Z)= 0$ for all $f\in \mathscr{D}(M,E)$.

\begin{remark}
It is an easy observation that the dynamical system $(\qM(M), \alpha^Z_t)$ given above provides a nice illustration of the Section \ref{KillingFlows}. Namely, 
taking derivatives of $\alpha^{(Z)}_t$ will lead to Quantum Killing Lie derivatives. Here it is the invariance $\omega_Z \circ \pi^Z_t = \omega_Z$ noted above 
that ensures that Theorem \ref{LieDer-exist} is applicable thereby ensuring the existence of these derivatives as densely defined closable *-derivations. 
\end{remark}

\section{Enlarging the observable algebras}\label{S9} 

Turning to the question  of enlarging the set of observables so that the enlarged set allows for the existence of arbitrary moments of the field operators, we 
note that $\pi_{\omega_Z}(W(f)) = e^{i\phi_{\omega_Z}(f)}$, where $\phi_{\omega_Z}(f)$ are smeared self-adjoint field operators. It is worth noting that, by the 
construction, $\{\phi_{\omega_Z}(f)\}$ are affiliated to $\qM(M)$.

We recall, that the Orlicz space technique applied to regular fields satisfying an $H$-boundedness condition with respect to the density of the dual weight of the reference 
weight $\omega$ of the von Neumann algebra, leads to the space $L^{\cosh - 1}(\qM)$. For all details see the paper \cite{LM}. The key observation to be made here, is that under 
such an $H$-boundedness restriction, the field operators affiliated to $\qM(M)$ which have all moments well defined may in a very natural way be embedded into this space. 
In other words, if the field operators $\{\phi_{\omega_Z(f)}\}$ satisfy an $H$-boundedness condition, then they can be canonically embedded into $L^{\cosh - 1}(\qM(M))$.
In that case $\langle T_{ab}\rangle_{\omega_Z}$ exists and consequently the (semiclassical) Einstein equation would be well defined. In situations where we have a strongly continuous 
action $\alpha_t$ that not only leaves $\omega$ invariant but also corresponds to the modular automorphism group of some other fns weight, one can use results like \cite[Corollary VIII.3.6]{Tak} and 
\cite[Proposition 4.2]{BL} to show that in the crossed product $\qM\rtimes_\omega\mathbb{R}$ this action is implemented by a unitary group with generator a product of the density of the dual weight and 
an operator affiliated to the centralizer of $\qM$.

It is crucial to note that the Orlicz space technique allows for the consideration of a broader class of states than quasi-free states. Namely, the natural states for $L^{cosh-1}(\qM(M))$-observables 
are normalized positive functionals corresponding to elements in the space $L\log(L+1)(\qM)$, see \cite{ML}, \cite{LM}. The above indicates that the proposed description of regular fields operators 
can be used to analyze even (strongly) interacting fields having strong quantum correlations, for example entanglements, see Subsection \ref{ss-entangle}.

We end this section with the following remarks on $H$-boundedness. In the present paper we are concerned with showing the usefulness of von Neumann algebras in 
the description of quantum systems in curved spacetime. In particular, the AQFT framework was formulated in this language. 
When discussing Hadamard states in the previous section we mentioned Wightman's axioms. It is key to note that
quantum fields described by Wightman axiomatics and satisfying the $H$-boundedness condition can be equally described in the AQFT language.
More precisely, there is a correspondence between quantum fields in the sense of Wightman with the test function space belonging to the theory of ultradistribution and regular field operators 
affiliated to the net of local algebras, see \cite{FH}, \cite{Kern}, \cite{Sum}. In this work we were concerned with the net of local von Neumann algebras and field operators affiliated to 
such nets. Therefore, the $H$-bounded condition can be considered a natural requirement.

\part{Conclusion: the emergent framework and future development}

\section{Concluding remarks}

In this work, which is a continuation and an extension of our recent paper \cite{LM}, we, firstly, present the AQFT framework in terms of von Neumann algebras. At this point we wish to 
re-emphasise that von Neumann algebras are the key component in noncommutative integration theory. The presented framework is natural from a categorical point of view.  Furthermore, the 
mentioned approach allows us to formulate the framework for AQFT without at the outset assuming the existence of a global Hilbert space. The important point to note here is that the presented 
approach allows the description of the entanglement phenomenon in QFT. In the next part of the paper, which is devoted to the study of differential calculus within the framework of field 
dynamics, we introduced the description of flows on local algebras without using the concept of ``tangentially conditioned'' algebras. In this way, we obtain a complementary and more simplified 
description of flows than that in our previous paper \cite{LM}.  Here, flows for local quantum systems in curved spacetime  are defined directly using Killing vector  fields acting on spacetime. 
The basic properties of quantum local flows have been shown. Furthermore, the conditions under which quantum Lie derivatives exist are provided. This was achieved by studying quantum local flows 
on regions of spacetime. We also showed the one may use the geometry of the underlying Lorentzian manifold itself to construct very natural quasi-free states at a local level, with the resultant 
net of local algebras satisfying the criteria of the category \textbf{VN-MA-SF}, which in turn ensures that this net is an appropriate framework for a locally covariant quantum field theory. In 
summary, the result of our investigation suggests that the following is a good framework for AQFT which avoids the imposition of an a priori global Hilbert spaces and yet leaves room for a theory of 
entanglement at a local level: 

\begin{itemize}
\item Given a globally hyperbolic Lorentzian manifold $(M,G)$, to each $\mathcal{O}\in \mathcal{K}(M,g)$ there corresponds a von Neumann measure algebra $(\cM(\cO),\omega_{\cO})$ in standard form 
(cf. Definition \ref{vnmasf-cat}) where each $\omega_{\cO}$ is a quasi-free state such that in terms of the morphisms as defined in Definition \ref{vnmasf-cat}, the identification $\cO\to(\cM(\cO),\omega_{\cO})$ yields a locally covariant quantum field 
theory for $(M,g)$ in the sense of \cite[Definition 2.1]{BFV}.
\item In the above framework the action of local flows of Killing vector fields is realised at the algebra level for short times in the sense described in Corollary \ref{loc-flow}. (In the case 
considered by Haag and Kastler, where $(M,g)$ is just Minkowski space, this property corresponds to the realisation of the action of Poincar\'e group at the algebra level.) 
\end{itemize}

The natural remaining question is whether the additivity property for quantum fields holds in the proposed description of nets of local algebras. For a deeper discussion of additivity the 
reader is refered to \cite{Robert}, \cite{SW}. We will here only note that the additivity property was very useful for the description of a quantum flows given in Theorem 4.1 in \cite{LM}. It is 
worth pointing out that for a quasi-free state satisfying mild regularity conditions, the additivity property will in the present setting also hold. To see this we need the concept of a partition 
of unity.

\begin{definition} A collection of subsets $\mathcal{U} = \{u_\alpha\colon\alpha\in A\}$ of a manifold $M$ is called
\emph{locally finite}, if for all $p \in M$ there is an neighborhood $V_p$ of $p$ with $V_p\cap U_\alpha$ non-empty for only finitely many of the sets $U_\alpha$.
\end{definition}

\begin{definition} A \emph{partition of unity} on a manifold $M$ is a collection  $\{x_\gamma\colon\gamma\in \Gamma\}$ of smooth real-valued functions such that 
\begin{enumerate}
\item $\{\mathrm{supp}(x_\gamma)\colon \gamma\in \Gamma\}$ is locally finite,
\item $x_\gamma(p)\geq 0$ for all $p\in M$ and all $\gamma\in \Gamma$,
\item $\sum_{\gamma\in \Gamma}x_\gamma(p) = 1$ for all $p\in M$
\end{enumerate}
If $\mathcal{U} = \{U_\alpha\colon\alpha\in A\}$ is an open cover of $M$ we say that a partition of unity is \emph{subordinate} to the open cover $\mathcal{U}$ if for every $\gamma$ there is an 
$\alpha$ such that $\mathrm{supp}(x_\gamma)\subseteq U_\alpha$.
\end{definition}
 
The key fact we need is the following existence theorem.

\begin{theorem}\label{par-unity} \cite[Theorem 1.11]{War} Let $M$ be a $d$-dimensional $C^k$-differentiable second countable manifold and $\mathcal{W}$ any open cover. Then $M$ admits a countable 
partition of unity subordinate to the cover $\mathcal{W}$ with the support of each function in the t of unity compact.
\end{theorem}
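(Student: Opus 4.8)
The plan is to reproduce the classical construction of partitions of unity, the only genuinely delicate point being the extraction of a countable, locally finite, subordinate refinement with relatively compact members; once that scaffolding is in place the remaining steps are routine. First I would use second countability together with local compactness of manifolds (each point has a coordinate neighbourhood with compact closure) to produce an \emph{exhaustion by compacta}: a sequence $\emptyset=K_0=K_{-1}\subset K_1\subset K_2\subset\cdots$ with $K_i\subset\mathrm{int}(K_{i+1})$ and $\bigcup_i K_i=M$. Starting from a countable cover $\{B_n\}$ by open sets with compact closure, one builds this inductively: set $K_1=\overline{B_1}$, and having constructed $K_i$, compactness lets us pick $m_i$ (taken large enough, e.g. $m_i>i$, to force $m_i\to\infty$) with $K_i\subset B_1\cup\cdots\cup B_{m_i}$ and then set $K_{i+1}=\overline{B_1\cup\cdots\cup B_{m_i}}$. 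Note $\bigcup_i\mathrm{int}(K_i)=M$ as well.

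Second, for each $i\ge 1$ I would form the compact ``shell'' $A_i=K_i\setminus\mathrm{int}(K_{i-1})$ and the open set $W_i=\mathrm{int}(K_{i+1})\setminus K_{i-2}$, so that $A_i\subset W_i$, $\bigcup_i A_i=M$, and $\{W_i\}$ is locally finite (any point lies in some $\mathrm{int}(K_j)$, which is disjoint from every $W_i$ with $i\ge j+2$). For each $p\in A_i$ choose a coordinate chart $(U_p,\varphi_p)$ with $p\in U_p\subset W_i\cap W$ for some member $W$ of the cover $\mathcal W$; transporting a standard Euclidean bump function through $\varphi_p$ yields a $C^k$ function $\psi_p$ with $0\le\psi_p\le 1$, $\psi_p\equiv 1$ on a neighbourhood $V_p\subset U_p$ of $p$, and $\mathrm{supp}(\psi_p)\subset U_p$ compact. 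By compactness of $A_i$ finitely many of the $V_p$'s cover $A_i$; collecting the associated finitely many functions over all $i$ gives a countable family $\{\psi_j\}$ of $C^k$ functions, each with compact support contained in some member of $\mathcal W$, and with $\{\mathrm{supp}(\psi_j)\}$ locally finite (a point has a neighbourhood meeting only finitely many $W_i$, and each $W_i$ carries only finitely many of the $\psi_j$).

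Third, local finiteness guarantees that $\psi:=\sum_j\psi_j$ is a well-defined $C^k$ function, and it is strictly positive everywhere because every point of $M$ lies in some $V_p$ on which the corresponding $\psi_j$ equals $1$. Setting $x_j=\psi_j/\psi$ then produces the desired countable partition of unity: $\sum_j x_j\equiv 1$, each $x_j\ge 0$, the family $\{\mathrm{supp}(x_j)\}=\{\mathrm{supp}(\psi_j)\}$ is locally finite with each support compact, and it is subordinate to $\mathcal W$ by construction.

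The main obstacle is concentrated entirely in the second step: simultaneously arranging countability, local finiteness, subordinateness to $\mathcal W$, and relative compactness of the refining pieces; the exhaustion-by-compacta trick is precisely what reconciles these demands. After that, the Euclidean bump-function construction and the normalisation $\psi_j\mapsto\psi_j/\psi$ are standard. (Since this is Warner's Theorem 1.11, in the body of the paper one of course simply invokes \cite[Theorem 1.11]{War}.)
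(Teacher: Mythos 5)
This theorem is imported verbatim from Warner (\cite[Theorem 1.11]{War}); the paper offers no proof of its own, only the citation. Your reconstruction --- exhaustion by compacta, the shells $A_i=K_i\setminus\mathrm{int}(K_{i-1})$ with the locally finite open sets $W_i$, coordinate bump functions refined into $\mathcal{W}$, and normalisation by the everywhere-positive sum --- is the standard argument (essentially Warner's own) and is correct as written.
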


\begin{theorem} Let $\mathcal{O}\in\mathcal{K}(M,g)$ be given and let $\omega$ be a quasi-free state on $\cA(\mathcal{O})$ of the form described in \cite[Theorem 5.2.24]{KM} for which the inner 
product $\mu$ respects the mode of convergence introduced in \cite[Definition 4.3.6]{BGP}. Then there exists a countable family $\mathcal{V}_n$ of globally hyperbolic charts in $\mathcal{O}$ such 
that $\cup_n\mathcal{V}_n=\mathcal{O}$. Moreover $\cM(\mathcal{O})=(\cup_{n=1}^\infty\cM(\mathcal{V}_n))''$ where $\cM(\mathcal{O})$ is the double commutant of $\pi_\omega(\cA(\mathcal{O}))$ and 
each $\cM(\mathcal{V}_n)$ is the double commutant of the copy of $\cA(\mathcal{V}_n)$ inside $\pi_\omega(\cA(\mathcal{O}))$.  
\end{theorem}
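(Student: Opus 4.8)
The plan is to deduce the geometric claim from a partition of unity and then reduce the operator-algebraic identity to the Weyl relations and the bicommutant theorem.

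\textbf{The countable cover.} Since $\mathcal{O}\in\mathcal{K}(M,g)$, it is globally hyperbolic in its own right, hence a connected second-countable smooth manifold. Globally hyperbolic open subsets which are simultaneously coordinate chart domains form a basis for the topology of $\mathcal{O}$ (this is standard; cf.\ \cite[Appendix A.5]{BGP} and \cite{Min}, and is implicit in the constructions underlying Proposition \ref{directed}). Apply Theorem \ref{par-unity} to such a basis-cover of $\mathcal{O}$ to obtain a countable partition of unity $\{x_n\}_{n\in\mathbb{N}}$ subordinate to it with each $\mathrm{supp}(x_n)$ compact, and for each $n$ choose a basis element $\mathcal{V}_n$ with $\mathrm{supp}(x_n)\subseteq\mathcal{V}_n$. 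Since $\sum_n x_n\equiv 1$ on $\mathcal{O}$, every point of $\mathcal{O}$ lies in some $\mathrm{supp}(x_n)$, so $\bigcup_n\mathcal{V}_n=\mathcal{O}$, with each $\mathcal{V}_n$ a globally hyperbolic chart contained in $\mathcal{O}$. (The standing hypothesis that $\mu$ respects the mode of convergence of \cite[Definition 4.3.6]{BGP} serves only to guarantee that $\omega$, and its restrictions to the $\cA(\mathcal{V}_n)$, genuinely are quasi-free states of the form in \cite[Theorem 5.2.24]{KM}.)

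\textbf{The inclusion $\big(\bigcup_n\cM(\mathcal{V}_n)\big)''\subseteq\cM(\mathcal{O})$.} By \cite[Corollary 4.2.11]{BGP} (invoked in exactly this way in the isotony analysis preceding Proposition \ref{P:defn-M(2:1)}), each $\cA(\mathcal{V}_n)$ embeds into $\cA(\mathcal{O})$ via $W(\widetilde{G}_{\mathcal{V}_n}(g))\mapsto W(\widetilde{G}_{\mathcal{O}}(g_{ext}))$ for $g\in\mathscr{D}(\mathcal{V}_n,E)$. Hence the copy of $\cA(\mathcal{V}_n)$ inside $\pi_\omega(\cA(\mathcal{O}))$ is contained in $\pi_\omega(\cA(\mathcal{O}))$, so $\cM(\mathcal{V}_n)\subseteq\pi_\omega(\cA(\mathcal{O}))''=\cM(\mathcal{O})$; since $\cM(\mathcal{O})$ is a von Neumann algebra, the von Neumann algebra generated by $\bigcup_n\cM(\mathcal{V}_n)$ is contained in it.

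\textbf{The inclusion $\cM(\mathcal{O})\subseteq\big(\bigcup_n\cM(\mathcal{V}_n)\big)''$.} As $\big(\bigcup_n\cM(\mathcal{V}_n)\big)''$ is $\sigma$-weakly closed and $\cM(\mathcal{O})$ is the $\sigma$-weak closure of the $*$-algebra generated by $\{\pi_\omega(W(\widetilde{G}_{\mathcal{O}}(f))):f\in\mathscr{D}(\mathcal{O},E)\}$, it suffices to place each such Weyl operator in $\big(\bigcup_n\cM(\mathcal{V}_n)\big)''$. Fix $f\in\mathscr{D}(\mathcal{O},E)$; by local finiteness of the cover and compactness of $\mathrm{supp}(f)$ only finitely many $x_n$ are nonzero on $\mathrm{supp}(f)$, so $f=\sum_{n=1}^{N}x_nf$ with $x_nf\in\mathscr{D}(\mathcal{V}_n,E)$ for each $n$. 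Linearity of $\widetilde{G}_{\mathcal{O}}$ gives $\widetilde{G}_{\mathcal{O}}(f)=\sum_{n=1}^{N}\widetilde{G}_{\mathcal{O}}(x_nf)$, and iterating the relation $W(a)W(b)=e^{-i\sigma(a,b)/2}W(a+b)$ of Remark \ref{BGP-localg} expresses $W(\widetilde{G}_{\mathcal{O}}(f))$ as a scalar of modulus one times $\prod_{n=1}^{N}W(\widetilde{G}_{\mathcal{O}}(x_nf))$. Because $x_nf$ is supported in $\mathcal{V}_n$, the symplectic embedding $\SOL(\mathcal{V}_n)\to\SOL(\mathcal{O})$, $\widetilde{G}_{\mathcal{V}_n}(g)\mapsto\widetilde{G}_{\mathcal{O}}(g_{ext})$, identifies $W(\widetilde{G}_{\mathcal{O}}(x_nf))$ with (the image of) $W(\widetilde{G}_{\mathcal{V}_n}(x_nf))$, so $\pi_\omega(W(\widetilde{G}_{\mathcal{O}}(x_nf)))\in\cM(\mathcal{V}_n)$. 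Thus $\pi_\omega(W(\widetilde{G}_{\mathcal{O}}(f)))$ is a scalar multiple of a finite product of elements of $\bigcup_n\cM(\mathcal{V}_n)$ and lies in $\big(\bigcup_n\cM(\mathcal{V}_n)\big)''$. Combining the two inclusions yields $\cM(\mathcal{O})=\big(\bigcup_{n=1}^{\infty}\cM(\mathcal{V}_n)\big)''$.

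The only delicate points are geometric: that the covering sets may be taken simultaneously to be coordinate charts and globally hyperbolic (which rests on such neighbourhoods forming a basis, together with Theorem \ref{par-unity} to keep the cover countable and to supply the decomposition $f=\sum_n x_nf$), and the identification $W(\widetilde{G}_{\mathcal{O}}(g))=W(\widetilde{G}_{\mathcal{V}_n}(g_{ext}))$ for $g$ supported in $\mathcal{V}_n$, which is \cite[Corollary 4.2.11]{BGP}. The operator-algebraic content reduces to the bicommutant theorem and elementary manipulation of the Weyl relations.
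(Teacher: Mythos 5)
Your proof is correct, but at the decisive step it takes a genuinely different and more elementary route than the paper. Both arguments set up the same countable cover by globally hyperbolic charts via Theorem \ref{par-unity}, establish the easy inclusion $\bigl(\bigcup_n\cM(\mathcal{V}_n)\bigr)''\subseteq\cM(\mathcal{O})$ in the same way, and use the same symplectic identification $\widetilde{G}_{\mathcal{V}_n}(g)\mapsto\widetilde{G}_{\mathcal{O}}(g_{ext})$ to locate $W(\widetilde{G}_{\mathcal{O}}(x_nf))$ inside $\cM(\mathcal{V}_n)$. The divergence is in how $W(\widetilde{G}_{\mathcal{O}}(f))$ is recovered from these factors: the paper treats $f=\sum_{n}x_nf$ as an infinite series, shows the partial sums converge to $f$ in the mode of \cite[Definition 4.3.6]{BGP}, invokes the hypothesis that $\mu$ respects this mode of convergence, and then reruns the continuity computation of Section \ref{S8} to conclude that $W(\widetilde{G}_{\mathcal{O}}(\sum_{n=1}^kx_nf))\to W(\widetilde{G}_{\mathcal{O}}(f))$ $\sigma$-strongly. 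You instead observe that, since $\mathrm{supp}(f)$ is compact and $\{\mathrm{supp}(x_n)\}$ is locally finite, only finitely many $x_nf$ are not identically zero, so $f=\sum_{n=1}^{N}x_nf$ is a finite sum and the Weyl relations express $\pi_\omega(W(\widetilde{G}_{\mathcal{O}}(f)))$ exactly as a unimodular scalar times a finite product of elements of $\bigcup_n\cM(\mathcal{V}_n)$ --- no limit required. This observation is sound (it is in fact implicit in the paper's own remark that $\sum_{n=1}^kf_n$ agrees with $f$ on any compact set for $k$ large, applied to $K=\mathrm{supp}(f)$), and it buys you something real: the additivity identity then holds without the regularity hypothesis on $\mu$, which, as you note, is only needed to guarantee that $\omega$ is of the advertised quasi-free form. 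The paper's longer convergence argument would become essential only in situations where the decomposition of the test function is genuinely infinite (e.g.\ for non-compactly-supported data or a cover of a non-relatively-compact region), which is not the case here.
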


\begin{proof} In the latter part of the proof we will silently make extensive use of the ideas and notation used in Section \ref{S8}. A degree of familiarity with 
the flow of ideas in Section \ref{S8} is therefore necessary to navigate the proof. For each $p\in \mathcal{O}$ there exists a chart 
$\mathcal{V}_p\subset \mathcal{O}$ of $p$ which belongs to $\mathcal{K}(M,g)$ (see \cite[Theorem 2.7]{Min}). We therefore have that 
$\cup_p\mathcal{V}_p=\mathcal{O}$. We know from Moretti's notes \cite[Definition 8.2]{Mor} that $\mathcal{O}$ is second countable and hence that there must exist 
a countable subfamily $\{\mathcal{V}_n\colon n\in \mathbb{N}\}$ which still covers $\mathcal{O}$. For these charts it is clear that 
$(\cup_{n=1}^\infty\cM(\mathcal{V}_n))''\subseteq \cM(\mathcal{O})$. We are therefore left with the challenge of proving the converse containment. It is here that 
we need the concept of a partition of unity. Aside from what we noted earlier about second countability, it is also clear from Moretti's notes (see the discussion 
following \cite[Definition 2.23]{Mor}) that globally hyperbolic Lorentzian manifolds satisfy all the prerequisites of Theorem \ref{par-unity}. Hence there exists 
a countable partition of unity $\{x_n\colon n\in \mathbb{N}\}$ on $\mathcal{O}$ with compact supports subordinate to the family 
$\{\mathcal{V}_n\colon n\in \mathbb{N}\}$. 

For the next part of the proof we need to understand in which manner the $\mathrm{CCR}$ $C^*$-algebras $\cA(\mathcal{V}_n)$ live ``inside'' $\cA(\mathcal{O})$. Given 
$f, g\in \mathscr{D}(\mathcal{V}_n, E)$ and denoting the extensions to $\mathcal{O}$ by $f_{ext}$, $g_{ext}$, the fact that the supports are in $\mathcal{V}_n$ 
ensures that $$\int_M \langle f, \widetilde{G}_{\mathcal{V}_n}(g)\rangle\,dV = \int_M \langle f_{ext}, \widetilde{G}_{\mathcal{O}}(g_{ext})\rangle\,dV.$$(Here we 
used the fact that $\widetilde{G}_{\mathcal{V}_n}(g) = \widetilde{G}_{\mathcal{O}}(g_{ext}){\upharpoonright}{\mathcal{V}_n}$.) The left and right hand side of the displayed 
equation above each respectively represent degenerate symplectic forms on the spaces $\mathcal{E}(\mathcal{V}_n)=\{f\colon f\in \mathscr{D}(\mathcal{V}_n,E)\}$ 
and $\mathcal{E}(\mathcal{O}:\mathcal{V}_n)\{a\colon a\in \mathscr{D}(\mathcal{O},E), \mathrm{supp}(a)\subset \mathcal{V}_n\}$. By \cite[Lemma 4.3.8]{BGP} the map 
$\widetilde{G}_{\mathcal{V}_n}(f)\to \widetilde{G}_{\mathcal{O}}(f_{ext})$ will map $\mathrm{ker}(\widetilde{G}_{\mathcal{V}_n})$ into 
$\mathrm{ker}(\widetilde{G}_{\mathcal{O}})$ and hence induces a well-defined map from $\mathcal{E}(\mathcal{V}_n)/\mathrm{ker}(\widetilde{G}_{\mathcal{V}_n})$ to 
the subspace $\mathcal{E}(\mathcal{O}:\mathcal{V}_n)/\mathrm{ker}(\widetilde{G}_{\mathcal{O}})$ of 
$\mathcal{E}(\mathcal{O})/\mathrm{ker}(\widetilde{G}_{\mathcal{O}})$. We remind the reader that the spaces  
$\mathcal{E}(\mathcal{V}_n)/\mathrm{ker}(\widetilde{G}_{\mathcal{V}_n})$ and $\mathcal{E}(\mathcal{O})/\mathrm{ker}(\widetilde{G}_{\mathcal{O}})$ are respectively 
canonically bijective to $\mathrm{SOL}(\mathcal{V}_n)=\{\widetilde{G}_{\mathcal{V}_n}(f)\colon f\in \mathscr{D}(\mathcal{V}_n,E)\}$ and 
$\mathrm{SOL}(\mathcal{O}:\mathcal{V}_n)=\{\widetilde{G}_{\mathcal{O}}(a)\colon a\in \mathscr{D}(\mathcal{O},E), \mathrm{supp}(f)\subset \mathcal{V}_n\}$. In view of the 
equality in the previously displayed equation it is now clear that induced map from $\mathcal{E}(\mathcal{V}_n)/\mathrm{ker}(\widetilde{G}_{\mathcal{V}_n})$ to 
the subspace $\mathcal{E}(\mathcal{O}:\mathcal{V}_n)/\mathrm{ker}(\widetilde{G}_{\mathcal{O}})$ of 
$\mathcal{E}(\mathcal{O})/\mathrm{ker}(\widetilde{G}_{\mathcal{O}})$ is a symplectic isomorphism between these two spaces. By means of the aforementioned 
bijective correspondences, it may of course equivalently be regarded as a symplectic isomorphism from 
$\mathrm{SOL}(\mathcal{V}_n)=\{\widetilde{G}_{\mathcal{V}_n}(f) \colon f\in \mathscr{D}(\mathcal{V}_n,E)\}$ to 
$\mathrm{SOL}(\mathcal{O}:\mathcal{V}_n)=\{\widetilde{G}_{\mathcal{O}}(f) \colon f\in \mathscr{D}(\mathcal{O},E)\}$. Hence the $\mathrm{CCR}$ $C^*$-algebras generated by these two 
sets are $*$-isomorphic. So when regarded as a subalgebra of $\cA(\mathcal{O})$, we may identify $\cA(\mathcal{V}_n)$ with the $\mathrm{CCR}$ algebra generated by  
$\mathrm{SOL}(\mathcal{O}:\mathcal{V}_n)$. In the appropriate representation we then take the double commutants of these algebras to get the corresponding 
statement for von Neumann algebras.   

Given any $f\in\mathscr{D}(\mathcal{O}, E)$, we now set $f_n=f.x_n$. Checking shows that $f=\sum_{n=1}^\infty f_n$. Given any $f_n$, we know from the preceding 
theorem that there exists some $k\in \mathbb{N}$ with $\mathrm{supp}(f_n)\subset \mathcal{V}_k$. Checking now shows that $f_n= (g_n)_{ext}$ where 
$g_n = (f_n){\upharpoonright}{\mathcal{V}_k}$. It is clear that $g_n$ belongs to $\mathscr{D}(\mathcal{V}_k,E)$, and hence that the Weyl operator 
$W(\widetilde{G}_\mathcal{O}(f_n))$ belongs to $\cM(\mathcal{V}_k)$. It is also clear from the properties of the Weyl operators that the Weyl operator 
$W(\sum_{n=1}^kf_n)$ is a just a multiple of the product $\Pi_{n=1}^kW(f_n)$. To conclude we therefore need to show that in the representation engendered by 
$\omega$, $W(f)$ is the $\sigma$-strong limit of the sequence $(W(\sum_{n=1}^kf_n))_{k=1}^\infty$. 

For any compact subset $K$ of $M$, the local finiteness of the partition of unity ensures that for any $p\in K\cap\mathrm{supp}(f)$ we can find a neighbourhood 
$U_p\subset \mathcal{O}$ of $p$ with $U_p\cap \mathrm{supp}(f_n)$ non-empty for at most finitely many of $f_n$'s. By compactness we may cover 
$K\cap\mathrm{supp}(f)$ with finitely many such neighbourhoods, from which it then follows that 
there must exist some $N_K\in \mathbb{N}$ such that $(\sum_{n=1}^kf_n){\upharpoonright}{K}=f{\upharpoonright}{K}$ for all $k\geq N_K$. A similar argument shows that the same is true for any of 
the derivatives of the $f$ and the partial sums $\sum_{n=1}^kf_n$. This of course means $(\sum_{n=1}^kf_n)_{k=1}^\infty$ and each matching sequence of derivatives 
converges uniformly to $f$ each of its corresponding derivatives on any compact subset of $\mathcal{O}$. This has two consequences:

Firstly for any $g\in \mathrm{SOL}(\mathcal{O}:\mathcal{V}_n)$ we have that $\int_M\langle \sum_{n=1}^kf_n, \widetilde{G}_{\mathcal{O}}(g)\rangle\,dV \to \int_M\langle f, 
\widetilde{G}_{\mathcal{O}}(g)\rangle\,dV$ as $k\to \infty$. To see this note that integration with respect to the volume form can be shown to make sense for 
$C_0(M)$ (the continuous functions vanishing at infinity). On appealing to the Riesz representation theorem it now follows that integration with respect to the 
volume form may be regarded as integration with respect to a Radon measure. Since $\mathcal{O}$ has finite measure with respect to this Radon measure (due to the 
compactness of its closure), the claim now follows from basic facts regarding Radon measures and what we noted above.
Secondly since $f$ and each $\sum_{n=1}^kf_n$ is supported on the compact set $\mathrm{supp}(f)$, it follows that here $(\sum_{n=1}^kf_n)_{k=1}^\infty$ converges 
to $f$ in the mode described in \cite[Definition 3.4.6]{BGP}. But by \cite[Proposition 3.4.8]{BGP}, $(\widetilde{G}_\mathcal{O}(\sum_{n=1}^kf_n))_{k=1}^\infty$ 
will then converge to $\widetilde{G}_\mathcal{O}(f)$ in exactly the same mode. If therefore on $\mathrm{SOL}(\mathcal{O})$ the inner product $\mu$ respects this mode of 
convergence, we will have that $\mu(\widetilde{G}_\mathcal{O}(f)-\widetilde{G}_\mathcal{O}(\sum_{n=1}^kf_n), \widetilde{G}_\mathcal{O}(f)-\widetilde{G}_\mathcal{O}(\sum_{n=1}^kf_n))\to 0$ 
as $k\to \infty$. The on replacing $G(f)\circ\beta_t^Z$ with $\widetilde{G}_\mathcal{O}(\sum_{n=1}^kf_n)$, the same type of argument as was used in Section \ref{S8} now shows that in the 
representation engendered by $\omega$, $W(f)$ is, as required, the $\sigma$-strong limit of the sequence $(W(\sum_{n=1}^kf_n))_{k=1}^\infty$. 
\end{proof}

All that makes it legitimate to form
\begin{conjecture} Let $\mathcal{O}\in\mathcal{K}(M,g)$. Write $\mathrm{QuantLie}$ for the set $$\{\delta_Z\colon Z\mbox{ a Killing vector field}\}.$$Then 
$\cM^{\infty}(\mathcal{O},\mathrm{QuantLie})$ is weak* dense in $\cM(\mathcal{O})$.  
\end{conjecture}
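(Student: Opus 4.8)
The plan is to adapt the classical G{\aa}rding smoothing procedure to the local-flow setting. Fix $\mathcal{O}\in\mathcal{K}(M,g)$. The Killing vector fields of $(M,g)$ form a finite-dimensional Lie algebra $\mathfrak{k}$; integrating $\mathfrak{k}$ produces a local Lie group $G$ and a local action near $\overline{\mathcal{O}}$, i.e. a neighbourhood $U$ of $e\in G$ and isometries $\Psi_g$ $(g\in U)$ defined on $\overline{\mathcal{O}}$ with $\Psi_{\exp(tZ)}=\Psi_t^Z$ and $\Psi_{g_1g_2}=\Psi_{g_1}\circ\Psi_{g_2}$ whenever all three are defined. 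After shrinking $U$ one arranges $\Psi_g(\mathcal{O})\in\mathcal{K}(M,g)$ for all $g\in U$ (the multi-parameter analogue of Proposition \ref{locflow}). By Theorem \ref{localg-ex} together with Lemma \ref{time-orient} each $\Psi_g$ then lifts canonically to a $*$-isomorphism $\pi_g\colon\cM(\mathcal{O})\to\cM(\Psi_g(\mathcal{O}))$, and functoriality of the lift gives the cocycle identity $\pi_{g_1g_2}=\pi_{g_1}\circ\pi_{g_2}$; Corollary \ref{loc-flow}(2) gives strong continuity of $g\mapsto\pi_g(a)$ at $e$, which upgrades (using the group law and $\|\pi_g\|=1$) to joint strong continuity on $U$. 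All the $\pi_g(a)$ lie in the single algebra $\cM(\mathcal{O}_{\cup})$ of a relatively compact globally hyperbolic neighbourhood $\mathcal{O}_{\cup}$ of $\bigcup_{g\in U}\Psi_g(\mathcal{O})$.

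Given $\phi\in C_0^\infty(U)$, set $a_\phi=\int_U\phi(g)\,\pi_g(a)\,d\mu(g)\in\cM(\mathcal{O}_{\cup})$, the integral taken $\sigma$-weakly (Bochner integrability following from strong continuity and $\|\pi_g(a)\|\le\|a\|$), $d\mu$ a left-invariant measure. For a Killing field $Z$ and $t$ small, normality of the extended $\pi_t^Z$ and left invariance of $d\mu$ give $\pi_t^Z(a_\phi)=a_{\phi_t}$, where $\phi_t(h)=\phi(\exp(-tZ)h)\in C_0^\infty(U)$; since $t\mapsto\phi_t$ is a smooth $C_0^\infty(U)$-valued curve whose support does not grow, differentiation at $t=0$ yields $a_\phi\in\mathrm{dom}(\delta_Z)$ with $\delta_Z(a_\phi)=a_{\widetilde{Z}\phi}$ for the corresponding invariant vector field $\widetilde{Z}$. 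As $\widetilde{Z}\phi\in C_0^\infty(U)$ this iterates without end, so $a_\phi$ lies in the common domain of all finite products $\delta_{Z_1}\cdots\delta_{Z_k}$; interpreting $\cM^{\infty}(\mathcal{O},\mathrm{QuantLie})$ as the resulting $*$-algebra of smooth elements (necessarily $\mathcal{O}$-localised, but a priori sitting in $\cM(\mathcal{O}_{\cup})$), we have $a_\phi\in\cM^{\infty}(\mathcal{O},\mathrm{QuantLie})$.

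To conclude, take $\phi=\phi_\epsilon$ a smooth approximate identity supported in a neighbourhood of $e$ shrinking to $\{e\}$. Joint strong continuity of $g\mapsto\pi_g(a)$ at $e$ gives $a_{\phi_\epsilon}\to a$ $\sigma$-weakly (in fact $\sigma$-strong$^{*}$, since the $\pi_g$ preserve adjoints) as $\epsilon\to0$. Hence every $a\in\cM(\mathcal{O})$ is a $\sigma$-weak limit of elements of $\cM^{\infty}(\mathcal{O},\mathrm{QuantLie})$, which is the assertion.

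The main obstacle is the geometric bookkeeping rather than the analytic core. Three points require genuine work: (i) the multi-parameter version of Proposition \ref{locflow}, namely that the local $G$-action maps $\mathcal{O}$ into $\mathcal{K}(M,g)$ on a full neighbourhood of $e$ (preservation of causal convexity under a several-parameter local flow), which is precisely what makes the $\pi_g$ available; (ii) pinning down $\cM^{\infty}(\mathcal{O},\mathrm{QuantLie})$: since each $\delta_Z$ maps \emph{out of} $\cM(\mathcal{O})$, the smoothed elements naturally live in the larger $\cM(\mathcal{O}_{\cup})$, so the density statement is only meaningful once one agrees that $\cM^{\infty}(\mathcal{O},\mathrm{QuantLie})$ consists of such $\mathcal{O}$-localised smooth operators (alternatively one passes to an inductive limit over a net of shrinking regions); and (iii) the cocycle identity and joint strong continuity for the genuinely local $G$-action — the one-parameter statements are in hand from Section \ref{KillingFlows}, but promoting them to the local group needs the uniqueness of the Weyl-algebra lifts and a continuity argument. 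Once these geometric inputs are secured, the remainder is the standard G{\aa}rding argument.
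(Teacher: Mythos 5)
First, a point of comparison: the paper offers no proof of this statement --- it is explicitly left as a conjecture --- so your proposal has to stand entirely on its own merits. The G{\aa}rding smoothing strategy is certainly the natural line of attack, and its analytic core (normality of the lifted maps so that they pass through the $\sigma$-weak integral, norm convergence of the difference quotients of $t\mapsto a_{\phi_t}$, the approximate-identity limit) is sound.

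There is, however, a genuine gap in the localization bookkeeping, which you have in effect conceded in your point (ii) by proposing to redefine the object whose density is asserted. The conjecture claims that $\cM^{\infty}(\mathcal{O},\mathrm{QuantLie})$ is weak* dense in $\cM(\mathcal{O})$, and on any reasonable reading (e.g.\ the smooth subalgebras of \cite{LM}) that set consists of elements \emph{of} $\cM(\mathcal{O})$ lying in all iterated domains of the $\delta_Z$. Your approximants $a_\phi=\int_U\phi(g)\,\pi_g(a)\,d\mu(g)$ do not lie in $\cM(\mathcal{O})$: since $\pi_g(a)\in\cM(\Psi_g(\mathcal{O}))$ and $\Psi_g(\mathcal{O})\not\subset\mathcal{O}$ for $g\neq e$, they live only in the strictly larger algebra $\cM(\mathcal{O}_{\cup})$. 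What you prove is therefore that every element of $\cM(\mathcal{O})$ is a $\sigma$-weak limit of smooth elements of a \emph{larger} local algebra, which is not the stated conjecture. The standard repair is to first approximate $a\in\cM(\mathcal{O})$ by elements $a_0$ localized in strictly smaller regions $\mathcal{O}_0\Subset\mathcal{O}$, and only then smear over a group neighbourhood small enough that $\Psi_g(\mathcal{O}_0)\subset\mathcal{O}$ for all $g$ in the support of $\phi$, so that $a_{0,\phi}\in\cM(\mathcal{O})$ by isotony and $\sigma$-weak closedness. But that first step needs an inner-continuity property of the net, $\cM(\mathcal{O})=\bigl(\bigcup_{\mathcal{O}_0\Subset\mathcal{O}}\cM(\mathcal{O}_0)\bigr)''$, which you neither state nor prove; it is precisely the kind of additivity statement the paper's final theorem addresses (and only under a regularity hypothesis on the quasi-free state), and without it your argument does not close. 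Secondary but also unresolved: the multi-parameter analogue of Proposition \ref{locflow} (preservation of causal convexity under the local action of the full Killing algebra) and the cocycle identity for the lifts are asserted rather than proved; the one-parameter proofs in the paper are already delicate and do not automatically compose, because the time bound $t_0$ in Proposition \ref{locflow} depends on the region, which changes after each stage of a composition.
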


If the conjecture is true, one can use the ideas of du Bois-Violette \cite{dje} to develop non-commutative differential geometric structures for QFT, c.f. 
the discussion in \cite{LM}.

\section{Declarations}
\subsection{Funding} The authors, Proff LE Labuschagne and WA Majewski, declare that they received no funding for this project. 
\subsection{Author contributions} Although there was some overlap, the bulk of the mathematical content was contributed by Prof LE Labuschagne and the bulk of the physical content by Prof WA Majewski.
\subsection{Data Availability Statement} This manuscript has no associated data.
\subsection{Conflict of Interest} The authors, Proff LE Labuschagne and WA Majewski, declare that they have no conflicts of interest regarding the subject matter or materials discussed 
in this manuscript.
\subsection{Clinical Trial Number} Not applicable.

\end{document}